\def\confversion{0}
\def\ifconf{\ifnum\confversion=1}
\def\ifnotconf{\ifnum\confversion=0}
\def\showauthornotes{0}
\def\showkeys{0}
\def\showdraftbox{0}
\definecolor{darkred}{rgb}{0.5,0,0}
\definecolor{darkgreen}{rgb}{0,0.35,0}
\definecolor{darkblue}{rgb}{0,0,0.55}
\newcommand{\Authornote}[2]{{\sf\small\color{red}{[#1: #2]}}}
\newcommand{\Authorcomment}[2]{{\sf \small\color{gray}{[#1: #2]}}}
\newcommand{\Authorfnote}[2]{\footnote{\color{red}{#1: #2}}}
\newcommand{\Authornote}[2]{}
\newcommand{\Authorcomment}[2]{}
\newcommand{\Authorfnote}[2]{}
\newcommand{\draftbox}{\begin{center}
  \fbox{%
    \begin{minipage}{2in}%
      \begin{center}%
        \begin{Large}%
          \textsc{Working Draft}%
        \end{Large}\\
        Please do not distribute%
      \end{center}%
    \end{minipage}%
  }%
\end{center}
\vspace{0.2cm}}
\newcommand{\draftbox}{}
\newtheorem{theorem}{Theorem}[section]
\newtheorem{definition}[theorem]{Definition}
\newtheorem{lemma}[theorem]{Lemma}
\newtheorem{remark}[theorem]{Remark}
\newtheorem{proposition}[theorem]{Proposition}
\newtheorem{corollary}[theorem]{Corollary}
\newtheorem{claim}[theorem]{Claim}
\newtheorem{fact}[theorem]{Fact}
\newtheorem{problem}[theorem]{Problem}
\theoremstyle{remark}
\newtheorem{algo}[theorem]{Algorithm}
\def\FullBox{\hbox{\vrule width 6pt height 6pt depth 0pt}}
\def\qedsketch{\ifmmode\Box\else{\unskip\nobreak\hfil
\penalty50\hskip1em\null\nobreak\hfil$\Box$
\parfillskip=0pt\finalhyphendemerits=0\endgraf}\fi}
\def\to{\rightarrow}
\def\epsilon{\varepsilon}
\def\cal{\mathcal}
\def\implies{\Rightarrow}
\newcommand{\ie}{i.e.,\xspace}
\newcommand{\eg}{e.g.,\xspace}
\newcommand{\etal}{et al.\xspace}
\newcommand{\mper}{.\,}
\newcommand{\mcom}{,\,}
\newcommand{\R}{{\mathbb R}}
\newcommand{\E}{{\mathbb E}}
\newcommand{\C}{{\mathbb C}}
\newcommand{\cA}{\mathcal{A}}
\newcommand{\cH}{\mathcal{H}}
\newcommand{\cD}{\mathcal{D}}
\newcommand{\cF}{\mathcal{F}}
\newcommand{\cE}{\mathcal{E}}
\newcommand{\cS}{\mathcal{S}}
\newcommand{\abs}[1]{\ensuremath{\left\lvert #1 \right\rvert}}
\newcommand{\norm}[1]{\ensuremath{\left\lVert #1 \right\rVert}}
\def\ProbabilityRender#1#2{
  \@ifnextchar\bgroup%
  {\renderwithdist{#1}{#2}}
   {\singlervrender{#1}{#2}}
}
\def\singlervrender#1#2{%
   \ensuremath{\mathchoice
       {{#1}\left[ #2 \right]}
       {{#1}[ #2 ]}
       {{#1}[ #2 ]}
       {{#1}[ #2 ]}
   }
}
\def\renderwithdist#1#2#3{%
   \@ifnextchar\bgroup
   {\superfancyrender{#1}{#2}{#3}}
   {\ensuremath{\mathchoice
      {\underset{#2}{#1}\left[ #3 \right]}
      {{#1}_{#2}[ #3 ]}
      {{#1}_{#2}[ #3 ]}
      {{#1}_{#2}[ #3 ]}
     }
   }
}
\def\superfancyrender#1#2#3#4#5{
   \ensuremath{\mathchoice
      {\underset{#1}{{#1}}\left#4 #3 \right#5}
      {{#1}_{#2}#4 #3 #5}
      {{#1}_{#2}#4 #3 #5}
      {{#1}_{#2}#4 #3 #5}
   }
}
\newcommand{\conv}[1]{\mathrm{conv}\inparen{#1}}
\newfont{\inhead}{eufm10 scaled\magstep1}
\newcommand{\calA}{{\cal A}}
\newcommand{\calB}{{\cal B}}
\newcommand{\calC}{{\cal C}}
\newcommand{\calD}{{\cal D}}
\newcommand{\calP}{{\cal P}}
\newcommand{\calS}{{\cal S}}
\newcommand{\poly}{{\mathrm{poly}}}
\newcommand{\polylog}{{\mathrm{polylog}}}
\DeclareMathOperator\supp{supp}
\DeclareMathOperator{\Span}{\operatorname {span}}
\newcommand{\classfont}[1]{\textup{\textsf{#1}}}
\newcommand{\classP}{\classfont{P}\xspace}
\newcommand{\inparen}[1]{\left(#1\right)}             
\newcommand{\ket}[1]{\lvert #1\rangle}
\newcommand{\braket}[2]{\left\langle #1 \,\middle\vert\, #2\right\rangle}
\newcommand{\QMA}{\classfont{QMA}}
\newcommand{\BQP}{\classfont{BQP}}
\newcommand{\NP}{\classfont{NP}}
\newcommand{\MA}{\classfont{MA}}
\newcommand{\AM}{\classfont{AM}}
\newcommand{\NEXP}{\classfont{NEXP}}
\newcommand{\IP}{\classfont{IP}}
\newcommand{\propBPP}{\classfont{propBPP}}
\newcommand{\propBQP}{\classfont{propBQP}}
\newcommand{\propQMA}{\classfont{propQMA}}
\newcommand{\propEXP}{\classfont{propEXP}}
\newcommand{\propMA}{\classfont{propMA}}
\newcommand{\propMAexp}{{\classfont{propMA}_\textup{exp}}}
\newcommand{\propQMAexp}{{\classfont{propQMA}_\textup{exp}}}
\newcommand{\propAM}{\classfont{propAM}}
\newcommand{\propIP}{\classfont{propIP}}
\newcommand{\propMIP}{\classfont{propMIP}}
\DeclareSymbolFont{extraup}{U}{zavm}{m}{n}
\DeclareMathSymbol{\varheart}{\mathalpha}{extraup}{86}
\DeclareMathSymbol{\vardiamond}{\mathalpha}{extraup}{87}
\DeclareMathOperator{\Tr}{Tr}
\DeclarePairedDelimiter\set{\lbrace}{\rbrace}
\newcommand{\cK}{\mathcal{K}}
\newcommand{\cZ}{\mathcal{Z}}
\DeclareMathOperator{\PRP}{\operatorname {PRP}}
\newcommand{\fnote}[1]{\textcolor{blue}{ {\textbf{(Fernando: #1)}}}}
\newcommand{\nnote}[1] {\textcolor{purple}{ {\textbf{Nir: #1}}}}
\begin{document}
\sloppy

\title{Coherence in Property Testing: Quantum-Classical\\
        Collapses and Separations}

 \author{Fernando Granha Jeronimo\thanks{{\tt University of Illinois Urbana-Champaign}. {\tt granha@illinois.edu}. Supported as a Google Research Fellow.}
  \and Nir Magrafta\thanks{{\tt Weizmann Institute of Science}. {\tt nir.magrafta@weizmann.ac.il}. Supported by the Israel Science Foundation (Grant No.\ 3426/21), and by the European Union Horizon 2020 Research and Innovation Program via ERC Project REACT (Grant 756482)}
  \and Joseph Slote\thanks{{\tt Caltech}. {\tt jslote@caltech.edu}. Supported by Chris Umans Simons Investigator Grant.} 
  \and Pei Wu\thanks{{\tt Penn State University}. {\tt pei.wu@psu.edu}.  Supported by ERC Consolidator Grant VerNisQDevS (101086733).}}

\date{\today}



\maketitle
\draftbox
\thispagestyle{empty}
\begin{abstract}
  Understanding the power and limitations of classical and quantum information, and how
  they differ, is an important endeavor. 
  On the classical side, property testing of distributions is a fundamental task: 
  a tester, given samples of a distribution  over a typically large domain such as
  $\{0,1\}^n$, is asked to verify properties of the distribution.
  A key property of interest in this paper is the \emph{support size} both of distributions,
  a central problem classically [Valiant and Valiant STOC'11], as well, as of quantum states.
  Classically, even given $2^{n/16}$ samples, no tester can distinguish between 
  distributions of support size $2^{n/8}$ from $2^{n/4}$ with probability better than $2^{-\Theta(n)}$, even
  with the promise that they are flat distributions.

  In the quantum setting, quantum states can be in a coherent superposition of many states of
  $\{0,1\}^n$, providing a global description of probability distributions.
  One may ask if coherence can enhance property testing. A natural way to encode a flat
  distribution is via the \emph{subset states}, $\lvert \phi_S \rangle = 1/\sqrt{\lvert S \rvert} \sum_{i \in S} \ket{i}$.
  We show that coherence alone is not enough to improve the testability of support size.
  \begin{itemize}
    \item[1.] \textbf{Coherence limitations.} Given $2^{n/16}$ copies, no tester can distinguish between subset states of size $2^{n/8}$
               from $2^{n/4}$ with probability better than $2^{-\Theta(n)}$.
  \end{itemize}  
  Our result is more general and establishes the indistinguishability between the subset states and the Haar random states
  leading to new constructions of pseudorandom and pseudoentangled states, resolving an open problem of [Ji, Liu and Song, CRYPTO'18].

  The hardness persists even when allowing multiple public-coin AM provers for a classical tester. 
  \begin{itemize}
  \item[2.] \textbf{Classical hardness with provers.} Given $2^{O(n)}$ samples from a classical distribution and $2^{O(n)}$ communication with multiple independent AM provers,
            no classical tester can estimate the support size up to factors $2^{\Omega(n)}$ with probability better than $2^{-\Theta(n)}$. Our hardness result is tight.
  \end{itemize}
 In contrast, coherent subset state proofs suffice to improve testability exponentially,
  \begin{itemize}
    \item[3.] \textbf{Quantum advantage with certificates.} With polynomially many copies and subset state proofs, a tester can
                 approximate the support size of a subset state of arbitrary size.
  \end{itemize}
  Some structural assumption on the quantum proofs is required since we show that 
  \begin{itemize}
      \item[4.] \textbf{Collapse of $\QMA$.} A general proof cannot 
  improve testability of \emph{any}  quantum property whatsoever.
  \end{itemize} 
  
  Our results highlight both the power and limitations of coherence in
  property testing, establishing exponential quantum-classical separations across various parameters. We also show several connections
  and implications of the study of property testing, in particular, in establishing quantum-to-quantum state transformation lower bounds,
  and to disentangler lower bounds.
\end{abstract}

\newpage

\ifnotconf
\pagenumbering{roman}
\tableofcontents
\clearpage
\fi

\pagenumbering{arabic}
\setcounter{page}{1}

\section{Introduction}

Testing whether a given object has a desired property, or is \emph{far} from it, is a fundamental task
both in the classical and quantum settings~\cite{G17,MW16}, possessing myriad important applications,
\eg \cite{Dinur07pcp,DELLM22,PK22}.
In this context, understanding the resources (e.g., number of copies of a state or samples, efficiency of tester, etc) needed to
test a given property is a central goal of property testing. A key property of interest in this paper is the
\emph{support size} of both distributions, a central property classically~\cite{V11,VV11,HR22label-inv},
as well as of quantum states~\cite{AKKT20,JW23}.

In testing properties of classical probability distributions, one is given access to a distribution via its samples. The study on what properties can be understood with a small number of samples can be traced back to the work of Fisher \etal~\cite{fisher1943relation} and Turing~\cite{good1953population}.
In many computer science problems, we are faced with high-dimensional distributions, which can be seen as assigning probabilities to $\set{0,1}^n$.
A rich theory of property testing of distributions has emerged, and we now know the sample complexity of many properties
of interest~\cite{R10,V11,R12,G17,C20,C22}. There, one quickly learns that several properties require $2^{\Omega(n)}$ samples to be testable, \eg
distinguishing the support size between two families of distributions can require exponentially many samples even if they have vastly different
support sizes and are promised to be flat distributions.

\begin{theorem}[Failure of Classical Testing (Informal)]
  Even given $2^{n/16}$ copies, no tester can distinguish between flat distributions of size $2^{n/8}$
  from $2^{n/4}$ with probability better than $2^{-\Theta(n)}$.
\end{theorem}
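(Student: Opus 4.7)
The plan is to apply Le Cam's two-point method: exhibit priors $\mu_1,\mu_2$ supported on flat distributions with support sizes $K_1 = 2^{n/8}$ and $K_2 = 2^{n/4}$ respectively, and upper bound the total variation distance between the marginal sample distributions
\[
P_K \;=\; \mathbb{E}_{S \sim \mu_K}\!\left[p_S^{\otimes s}\right], \qquad s = 2^{n/16},
\]
where $p_S$ denotes the flat distribution on $S \subseteq \{0,1\}^n$. Any tester's distinguishing advantage is bounded above by $\|P_{K_1}-P_{K_2}\|_{TV}$, so it will suffice to show this quantity is at most $2^{-\Theta(n)}$. The natural choice is to let $\mu_K$ be uniform over size-$K$ subsets of $\{0,1\}^n$.

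Next we exploit permutation symmetry: since each $\mu_K$ is invariant under permutations of the ambient set, $P_K$ depends on a sample sequence $(x_1,\ldots,x_s)$ only through its \emph{collision partition} $\pi$, namely the partition of $[s]$ induced by equalities among the $x_i$. A direct counting shows that any specific sequence realizing a partition with $t = |\pi|$ blocks has probability
\[
P_K\bigl((x_1,\ldots,x_s)\bigr) \;=\; K^{-s}\,\prod_{i=0}^{t-1}\frac{K-i}{2^n-i},
\]
a function of $(K,t)$ alone. Crucially, conditioning on ``all $s$ samples distinct'' ($t=s$) makes $P_K$ uniform over distinct $s$-tuples in $\{0,1\}^n$ and \emph{independent of $K$}; hence all of the TV mass must come from differences in partition probabilities.

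It remains to bound these differences, which we plan to do by expanding in the number of coincidences $c = s - t$. The aggregate $P_K$-probability of sample sequences with exactly $c$ coincidences scales, to leading order, like $\mathrm{Stir}(s,s-c)/K^c$, where $\mathrm{Stir}(\cdot,\cdot)$ denotes the Stirling number of the second kind. The difference between $K_1$ and $K_2$ at coincidence level $c$ then involves $|K_1^{-c} - K_2^{-c}|$. The main obstacle is controlling the tail in $c$: since $\mathrm{Stir}(s,s-c) \asymp s^{2c}/(2^c c!)$ grows quickly and fights the decay $K^{-c}$, one must carefully verify that the chosen parameter regime keeps the dominant small-$c$ contributions on the order of $2^{-\Theta(n)}$ and makes the tail sum exponentially convergent in $n$. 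A cleaner alternative, if available, is a $\chi^2$-divergence computation: expand $\chi^2(P_{K_1}\|P_{K_2})$ directly as a sum over collision counts, bound it by $2^{-\Theta(n)}$, and then conclude via the standard inequality $\|P-Q\|_{TV} \le \tfrac{1}{2}\sqrt{\chi^2(P\|Q)}$.
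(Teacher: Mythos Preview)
Your Le Cam two-point method with uniform priors over size-$K$ subsets and a collision-partition analysis is the standard direct classical route, and the formula you write for $P_K$ is correct. The paper takes a genuinely different path: it first proves the \emph{quantum} analogue (random subset-state ensembles are close in trace distance to the Haar ensemble), and then obtains the classical statement as a corollary by applying the computational-basis measurement channel, which cannot increase trace distance. Your approach is more elementary and self-contained; the paper's buys the quantum result simultaneously and gets the classical one essentially for free.

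That said, there is a fatal issue with the specific parameters in the statement, and your own caveat about the ``dominant small-$c$ contributions'' is exactly where it bites. With $s = 2^{n/16}$ samples and smaller support $K_1 = 2^{n/8}$, the expected number of colliding pairs is $\binom{s}{2}/K_1 \approx s^2/(2K_1) = 1/2$, so the probability of at least one collision is a constant bounded away from zero; with $K_2 = 2^{n/4}$ the same quantity is $2^{-n/8-1}$. The collision indicator alone therefore distinguishes with $\Theta(1)$ advantage, so $\|P_{K_1}-P_{K_2}\|_{TV} = \Theta(1)$, not $2^{-\Theta(n)}$. In your expansion the $c=1$ term already contributes $\bigl|K_1^{-1}-K_2^{-1}\bigr|\binom{s}{2} = \Theta(1)$. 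The paper's own bound $O(k^2/d + k/\sqrt{s} + sk/d)$ has the same defect here, since $k/\sqrt{s} = 2^{n/16}/2^{n/16} = 1$ for the smaller subset size. The informal statement should be read with sample count $2^{cn}$ for some $c$ strictly below $1/16$ (so that the number of samples squared is $\ll K_1$); with that correction either your collision-partition/$\chi^2$ argument or the paper's quantum route goes through.
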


This kind of strong lower bound is pervasive in property testing of distributions
\cite{BFFKRW01,BDKR02,RRSS07,VV11,VV17,R12, HR22label-inv},
and it establishes severe limitations on our ability to test classical information. Roughly speaking, this is not
surprising since by accessing a probability distribution via samples, we do not get a ``global'' hold on it, but rather,
we just get random local pieces of it. In contrast, quantum mechanics allows us to manipulate objects that are global
in the sense they are in superposition of possibly many different states of $\set{0,1}^n$. This phenomenon is
known as \emph{coherence}, and it is one of the fundamental pillars of quantum mechanics.
We can then ask what improvements the setting of quantum information can provide, more specifically, whether this global
nature of coherence can lead to substantial improvements in distinguishing vastly different support sizes.

\begin{center}
  \em How much can coherence help property testing?
\end{center}

We show that coherence alone cannot help. A quantum analog of a flat probability distribution is a subset state, namely,
a quantum state of the form $1/\sqrt{\abs{S}} \sum_{i \in S} \ket{i}$ for some $S \subseteq \set{0,1}^n$. In words, this state is
a uniform superposition over some set $S$. These states are natural in their own right and they are commonly used in quantum
complexity~\cite{VW16,JW23}. More precisely, we prove the following result analogous to the classical case.

\begin{theorem}[Failure of Testing with Coherence (Informal)]\label{theo:failure_coherence}
  Even given $2^{n/16}$ copies, no tester can distinguish between subset states of
  size $2^{n/8}$ from $2^{n/4}$ with probability better than $2^{-\Theta(n)}$.
\end{theorem}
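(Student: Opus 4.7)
The plan is to establish Theorem~1.1 as a corollary of the more general technical result advertised by the authors: subset state ensembles are statistically indistinguishable from Haar random states on $k$ copies. Set $N = 2^n$ and define
\[
  \rho_s^{(k)} \;=\; \Esymb_{S\,:\,|S|=s}\bigl[\,(|\phi_S\rangle\langle\phi_S|)^{\otimes k}\,\bigr],\qquad \rho_{\mathrm{Haar}}^{(k)} \;=\; \Esymb_{\psi}\bigl[\,(|\psi\rangle\langle\psi|)^{\otimes k}\,\bigr] \;=\; \frac{\Pi_{\mathrm{sym}}}{\binom{N+k-1}{k}},
\]
where the second expectation is over uniformly random unit vectors $|\psi\rangle$. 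The key lemma to prove is that $\bigl\|\rho_s^{(k)} - \rho_{\mathrm{Haar}}^{(k)}\bigr\|_1 \leq 2^{-\Theta(n)}$ for $s\in\{s_1,s_2\}$ and $k$ copies as in the theorem. Granting this, the triangle inequality yields $\|\rho_{s_1}^{(k)} - \rho_{s_2}^{(k)}\|_1 \leq 2\cdot 2^{-\Theta(n)}$, and Theorem~1.1 follows from the operational characterization of trace distance, which upper-bounds the distinguishing advantage of any quantum tester (possibly adaptive) acting on the $k$ copies.

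To prove the key lemma, I would work in the ``type'' basis $\{|\tau\rangle\}$ of the symmetric subspace $\mathrm{Sym}^k(\CCC^N)$, indexed by size-$k$ multisets $\tau$ of $[N]$. Starting from the expansion $|\phi_S\rangle^{\otimes k} = s^{-k/2}\sum_{\tau\subseteq S}\sqrt{\binom{k}{\tau}}\,|\tau\rangle$, a direct computation gives the matrix elements
\[
  \rho_s^{(k)}(\tau,\tau') \;=\; \frac{(s)_u}{s^k\,(N)_u}\,\sqrt{\binom{k}{\tau}\binom{k}{\tau'}}, \qquad u \;=\; |T(\tau)\cup T(\tau')|,
\]
where $T(\tau)$ is the support of $\tau$ and $(x)_u = x(x-1)\cdots(x-u+1)$ is the falling factorial; by contrast $\rho_{\mathrm{Haar}}^{(k)}$ is diagonal in the type basis with constant value $\binom{N+k-1}{k}^{-1}$. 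A birthday argument shows that ``collision-heavy'' types (those with $|T(\tau)|<k$) carry total weight $O(k^2/N) = 2^{-\Omega(n)}$ under both states, so it suffices to restrict attention to the block of all-distinct types.

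On this typical block I would prove two estimates. First, the diagonal entries of $\rho_s^{(k)}$ match those of $\rho_{\mathrm{Haar}}^{(k)}$ up to a multiplicative factor $1+2^{-\Omega(n)}$, using $(s)_k/s^k = 1-O(k^2/s)$ and $(N)_k/N^k = 1-O(k^2/N)$. Second---and this is the main obstacle---the off-diagonal entries of $\rho_s^{(k)}$, which have no Haar counterpart, must contribute at most $2^{-\Theta(n)}$ in trace norm. I would group pairs $(\tau,\tau')$ by their overlap $u \in \{k,\ldots,2k-1\}$, exploit the geometric decay $\rho_s^{(k)}(\tau,\tau')/\rho_s^{(k)}(\tau,\tau) \lesssim (s/N)^{u-k}$ for $\tau\neq\tau'$, and bound each block's $\ell_2$ mass against its dimension to convert to the trace norm via $\|\cdot\|_1 \leq \sqrt{d}\,\|\cdot\|_2$. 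The parameter budget must be tight enough that both $k^2/s$ and $k^2/N$ are exponentially small in $n$, which is what ultimately drives the specific exponents appearing in the theorem.
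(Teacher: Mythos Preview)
Your high-level plan matches the paper's: both reduce the theorem to showing $\rho_s^{(k)}$ is close to the Haar mixture for each of the two subset sizes, then invoke the triangle inequality. The technical route, however, is different, and there is one genuine error and one weakness you should be aware of.

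\textbf{The error.} Your claim that collision-heavy types carry weight $O(k^2/N)$ under \emph{both} states is wrong for $\rho_s^{(k)}$. A direct computation gives
\[
  \sum_{\tau:\,|T(\tau)|=k}\rho_s^{(k)}(\tau,\tau)=\frac{(s)_k}{s^k}=1-O\!\left(\frac{k^2}{s}\right),
\]
so the collision mass on the subset side is $O(k^2/s)$, not $O(k^2/N)$. You seem aware of this at the end (``both $k^2/s$ and $k^2/N$ \ldots''), but your birthday step as written is incorrect. The paper handles this more sharply: instead of discarding collision types from the mixed state, it replaces each pure state $|\phi_S\rangle^{\otimes k}$ by the (renormalized) distinct-tuple state and bounds the trace distance between these two \emph{pure} states via fidelity, yielding only $O(k/\sqrt{s})$.

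\textbf{The weakness.} Your plan for the off-diagonal block---$\|\cdot\|_1\le \sqrt{d}\,\|\cdot\|_F$ applied block by block over the overlap parameter $u$---does go through, but it is lossy. A back-of-the-envelope gives roughly $O(\sqrt{k}\,s/\sqrt{N})$ for the dominant $u=k+1$ block, whereas the paper obtains $O(sk/N)$. The paper's improvement comes from recognizing that each block is (after peeling off a $k!\times k!$ all-ones factor) exactly the Johnson-scheme matrix $\cD_t$ with $t=2k-u$, and then using the explicit eigenvalues of $\cD_t$ to compute $\|\cD_t\|_1$ directly rather than passing through Frobenius norm. For the particular parameters in this informal theorem both bounds are $2^{-\Omega(n)}$, so your approach suffices here; but your bound would \emph{not} recover the full range $s\le d/\omega(\poly)$ in the paper's general Theorem~1.2 (the PRS result), since $\sqrt{k}\,s/\sqrt{N}$ requires $s\ll\sqrt{N/k}$ rather than $s\ll N/k$.
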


We obtain the above result from a more general theorem about subset states. In particular, we show that subset
states are actually indistinguishable from Haar random states, provided their support is not too small
nor too big.

\begin{theorem}\label{thm:subset-state-PRS}
  Let $\cH=\C^d$ be a Hilbert space of dimension $d \in \mathbb{N}$, $\mu$ be the Haar measure on $\mathcal{H}$, and $S\subseteq [d]$ of size $s$. Then for any $k\in \mathbb{N}$,
  \begin{align*}
    \left\| \int{\psi^{\otimes k} d\mu(\psi)} - \mathop{\mathbb{E}}_{S\subseteq[d], |S|=s} \phi_S ^{\otimes k} \right\|_1 \le O\left(\frac{k^2}{d} + \frac{k}{\sqrt{s}} + \frac{s k}{d}\right),
  \end{align*}
  where  $\phi_S =  \left(\frac{1}{\sqrt{s}}\sum_{i\in S}\ket i\right)\left(\frac{1}{\sqrt{s}}\sum_{i\in S} \bra i \right)$.
\end{theorem}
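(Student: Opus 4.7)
The plan is to decompose both density matrices in the type basis of the symmetric subspace $\mathrm{Sym}^k(\cH) \subseteq \cH^{\otimes k}$ and bound three independent sources of discrepancy matching the three error terms. I would use the identity $\int \psi^{\otimes k}\,d\mu(\psi) = \Pi_{\mathrm{sym}}/\binom{d+k-1}{k}$, so that the Haar moment is uniform over the orthonormal type states $\ket{\tau}_{\mathrm{sym}} := \binom{k}{\tau}^{-1/2}\sum_{\vec{i}:\mathrm{type}(\vec{i})=\tau}\ket{\vec{i}}$ indexed by $\tau\in\N^d$ with $\sum_i \tau_i = k$, and write $\ket{\phi_S}^{\otimes k} = s^{-k/2}\sum_{\tau : \mathrm{supp}(\tau)\subseteq S} \sqrt{\binom{k}{\tau}}\,\ket{\tau}_{\mathrm{sym}}$. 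The type basis then splits into \emph{simple} types ($|\mathrm{supp}(\tau)|=k$, which correspond bijectively to $k$-subsets of $[d]$) and \emph{collision} types (some $\tau_i\ge 2$); the three error terms will correspond respectively to (i) the Haar mass on collision types, (ii) a per-$S$ fidelity bound truncating $\phi_S^{\otimes k}$ to its simple-type component, and (iii) a residual comparison of two operators on the simple-type subspace.

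For (i) I would use the birthday estimate $1 - \binom{d}{k}/\binom{d+k-1}{k} = O(k^2/d)$, which is exactly the trace mass of $\Pi_{\mathrm{sym}}/\binom{d+k-1}{k}$ on collision types. For (ii), define $\ket{\tilde\phi_S}$ to be the normalized projection of $\ket{\phi_S}^{\otimes k}$ onto the simple-type subspace, and set $\tilde\phi_S := \ket{\tilde\phi_S}\!\langle\tilde\phi_S|$. The overlap is immediate from the type expansion: $|\langle\tilde\phi_S|\phi_S^{\otimes k}\rangle|^2 = (s)_k/s^k = 1 - O(k^2/s)$, where $(x)_t$ denotes the falling factorial. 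The Fuchs--van de Graaf identity $\|\ket{\psi}\!\langle\psi| - \ket{\phi}\!\langle\phi|\|_1 = 2\sqrt{1-|\langle\psi|\phi\rangle|^2}$ applied for each $S$, followed by the triangle inequality for the average over $S$, then gives
\[
\left\| \mathbb{E}_{S}\, \phi_S^{\otimes k} - \mathbb{E}_{S}\, \tilde\phi_S \right\|_1 = O(k/\sqrt{s}).
\]

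For (iii) I would identify the simple-type subspace with $\CCC^{\binom{[d]}{k}}$ so that $\ket{\tilde\phi_S} = \binom{s}{k}^{-1/2}\sum_{T\in\binom{S}{k}}\ket{T}$ is a uniform superposition over $k$-subsets of $S$, and directly compute
\[
\langle T |\, \mathbb{E}_{S}\, \tilde\phi_S\, |T'\rangle = \frac{k!}{(s)_k}\cdot\frac{(s)_{2k-m}}{(d)_{2k-m}},\qquad m := |T\cap T'|,
\]
which depends only on $m$. Hence $\mathbb{E}_S\tilde\phi_S$ lies in the Bose--Mesner algebra of the Johnson scheme $J(d,k)$ and decomposes as $\sum_{m=0}^{k} c_m A_m$, where $A_m$ is the distance-$m$ adjacency matrix on $\binom{[d]}{k}$. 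Since both $\mathbb{E}_S \tilde\phi_S$ and the target $\tfrac{1}{\binom{d+k-1}{k}} I_{\mathrm{simple}}$ commute with $S_d$, they are simultaneously diagonal in the Johnson scheme eigenspaces $V_0,V_1,\ldots,V_k$, and the trace norm difference reduces to $\sum_{j=0}^{k}\dim(V_j)\,|\mu_j - 1/\binom{d+k-1}{k}|$, where $\mu_j$ is the eigenvalue of $\mathbb{E}_S \tilde\phi_S$ on $V_j$. A Vandermonde calculation gives $\mu_0 = (s)_k/(d)_k$, and the remaining $\mu_j$ follow from the Eberlein polynomial values of the $A_m$.

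The main obstacle is this last Johnson-scheme summation: showing it is $O(sk/d)$. A crude bound via $\|X\|_1 \le \sqrt{\binom{d}{k}}\|X\|_F$ applied to the off-diagonal residual $\sum_{m<k} c_m A_m$ yields only $O(s\sqrt{k/d})$, because it ignores the cancellations forced by the shared eigenspaces. The sharper estimate would instead show directly that the differences $\mu_j - 1/\binom{d+k-1}{k}$ decay geometrically in $j$ with ratio of order $s/d$, while $\dim(V_j)=\binom{d}{j}-\binom{d}{j-1}$ grows only polynomially in $d$, so the weighted sum is dominated by its $j \in \{0,1\}$ terms and evaluates to $O(sk/d)$ (as a sanity check, for $k=1$ the explicit spectral decomposition $\mathbb{E}_S\ket{\phi_S}\!\langle\phi_S| = \tfrac{d-s}{d(d-1)}I + \tfrac{s-1}{d(d-1)}J$ gives exactly $(s-1)/d$ from each of $V_0$ and $V_1$). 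Combining (i), (ii), (iii) via the triangle inequality then delivers the stated bound $O(k^2/d + k/\sqrt{s} + sk/d)$.
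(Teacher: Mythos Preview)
Your proposal is correct and follows essentially the same three-step structure as the paper: project to the collision-free part of both states (giving the $k^2/d$ and $k/\sqrt{s}$ terms), then compare the residuals via the Johnson scheme. The only cosmetic difference is that the paper works in the tuple basis and factors out a $k!\times k!$ all-ones block to land on $\binom{[d]}{k}\times\binom{[d]}{k}$ matrices, then bounds the difference in the \emph{distance-matrix} basis $\mathcal{D}_t$ via the triangle inequality $\|\sum_t\alpha_t\mathcal{D}_t\|_1\le\sum_t|\alpha_t|\,\|\mathcal{D}_t\|_1$ (the coefficients $\alpha_t$ are read off directly from your entry formula, and $\|\mathcal{D}_t\|_1$ is bounded from the known Johnson spectra); this sidesteps the explicit computation of the eigenvalues $\mu_j$ that you flag as the main obstacle, at the cost of possibly losing cancellations you would retain.
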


The above theorem leads to a new construction of pseudorandom states (PRS), which is an important primitive with broad applications
in quantum cryptography~\cite{kretschmer2021quantum,kretschmer2023quantum}, resolving an open problem from the seminal work of Ji, Liu and Song
\cite{ji2018pseudorandom}. It also leads to a new construction of pseudoentangled states~\cite{aaronson2024quantum}. At the technical level,
the proof~\cref{thm:subset-state-PRS} goes via spectral graph theory by analyzing some matrices in the so-called
Johnson association scheme~\cite{delsarte1975association}.\footnote{These matrices also naturally arise in the study of complete high-dimensional expanders.}

Given that both classical and quantum property testing models fail spectacularly for our task, one can ask if there are other
approaches to property testing.

\begin{center}
  {\em How to go beyond the standard property testing models?}
\end{center}

Very much like $\NP$ enhances $\classP$ (and $\QMA$ enhances $\BQP$) with adversarial \emph{proofs}, one can enhance
the standard property testing models with proofs (or certificates, i.e., ``structured'' proofs in this paper), namely, additional adversarial information intended
to help testability. Here, we will consider the power and limitations of proofs and also interaction with provers
in the context of property testing.
One can imagine that a powerful untrustworthy entity prepares samples (or copies) together with certificates
so that a less powerful entity can be convinced of a property, ideally using substantially fewer resources.

Classically, we show that even with exponentially many samples and interacting with exponentially many independent public-coin $\AM$ provers
for exponentially many rounds, classical property testing still fails,

\begin{theorem}[Failure of Classical Testing with Certificates (Informal)]\label{theo:failure_am_testing}
  Even given $2^{\Omega(n)}$ samples of a classical flat distribution and interaction with $2^{\Omega(n)}$ $\AM$ provers in $2^{\Omega(n)}$ rounds,
  no classical tester can estimate the support size up to factors $2^{\Omega(n)}$ with probability better than $2^{-\Theta(n)}$.
\end{theorem}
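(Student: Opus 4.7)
The plan is to prove this via a coupling-and-simulation argument: I exhibit hard YES and NO instance distributions and construct cheating provers for the NO case whose joint view with the verifier is statistically indistinguishable from the honest YES view.

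Fix the sample and communication budgets $M, K = 2^{cn}$. Take $\mu_1$ to be the mixture of flat distributions on a uniformly random $s_1$-subset of $\{0,1\}^n$, and $\mu_2$ analogously with parameter $s_2$; choose $s_1 = 2^{3cn}$ and $s_2 = 2^{4cn}$, so that $M^2/s_1 \le 2^{-cn}$ while $s_2/s_1 = 2^{cn}$ witnesses the claimed $2^{\Omega(n)}$ gap. A standard birthday computation shows that $M$ i.i.d.\ samples from a flat distribution on a uniformly random $s$-subset are within TV distance $O(M^2/s)$ of a uniform $M$-subset of $\{0,1\}^n$ (by the identity $\binom{2^n}{s}\binom{s}{M} = \binom{2^n}{M}\binom{2^n-M}{s-M}$). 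Consequently the sample marginals under $\mu_1$ and $\mu_2$ are both $2^{-\Omega(n)}$-close to the uniform $M$-subset distribution.

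The cheating strategy for $\mu_2$-instances equips all provers with pre-agreed shared randomness $\rho$ hidden from the verifier. Upon observing the verifier's samples $T$ during the AM interaction, each prover computes from $(\rho, T)$ the \emph{same} simulated support $S_1 = g(\rho, T)$ drawn uniformly from the $s_1$-subsets of $\{0,1\}^n$ containing $T$, and then plays exactly as the honest YES prover would on the input $S_1$. Because every prover derives the identical $S_1$, their responses remain mutually consistent under any cross-check the verifier may perform. The joint distribution of (samples, transcript) in this cheating NO scenario then matches the honest YES one up to $2^{-\Omega(n)}$ TV distance: the marginal on $T$ is close by the birthday bound; the conditional distribution of $S_1$ given $T$ is uniform over $s_1$-subsets containing $T$ in both cases (in YES by Bayes' rule on the $\mu_1$ prior, in cheating NO by construction), with $S_1$ independent of the verifier's public coins; and conditional on $(T, S_1)$ and the coins, the transcript is a deterministic function of the honest YES prover strategy, identical in both scenarios. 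Hence $|\Pr[\mathrm{accept} \mid \mathrm{YES}] - \Pr[\mathrm{accept} \mid \mathrm{NO}]| \le 2^{-\Omega(n)}$, ruling out a useful completeness--soundness gap; tightness follows by an AM counting protocol nearly saturating the claimed resources.

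The main technical obstacles I expect concern adaptivity and the precise prover model. For sample-adaptive protocols in which new samples arrive between rounds, the simulated $S_1$ must be grown lazily to contain newly observed samples without disrupting the consistency of earlier messages; exchangeability of random $s_1$-subsets drawn from $\rho$ should make the lazy coupling go through cleanly. For pure public-coin AM without private shared randomness among provers, the trick is to note that the prover's internal randomness (or a portion of $\rho$ not revealed to the verifier) can play the role of $g(\rho, T)$'s entropy, or alternatively one reduces to a single correlated prover by standard techniques, since the indistinguishability argument is insensitive to the number of provers once the simulated $S_1$ is the same across them.
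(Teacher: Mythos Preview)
Your simulation argument has a genuine gap: in the public-coin $\AM$ model the provers \emph{never see the verifier's samples} $T$. By definition (see the paper's~\cref{def:propAM-model}), Arthur's messages to the Merlins are uniformly random bit strings, full stop; they cannot encode the samples, and the transcript distribution $\mu_\Pi$ is independent of the sample draws. Your cheating strategy hinges on each prover computing $S_1 = g(\rho, T)$ so that the simulated small support contains the verifier's observed samples, but the provers have no access to $T$. If instead the provers pick $S_1$ using only their shared randomness $\rho$ (and the NO support $S_2$, which they do know), then the probability that the verifier's $M$ samples from $S_2$ all land in $S_1$ is roughly $(s_1/s_2)^M$, exponentially small---so the coupling breaks and the verifier can catch the cheat. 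Your anticipated obstacles (adaptivity, lack of shared randomness) are secondary; the sample-visibility issue is the real obstruction, and it is precisely what distinguishes public-coin $\AM$ from private-coin $\IP$ here (indeed the paper shows $\IP$ \emph{does} succeed on a dense variant of this problem,~\cref{thm:IP-upper-bound}).

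The paper's approach sidesteps this by \emph{not} asking the cheating provers to manufacture a consistent $S_1$. Instead the adversary simply mimics the YES-case \emph{marginal} distribution on transcripts, independent of the NO samples. The work is then pushed into bounding the residual correlation between $T$ and the transcript $\Pi$ in the YES case: conditioned on $\Pi=\pi$, the support $S$ lies in some set $\mathcal F_\pi$, so $\KL{S\mid\pi}{S'}\le p$; a divergence-contraction lemma for the down-walk on the complete simplicial complex (\cref{lem:div-contraction}, the HDX fast-mixing ingredient) then gives $\KL{T\mid\pi}{T'}\le (t/s)\cdot p$, and Pinsker finishes. In other words, the natural repair of your argument---let the provers pick $S_1$ blindly and argue that the $(T,\text{transcript})$ correlation in YES is small---leads directly to the information-contraction step, which is the actual content of the proof and which your proposal does not supply.
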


At the heart of our proof of the above lower bound is a connection to fast mixing of high-dimensional
expanders~\cite{AJKPV22}. This lower bound technique is quite general and holds even given any promise (say intended to make
verification easier) on families of certifying distributions, which, in particular, captures the above public-coin $\AM$
lower bound with multiple independent provers with multiple rounds of interactions.

In fact, we show in the classical case, for distinguishing flat distributions of different support sizes, the proof provides no power---an optimal strategy
for the honest provers is just to provide more samples by proofs.

\begin{theorem}[Classical Certification Offer No Advantage (Informal)]\label{theo:tightness_observation}
  Given any public-coin $\AM$ protocol of communication cost of $p$ bits, let the sample complexity distinguishing flat distribution of support size $s$ and $2s$
  with a constant advantage be $t$. Let $t'$ be the sample complexity without proofs. Then, $p+t\ge \Omega( t')$.
\end{theorem}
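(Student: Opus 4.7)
The plan is to prove the tightness by a direct simulation: any public-coin $\AM$ protocol with $p$ bits of communication and $t$ samples that distinguishes flat distributions of support $s$ from $2s$ with constant advantage can be converted into a sample-only tester with $O(p + t)$ samples achieving the same advantage. Combined with the no-prover sample-complexity lower bound $t'$, this immediately gives $p + t \ge \Omega(t')$.

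First, I would normalize the protocol. By standard round-collapsing for public-coin $\AM$, we may assume a canonical single-round form: Arthur publishes his public coins $r$, Merlin replies with a message $m \in \{0,1\}^p$, and Arthur decides as a deterministic function of $(r, m, x_1, \ldots, x_t)$, where the $x_i$ are Arthur's samples. The heart of the proof is then the following \emph{prover-simulation claim}: there exists a randomized encoding $h$ taking $O(p)$ i.i.d.\ samples together with the public coins $r$ and producing a message $\widetilde m \in \{0,1\}^p$ whose joint distribution with $r$ is close (in total variation) to that of the honest prover's message $m^\ast(S, r)$ for random $S$. The intuition is that both the YES and NO families are invariant under permutations of $\{0,1\}^n$, so by a Yao-style minimax argument the optimal prover strategy may be taken to depend on the support $S$ only through its unordered multiset of elements---precisely the information random samples reveal. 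With only $p$ bits to work with, the prover cannot asymptotically convey more information about $S$ than is already contained in $O(p)$ random support elements.

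Given the prover-simulation claim, the sample-only tester is immediate: draw $t + O(p)$ samples from the unknown distribution, use $t$ of them as Arthur's samples, feed the remaining $O(p)$ through $h$ to obtain $\widetilde m$, and run Arthur's decision procedure on $(r, \widetilde m, x_1, \ldots, x_t)$. Completeness follows from the simulation guarantee in the YES case. Soundness follows because any sample-only accepting strategy on the NO instance would yield a Merlin strategy (the prover computes $h$ itself on self-generated samples of its known distribution) that contradicts the soundness of the original $\AM$ protocol.

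The main obstacle is the prover-simulation claim itself: rigorously establishing that an adversarial $p$-bit message on a flat distribution can be replaced---up to constant loss in advantage---by a function of $O(p)$ random samples. This requires coupling the joint distribution of $(m^\ast(S, r), r)$ with that of $(h(y_1, \ldots, y_{O(p)}, r), r)$, combining the $p$-bit channel capacity constraint with the full permutation symmetry of the YES/NO families, likely via a total-variation averaging argument over random supports $S$ of the appropriate size. Once this simulation is in hand, the rest of the reduction is essentially bookkeeping.
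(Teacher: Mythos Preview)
Your approach differs fundamentally from the paper's, and the gap you yourself flag is exactly where the argument breaks down.

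The paper does \emph{not} prove a simulation result. Instead, it deduces \cref{theo:tightness_observation} as an immediate corollary of the quantitative advantage bound of \cref{thm:subset-distribution-proof} (and its $\AM$ generalization, \cref{thm:subset-distribution-structured-certificates}): any protocol with $t$ samples and $p$ bits of communication has distinguishing advantage at most $O(\sqrt{tp/s}+t^2/s)$. For constant advantage this forces $t(p+t)\ge\Omega(s)$, and since $(p+t)^2\ge t(p+t)$ we get $p+t\ge\Omega(\sqrt{s})=\Omega(t')$. The entire weight is carried by the divergence-contraction lemma for the down walk on the complete simplicial complex (the HDX fast-mixing argument of \cref{lem:div-contraction}), which bounds how much the verifier's sample distribution can shift after conditioning on a $p$-bit proof.

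Your prover-simulation claim---that any $p$-bit message $m^\ast(S,r)$ can be replaced in total variation by a function of $O(p)$ fresh samples---is not established, and I do not believe it follows from the permutation invariance and channel-capacity intuition you sketch. The prover knows $S$ exactly and can encode an arbitrary $p$-bit function of $S$ (say a parity or the identity of the lexicographically least element), none of which are computable from $O(p)$ random samples. Permutation invariance does let you average the verifier over permutations, but it does not force the optimal prover strategy into the form ``output $O(p)$ random support elements''; it only says the useful content of the message lies in its correlations with the samples, which is precisely what the HDX argument quantifies. As far as I can see, actually proving your simulation claim would require establishing exactly the kind of bound the paper proves directly---so the simulation route, even if salvageable, would be circular rather than an alternative.
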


In sharp contrast to the classical case, the presence of polynomially many flat adversarial certificates (\ie subset states) can
dramatically reduce the number of copies for a property to be testable showing that coherence can also be extremely powerful.

\begin{theorem}[Effective Quantum Certified Testing (Informal)]\label{theo:subset_advantage}
  With just polynomially many (\ie $n^{O(1)}$) copies and subset state proofs (\ie certificates of flat amplitudes), a tester can
  with high probability either approximate the support size of a subset state of arbitrary size, or detect that
  the certificates are malicious.
\end{theorem}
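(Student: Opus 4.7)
The plan is a \emph{halving-chain} certification protocol in which the prover commits to a claim $\tilde s$ for $|S|$ and proves it by exhibiting a chain of subset-state certificates. Writing $k \defeq \lceil \log_2 \tilde s \rceil \le n$, the prover supplies $\poly(n)$ copies each of states $\ket{\phi_{T_0}}, \ket{\phi_{T_1}}, \ldots, \ket{\phi_{T_k}}$ which (honestly) satisfy $T_0 = S \supseteq T_1 \supseteq \cdots \supseteq T_k = \{x_0\}$ with $|T_i| = 2|T_{i+1}|$. The verifier runs three families of tests: \emph{(i)} a SWAP test between copies of the input $\ket{\phi_S}$ and $\ket{\phi_{T_0}}$, accepting only if the empirical overlap is $\ge 1 - \epsilon$; \emph{(ii)} for each $i$, a SWAP test between $\ket{\phi_{T_i}}$ and $\ket{\phi_{T_{i+1}}}$, accepting only if the empirical overlap squared lies in $[\tfrac12 - \epsilon, \tfrac12 + \epsilon]$; \emph{(iii)} repeated computational-basis measurements on copies of $\ket{\phi_{T_k}}$, accepting only if every outcome agrees. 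If all tests accept, the verifier outputs $2^k$; otherwise it flags the certificate as malicious.

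Both correctness and soundness are driven by the identity
\begin{equation*}
  |\langle \phi_T | \phi_{T'} \rangle|^2 \;=\; \frac{|T\cap T'|^2}{|T|\,|T'|},
\end{equation*}
which, for $T' \subseteq T$ with $|T'| = |T|/2$, evaluates to $1/2$. Hence an honest prover passes each SWAP test with probability $3/4$ and test \emph{(iii)} with probability $1$, and by a Chernoff bound $\poly(n/\epsilon)$ copies per certificate suffice to certify every overlap within additive $\epsilon$. For soundness under the simplifying assumption that each $\ket{\phi_{T_i}}$ is genuinely a subset state, I would induct up the chain starting from the base: \emph{(iii)} forces $|T_k| = 1$; then, given that $|T_{i+1}|$ has been pinned down, the identity above together with $|T\cap T'| \le |T'|$ forces $|T_i| = 2|T_{i+1}|$ and $T_{i+1} \subseteq T_i$ at each level, up to $O(\epsilon)$ symmetric-difference slack. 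Propagating to the root yields $|T_0| \approx 2^k$, and test \emph{(i)} transfers this conclusion to $|S|$.

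The main obstacle is removing the flatness assumption and closing subtler loopholes --- for example, a ``reversed'' chain with $T_i \subseteq T_{i+1}$ and $|T_i| = |T_{i+1}|/2$ satisfies all consecutive overlap constraints just as well as the honest one. I would address both issues by enriching the test set with: a \emph{flatness test} exploiting permutation invariance of subset states (they are exactly the uniform superpositions over their support and hence invariant in distribution under permutations of $\{0,1\}^n$), certifiable with $\poly(n)$ copies; and \emph{longer-range} SWAP tests enforcing $|\langle \phi_{T_i}|\phi_{T_{i+j}}\rangle|^2 \approx 2^{-j}$ for small constant $j$, which rule out non-monotone configurations that fool only the consecutive tests. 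The most delicate step is error control: since the chain has length $k = O(n)$ and the $O(\epsilon)$ SWAP-test slack compounds multiplicatively along it, one must take $\epsilon = 1/\poly(n)$ to ensure the final estimate stays within a constant multiplicative factor of $|S|$, while still keeping the overall copy complexity $\poly(n)$.
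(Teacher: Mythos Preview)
Your soundness induction has a real gap. From $|\langle\phi_{T_i}|\phi_{T_{i+1}}\rangle|^2=c^2/(|T_i|\,|T_{i+1}|)\approx 1/2$ with $c=|T_i\cap T_{i+1}|$, the bound $c\le|T_{i+1}|$ gives only $|T_i|\le 2|T_{i+1}|$, not equality; symmetrically $c\le|T_i|$ gives $|T_i|\ge|T_{i+1}|/2$. So each step leaves a factor-$4$ window, and your induction from the singleton yields only the one-sided conclusion $|T_0|\lesssim 2^k$. An adversary can therefore \emph{overclaim} the support: start at $T_0=S$, go \emph{up} for $a$ steps ($T_{i+1}\supseteq T_i$, $|T_{i+1}|=2|T_i|$, which also gives overlap $1/2$), then descend $k+a$ steps to a singleton along random nested halvings. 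Near the peak the sets are large, so by concentration all $j$-range overlap tests with $j$ constant (or even $j=O(\log n)$) are satisfied to within $1/\poly(n)$; the only tests that would detect the peak require $j$ comparable to the chain length, and estimating an overlap of size $2^{-j}$ to the needed precision costs $2^{\Theta(j)}$ SWAP tests. Your proposed ``permutation-invariance'' flatness test is also not well-defined (a subset state is invariant only under permutations fixing its support), and in any case the model already promises subset-state certificates.

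The paper closes exactly this gap with one extra ingredient: \emph{complement certificates}. At each level the prover also supplies (copies of) $\ket{\phi_{S_i'}}$ with $S_i'$ intended to be $S_{i-1}\setminus S_i$, and the verifier checks the three overlaps $\langle\phi_{S_i}|\phi_{S_i'}\rangle^2\approx 0$, $\langle\phi_{S_{i-1}}|\phi_{S_i}\rangle^2\approx 1/2$, $\langle\phi_{S_{i-1}}|\phi_{S_i'}\rangle^2\approx 1/2$. The Support Halving Lemma shows these three constraints together force $|S_i|/|S_{i-1}|=1/2\pm O(\delta^{1/4})$ as a genuinely two-sided bound (intuitively, $S_i$ and $S_i'$ are nearly disjoint and each occupies about half of $S_{i-1}$, so neither can be larger than $S_{i-1}$ nor much smaller than half of it). The paper also anchors the chain at the \emph{known} end $S_0=\{0,1\}^n$ and descends to the target, though with the complement trick your singleton-anchored direction would work equally well. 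Finally, because the certificates in $\propQMA_{\textup{subset}}(m)$ are $m$ separate subset states rather than $m$ honest copies of one state, the paper inserts a symmetry test to reduce to the $\delta$-tilted case before running the overlap tests; your writeup silently assumes identical copies.
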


Note that the above model corresponds to a $\QMA$ type tester with structured proofs (subset states in this case). It is natural to ask
if the same result can be achieved with a general (adversarial) proof instead of assuming additional structure on the proofs. Surprisingly,
the answer is an emphatic no, and this holds for any quantum property whatsoever. More precisely, we show the following severe
information theoretic limitation on property testing with a general proof.

\begin{theorem}[Informal]\label{theo:limitation_prop_test}
  A general (adversarial) proof cannot improve quantum property testing. More precisely, a general quantum proof
  can be replaced by at most polynomially many extra input states.
\end{theorem}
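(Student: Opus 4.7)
The plan is to show that for any QMA-style property tester $T$ for a quantum property $P$, using $k$ copies of an unknown state $\rho$ along with an adversarial proof $\sigma$ on $m=\poly(n)$ qubits, there exists a tester $T'$ without a proof that uses only $k\cdot\poly(n, 1/\epsilon)$ copies of $\rho$ and decides $P$ with essentially the same completeness--soundness gap. Since the proof length is polynomial, this realizes exactly the claimed trade: a general quantum proof becomes $\poly(n)$ extra input states.

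First, I would formalize the setting: write $T$'s acceptance probability as $p_T(\rho,\sigma)=\Tr\bigl(\Pi V(\rho^{\otimes k}\otimes\sigma)V^*\bigr)$ for a verifier unitary $V$ and accepting projector $\Pi$, and define the PSD operator $M_\rho := \Tr_{\text{in}}\bigl(V^*\Pi V(\rho^{\otimes k}\otimes I_{\text{proof}})\bigr)$ acting on the $2^m$-dimensional proof space. Then $0\preceq M_\rho\preceq I$ and $\max_\sigma p_T(\rho,\sigma)=\lambda_{\max}(M_\rho)$. Completeness and soundness of $T$ translate to $\lambda_{\max}(M_\rho)\ge c$ versus $\lambda_{\max}(M_\rho)\le s$, so distinguishing YES and NO reduces to approximating $\lambda_{\max}(M_\rho)$ to additive error roughly $(c-s)/3$ using only copy access to $\rho$.

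Next, I would approximate $\lambda_{\max}(M_\rho)$ using the matrix multiplicative weights (MMW) framework, which approximates $\lambda_{\max}$ of an $I$-bounded PSD operator on a $d$-dimensional space to additive error $\epsilon$ in $O(\log d/\epsilon^2)$ iterations. In our setting, $\log d = m = \poly(n)$, so $\poly(n,1/\epsilon)$ iterations suffice. Each iteration prepares a current Gibbs-state candidate proof $\sigma_t$ on the proof space and estimates $\Tr(M_\rho\sigma_t)$, which is nothing other than the acceptance probability of the original verifier $V$ when given $\sigma_t$ as its proof; this estimation costs $O(k\cdot\poly(1/\epsilon))$ fresh copies of $\rho$ per iteration. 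Summed over iterations, the simulated tester uses $k\cdot\poly(n,1/\epsilon)$ copies of $\rho$ and no proof.

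The principal obstacle is coherently preparing the Gibbs states $\sigma_t$ on a Hilbert space of dimension $2^m$ while only consuming polynomially many copies of $\rho$. This is precisely where the quantum SDP-solving framework of Brand\~ao--Svore (and subsequent refinements) becomes essential: given block-encoding access to the relevant operator -- which here reduces to copy access to $\rho$ -- both Gibbs-state preparation and trace inner-product estimation can be carried out in $\poly(\log d,1/\epsilon)$ time and queries. Assembling these ingredients establishes that the adversarial quantum proof is replaceable by $\poly(n)$ extra input copies, yielding the claimed collapse.
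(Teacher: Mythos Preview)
Your reduction to estimating $\lambda_{\max}(M_\rho)$ is correct and is implicitly the same starting point as the paper. The gap is in the step you flag as the ``principal obstacle'' and then dismiss: preparing the Gibbs states $\sigma_t\propto\exp(\eta t M_\rho)$. Matrix MMW requires the \emph{gain matrix} $M_\rho$ itself to form the update, not merely the scalar $\Tr(M_\rho\sigma_t)$ that you get by running the verifier. You appeal to quantum SDP solvers for Gibbs preparation, but those assume a block encoding (or sample access) of the constraint operator. Here $M_\rho=(\bra{\phi}^{\otimes k}\otimes I)\,V^{*}\Pi V\,(\ket{\phi}^{\otimes k}\otimes I)$, and block-encoding this requires a state-preparation unitary $U_\phi$ \emph{together with its inverse}; mere copies of $\ket{\phi}$ do not provide $U_\phi^{\dagger}$, and density-matrix exponentiation gives access to $e^{i\rho t}$, not to $e^{iM_\rho t}$. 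Nor do you get sample access to $M_\rho/\Tr(M_\rho)$ by running the verifier on the maximally mixed proof and post-selecting: the reduced post-measurement state on the proof register is $\Tr_{\text{in}}(\Pi'(\rho^{\otimes k}\otimes I)\Pi')$ with $\Pi'=V^*\Pi V$, not $M_\rho$. So the claim that ``block-encoding access \ldots\ reduces to copy access to $\rho$'' is exactly the missing lemma, and without it the MMW loop cannot run.

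The paper avoids this circularity entirely. Rather than searching over proofs, it first amplifies the verifier (reusing the single proof via the quantum union bound, at the cost of $O(kp)$ fresh copies of the input) until the soundness is $\ll 2^{-p}$. At that point the proof can simply be replaced by the maximally mixed state: define the fixed operator $\tilde\Lambda=\frac{1}{2^p}\sum_i\bra{e_i}_P\Lambda\ket{e_i}_P$ on the input register, and let $\Pi$ project onto its large-eigenvalue subspace. A gentle-measurement argument shows $\Tr(\Pi\,\phi^{\otimes k})$ is large in the YES case and at most $O(2^p\delta)$ in the NO case. The key point is that $\tilde\Lambda$ and $\Pi$ depend only on the original verifier, not on $\rho$, so there is no Gibbs state to prepare and no adaptive access to $M_\rho$ is needed. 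This is precisely Aaronson's de-Merlinization (refined by Harrow--Lin--Montanaro), and it is what makes the copy overhead genuinely polynomial.
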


\cref{theo:limitation_prop_test} is obtained using the de-Merlinization ideas of Aaronson~\cite{A06,HLM17}, so we do not claim technical novelty,
but rather just make explicit their surprising implication to quantum property testing with a general proof. We point out that this result is an information theoretic result since the above process of replacing a proof can incur an exponential increase in the running time of a tester. 

We also show how the study of property testing with structured proofs, sometimes even with very strong promises, can have interesting consequences to quantum-to-quantum state transformation lower bounds.
One example of a quantum-to-quantum state transformation lower bound that can be deduced from our work is the following.

\begin{theorem}[Hardness of Absolute Amplitudes Transformation (Informal)]\label{theo:hardness_of_abs_value_main}
  Any transformation that takes $k$ copies of an arbitrary $n$-qubit quantum state $\ket{\psi} = \sum_{x \in \set{0,1}^n} \alpha_x \ket{x}$ and produces
  a single $n$-qubit output state at least $0.001$ close to $\sum_{x \in \set{0,1}^n} \abs{\alpha_x} \ket{x}$ requires $k = 2^{\Omega(n)}$.
\end{theorem}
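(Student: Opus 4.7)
I would reduce the lower bound to Theorem~\ref{thm:subset-state-PRS} (indistinguishability of subset states from Haar random). The key observation is that the alleged absolute-amplitudes map $T$ \emph{preserves} any subset state (whose amplitudes are already non-negative), whereas on a Haar-random state $\ket{\psi} = \sum_x \alpha_x \ket{x}$ it outputs $\ket{\phi^\psi} := \sum_x |\alpha_x|\ket{x}$ (automatically normalized since $\sum|\alpha_x|^2 = 1$), which on average looks very different from a random subset state. Fix $N = 2^n$ and $s = 2^{n/2}$, and let
\[
\rho_{\mathrm{H}} := \mathbb{E}_{\psi\sim\mathrm{Haar}}\ket\psi\bra\psi^{\otimes k},
\qquad
\rho_{\mathrm{S}} := \mathbb{E}_{|S|=s}\ket{\phi_S}\bra{\phi_S}^{\otimes k}.
\]
By Theorem~\ref{thm:subset-state-PRS}, $\|\rho_{\mathrm{H}}-\rho_{\mathrm{S}}\|_1 = O(k^2/N + k/\sqrt{s} + sk/N) = O(k/2^{n/4})$. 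Define $\sigma_{\mathrm{H}} := \mathbb{E}_\psi[\ket{\phi^\psi}\bra{\phi^\psi}]$ and $\sigma_{\mathrm{S}} := \mathbb{E}_S[\ket{\phi_S}\bra{\phi_S}]$; by the hypothesis on $T$ applied to each pure component and convexity of trace distance, $T(\rho_{\mathrm{H}})$ and $T(\rho_{\mathrm{S}})$ lie within trace distance $0.001$ of $\sigma_{\mathrm{H}}$ and $\sigma_{\mathrm{S}}$ respectively.

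\textbf{Observable gap.} The next step is to separate $\sigma_{\mathrm{H}}$ and $\sigma_{\mathrm{S}}$ using $M := \ket{+}\bra{+}^{\otimes n}$. Directly, $\mathrm{Tr}(M\sigma_{\mathrm{S}}) = \mathbb{E}_S[|\langle +|\phi_S\rangle|^2] = \mathbb{E}_S[s/N] = 2^{-n/2}$, whereas
\[
\mathrm{Tr}(M\sigma_{\mathrm{H}}) = \mathbb{E}_\psi\Bigl[\tfrac{1}{N}\bigl(\textstyle\sum_x |\alpha_x|\bigr)^2\Bigr] \;\longrightarrow\; \pi/4 \qquad (N\to\infty).
\]
To see the Haar side, write $\alpha_x = \beta_x/\|\beta\|$ with $\beta_x \sim \mathcal{CN}(0,1)$ i.i.d., use that $\|\beta\|^2$ concentrates around $N$, and observe that $|\alpha_x|$ is approximately Rayleigh with $\mathbb{E}[|\alpha_x|]\approx\sqrt{\pi/(4N)}$ and $\mathbb{E}[|\alpha_x|^2] = 1/N$; the cross moments factorize up to lower-order terms, giving $\mathbb{E}_\psi[(\sum_x|\alpha_x|)^2] \approx 1 + N(N-1)\cdot \pi/(4N) \approx N\pi/4$. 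Consequently $\mathrm{td}(\sigma_{\mathrm{H}}, \sigma_{\mathrm{S}}) \ge \pi/4 - o(1) > 0.7$.

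\textbf{Conclusion and main obstacle.} By triangle inequality and contractivity of quantum channels under trace distance,
\[
0.7 \;\le\; \mathrm{td}(\sigma_{\mathrm{H}},\sigma_{\mathrm{S}}) \;\le\; 0.002 + \mathrm{td}(T(\rho_{\mathrm{H}}), T(\rho_{\mathrm{S}})) \;\le\; 0.002 + O(k/2^{n/4}),
\]
which forces $k = 2^{\Omega(n)}$. The only delicate technical step is making the Haar-moment calculation for $\mathbb{E}_\psi[(\sum_x|\alpha_x|)^2]$ rigorous at the required precision; this can be handled either by the Gaussian-normalization reduction outlined above, or by direct Weingarten calculus on $\mathbb{E}_\psi[|\alpha_x||\alpha_y|]$. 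Apart from this single calculation, the argument is entirely operational: a constant-size observable gap between $\sigma_{\mathrm{H}}$ and $\sigma_{\mathrm{S}}$, which must be preserved by any channel $T$ under data processing, contradicts the $2^{-\Omega(n)}$-level pseudorandomness of the input ensembles guaranteed by Theorem~\ref{thm:subset-state-PRS}.
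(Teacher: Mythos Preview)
Your argument is correct and takes a genuinely different route from the paper's. Both proofs follow the same high-level template (indistinguishable input ensembles combined with distinguishable outputs force the transformation to use many copies), but they instantiate it with different ensembles and a different distinguisher. The paper uses the dense-regime result of Section~\ref{subsec:dense_indist}, pairing subset states of support $\approx pd$ against the signed ``two-mode'' states $\phi_{S,T}$; its distinguisher spends one extra copy of $\ket{\psi}$ and runs a swap test against the output of $T$, which accepts with probability near $1$ on subset states (where $\ket{|\psi|}=\ket{\psi}$) but only $1/2+o(1)$ on a typical two-mode state. You instead invoke Theorem~\ref{thm:subset-state-PRS} directly (subset states versus Haar) and distinguish the outputs with the fixed observable $M=\ket{+}\bra{+}^{\otimes n}$. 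This has the advantage of avoiding the separate dense-regime analysis and not consuming an additional input copy; the price is the Haar-moment calculation $\mathbb{E}_\psi\bigl[(\sum_x|\alpha_x|)^2\bigr]\approx N\pi/4$, which, as you note, is a routine Gaussian-normalization argument. Both approaches yield the same $k=2^{\Omega(n)}$ conclusion.
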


We also discuss how the study of quantum property testing has implications to disentangler lower bounds in~\cref{sec:hierarchy}. This gives a concrete
approach to attack Watrous' disentangler conjecture~\cite{ABDFS08}.


This paper is organized as follows.
In~\cref{subsec:additional_background}, we survey related work on quantum and classical property testing.
In~\cref{subsec:pseudorandomness}, we provide more details on the connection of our results with pseudorandomness and pseudoentanglement.
In~\cref{sec:prelim} and~\cref{sec:prop_test}, we recall basic notation and terminology, and formally introduce some of the property testing models studied in this work.
In~\cref{sec:strategy}, we give an overview on some of our technical results explaining, in particular, the role of the spectral analysis
in the Johnson scheme and the use of fast mixing of high-dimensional expanders mentioned above.
In~\cref{sec:quantum-indist}, we establish the limitations of coherence, proving~\cref{theo:failure_coherence} and~\cref{thm:subset-state-PRS}.
We also provide additional results on indistinguishability of ensemble of quantum states that is not via Haar random states.
In~\cref{sec:classical-lower-bound}, we show the limitations of testing classical distributions leading to~\cref{theo:failure_am_testing}
and~\cref{theo:tightness_observation}.
In~\cref{sec:prop_propqma_collapse}, we show that a general proof cannot improve quantum property testing leading to~\cref{theo:limitation_prop_test}.
In~\cref{sec:flat-cert}, we show that subset state proofs can substantially improve testability leading to~\cref{theo:subset_advantage}.
In~\cref{sec:hierarchy}, we conclude with a description of how quantum property testing complexity relates to standard computational complexity, and
to problems with quantum input.

\subsection{Related Work on Property Testing}\label{subsec:additional_background}

There is a vast body of work studying property testing both in the classical and quantum settings, e.g., see the surveys~\cite{MW16, G17, C20}.
In the quantum setting, property testing of oracles has been more extensively investigated.
There is by now a diverse and powerful set of query lower bound techniques which includes
the polynomial method~\cite{BBCMW01}, adversary method~\cite{A00}, generalized adversary
method~\cite{HLS07}, and others~\cite{AA18, AKKT20, ABKRT21}.
In the area of property testing of oracles, the analogous problem to testing support size is known as \emph{approximate counting}.
Approximately counting the weight of a classical oracle with a quantum $\QMA$ proof was considered by Aaronson \etal in~\cite{AKKT20}.
They focused on distinguishing oracle weight $w$ from $2w$ and established strong lower bounds using Laurent polynomials.
They also considered the setting without proofs, but with access to subset states encoding the support of the classical oracle
or access to a unitary that can produce it, also obtaining lower bounds.
Subsequently, Belovs and Rosmanis~\cite{BR20} established lower bounds for approximate counting in the setting without proofs in the
regime $w$ versus $(1+\epsilon)w$ for small $\epsilon \in (0,1)$.
In~\cite{DGRT22}, Dall'Agnol \etal investigate the power of adversarial quantum proofs in the study of property testing of unitaries.
More recently, Weggemans~\cite{W24} showed additional lower bounds for testing some unitary properties with proofs and advice.
The literature on property testing of quantum states rather than oracles (or unitaries) seems to be much sparser.

Property testing of classical probability distributions has been extensively studied~\cite{R10,V11,R12,C20,C22}.
The case of testability of support size of distributions, or closely related notions such as entropy, uniformity, essential
support, etc, are very natural, and they have been investigated in many works, including for flat distributions. Valiant
and Valiant's $\Theta(N/\log N)$ bounds on testing support size of general distribution of domain size $N$ is widely considered
as a cornerstone of the area~\cite{VV11}.
%

The notion of proofs is pervasive in theoretical computer science, and it was also studied
in many forms in property testing.
In~\cite{CG18}, Chiesa and Gur investigate the power of adversarial certificates for property
testing of probabilities distributions.
They considered analogues of $\NP$ and $\MA$, where certificates are bit strings. They also considered analogues of
single prover interactive proofs $\AM$ and $\IP$. They show that their corresponding $\NP$, $\MA$, and $\AM$ models can
at most provide a quadratic advantage in general, whereas $\IP$ can give an exponential improvement.
Following Chiesa and Gur's work, there are several works focusing on the upper bounds on property testing of general
distribution including support size for various interactive proof models~\cite{HR22label-inv, HR23IP}.
Here in our classical lower bounds, we consider multiple provers, whose certificates are probability distributions
satisfying any desired (convex) promise intended to help testability (see~\cref{sec:prop_test} for the precise details).
In particular, we consider interaction with multiple independent $\AM$ provers. Our technique based on fast mixing of
high-dimensional expanders yields tight lower bounds.

\subsection{Quantum Pseudorandomness from Our Results}\label{subsec:pseudorandomness}
We now explain the implications of~\cref{thm:subset-state-PRS} above for quantum pseudorandomness and pseudoentanglement.
We recall some of the context about these concepts along the way.

\paragraph{Pseudorandom States.} Pseudorandom quantum states (PRS) are a keyed family of quantum states that can be efficiently
generated and are computationally indistinguishable from Haar random states, even when provided with polynomially many copies.
PRSs have a wide range of applications including but not limited to statistically binding quantum bit commitments \cite{morimae2022quantum} and
private quantum coins \cite{ji2018pseudorandom}. Notably, for certain applications like private quantum coins, PRSs represent the weakest primitive
known to imply them.
Moreover, PRSs imply other quantum pseudorandom objects, such as one-way state generators (OWSGs) \cite{ji2018pseudorandom,morimae2022one} and EFI pairs
(\textbf{e}fficiently samplable, statistically \textbf{f}ar but computationally \textbf{i}ndistinguishable pairs of quantum states)
\cite{brakerski2022computational,morimae2022quantum}. Although all the existing PRS constructions rely on quantum-secure pseudorandom functions (PRFs)
or pseudorandom permutations (PRPs),  PRSs may be weaker than PRFs~\cite{kretschmer2023quantum}.

Since the initial proposal of pseudorandom states~\cite{ji2018pseudorandom}, various constructions have been investigated ~\cite{brakerski2019pseudo, ananth23binary, brakerski2020scalable, aaronson2024quantum,behera2023pseudorandomness}. 
Randomizing the phase was essential in the security proofs for all of these constructions. It is then natural to ask if it is possible to construct PRS without varying the phases, and indeed, Ji, Liu, and Song raised this question and conjectured that PRS can be constructed using \emph{subset states}.

Consider a  $n$-qubit system, represented by a $2^n$ dimensional Hilbert space.
For any function $t(n)=\omega(\poly(n))$ and $t(n) \le s \le 2^n/t(n)$ and $k=\poly(n)$, the distance between $k$ copies of a Haar random
state and $k$ copies of a random subset state of size $s$ is negligible as per the above theorem. This range for the subset's size is tight,
as otherwise  efficient distinguishers exist between copies of a Haar random state and a random subset state.

An immediate corollary of~\cref{thm:subset-state-PRS} above is the following:
\begin{corollary}[Pseudorandom States]\label{cor:sub-state-are-prs-informal} 
  Let $\{\PRP_k:[2^n]\to[2^n]\}_{k\in \cK}$ be a quantum-secure family of pseudorandom permutations. Then the family of states
  $
  \left\{\frac{1} {\sqrt{s}}\sum_{x\in {[s]}} \ket{\PRP_k(x)} \right\}_{k\in \cK}
  $
  is a PRS on $n$ qubits for $t(n) \le s \le 2^n/t(n)$ and any $t(n)=\omega(\poly(n))$.
\end{corollary}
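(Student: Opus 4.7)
The plan is to prove the corollary by a two-hop hybrid argument, combining the statistical bound of Theorem~\ref{thm:subset-state-PRS} with the quantum security of the PRP family. Fix any polynomial $m = m(n)$ as the number of copies supplied to a potential adversary, and consider three $m$-copy ensembles on $n$ qubits: $(\mathrm{H}_0)$ the Haar ensemble $\int \psi^{\otimes m}\, d\mu(\psi)$; $(\mathrm{H}_1)$ the uniform-random-subset ensemble $\mathbb{E}_{|S|=s}\, \phi_S^{\otimes m}$; and $(\mathrm{H}_2)$ the PRP-subset ensemble of the corollary, averaged over a uniform key $k \in \cK$. To conclude that the family is a PRS we must certify (a) efficient generation from the key and (b) that no QPT distinguisher with $m$ copies separates $(\mathrm{H}_0)$ from $(\mathrm{H}_2)$ with non-negligible advantage. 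Efficient generation is immediate: prepare the uniform superposition $\frac{1}{\sqrt{s}}\sum_{x\in[s]}\ket{x}\ket{0}$ in $\poly(n)$ time (a standard routine for intervals, even when $s$ is not a power of two), evaluate $\PRP_k$ coherently into the second register, and uncompute the first register with one query to $\PRP_k^{-1}$; discarding the first register leaves the target state.

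The statistical hop $(\mathrm{H}_0) \approx (\mathrm{H}_1)$ follows by direct invocation of Theorem~\ref{thm:subset-state-PRS} with $d = 2^n$. Under the hypothesis $t(n) \le s \le 2^n / t(n)$ with $t(n) = \omega(\poly(n))$ and $m = \poly(n)$, each of the three error terms satisfies $m^2/2^n = \mathrm{negl}(n)$, $m/\sqrt{s} \le m/\sqrt{t(n)} = \mathrm{negl}(n)$, and $sm/2^n \le m/t(n) = \mathrm{negl}(n)$. Hence the trace distance between $(\mathrm{H}_0)$ and $(\mathrm{H}_1)$ is negligible, which bounds every distinguishing advantage (even information-theoretically) by a negligible quantity.

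The computational hop $(\mathrm{H}_1) \approx (\mathrm{H}_2)$ is a reduction to the quantum security of the PRP. Suppose a QPT algorithm $A$ distinguishes these two ensembles with advantage $\varepsilon$. Build an oracle distinguisher $D^{\sigma,\sigma^{-1}}$ that, using $O(m)$ quantum queries in total, prepares $m$ copies of $\frac{1}{\sqrt{s}}\sum_{x\in[s]}\ket{\sigma(x)}$ via the procedure above (with $\sigma$ in place of $\PRP_k$) and then runs $A$ on the result. When $\sigma$ is a uniformly random permutation of $[2^n]$, the image $\sigma([s])$ is a uniformly random $s$-subset, so $D$ simulates $(\mathrm{H}_1)$; when $\sigma = \PRP_k$ for a uniform key, $D$ simulates $(\mathrm{H}_2)$. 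Hence $D$ inherits advantage $\varepsilon$, contradicting PRP security unless $\varepsilon$ is negligible. The main (and relatively minor) obstacle is bookkeeping: checking that the preparation and uncomputation steps produce the intended state with no residual garbage coupled to the output, and that the standard notion of quantum-secure PRP indeed grants the reduction oracle access to both the permutation and its inverse. Combining the two hops via the triangle inequality then yields the corollary.
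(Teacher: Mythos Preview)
Your proof is correct and follows exactly the approach the paper intends: the paper states the corollary as an ``immediate corollary'' of Theorem~\ref{thm:subset-state-PRS} without spelling out the details, and the standard two-hop hybrid you give (Haar $\approx_{\mathrm{stat}}$ random subset $\approx_{\mathrm{comp}}$ PRP subset) is precisely the intended argument. Your verification of the three error terms under the hypotheses on $s$ and $t(n)$, and the reduction to PRP security via the prepare--evaluate--uncompute routine, are both accurate; the caveat about needing inverse-oracle access is a fair technical note but is standard for quantum-secure PRPs.
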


Tudor and Bouland~\cite{bouland23subset} indepdentently discovered a subset state PRS construction. Their analysis uses representation theory.

\paragraph{Pseudoentanglement.} A closely related notion to the PRSs is that of \emph{pseudoentangled} states studied recently by~\cite{aaronson2024quantum}.
Here we call a PRS $h(n)$-pseudoentangled if for any state $\ket\phi$ from the PRS, $\ket\phi$ in addition satisfies that its entanglement entropy across all
cut is $O(h(n)).$\footnote{In \cite{aaronson2024quantum}, another notion was considered. Roughly speaking, a pseudoentangled state ensemble consists of two efficiently
constructible and computationally indistinguishable ensembles of states which display a gap in their entanglement entropy across all cuts.} Note that for a
Haar random state, the entanglement entropy is near maximal across all cuts~\cite{hayden2006aspects}. It's observed in~\cite{aaronson2024quantum} that a subset state with respect to set
$S$ has entanglement entropy at most $O(\log |S|)$ across any cut for some function $h(n):\mathbb{N}\to\mathbb{N}$, since the Schmidt rank of a subset state is
at most $|S|$ across any cut. Therefore the subset states with small set size are good candidates for pseudoentangled states, which they left as an open problem. As a corollary to our PRS result regarding subset state, we resolve this open problem.

\begin{corollary}[Pseudoentangled States]
  Let $\PRP_k:[2^n]\to[2^n]$ be a quantum-secure family of pseudorandom permutations. For any $h(n)= \omega(\log n)$ and $h(n)=n-\omega(\log n)$,
  we have the following $h(n)$-pseudoentangled state from subset state of size $s=2^{h(n)}$,
  $$
  \left\{\frac{1}{\sqrt{s}}\sum_{x\in {[s]}} \ket{\PRP_k(x)}\right\}_{k\in\cK}\mper
  $$
\end{corollary}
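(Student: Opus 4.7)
The plan is to combine the preceding pseudorandom-state corollary with an elementary Schmidt-rank bound on subset states. By the paper's definition, to certify the ensemble as $h(n)$-pseudoentangled we must establish two things: (i) the ensemble is a PRS, and (ii) every state in the ensemble has entanglement entropy $O(h(n))$ across every bipartite cut. The approach treats these two conditions separately.

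For (i), I would directly invoke the preceding corollary on subset-state PRS. The hypothesis there requires a size parameter $s$ lying in the window $t(n) \le s \le 2^n/t(n)$ for some super-polynomial $t(n)$. Setting $s = 2^{h(n)}$, the conditions $h(n)=\omega(\log n)$ and $h(n)=n-\omega(\log n)$ yield $s = 2^{\omega(\log n)}$ and $2^n/s = 2^{\omega(\log n)}$, both of which are super-polynomial. Hence we may pick, say, $t(n) = \min(s, 2^n/s) = 2^{\omega(\log n)}$, and the corollary applies: the ensemble $\{\tfrac{1}{\sqrt{s}}\sum_{x\in[s]} \ket{\PRP_k(x)}\}_{k\in\cK}$ is a PRS on $n$ qubits.

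For (ii), fix any key $k\in\cK$ and any bipartition of the $n$ qubits as $A\cup B$. Writing $S_k = \{\PRP_k(x) : x\in[s]\}$, the subset state $\ket{\phi_{S_k}} = \tfrac{1}{\sqrt{s}}\sum_{y\in S_k}\ket{y}$ lies in the span of the computational basis vectors indexed by $S_k$, a subspace of dimension $s$. Consequently, writing each $\ket{y}$ as $\ket{y_A}\ket{y_B}$, the reduced operator $\sum_{y\in S_k}\ket{y_A}\bra{y_A}$-type structure has rank at most $|S_k|=s$, so the Schmidt rank of $\ket{\phi_{S_k}}$ across the $A\mid B$ cut is at most $s$. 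The von Neumann entanglement entropy is therefore at most $\log s = h(n)$, establishing the $O(h(n))$ bound uniformly across all cuts and all $k$.

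There is essentially no technical obstacle here: the substance is already carried by the subset-state PRS corollary (and ultimately Theorem 1.3), and the entropy bound is the standard Schmidt-rank observation already recorded in the text. The only care needed is the parameter check in step (i) — verifying that the two-sided condition on $h(n)$ translates to a valid super-polynomial choice of $t(n)$ — which the stated hypotheses $h(n)=\omega(\log n)$ and $h(n)=n-\omega(\log n)$ are designed precisely to guarantee.
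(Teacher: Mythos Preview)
Your proposal is correct and follows essentially the same approach the paper takes: it applies the preceding PRS corollary (after checking that $h(n)=\omega(\log n)$ and $h(n)=n-\omega(\log n)$ place $s=2^{h(n)}$ in the required super-polynomial window), and then bounds the entanglement entropy across every cut by $\log s = h(n)$ via the Schmidt-rank-at-most-$|S|$ observation already noted in the text.
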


For a PRS, it is easy to see that if for some cut the entanglement entropy of a state $\ket\phi$ is $O(\log n)$, then $\ket\phi$ can be distinguished from Haar random
states with polynomially many copies using swap test. In this sense, the above pseudoentangled state construction is optimal.

\section{Preliminaries}\label{sec:prelim}
\paragraph{General.}
We adopt the Dirac notation for vectors representing quantum states, e.g., $\ket\psi, \ket\phi$, etc. All the vectors of the form $\ket\psi$ will be unit vectors. Given any pure state $\ket\psi$, we adopt the convention that its density operator is denoted by the Greek letter without the ``ket'', e.g. $\psi = \ket\psi\bra\psi$. The set of density operators in an arbitrary Hilbert space $\cH$ is denoted $\fD(\cH)$, and the set of pure states is denoted by $\fP(\cH)$. For a mixed state denoted by capital letters, e.g., $\Psi,\Phi$, we quite often treat it as a \emph{set} of states together with some underlying distribution on the set. For mixed state $\Psi$, there can be many different ways to express it as a distribution on pure states. Normally, we fix some explicit set that will be clear from the context.
So $\Psi,\Phi$ will have both the set interpretation and the density matrix interpretation. The Haar measure is referred to the uniform measure on the unit sphere of $\C^d$. For Hermitian matrices $A, B$, we adopt the notation $A\preceq B$ or $B\succeq A$ to denote the Loewner order, i.e., 
\[
    A\preceq B ~\iff~ B\succeq A ~\iff~ \langle \psi | B-A |\psi \rangle \ge 0, \forall \psi.
\]

Given any matrix $M\in \C^{n\times n}$ denote by $\|M\|_1$ the trace norm, which is the sum of the singular values of $M$. The trace distance between two quantum states $\sigma,\rho$, denoted $\TD(\sigma, \rho): = \|\sigma-\rho\|_{\trace}$ is $\|\rho-\sigma\|_1/ 2$. Two states with small trace distance are indistinguishable to quantum protocols.
\begin{fact}\label{fact:trace_norm_acc}
If a quantum protocol accepts a state $\phi$ with probability at most $p$, then it accepts $\psi$  with
  probability at most $p + \|\phi-\psi\|_{\trace}$.
\end{fact}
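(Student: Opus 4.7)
The plan is to reduce the claim to the variational characterization of trace distance.

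First, I would argue that any quantum protocol that produces an accept/reject outcome on an input state $\rho$ can, without loss of generality, be modeled as a single two-outcome POVM $\{E, I - E\}$ applied directly to $\rho$, where $0 \preceq E \preceq I$ is the ``accept'' effect. This uses only routine reductions: attach a fixed ancilla, apply the protocol's isometries, and defer all intermediate measurements to the end. The crucial point here is that the same $E$ governs the acceptance probability on both $\phi$ and $\psi$, because the protocol is fixed before the input is revealed. With this reformulation, the acceptance probability on $\rho$ is simply $\Tr(E\rho)$.

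Given this, the hypothesis becomes $\Tr(E\phi) \le p$, and I would then write
\begin{equation*}
\Tr(E\psi) \;=\; \Tr(E\phi) + \Tr\bigl(E(\psi - \phi)\bigr) \;\le\; p + \Tr\bigl(E(\psi - \phi)\bigr).
\end{equation*}
It therefore suffices to show $\Tr\bigl(E(\psi - \phi)\bigr) \le \|\phi - \psi\|_{\trace}$. For this I would invoke the variational identity
\begin{equation*}
\tfrac{1}{2}\|X\|_1 \;=\; \max_{0 \preceq M \preceq I} \Tr(MX),
\end{equation*}
valid for any Hermitian $X$, with $X = \psi - \phi$ and $M = E$. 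This identity follows immediately from the spectral decomposition $X = X_+ - X_-$ into positive and negative parts, the maximum being achieved by the projector onto the positive eigenspace of $X$.

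There is no serious obstacle here; once the reduction to a POVM is in place, the rest is the standard Helstrom-type estimate. The only mild point of care is the first reduction, which one should verify carries over to whichever notion of ``quantum protocol'' is in force in the paper (e.g., protocols allowing adaptive measurements, shared randomness, or auxiliary prover-provided states). In each case, deferred measurement and Naimark dilation on the tester's side reduce the computation to a single effect $E$, so the final inequality remains the same.
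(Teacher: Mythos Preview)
Your proof is correct and is the standard argument. The paper does not supply its own proof of this statement; it is listed as a preliminary fact without justification, so there is nothing to compare against.
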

We write $x\lesssim y$ to denote that there is a small constant $c\ge 1$, such that $x\le c y$. 
For any set $S$, let
$A(S,k):=\{(i_{1},i_{2},\ldots,i_{k})\in S{}^{k}:i_{j}\not=i_{j'}\text{ for }j\not=j'\}.$ We also adopt the notation $S_n$ for the symmetric group. For two disjoint sets $A, B$, we use $A\sqcup B$ to denote their union, emphasizing that $A$ and $B$ are disjoint. 

Let $n^{\underline{k}}:=n(n-1)\cdots(n-k+1).$ 
A simple calculation reveals that for $k=O(\sqrt n)$,
\begin{equation*}
    \frac{n^{\underline k}}{ (n-k+1)^{\underline k}} -1 \le  \left(\frac{n+k-1}{n-k+1}\right)^k -1
    =
    O\left(\frac{k^2}{n}\right).
\end{equation*}
We will use this bound without referring to this calculation. Given any pure quantum state $\rho$ over systems $A,B$, the entanglement entropy over the cut $A:B$ is the von Neumann entropy of the reduced density matrix of system $A$ (or $B$), i.e., $-\trace (\rho_A \log \rho_A)$.

\paragraph{Entropy and KL-divergence.}
Consider some discrete space $\Omega$ and a probability measure $\gamma$ over $\Omega$.
If the random variable $X$ is drawn from $\gamma$, we denote it
by $X\sim\gamma$. 
We let $\ln x$ and $\log x$ stand for the natural logarithm of $x$
and the logarithm of $x$ to base $2,$ respectively. For any distribution
$\gamma$ over some discrete space $\Omega$, the entropy function
\[
H(\gamma)=\Exp_{x\in\Omega}\gamma(x)\log\frac{1}{\gamma(x)}.
\]
Recall that the Kullback-Leibler divergence (KL-divergence) between
two distributions $\mu_{0},\mu_{1}$ over $\Omega$ is defined by
the following formula
\[
\KL{\mu_{0}}{\mu_{1}}=\sum_{x\in\Omega}\mu_{0}(x)\log\frac{\mu_{0}(x)}{\mu_{1}(x)}.
\]
If two random variables $X_{0},X_{1}$ obey $\mu_{0}$ and $\mu_{1}$,
respectively, we also use $\KL{X_{0}}{X_{1}}$ to denote the KL-divergence
between the two distributions. 
The KL-divergence satisfies the following
chain rule:
\[
\KL{X_{0}Y_{0}}{X_{1}Y_{1}}=\KL{X_{0}}{X_{1}}+\Exp_{x\sim X_{0}}\left[\KLfrac{Y_{0}\mid X_{0}=x}{Y_{1}\mid X_{1}=x}\right].\footnotemark
\]
\footnotetext{Here, we use the fraction-like notation to also denote the
KL-divergence for aesthetics, as we are comparing two conditional distributions.
The numerator in the fraction-like notation corresponds to the first argument
in the standard notation.}

\paragraph{Flatness.}
A distribution $\mu$ is called \emph{flat} if it is uniform over its support. There is one natural way to encode a classical distribution over a discrete domain $X$ into a pure quantum state over the Hilbert space $\C^X$. That corresponds to the so-called \emph{subset state}.
\begin{definition}[Subset States/Flat States]\label{def:subset_flat}
  We say that $\ket{\psi} \in \C^d$ is a subset state (or, equivalently, a flat state) if $\ket\psi$ is the uniform superposition over some subset $S \subseteq [d]$,
  \[\ket{\psi} = \frac{1}{\sqrt{\abs{S}}} \sum_{i\in S} \ket{i}.\]  
\end{definition}

\section{Property Testing Models}\label{sec:prop_test}
We now formally state some definitions and models for property testing we will use. Note that the definitions are made focusing on information theoretic measures like sample complexity. Some remarks on time complexity will be discussed in the final section.

\paragraph{Classical Property Testing Models.}
We start with the classical setting. In the property testing for classical distribution, one is given an unknown probability distribution $\nu$ where the only way to access $\nu$ is to draw independent samples. The goal is to test whether the unknown distribution $\nu$ satisfies certain property, e.g., whether $\nu$ has large support size or not. 

Let $\Delta_d$ be the probability simplex in $\R^d$, \ie 
\[
    \Delta_d := \left\{(p_1,\ldots,p_d) \in \R^d :\, \sum_{i=1}^d p_i = 1 \textup{ and }  p_1,\ldots, p_d \ge 0 \right\}.
\]
Analogously, for a finite set $S$, we denote by $\Delta_{S}$ the probability simplex in $\R^S$.
Recall that a property of classical probability distributions is defined as follows.

\begin{definition}[Property of Classical Distributions]\label{def:cprop}
  A property is any family of probability distributions $\calP = \sqcup_d \calP_d$ where $\calP_d \subseteq \Delta_d$. 
\end{definition}

\begin{definition}[Standard Classical Property Testing Model]
  For $d\in\mathbb N$, let $k=k(d): \mathbb{N} \to \mathbb{N}$, $1 \ge a > b \ge 0$. A property $\calP = \sqcup \calP_d$ belongs to $\propBPP_{a,b}[k]$
  if there exists a verifier $V$ such that for every $\nu\in \Delta_d$,
  \begin{enumerate}
    \item if $\nu \in \calP_d$, then $V$ with $k$ independent samples from $\nu$ accepts with probability at least $a$, and
    \item if $\nu\in \Delta_d$ is $\epsilon$-far in statistical distance from $\calP_d$, then $V$ with $k$ independent samples from $\nu$ accepts with probability at most $b$.
  \end{enumerate}
\end{definition}


Even though we use $\propBPP$ in our notation for the model,  we just take the ``bounded-error'' and ``probabilistic'', there is nothing polynomially bounded here. The sample complexity can be unbounded and the running time can be unbounded. This is a somewhat common abuse of notation to avoid introducing too many notations.

Next, we proceed to discuss the classical property testing model enhanced with classical certificates. 
In the most standard notion of the Merlin-Arthur ($\MA$) type proof, fix some property $\calP$, one expect that for any $\nu\in \calP_d$, there is an honest proof $\pi\in \{0,1\}^{p}$ that convince the verifier to accept with high probability after making $k$ samples. On the other hand, for $\nu\in \Delta_d$ far from $\calP$, no proof should fool the verifier to accept with high probability. 

\begin{definition}[Classical Property Testing with $\MA$ Proofs]\label{def:propMA-model}
  For $d\in \mathbb N$, let $k=k(d),p=p(d) : \mathbb{N} \to \mathbb{N}$, $1 \ge a > b \ge 0$. A property $\calP = \sqcup \calP_d$ belongs to $\propMA_{a,b}[k,p]$ 
  with respect to certificates set $\calS=\{0,1\}^p$ if there exists a verifier $V$ such that
  for every $\nu \in \Delta_d$,
  \begin{enumerate}
  \item if $\nu \in \calP_d$, then there exist $\pi \in \calS$ such that
    \begin{align*}
      \Pr_{x_1,\ldots,x_k \sim \nu^{\otimes k}}\left[V(x_1,\ldots,x_k,\pi) \textup{ accepts} \right] \ge a\mcom
    \end{align*}   
    \item if $\nu$ is $\epsilon$-far from $\calP_d$ in statistical distance, then for every $\pi \in \calS$,
    \begin{align*}
      \Pr_{x_1,\ldots,x_k \sim \nu^{\otimes k}}\left[V(x_1,\ldots,x_k,\pi) \textup{ accepts} \right] \le b\mper
    \end{align*}
  \end{enumerate}
\end{definition}

A much stronger notion usually referred to as the public-coin Arthur-Merlin ($\AM$) model, in its greatest generality, involves $m$ provers and $r$ rounds of communication. In each round, Arthur sends $m$ independently uniformly random bit strings of length $p$ to each of the $m$ Merlins who have no information about each other's communication, and each Merlin responds with a single bit (without loss of generality). 
After the $r$ rounds of communication, Arthur should be able to decide whether the unknown distribution $\nu\in\calP$ or far from it. 
Any such $\AM$ protocol $\Pi$ gives rise to some distribution  on the communication transcript $\pi\in \set{0,1}^{rm(p+1)}$ between Arthur and Merlins, satisfying that in each round, the random bits send are uniformly random; and the communications with each Merlin are independent of each and only depends on communication history between the current Merlin and Arthur. 
Fix any valid $\AM$ protocol $\Pi$, we can let $\mu_\Pi\in \Delta_{\set{0,1}^{rm(p+1)}}$ be the distribution on the communication transcript generated by $\Pi.$

\begin{definition}[Classical Property Testing with $\AM$ Proofs]\label{def:propAM-model}
  For $d\in \mathbb N$, let $k=k(d),m = m(r), p=p(d),r=r(d) : \mathbb{N} \to \mathbb{N}$, $1 \ge a > b \ge 0$. A property $\calP = \sqcup \calP_d$ belongs to $\propAM(m, r)_{a,b}[k, r(mp+m)]$,
  \begin{enumerate}
  \item if $\nu \in \calP_d$, there exists a $r$-round, $m$-prover, $\AM$-protocol $\Pi$, where in each round Arthur can send $p$ uniformly random bits and get 1 bit answer, such that
    \begin{align*}
      \Pr_{x_1,\ldots,x_k \sim \nu^{\otimes k}, \pi \sim \mu_\Pi}\left[V(x_1,\ldots,x_k,\pi) \textup{ accepts} \right] \ge a\mcom
    \end{align*}   
    \item if $\nu$ is $\epsilon$-far from $\calP_d$ in statistical distance, for any $r$-round, $m$-prover, $\AM$-protocol $\Pi$, where in each round Arthur can send $p$ uniformly random bits and get 1 bit answer, 
    \begin{align*}
      \Pr_{x_1,\ldots,x_k \sim \nu^{\otimes k}, \pi \sim \mu_\Pi}\left[V(x_1,\ldots,x_k,\pi) \textup{ accepts} \right] \le b\mper
    \end{align*}
  \end{enumerate}
\end{definition}

Analogously, we can define the private-coin Arthur Merlin model, whose more familiar name is the interactive proof system (IP). Consider $\propIP(m,r)_{a,b}[k,p]$ which is like $\propAM(m,r)_{a,b}[k,p]$, wherein the verifier has private random coins. Consequently, in each round, the verifier's message can depend on the private coin and the communication transcript so far. Abbreviate
$\propIP_{a,b}[k,p] = \propIP(1,\poly(n))_{a,b}[k,p]$, the standard single prover, polynomial-rounds interactive proofs.

The principle of our notations is that the parenthesis $(~)$ are for the parameters inherent to the proof system, including number of provers and number of rounds of communication allowed, in the given order; while the square brackets $[~]$ are for the complexity-related parameters including number of samples, and the total communications between the provers and the verifier, in the given order order; and finally the subscripts are for the completeness and soundness parameters in the given order, that are very sometimes omitted for completeness being $2/3$ and soundness being $1/3$.

\paragraph{Quantum Property Testing Models.}
We now move to the quantum setting. Recall that $\fP(\C^d)$ denotes the set of pure states in $\C^d$. First, we review the notion of properties of quantum states.

\begin{definition}[Property of Quantum States]\label{def:qprop}
  A property is any family of subsets $\calP = \sqcup_d \calP_d$ where $\calP_d \subseteq \fP(\mathbb{C}^d)$.
\end{definition}

The standard quantum property testing model is defined as follows. 

\begin{definition}[Standard Quantum Property Testing Model]\label{def:bqp-prop}
  For $d\in\mathbb N$, let $k=k(d): \mathbb{N} \to \mathbb{N}$, $1 \ge a > b \ge 0$. A property $\calP = \sqcup \calP_d$ belongs to $\propBQP_{a,b}[k]$
  if there exists a verifier $V$ such that for every $\ket{\psi} \in \mathbb{C}^d$,
  \begin{enumerate}
    \item if $\ket{\psi} \in \calP_d$, then $V(\ket{\psi}^{\otimes k})$ accepts with probability at least $a$, and
    \item if $\ket{\psi}$ is $\epsilon$-far from $\calP_d$ in trace distance, then $V(\ket{\psi}^{\otimes k})$ accepts with probability at most $b$.
  \end{enumerate}
\end{definition}

This model can be enhanced with certificates as follows.

\begin{definition}[Quantum Property Testing with Certificates]\label{def:qma-prop}
  For $d\in \mathbb N$, let $k=k(d),p=p(d) : \mathbb{N} \to \mathbb{N}$, $1 \ge a > b \ge 0$. A property $\calP = \sqcup \calP_d$ belongs to $\propQMA(m)_{a,b}[k,mp]$
  if there exists a verifier $V$ such that for every $\ket{\psi} \in \mathbb{C}^d$,
  \begin{enumerate}
    \item if $\ket{\psi} \in \calP_d$, then there exist $m$ certificates $\ket{\phi_1},\ldots,\ket{\phi_m} \in \mathbb{C}^{2^p}$ such that $V(\ket{\psi}^{\otimes k}\otimes \ket{\phi_1} \otimes \cdots \otimes \ket{\phi_m})$ accepts with probability at least $a$, and
    \item if $\ket{\psi}$ is $\epsilon$-far from $\calP_d$ (in trace distance), then then for every $\ket{\phi_1},\ldots,\ket{\phi_m} \in \mathbb{C}^{2^p}$, $V(\ket{\psi}^{\otimes k}\otimes \ket{\phi_1} \otimes \cdots \otimes \ket{\phi_m})$
          accepts with probability at most $b$.
  \end{enumerate}
\end{definition}

We will also consider the promised version of these models in which we have a pair of properties $(\calP_{\textup{YES}}, \calP_{\textup{NO}})$. Then $\calP_{\textup{YES}}$ will be the property of interest, and the item (ii) in the above definitions will only care about distributions or states from $\calP_{\textup{NO}}$. For simplicity, one can think of $\calP_{\textup{NO}}$ as a subset of distributions or states $\epsilon$-far from $\calP_{\textup{YES}}$. In reality, $\calP_{\textup{NO}}$ is often relaxed so that with respect to some probability measure $\mu$, $\calP_{\textup{NO}}$ has measure close to $1$ being $\epsilon$-far from $\calP_{\textup{YES}}$.

\usetikzlibrary{matrix, decorations.pathreplacing}
\usetikzlibrary{shapes, arrows.meta, positioning, calc, shapes.geometric, matrix}

\section{Technical Overview of Our Results}\label{sec:strategy}

Our results intersect many fields: classical and quantum property testing; the power and limitations of classical and quantum proofs;
the nature of classical versus quantum information via the role of coherence; cryptography via PRSs and entanglement via
pseudoentanglement. At the heart of some of our technical results are connections to spectral graph theory via the Johnson scheme
and also to fast mixing of high-dimensional expanders. We now give an overview of some of these connections, and precise technical
details are left to the relevant sections.

\subsection{Johnson Scheme and Limitations of Coherence}

We will now discuss how subset states of the form $\ket{\phi_S}$, devoided of negative and imaginary phases by definition,
can be indistinguishable from Haar random states over the sphere $\C^d$. This indistinguishability happens whenever the
subset size $\abs{S}$ is not too small nor too large. In turn, this implies that subset states of vast different support
sizes are indistinguishable since they are both indistinguishable from Haar random states.

The key technical contribution is realizing a connection to the so-called Johnson scheme and conducting a spectral analysis
using it to obtain the above indistinguishability. Recall that the matrices of the Johnson scheme $\mathcal{J}([d],k)$ have rows
and columns indexed by sets from $\binom{[d]}{k}$. Moreover, given a matrix $\calD$ on the scheme, each entry $\calD(A,B)$ 
only depends on the size of the intersection $A \cap B$. For $t \in \set{0,1,\ldots,k}$, one defines a basis matrix $\calD_t$,
whose rows and columns are indexed by elements in $\binom{[d]}{k}$ as follows
\begin{align*}
  \calD_t(A,B) = \begin{cases}
                  1   & \textup{if } \abs{A\cap B} = t\\
                  0   & \textup{otherwise}
                 \end{cases}
\end{align*}
for every $A, B \in \binom{[d]}{k}$. It follows that any matrix $\calD$ in the Johnson scheme can be decomposed as
$\calD = \sum_{t=0}^d \alpha_t \calD_t$, where each $\alpha_t$ is a scalar. This association scheme has a variety of remarkable
properties,\footnote{It forms a commutative algebra of matrices under addition and matrix multiplication.} but we will be
mostly concerned with the spectral properties of $\calD_t$.

Now, we proceed to give an idea of how the Johnson scheme arises in the analysis. The starting point is the well-known fact that
the expectation over Haar random states $\int \psi^{\otimes k} d \mu$ is equal to\footnote{A normalized version of the projector onto the symmetric subspace.}
\begin{align*}
 \binom{d+k-1}{k}^{-1}\frac{1}{k!}\sum_{\pi\in S_{k}}\sum_{\vec{i}\in[d]^{k}}|\vec{i}\rangle\langle\pi(\vec{i})| \approx \binom{d+k-1}{k}^{-1}\frac{1}{k!}\sum_{\pi\in S_{k}}\sum_{\vec{i}\in A([d],k)}|\vec{i}\rangle\langle\pi(\vec{i})| \eqqcolon \tilde{\Psi}, 
\end{align*}
where the last approximation assumes $k \ll \sqrt{d}$. Although the matrix on the RHS is not in the Johnson scheme, note that its only non-zero entries have a fixed value and
occur on entries indexed by row $(i_1,\ldots,i_k)$ and column $(j_1,\ldots,j_k)$ if and only if $\abs{\set{i_1,\ldots,i_k} \cap \set{j_1,\ldots,j_k}} = k$. This means that
this matrix can be written as $\calD_k \otimes J$, where $J$ is a $k! \times k!$ all-ones matrix.

Next, we turn our attention to uniform average of subset states of a fixed size $s$. As before, it will be convenient to work with tuples
of distinct indices as
\[
\tilde{\Phi} = \Exp_{S:|S|=s}\left[\frac{1}{s^{\underline{k}}}\sum_{\vec{i},\vec{j}\in A(S,k),}|\vec{i}\rangle\langle\vec{j}|\right]\mcom
\]
and the approximation error is small provided $k \ll \sqrt{s}$. Note that
\begin{align}
  \tilde{\Phi}((i_1,\ldots,i_k),(j_1,\ldots,j_k)) & =\frac{1}{s^{\underline{k}}}\Pr_{|S|=s}[\vec{i},\vec{j}\in A(S,k)]=\frac{1}{s^{\underline{k}}}\frac{\binom{d-2k+t}{s-2k+t}}{\binom{d}{s}}\mcom
\end{align}
where $t = \abs{\set{i_1,\ldots,i_k} \cap \set{j_1,\ldots,j_k}}$. Since $s$ and $k$ are fixed, this means that the entries only depend on the size of the intersection of the set of elements in the tuples,
and thus $\tilde{\Phi} = \sum_{t=0}^k \alpha_t \calD_t \otimes J$, for scalars $\alpha_t$. To compute the trace distance between $\tilde{\Psi}$ and $\tilde{\Phi}$ in order to conclude that these states
are close, we rely on the spectral properties of the Johnson scheme. We also show the indistinguishability of some ensembles with dense support (see~\cref{subsec:dense_indist}), but this time, it is not
via indistinguishability from Haar.

\subsection{Fast Mixing of High-dimensional Expanders and Classical Limitations}

Suppose our goal is to distinguish flat distributions of support size $s$, the yes-case,
from those with support size $w \gg s$, the no-case. Suppose these distributions are on $[N]$,
where $N = 2^n$. We can imagine that we have a complete simplicial complex $X = \cup_{i=1}^w X(i)$,
with $X(i) = \binom{[N]}{i}$. Taking $t$ independent samples from a flat distribution with support
$S \in X(s)$ is approximately the same as taking a uniform subset of size $t$ from $S$ provided
$t \ll \sqrt{s}$. In the yes-case, each flat distribution, represented by a set $S$, has an associated certifying
distribution $\pi_S$. We can think that we choose a set $S \in X(s)$ uniformly at random with
its corresponding certificate. This naturally gives rise to a pair of coupled random variables
$(\mathcal{S}=S,\Pi=\pi_S)$. Now sampling from $\Pi$ induces a conditional distribution on $\mathcal{S} \vert \Pi$;
equivalently, we have a distribution $\mu$ on $X(s)$. This possibly very sparsely-supported distribution
corresponds to how much we learned from the proof. Now, we take a uniformly random subset $T$ of size $t$ from
$\calS$. The connection to fast mixing of high-dimensional expanders now emerges. We recall that the complete
simplicial complex $X$ is a very strong HDX. A well known random walk on $X$ is the Down walk $D_{i\to i-1}$, defined for every
$i \in \set{2,\ldots,w}$, as walk operator from $X(i)$ to $X(i-1)$
\begin{align*}
   D_{i\to i-1}(A,B) = \begin{cases}
                \frac{1}{i} & \textup{ if } A \subseteq B\\
                0           & \textup{ otherwise}
              \end{cases}
\end{align*}
In this language, observe that $T$ is distributed as $\mu D_{s \to s-1} D_{s-1 \to s-2} \ldots D_{t+1 \to t}$. The closeness of $T$ to
uniform on $X(t)$ is given by how fast the down random walk mixes starting with distribution $\mu$ on $X(s)$.
Roughly speaking, since sampling $T$ as above is statistically similar to sampling a uniform set from $X(t)$,
this means that the certificate was not very informative, and this will allow us to deduce lower bounds on distinguishing the
yes, and no cases. We illustrate this fast mixing from $\mu$ on $X(s)$ to close to uniform on $X(t)$ in~\cref{fig:fast_mixing}.

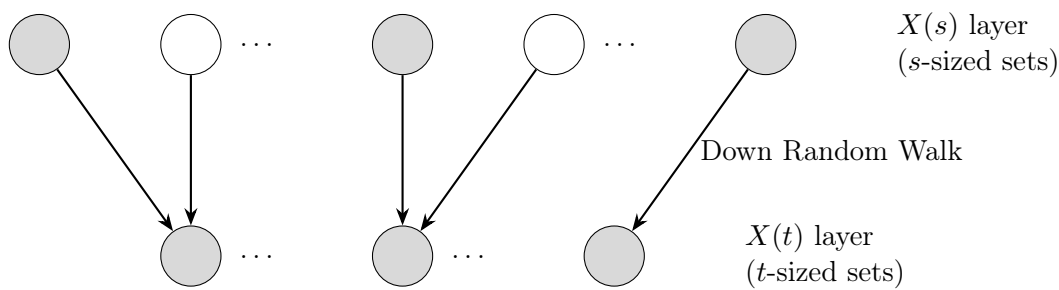
\begin{figure}[h!]
\begin{tikzpicture}[set/.style={circle, draw, minimum size=0.8cm}, 
                    grayset/.style={circle, draw, fill=gray!30, minimum size=0.8cm},
                    arrow/.style={-Stealth, thick},
                    node distance=1.2cm] 

  \node[grayset] (S1) {};
  \node[set] (S2) [right = of S1] {};
  \node at ($(S2.east)+(0.5,0)$) {\ldots}; 
  \node[grayset] (S3) [right = 2cm of S2] {}; 
  \node[set] (S4) [right = of S3] {};
  \node at ($(S4.east)+(0.5,0)$) {\ldots}; 
  \node[grayset] (S5) [right = 2cm of S4] {}; 

  \node[grayset] (T1) [below = 2cm of S2] {}; 
  \node at ($(T1.east)+(0.5,0)$) {\ldots}; 
  \node[grayset] (T2) [right = 2cm of T1] {}; 
  \node at ($(T2.east)+(0.5,0)$) {\ldots}; 
  \node[grayset] (T3) [right = 2cm of T2] {}; 

  \draw[arrow] (S1) -- (T1); 
  \draw[arrow] (S2) -- (T1);
  \draw[arrow] (S3) -- (T2);
  \draw[arrow] (S4) -- (T2);
  \draw[arrow] (S5) -- (T3) node[midway, right] {Down Random Walk};

  \node[align=left, right=of S5] {$X(s)$ layer\\($s$-sized sets)};
  \node[align=left, right=of T3] {$X(t)$ layer\\($t$-sized sets)};
\end{tikzpicture}
\caption{Mixing to uniform measure on $X(t)$ with Down random walk starting from $X(s)$. Gray vertices on top indicate support of initial measure on $X(s)$,
         whereas gray vertices on the bottom indicate the support on $X(t)$.}\label{fig:fast_mixing}
\end{figure}

It is easy to see that if $\mu$ was just a delta distribution on a single set $S$, then mixing cannot happen.
Therefore, we need to also ensure that sampling $\Pi$ leaves us with enough entropy on $\mu$ so that mixing happens.
In trying to help the verification protocol as much as possible, we can assume that the certifying distributions 
only come from a desired promised convex set, and this proof technique is oblivious to this choice. In particular, we
can consider that the certifying distributions come from multiple $\AM$ provers and the lower bound still
applies.

\subsection{Coherence Strikes Back via Certificates}

As discussed above, coherence alone is not enough to imply distinguishability between subset states of
vastly different support size. In contrast, coherence in the form of additional (adversarial) certifying subset
states will enable us to give a multiplicative approximation to the support size of an arbitrary subset state
$\ket{\psi_S}$ regardless of the size of $S \subseteq \set{0,1}^n$.

An honest prover will consider an arbitrary sequence of nested sets such that $\set{0,1}^n = S_0 \supseteq S_1 \supseteq S_2 \supseteq \cdots \supseteq S_\ell = S$
and $\abs{S_{i}}/\abs{S_{i-1}} = 1/2$ for every $i \in [\ell]$ (we assumed that $\abs{S}$ is a power of two for convenience). For each $S_i$, the prover will
send multiple copies of $\ket{\phi_{S_i}}$ and $\ket{\phi_{S_i^c \cap S_{i-1}}}$. We show that with multiple purported copies of $\ket{\phi_{S_{i-1}}}$, $\ket{\phi_{S_i}}$,
$\ket{\phi_{S_i^c \cap S_{i-1}}}$ it is possible to test if
$$
\frac{\abs{S_i}}{\abs{S_{i-1}}} ~=~ \frac{1}{2} \pm \delta \mcom
$$
or reject if the prover is dishonest. We illustrate the expected nested sequence of subset states  below.

\begin{figure}[h!]
\centering
\scalebox{.6}{
\begin{tikzpicture}[
    node distance=0.8cm and 0.8cm, 
    every node/.style={
        circle, 
        draw, 
        align=center,
        inner sep=1pt,
        text width=1.6cm, 
        font=\small\sffamily,
    },
    arrow/.style={-Stealth}
]

\node (root) {$\ket{\phi_{S_0}}$};
\node[below left=of root] (l1) {$\ket{\phi_{S_1}}$};
\node[below left=of l1,draw=none] (l2) {\reflectbox{$\ddots$}};
\node[below left=of l2] (l3p) {$\ket{\phi_{S_{\ell-1}}}$};
\node[below left=of l3p] (l3) {$\ket{\phi_{S_\ell}}$};
\node[below right=of l3p] (r2) {$\ket{\phi_{S_{\ell}^c \cap S_{\ell-1}}}$};
\node[below right=of l1] (r1) {$\ket{\phi_{S_2^c \cap S_1}}$};
\node[below right=of root] (r0) {$\ket{\phi_{S_1^c \cap S_0}}$};

\draw[arrow] (root) -- (l1);
\draw[arrow] (l1) -- (l2);
\draw[arrow] (l2) -- (l3p);
\draw[arrow] (l3p) -- (l3);
\draw[arrow] (root) -- (r0);
\draw[arrow] (l1) -- (r1);
\draw[arrow] (l3p) -- (r2);
\end{tikzpicture}
}
\end{figure}

Coherence allows us to recursively apply this with enough control on the errors so that we obtain the telescoping conclusion
\begin{align*}
  \abs{S} ~=~  \abs{S_0} \frac{\abs{S_1}}{\abs{S_0}} \frac{\abs{S_2}}{\abs{S_1}} \cdots \frac{\abs{S_\ell}}{\abs{S_{\ell-1}}}  ~=~ \left(\frac{1}{2} \pm \delta \right)^\ell 2^n\mper
\end{align*}
Since $t$ is at most $n$, we obtain a non-trivial multiplicative approximation to $\abs{S}$ with $\delta=1/\poly(n)$.

\section{Testing without Certificates: A Fiasco}\label{sec:quantum-indist}
In this section, we illustrate some natural examples of untestable quantum properties.

\subsection{The Lower Bound Meta-Technique}
To start, we review the generic lower bound technique on the power of a tester in distinguishing two
collections of quantum states $\calA$ and $\calB$,  representing states with and without certain abstract properties, respectively. If there is a tester that distinguishes
between any state in $\calA$ from any state in $\calB$, then by linearity, it should
distinguish any distributions over states from $\calA$ and $\calB$. In other words, it
implies the distinguishability of ensembles. By the contrapositive, if the distinguishability of
ensembles fails for any pair of ensembles, this means that there is no tester
distinguishing $\calA$ and $\calB$. This meta-technique for indistinguishability
is standard in the quantum as well as the classical setting.\footnote{Suitably stated for probability distributions
in the classical setting.} The key innovations are the ensembles for which we show
indistinguishability results.

\begin{lemma}[Pointwise Distinguishability Implies Ensemble Distinguishability]\label{lem:point_dist_ens_dist}
Let $\calA,\calB \subseteq \C^d$ and $1 \ge a > b \ge 0$. If there exists a measurement $M$ such that
\begin{align*}
    &\forall \ket{\phi} \in \calA,~ \Tr(M \phi^{\otimes k}) \ge a,  \text{ and,}
    \\
    &\forall \ket{\phi} \in \calB,~ \Tr(M \phi^{\otimes k}) \le b. 
\end{align*}
Then for any distributions $\mu_{\calA},\mu_{\calB}$ on $\calA$ and $\calB$, respectively, we have 
  \begin{align*}
    \Tr(M \rho_\calA) \ge a \quad \textup{ and } \quad \Tr(M \rho_\calB) \le b\mcom
  \end{align*}
  where $\rho_\calA = \E_{\mu_\calA} \phi^{\otimes k}$ and $\rho_\calB = \E_{\mu_\calB} \phi^{\otimes k}$.
\end{lemma}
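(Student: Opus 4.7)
The plan is to deduce this lemma directly from the linearity of the trace and of the expectation, without any further structural input about $M$ or the sets $\calA,\calB$. The observation is that $\rho_\calA$ and $\rho_\calB$ are defined as expectations of the pointwise density operators $\phi^{\otimes k}$, so the acceptance probability of the tester on the ensemble is itself an expectation of the pointwise acceptance probabilities.

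Concretely, the first step is to write
\begin{equation*}
  \Tr\bigl(M\rho_\calA\bigr) \;=\; \Tr\Bigl(M\,\Exp_{\phi\sim\mu_\calA} \phi^{\otimes k}\Bigr) \;=\; \Exp_{\phi\sim\mu_\calA} \Tr\bigl(M\phi^{\otimes k}\bigr),
\end{equation*}
where I pull the expectation outside the trace using linearity (this is valid because $M$ is a fixed operator independent of $\phi$, and $\phi^{\otimes k}$ is trace-class with bounded trace, so the expectation and trace commute with no convergence issues). The second step is to invoke the pointwise hypothesis $\Tr(M\phi^{\otimes k}) \ge a$ for every $\phi\in\calA$, which under an expectation over $\mu_\calA$ yields $\Tr(M\rho_\calA)\ge a$. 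The third step is the symmetric argument for $\calB$: linearity gives $\Tr(M\rho_\calB)=\Exp_{\phi\sim\mu_\calB}\Tr(M\phi^{\otimes k})$, and the pointwise bound $\le b$ is preserved under expectation.

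There is no substantive obstacle here; the statement is really a restatement of linearity, together with the observation that a measurement operator cannot distinguish an ensemble better than the worst of its elements when the bound holds pointwise with the same sign across the ensemble. I would keep the proof to essentially two lines per direction, and perhaps remark that the same argument works for any affine quantity (not just $\Tr(M\,\cdot\,)$), which is why this meta-technique generalizes to classical distributions by replacing $\phi^{\otimes k}$ with the product distribution of $k$ samples and $M$ with the acceptance indicator of the classical tester.
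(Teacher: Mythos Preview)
Your proposal is correct and matches the paper's proof essentially line for line: the paper also writes $\Tr(M\rho_\calA) = \E_{\mu_\calA}\Tr(M\phi^{\otimes k}) \ge a$ by linearity and the pointwise assumption, then says the analogous computation gives the bound for $\rho_\calB$.
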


\begin{proof}
  By linearity, we have
  \begin{align*}
    \Tr(M \rho_\calA) = \E_{\mu_\calA} \Tr(M \phi^{\otimes k}) \ge a,
  \end{align*}
  where the last inequality follows from the assumption. An analogous computation establishes that
  $\Tr(M \rho_\calB)  \le b$, concluding the proof.
\end{proof}

By the contrapositive of~\cref{lem:point_dist_ens_dist}, one obtains the following  lemma asserting that it suffices to find indistinguishable ensembles to rule out the existence of a property tester.

\begin{lemma}[Ensemble Indistinguishability Implies Pointwise Indistinguishability]\label{lem:ens_indist_point_indist}
  Let $\calA,\calB \subseteq \C^d$. If there exist distributions $\mu_{\calA},\mu_{\calB}$ on $\calA$ and $\calB$, respectively,
  such that  
  \begin{align*}
    \norm{\rho_\calA -  \rho_\calB}_1 < \epsilon,
  \end{align*}
  where $\rho_\calA = \E_{\mu_\calA} \phi^{\otimes k}$ and $\rho_\calB = \E_{\mu_\calB} \phi^{\otimes k}$.
  Then there is no measurement $M$ satisfying
  \begin{align*}
    \Tr(M \phi^{\otimes k}) \ge a,~ \forall \ket{\phi} \in \calA, \quad \textup{ and } \quad \Tr(M \phi^{\otimes k}) \le b, ~\forall \ket{\phi} \in \calB,
  \end{align*}
  with $b-a \ge \epsilon$.
\end{lemma}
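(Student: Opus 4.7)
The plan is to prove the lemma by contradiction, taking the contrapositive of Lemma \ref{lem:point_dist_ens_dist} and pairing it with the operational characterization of the trace norm. Concretely, I would assume toward contradiction that there is a measurement $M$ satisfying $\Tr(M \phi^{\otimes k}) \ge a$ for every $\ket{\phi} \in \calA$ and $\Tr(M \phi^{\otimes k}) \le b$ for every $\ket{\phi} \in \calB$, with the claimed gap between $a$ and $b$ of size at least $\epsilon$.

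First, I would invoke Lemma \ref{lem:point_dist_ens_dist} verbatim on this $M$ together with the distributions $\mu_\calA$ and $\mu_\calB$ supplied by the hypothesis. By that lemma, the pointwise acceptance bounds lift to the ensemble averages, giving $\Tr(M \rho_\calA) \ge a$ and $\Tr(M \rho_\calB) \le b$, where as usual $\rho_\calA = \E_{\mu_\calA} \phi^{\otimes k}$ and $\rho_\calB = \E_{\mu_\calB} \phi^{\otimes k}$.

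Second, I would use the operational meaning of trace distance. Fact \ref{fact:trace_norm_acc} (equivalently, the Helstrom-type bound that $|\Tr(M(\rho_\calA - \rho_\calB))| \le \|\rho_\calA - \rho_\calB\|_{\trace}$ for any $0 \preceq M \preceq I$) applied to the two ensemble states gives that the acceptance probabilities on $\rho_\calA$ and $\rho_\calB$ can differ by at most $\|\rho_\calA - \rho_\calB\|_{\trace} = \tfrac{1}{2}\|\rho_\calA - \rho_\calB\|_1 < \epsilon/2$. Chaining with the previous step yields $|a - b| \le |\Tr(M \rho_\calA) - \Tr(M \rho_\calB)| < \epsilon/2 < \epsilon$, contradicting the assumed gap.

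There is no real obstacle here: the argument is essentially a one-liner combining the previous lemma with the definition of trace distance. The only mild subtlety is the normalization convention adopted in the preliminaries ($\|\cdot\|_{\trace} = \tfrac{1}{2}\|\cdot\|_1$), which means the bound that comes out is in fact the stronger $\epsilon/2$; this is more than enough to close the contradiction against the stated gap of $\epsilon$, so no change to the statement is required.
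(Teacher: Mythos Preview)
Your proof is correct and is precisely the intended argument: the paper presents this lemma as an immediate contrapositive of Lemma~\ref{lem:point_dist_ens_dist} without further detail, and your write-up simply fills in the one missing step (the Helstrom-type bound $|\Tr(M(\rho_\calA-\rho_\calB))|\le \tfrac12\|\rho_\calA-\rho_\calB\|_1$). Your observation about the normalization yielding the stronger $\epsilon/2$ is accurate, and the stated lemma (with the evident typo $b-a$ for $a-b$) is indeed a weakening of what you prove.
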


\subsection{Warm-ups: Product States v.s. Nonproduct States}
Here, we show that testing \emph{productness}, i.e., given a state $\ket\psi$ determine if it's a (multipartite) product state or $(1-\epsilon)$-far from being a product state, is impossible.
Based on the previous discussion, to rule out a property tester, one needs to come up with indistinguishable ensembles.  For example, consider the following two ensembles:
\begin{align*}
    \cE_1^k &= \{\ket\psi^{\otimes k}: \ket\psi \in \fP(\C^d) \}; 
    \\
    \cE_0^k &= \Big\{\frac{1}{\sqrt {k!}}\sum_{\pi \in S_k} \ket{\psi_{\pi(1)}} \cdots \ket {\psi_{\pi(k)}}: \ket{\psi_{1}},\ldots, \ket {\psi_{k}} 
    \\
    &\qquad\qquad\qquad\qquad\text{ are the first $k$ columns of a Haar random unitary } U \in \C^{d\times d} \}.
\end{align*}
Note that states from the first ensemble are $k$-partite product, while states from the second ensemble have small overlap with any product state. We consider the Haar measure on states for $\cE_1^k$ 
and the Haar measure on unitaries for $\cE_0^k$.
Note that $\Exp _{\psi\in\cE_1^k} \psi$ and $\Exp_{\psi\in\cE_0^k} \psi$ are both invariant under the action of $U^{\otimes k}$ for any unitary $U$
since the corresponding Haar measures are invariant.
By Schur's lemma from representation theory~\cite{serre1977linear}, we have
\begin{align*}
    \Exp _{\psi\in\cE_1^k} \psi = \Exp_{\psi\in\cE_0^k} \psi.
\end{align*}
By~\cref{lem:ens_indist_point_indist}, no tester has any advantage testing product states and those far from being product using a single copy. We remark that it is known that insisting on perfect completeness means accepting everything because the product states span the entire space. This example rules out any other possible test giving up perfect completeness.

\subsection{Subset States v.s. Haar Random States}
Testing productness is impossible with one copy, but is achievable with two copies of the state~\cite{HM13}. In this section, we demonstrate a property impossible to test even with polynomially many copies, that is subset states of fixed size. In particular, we show that testing the size $s$ of the subset is impossible by considering the naive ensembles for different support size $s, \ell\in [N]$:
\begin{align*}
\cE_{1} & =\left\{ \phi_{S}=\frac{1}{\sqrt{|S|}}\sum_{i\in S}\ket i\,:\,|S| = s \right\} ,\\
\cE_{0} & =\left\{ \phi_{S}=\frac{1}{\sqrt{|S|}}\sum_{i\in S}\ket i\,:\,|S| = \ell \right\} .\\
\end{align*}
The key technical result will be:
\begin{theorem}[\cref{thm:subset-state-PRS} restated]
  Let $\cH=\C^d$ be a Hilbert space of dimension $d \in \mathbb{N}$, $\mu$ be the Haar measure on $\mathcal{H}$, and $S\subseteq [d]$ of size $s$. Then for any $k\in \mathbb{N}$,
  \begin{align*}
    \left\| \int{\psi^{\otimes k} d\mu(\psi)} - \mathop{\mathbb{E}}_{S\subseteq[d], |S|=s} \phi_S ^{\otimes k} \right\|_1 \le O\left(\frac{k^2}{d} + \frac{k}{\sqrt{s}} + \frac{s k}{d}\right),
  \end{align*}
  where  $\phi_S =  \left(\frac{1}{\sqrt{s}}\sum_{i\in S}\ket i\right)\left(\frac{1}{\sqrt{s}}\sum_{i\in S} \bra i \right)$.
\end{theorem}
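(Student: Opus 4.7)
The plan is to bound the trace distance by the triangle inequality, inserting an intermediate pair of ``distinct-tuple'' operators $\tilde{\Psi}$ and $\tilde{\Phi}$ exactly as introduced in the overview:
\[
  \left\|\int \psi^{\otimes k} d\mu - \E_{S} \phi_S^{\otimes k}\right\|_1 \le \left\|\int \psi^{\otimes k} d\mu - \tilde{\Psi}\right\|_1 + \|\tilde{\Psi} - \tilde{\Phi}\|_1 + \left\|\tilde{\Phi} - \E_{S} \phi_S^{\otimes k}\right\|_1,
\]
with the three terms contributing the three summands $O(k^2/d)$, $O(sk/d)$, and $O(k/\sqrt{s})$ respectively.

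The two truncation bounds are elementary. For the Haar side, $\int \psi^{\otimes k} d\mu = \binom{d+k-1}{k}^{-1}\Pi_{\mathrm{sym}}$ is block-diagonal with respect to the decomposition $(\C^d)^{\otimes k} = V_{\mathrm{dist}} \oplus V_{\mathrm{coll}}$ into distinct-index and collision subspaces (permutations preserve multisets), so $\tilde{\Psi}$ is exactly its restriction to $V_{\mathrm{dist}}$, and the difference is positive semidefinite. Its trace norm therefore equals its trace, which a direct calculation gives as $1 - d^{\underline{k}}/(d+k-1)^{\underline{k}} = O(k^2/d)$. For the subset side, each $\phi_S^{\otimes k}$ is a rank-one state, and its restriction to $A(S,k)$ is, after renormalization, another rank-one state $\ket{\phi'_S} = \tfrac{1}{\sqrt{s^{\underline{k}}}}\sum_{\vec i \in A(S,k)}\ket{\vec i}$ with squared overlap $s^{\underline{k}}/s^k = 1 - O(k^2/s)$. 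The Fuchs--van de Graaf identity for pure states then gives trace distance $O(k/\sqrt{s})$ between these two states, and convexity preserves this after averaging over $S$.

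The main step is bounding the middle term via a spectral analysis in the Johnson scheme. Identifying the set of distinct-index tuples with $\binom{[d]}{k} \times S_k$, both operators factor as $M \otimes J_{k!}$, where $J_{k!}$ is the $k! \times k!$ all-ones matrix on the ordering coordinate and $M$ is a matrix on $\binom{[d]}{k}$ lying in the Johnson scheme $\mathcal{J}([d],k)$. Explicitly, $\tilde{\Psi} = \beta\,\mathcal{D}_k \otimes J_{k!}$ with $\beta = \binom{d+k-1}{k}^{-1}/k!$ (where $\mathcal{D}_k$ is the identity on $\binom{[d]}{k}$), while $\tilde{\Phi} = \bigl(\sum_{t=0}^{k} \alpha_t \mathcal{D}_t\bigr) \otimes J_{k!}$ with the explicit coefficients $\alpha_t = \tfrac{1}{s^{\underline{k}}}\binom{d-2k+t}{s-2k+t}/\binom{d}{s}$. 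Since $\|A \otimes J_{k!}\|_1 = k!\,\|A\|_1$, the task reduces to bounding $\|\beta\,\mathcal{D}_k - \sum_t \alpha_t \mathcal{D}_t\|_1$.

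The hard part will be carrying out this spectral bound. The matrices $\mathcal{D}_t$ are simultaneously diagonalized by the canonical decomposition $\C^{\binom{[d]}{k}} = V_0 \oplus V_1 \oplus \cdots \oplus V_k$ into the harmonic subspaces of the Johnson scheme, with $\dim V_j = \binom{d}{j}-\binom{d}{j-1}$, and each $\mathcal{D}_t$ acts as a known scalar $E_t^{(j)}$ on $V_j$. I will compute the eigenvalue $\lambda_j = \beta\,E_k^{(j)} - \sum_t \alpha_t E_t^{(j)}$ of the difference on each $V_j$. My expectation is that on $V_0$ the cancellation is near-exact (matching the overall trace up to lower-order terms), while on each higher $V_j$ the quantity $|\lambda_j|$ is small enough that $\sum_{j\ge 1} |\lambda_j|\cdot \dim V_j = O\bigl(sk/(d\cdot k!)\bigr)$. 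Tracking the asymptotic ratios $\binom{d-2k+t}{s-2k+t}/\binom{d}{s}$ across $t$ and exploiting the cancellation between $\beta$ and the dominant $t=k$ coefficient $\alpha_k$ is the technical heart; absorbing the $k!$ factor from $\|J_{k!}\|_1 = k!$ then yields $\|\tilde{\Psi}-\tilde{\Phi}\|_1 = O(sk/d)$, and the triangle inequality completes the proof.
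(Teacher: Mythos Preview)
Your proposal is correct and follows essentially the same three-step structure as the paper: both truncation bounds and the Johnson-scheme analysis of the middle term are organized exactly as the paper does. The one tactical difference is that, rather than computing the eigenvalue $\lambda_j$ of the full combination $\beta\mathcal{D}_k - \sum_t \alpha_t \mathcal{D}_t$ on each $V_j$, the paper first rescales $\tilde{\Psi}$ by the factor $(d+k-1)^{\underline{k}}/d^{\underline{k}}$ so that the $\mathcal{D}_k$ coefficient cancels \emph{exactly}, and then simply bounds $\|\sum_{t=0}^{k-1}\alpha_t\mathcal{D}_t\|_1 \le \sum_{t=0}^{k-1}\alpha_t\|\mathcal{D}_t\|_1$ via the triangle inequality and known individual bounds $\|\mathcal{D}_t\|_1 \lesssim \binom{d-k}{k-t}\binom{d}{t}2^{k-t}$; this sidesteps the delicate cancellation you anticipate on $V_0$ and makes the arithmetic cleaner.
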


In the above theorem, the subset state is compared with Haar random states, which by triangle inequality translates to a comparison between subset state of different subset sizes. The theorem itself implies new construction of quantum pseudorandom states as explained in the introduction.
Note that the theorem is optimal in the following sense: When the subset size $s=O(\poly(n))$ is small, collision attacks illustrate that measuring some subset state in the computational basis of support size $s$ for $O(\sqrt{s})$ times, a collision will be observed, that distinguishes subset state from Haar random state. When the support size is large, in particular, if $s = \Omega(d/\poly(n))$, then the overlap between the subset state and the uniform superposition of the computational basis will be significant, i.e., $1/\poly(n)$. Then with polynomially many copies, the subset state will be distinguishable from the Haar random state.

As a corollary of~\cref{lem:ens_indist_point_indist} and~\cref{thm:subset-state-PRS}, we have the following. 
\begin{theorem}[Failure of Standard Testing]
  Even given $\lceil 2^{n/16} \rceil$ copies, no tester can distinguish between subset states of size $\lceil 2^{n/8} \rceil$
  from $\lceil 2^{n/4} \rceil$ with probability better than $O(2^{-n/16})$.
\end{theorem}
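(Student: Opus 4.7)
The plan is a direct two-step application of \cref{thm:subset-state-PRS} combined with \cref{lem:ens_indist_point_indist}, using a triangle inequality through the Haar random ensemble. The heavy lifting is already in \cref{thm:subset-state-PRS} (proved via the Johnson scheme spectral analysis sketched in \cref{sec:strategy}); the corollary is then pure bookkeeping plus the generic pointwise-to-ensemble reduction.

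First, I would fix $d = 2^n$, $k = \lceil 2^{n/16}\rceil$, $s_1 = \lceil 2^{n/8}\rceil$, $s_2 = \lceil 2^{n/4}\rceil$, and name the three averaged states
\[
    \rho_i \;:=\; \Exp_{|S|=s_i} \phi_S^{\otimes k}\quad (i \in \{1,2\}), \qquad \rho_H \;:=\; \int \psi^{\otimes k}\,d\mu(\psi),
\]
where $\mu$ is the Haar measure on $\C^d$. Apply \cref{thm:subset-state-PRS} once for $s = s_1$ and once for $s = s_2$ to obtain
\[
    \|\rho_i - \rho_H\|_1 \;\leq\; O\!\left(\frac{k^2}{d} + \frac{k}{\sqrt{s_i}} + \frac{s_i k}{d}\right),
\]
and then substitute the numerical values of $d,k,s_i$ to check that the right-hand side is at most the claimed $O(2^{-n/16})$ for both $i=1,2$.

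Next, a triangle inequality $\|\rho_1 - \rho_2\|_1 \leq \|\rho_1 - \rho_H\|_1 + \|\rho_H - \rho_2\|_1$ propagates this bound to the two subset-state ensembles themselves. Finally, invoke \cref{lem:ens_indist_point_indist} with $\calA$ the set of subset states of size $s_1$ and $\calB$ the set of subset states of size $s_2$: any hypothetical tester measurement $M$ acting on $k$ copies with completeness $a$ on every state of $\calA$ and soundness $b$ on every state of $\calB$ would have to satisfy $a - b \leq \|\rho_1 - \rho_2\|_1 = O(2^{-n/16})$, which by standard trace-distance/Helstrom considerations (\cref{fact:trace_norm_acc}) is exactly the claimed bound on the distinguishing advantage.

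The main obstacle is entirely upstream, namely \cref{thm:subset-state-PRS} itself; granted that, the corollary is a one-line triangle inequality plus the contrapositive of the linearity argument in \cref{lem:point_dist_ens_dist}. The only place to be a bit careful is the parameter substitution, since the dominant contribution to the bound $O\!\left(k^2/d + k/\sqrt{s_i} + s_i k/d\right)$ comes from the $k/\sqrt{s_i}$ term for the smaller subset and from the $s_i k/d$ term for the larger one, and one wants to confirm both contributions end up bounded by $O(2^{-n/16})$ at the chosen sample size.
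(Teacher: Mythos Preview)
Your proposal is correct and matches the paper's proof essentially verbatim: the paper likewise sets $d=2^n$, $k=\lceil 2^{n/16}\rceil$, $s=\lceil 2^{n/8}\rceil$, $s'=\lceil 2^{n/4}\rceil$, applies \cref{thm:subset-state-PRS} twice through the Haar ensemble with a triangle inequality, and then invokes \cref{lem:ens_indist_point_indist}. There is nothing to add.
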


So for the task of distinguishing subset state of very different support size, even with exponentially many copies, the advantage is still exponentially small.

\begin{proof}
  Let $d = 2^{n}$, $k=\lceil 2^{n/16} \rceil$, $s=\lceil 2^{n/8} \rceil$, and $s' =\lceil 2^{n/4} \rceil$. Set
  \begin{align*}
    \calA = \left\{ \phi_S \mid S \subseteq [d], \abs{S} = s \right\} \quad \text{ and } \quad \calB = \left\{ \phi_S \mid S \subseteq [d], \abs{S} = s' \right\}.
  \end{align*}
  Let $\mu_\calA$ and $\mu_\calB$ be uniform distributions on $\calA$ and $\calB$, respectively.
  By~\cref{thm:subset-state-PRS} and triangle inequality, we obtain
  \begin{align*}
    \norm{\rho_\calA - \rho_\calB}_1 &~\le~ \norm{\rho_\calA -\int{\psi^{\otimes k} d\mu(\psi)}}_1 + \norm{\int{\psi^{\otimes k} d\mu(\psi)}-\rho_\calB}_1\\
    &~\le~ O\left(\frac{k^2}{d} + \frac{k}{\sqrt{s}} + \frac{s k}{d} +\frac{k}{\sqrt{s'}} + \frac{s' k}{d}\right)\\
    &~\le~ O\left(\frac{1}{2^{n/16}}\right)\mcom
  \end{align*}
  where the last inequality follows from our choices of $d$,$k$,$s$, and $s'$. Now, applying~\cref{lem:ens_indist_point_indist} to
  $\rho_\calA$ and $\rho_\calB$, we conclude the proof.
\end{proof}

We now set off to prove~\cref{thm:subset-state-PRS}. There will be three steps: 1. Give an approximant of the Haar random states; 2. Give an approximant of the random subset state; 3. Show that the two approximants are indistinguishable.

\paragraph{Approximate the Mixture of Haar Random States.}
First, let's look at the Haar random state. 
A well-known fact by representation theory gives an explicit formula
for the mixture of Haar random states, $\Psi = \int \psi^{\otimes k} d \mu $, where $\mu$ is the Haar measure. For a detailed proof, see for example \cite{harrow2013church}.
\begin{fact}
\begin{align*}
\int \psi^{\otimes k} d \mu & =\binom{d+k-1}{k}^{-1}\cdot\frac{1}{k!}\sum_{\pi\in S_{k}}\sum_{\vec{i}\in[d]^{k}}|\vec{i}\rangle\langle\pi(\vec{i})|.
\end{align*}
\end{fact}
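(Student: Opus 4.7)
The plan is to identify both sides with (a normalized version of) the projector $\Pi_{\mathrm{sym}}$ onto the symmetric subspace $\mathrm{Sym}^k(\C^d) \subseteq (\C^d)^{\otimes k}$, and then match normalizations.

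First I would unpack the right-hand side. For each $\pi \in S_k$, set $P_\pi := \sum_{\vec i \in [d]^k}\ket{\vec i}\bra{\pi(\vec i)}$, where $\pi$ acts on tuples by permuting positions. One checks directly that $P_\pi$ is the usual tensor-factor permutation operator on $(\C^d)^{\otimes k}$, so $\frac{1}{k!}\sum_{\pi \in S_k} P_\pi = \Pi_{\mathrm{sym}}$ is the standard symmetrizer. Since $\dim \mathrm{Sym}^k(\C^d) = \binom{d+k-1}{k}$ (count degree-$k$ monomials in $d$ variables), the RHS equals $\Pi_{\mathrm{sym}}/\binom{d+k-1}{k}$, i.e.\ the maximally mixed state on the symmetric subspace.

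Next, to show that $\Psi := \int \psi^{\otimes k}\,d\mu(\psi)$ coincides with this operator, I would apply a standard Schur's lemma argument. Two observations suffice. (i)~Each $\psi^{\otimes k}$ lies in $\mathrm{Sym}^k(\C^d)$, so $\Pi_{\mathrm{sym}}\Psi\Pi_{\mathrm{sym}} = \Psi$. (ii)~Left-invariance of the Haar measure under $\psi \mapsto U\psi$ gives $U^{\otimes k}\Psi\,(U^{\otimes k})^\dagger = \Psi$ for every unitary $U$ on $\C^d$, so $\Psi$ commutes with the image of the representation $U \mapsto U^{\otimes k}$. Since $U \mapsto U^{\otimes k}\big|_{\mathrm{Sym}^k(\C^d)}$ is irreducible (a classical consequence of Schur--Weyl duality), Schur's lemma forces $\Psi = c\,\Pi_{\mathrm{sym}}$ for some scalar $c$.

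Finally I would fix $c$ by taking traces: $1 = \Tr \Psi = c\Tr \Pi_{\mathrm{sym}} = c\binom{d+k-1}{k}$, yielding $c = \binom{d+k-1}{k}^{-1}$ and matching the RHS exactly. The only non-trivial ingredient is the irreducibility of $\mathrm{Sym}^k(\C^d)$ as a $U(d)$-representation, and no genuine obstacle arises. If one wished to avoid representation theory entirely, one could instead compute matrix elements $\bra{\vec j}\Psi\ket{\vec i}$ directly via Wick/Gaussian-type identities for Haar-random unit vectors on $\C^d$; these yield multinomial counts that collapse to the right-hand side, but the Schur-theoretic route above is cleaner.
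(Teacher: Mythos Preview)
Your proposal is correct and is precisely the standard representation-theoretic argument the paper alludes to: the paper does not actually prove this fact but states it as ``well-known by representation theory'' with a reference to Harrow's survey, and your Schur's-lemma derivation is exactly that argument spelled out.
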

Instead of working with $\Psi$ directly, we look at the operator $\tilde \Psi = \Pi \Psi \Pi$, where $\Pi$ is the projection onto the subspace of
\[ \Span\{\ket{\vec{i}} : \vec i \in A([d],k)\} \subseteq \cH^{\otimes k}.\]
Recall that $A([d],k)$ is the set of k-tuples of $[d]$ without repeated elements. Immediately,
\begin{align}
\tilde{\Psi}=\binom{d+k-1}{k}^{-1}\cdot\frac{1}{k!}\sum_{\pi\in S_{k}}\sum_{\vec{i}\in A([d],k)}|\vec{i}\rangle\langle\pi(\vec{i})|.
\end{align}
As long as $k$ is small, we expect that $\Psi \approx \tilde \Psi$. This is simple and known. For completeness, we present a proof.

\begin{proposition}\label{prop:haar-pi-haar}
$\|\Psi -\tilde\Psi\|_{1}=O(k^{2}/d).$
\end{proposition}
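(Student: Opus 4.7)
The plan is to compute $\Psi-\tilde\Psi$ \emph{exactly} as a rescaled projector, then bound its trace norm by counting its rank.

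First, I would rewrite $\Psi$ using the well-known identity $\Psi = \Pi_{\mathrm{sym}} / \binom{d+k-1}{k}$, where $\Pi_{\mathrm{sym}}$ is the projector onto the symmetric subspace of $(\mathbb{C}^d)^{\otimes k}$. Then I would pick the natural orthonormal basis of the symmetric subspace indexed by multisets $M$ of size $k$ from $[d]$: for each such $M$, let $\ket{M_{\mathrm{sym}}}$ be the normalized uniform superposition over the distinct tuples obtained by ordering $M$. Split this basis into two families: $\mathcal{F}_1$, consisting of $M$ with no repeated element (honest $k$-subsets), and $\mathcal{F}_2$, consisting of $M$ with at least one repetition. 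Let $\Pi_1,\Pi_2$ be the projectors onto $\mathrm{span}\,\mathcal{F}_1$ and $\mathrm{span}\,\mathcal{F}_2$ respectively, so that $\Pi_{\mathrm{sym}} = \Pi_1 + \Pi_2$.

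The key observation is how the projector $\Pi$ onto $\mathrm{span}\{\ket{\vec i}:\vec i\in A([d],k)\}$ interacts with this decomposition. If $M\in\mathcal{F}_1$ then every tuple in the support of $\ket{M_{\mathrm{sym}}}$ has distinct entries, so $\Pi\ket{M_{\mathrm{sym}}}=\ket{M_{\mathrm{sym}}}$. If $M\in\mathcal{F}_2$ then \emph{every} tuple in its support contains a repeat, so $\Pi\ket{M_{\mathrm{sym}}}=0$. Hence $\Pi\Pi_{\mathrm{sym}}\Pi = \Pi_1$, which gives
\[
\tilde\Psi = \frac{\Pi_1}{\binom{d+k-1}{k}}, \qquad \Psi-\tilde\Psi = \frac{\Pi_2}{\binom{d+k-1}{k}}.
\]

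Since $\Pi_2$ is an orthogonal projector, $\|\Pi_2\|_1 = \mathrm{rank}(\Pi_2) = |\mathcal{F}_2| = \binom{d+k-1}{k}-\binom{d}{k}$, so
\[
\|\Psi-\tilde\Psi\|_1 = 1 - \frac{\binom{d}{k}}{\binom{d+k-1}{k}} = 1 - \prod_{i=0}^{k-1}\frac{d-i}{d+i}.
\]
To finish, I would bound this product elementarily: using $\log\frac{d-i}{d+i} \ge -\frac{3i}{d}$ for $i<d/2$ (or a direct telescoping argument), the product is at least $1-O(k^2/d)$, yielding $\|\Psi-\tilde\Psi\|_1 = O(k^2/d)$.

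I don't anticipate any real obstacle here. The only subtlety is the step where $\Pi$ annihilates $\Pi_2$, which requires cleanly identifying the symmetric-subspace basis, but once that is stated, the rest is arithmetic on binomial coefficients.
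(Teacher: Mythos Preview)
Your proof is correct and essentially the same as the paper's. The paper writes $\Psi=\tilde\Psi+\cR$ with $\cR=(I-\Pi)\Psi(I-\Pi)$, observes $\tilde\Psi,\cR\succeq0$ and $\tilde\Psi\cR=0$, and computes $\|\Psi-\tilde\Psi\|_1=1-\|\tilde\Psi\|_1=1-d^{\underline{k}}/(d+k-1)^{\underline{k}}$; your $\Pi_2/\binom{d+k-1}{k}$ is exactly their $\cR$, and your basis argument is what makes the vanishing of the cross terms $\Pi\Psi(I-\Pi)$ (implicit in the paper's decomposition) explicit.
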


\begin{proof}
Consider the following decomposition of $\Psi:=\tilde\Psi +\cR$. Note that 
\[
\cR = (I-\Pi) \Psi (I-\Pi).
\]
It's clear that $\tilde{\Psi}$ and $\cR$ are both positive semi-definite, and
$\tilde{\Psi}\cR= 0.$ Therefore, the nonzero eigenspaces of $\Psi$ correspond to those of $\tilde \Psi$ and $\cR$, respectively. Consequently,
\begin{align*}
\left\Vert \Psi-\tilde{\Psi}\right\Vert _{1} & =1-\|\tilde{\Psi}\|_{1}=1-\frac{d^{\underline{k}}}{(d+k-1)^{\underline{k}}}\\
 & \qquad=1-\frac{d}{d+k-1}\cdot\frac{d-1}{d+k-2}\cdots\frac{d-k+1}{d}\\
 & \qquad\le O\left(\frac{k^{2}}{d}\right).\qedhere
\end{align*}
\end{proof}

\paragraph{Approximate the Mixture of Random Subset State.}
Next, we turn to random subset states. Let $\Phi = \Exp_{|S|=s} \phi_S ^{\otimes k}$, and consider $\Pi \Phi \Pi$, but normalized.\footnote{Although in the case of Haar random state we didn't normalize, this doesn't really matter. Our choice is for simplicity of proof.} In particular,
\[
\tilde{\Phi}=\Exp_{S:|S|=s}\left[\frac{1}{s^{\underline{k}}}\sum_{\vec{i},\vec{j}\in A(S,k),}|\vec{i}\rangle\langle\vec{j}|\right].
\]
Analogous to Proposition~\ref{prop:haar-pi-haar}, we have
\begin{proposition}\label{prop:subset-pi-subset}
$\|\Phi-\tilde{\Phi}\|_{1}=O(k/\sqrt{s}).$
\end{proposition}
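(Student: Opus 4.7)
The plan is to reduce the bound to a per-$S$ pure-state fidelity calculation and then average, which is a different route from Proposition~\ref{prop:haar-pi-haar}. The Haar-case argument hinged on the symmetry identity $\Pi\Psi(I-\Pi)=0$, which allowed a clean orthogonal decomposition of $\Psi$ into two PSD pieces on complementary supports. For subset states this symmetry fails: an individual summand $\phi_S^{\otimes k}$ has nonzero off-diagonal cross-terms between the distinct-tuple subspace $\Pi$ and its complement, and these do not vanish after averaging over $S$ either. Instead, I would exploit that for each fixed $S$ the operator
\[
\tilde\Phi_S \;:=\; \frac{1}{s^{\underline k}}\sum_{\vec i,\vec j\in A(S,k)} |\vec i\rangle\langle \vec j|
\]
is itself a rank-one \emph{pure} state, namely $|\tilde\phi_S\rangle\langle \tilde\phi_S|$ with $|\tilde\phi_S\rangle := (s^{\underline k})^{-1/2}\sum_{\vec i\in A(S,k)}|\vec i\rangle$, and compare it directly to the pure state $\phi_S^{\otimes k}$.

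For each fixed $S$, the overlap is a direct count of tuples:
\[
\bigl|\langle \phi_S^{\otimes k}\mid \tilde\phi_S\rangle\bigr|^2 \;=\; \frac{|A(S,k)|^2}{s^{k}\cdot s^{\underline k}} \;=\; \frac{s^{\underline k}}{s^{k}} \;=\; \prod_{i=0}^{k-1}\!\left(1-\frac{i}{s}\right) \;\ge\; 1 - \frac{k^2}{2s},
\]
using the elementary inequality $\prod_i(1-x_i)\ge 1-\sum_i x_i$ for $x_i\in[0,1]$. The standard pure-state trace-distance identity $\bigl\||\phi\rangle\langle\phi|-|\psi\rangle\langle\psi|\bigr\|_1 = 2\sqrt{1-|\langle\phi|\psi\rangle|^2}$ then gives, uniformly in $S$,
\[
\bigl\|\phi_S^{\otimes k} - \tilde\Phi_S\bigr\|_1 \;\le\; \sqrt{2}\cdot\frac{k}{\sqrt{s}} \;=\; O(k/\sqrt{s}).
\]

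Finally, since $\Phi=\E_{|S|=s}\,\phi_S^{\otimes k}$ and $\tilde\Phi = \E_{|S|=s}\,\tilde\Phi_S$, convexity of the trace norm yields
\[
\|\Phi-\tilde\Phi\|_1 \;=\; \bigl\|\,\E_S[\phi_S^{\otimes k}-\tilde\Phi_S]\,\bigr\|_1 \;\le\; \E_S\bigl\|\phi_S^{\otimes k}-\tilde\Phi_S\bigr\|_1 \;=\; O(k/\sqrt{s}),
\]
as desired. I do not anticipate any real obstacle: the only conceptual point to notice is that $\tilde\Phi_S$ is precisely the \emph{trace-one renormalization} of the compression $\Pi\phi_S^{\otimes k}\Pi$ (rather than the bare compression used in the Haar analysis), which makes it manifestly pure and reduces the bound to an elementary fidelity computation, avoiding any Schur-complement style argument.
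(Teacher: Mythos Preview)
Your proof is correct and follows essentially the same approach as the paper: pull the norm inside the expectation by convexity and bound the per-$S$ difference $\|\phi_S^{\otimes k}-\tilde\Phi_S\|_1$. The paper's proof is terse and simply asserts the per-$S$ bound is $O(k/\sqrt{s})$; your observation that $\tilde\Phi_S$ is itself a pure state and the use of the fidelity identity is exactly the natural way to justify that step.
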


\begin{proof} Let $\gamma$ be the uniform distribution over subset $S\subseteq [d]$  of size $s$,
\begin{align*} 
 & \left\Vert \int_S \left(\frac{1}{s^{k}}\sum_{\vec{i},\vec{j}\in S^{k}}|\vec{i}\rangle\langle\vec{j}|-\frac{1}{s^{\underline{k}}}\sum_{\vec{i},\vec{j}\in A(S,k)}|\vec{i}\rangle\langle\vec{j}|\right) d\gamma \right\Vert _{1}\\
 & \qquad\le\int_{S}\left\Vert \frac{1}{s^{k}}\sum_{\vec{i},\vec{j}\in S^{k}}|\vec{i}\rangle\langle\vec{j}|-\frac{1}{s^{\underline{k}}}\sum_{\vec{i},\vec{j}\in A(S,k)}|\vec{i}\rangle\langle\vec{j}|\right\Vert _{1} d\gamma\le O\left(\frac{k}{\sqrt{s}}\right).\qedhere
\end{align*}
\end{proof}
All that is left to do is to show that $\|\tilde{\Phi}-\tilde{\Psi}\|_{1}$ is small.
Fix any $\vec{i},\vec{j}\in A([d],k)$, and let $\ell=\ell(\vec{i},\vec{j})$
be the total number of distinct elements in the union of the elements of the vectors  $\vec{i},\vec{j}$. Then the $(\vec{i},\vec{j})$'th entry of $\tilde{\Phi}$ is
\begin{align}
\tilde{\Phi}(\vec{i},\vec{j}) & =\frac{1}{s^{\underline{k}}}\Pr_{|S|=s}[\vec{i},\vec{j}\in A(S,k)]=\frac{1}{s^{\underline{k}}}\frac{\binom{d-\ell}{s-\ell}}{\binom{d}{s}}=\frac{s^{\underline{\ell}}}{s^{\underline{k}}\cdot d^{\underline{\ell}}}.\label{eq:entry-value}
\end{align}

\paragraph{Comparing the Approximants.}
\begin{proposition}\label{prop:haar-vs-subset-appr}
For any $k\ll s\le d,$ it holds that \[\|\tilde{\Phi}-\tilde{\Psi}\|_{1}=O\left(\frac{sk}{d}\right).\]
\end{proposition}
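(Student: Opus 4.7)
The plan is to exploit the fact that $\tilde{\Phi}$ and $\tilde{\Psi}$ both lie in the Johnson-scheme algebra $\mathcal{J}([d],k)$ after factoring out a permutation symmetry, and then to bound the trace norm via the scheme's common eigendecomposition. First, under the identification $A([d],k)\cong\binom{[d]}{k}\times S_{k}$ obtained by fixing a canonical ordering of each $k$-subset, both entries $\tilde{\Phi}((A,\pi),(B,\sigma))$ and $\tilde{\Psi}((A,\pi),(B,\sigma))$ are independent of $\pi,\sigma$. Hence $\tilde{\Phi}-\tilde{\Psi}=(M_{\Phi}-M_{\Psi})\otimes J_{k!}$, where $J_{k!}\in\C^{k!\times k!}$ is the all-ones matrix with $\|J_{k!}\|_{1}=k!$. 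This reduces the target to $\|M_{\Phi}-M_{\Psi}\|_{1}\le O\bigl(s/(d\cdot(k-1)!)\bigr)$ on $\C^{\binom{[d]}{k}}$.

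Since the entries of $M_{\Phi},M_{\Psi}$ depend only on $t=|A\cap B|$, both matrices belong to $\mathcal{J}([d],k)$; from \eqref{eq:entry-value} with $\ell=2k-t$,
$$M_{\Phi}=\sum_{t=0}^{k}\frac{s^{\underline{2k-t}}}{s^{\underline{k}}\,d^{\underline{2k-t}}}\,\calD_{t},\qquad M_{\Psi}=\frac{1}{(d+k-1)^{\underline{k}}}\,\calD_{k}.$$
A clean way to sidestep direct manipulation of the Eberlein polynomials is to observe that $s^{\underline{k}}\binom{d}{s}\,M_{\Phi}=UU^{\top}$ for the Johnson down operator $U\colon\C^{\binom{[d]}{s}}\to\C^{\binom{[d]}{k}}$ given by $|S\rangle\mapsto\sum_{A\subseteq S,\,|A|=k}|A\rangle$. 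Then the standard decomposition $\C^{\binom{[d]}{k}}=\bigoplus_{j=0}^{k}V_{j}$ with $\dim V_{j}=\binom{d}{j}-\binom{d}{j-1}$ simultaneously diagonalizes $M_\Phi$ and $M_\Psi$: $M_\Phi$ acts on $V_j$ as the classical scalar $\frac{1}{s^{\underline k}\binom{d}{s}}\binom{d-k-j}{s-k-j}\binom{s-j}{k-j}$ from the Johnson up--down walk, while $M_\Psi$ acts as $\frac{1}{(d+k-1)^{\underline k}}$ uniformly.

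Denoting by $\lambda_{j}$ the eigenvalue of $M_\Phi-M_\Psi$ on $V_{j}$, I estimate $|\lambda_{j}|$ using the Pochhammer expansions $s^{\underline{m}}=s^{m}(1+O(m^{2}/s))$, $d^{\underline{m}}=d^{m}(1+O(m^{2}/d))$, and $(d+k-1)^{\underline{k}}=d^{k}(1+O(k^{2}/d))$ valid when $k^{2}\ll s\le d$. The leading-order contributions should cancel between $M_\Phi$ and $M_\Psi$, consistent with $\Tr\tilde{\Phi}=1$ and $\Tr\tilde{\Psi}=1-O(k^{2}/d)$, leaving $(\dim V_{j})\cdot|\lambda_{j}|=O(s/(d\cdot k!))$ uniformly in $j$. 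Summing $\sum_{j=0}^{k}(\dim V_{j})|\lambda_{j}|$ and restoring the $k!$ factor from the first step yields the claimed $\|\tilde{\Phi}-\tilde{\Psi}\|_{1}=O(sk/d)$. The main obstacle is verifying this uniform-in-$j$ cancellation: one must ensure that subleading corrections to $\lambda_{j}$ do not produce $(s/d)^{j}$-type terms that would spoil the bound for $j\ge 2$. The $k=1$ sanity check—where $V_{0}$ and $V_{1}$ each contribute $(s-1)/d$, summing to the expected $2(s-1)/d$—confirms the scaling there, and the task is to carry the analogous Delsarte--Hahn cancellations through for general $j,k$.
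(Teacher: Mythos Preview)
Your high-level plan---factor out $J_{k!}$ and work in the Johnson scheme---is exactly the paper's plan. But there are two issues, one concrete and one structural.

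\textbf{The eigenvalue formula is wrong.} Your claimed eigenvalue of $UU^\top$ on $V_j$, namely $\binom{d-k-j}{s-k-j}\binom{s-j}{k-j}$, fails already at $k=1$. There $(UU^\top)_{i,j}=\binom{d-1}{s-1}$ if $i=j$ and $\binom{d-2}{s-2}$ otherwise, so the eigenvalue on $V_1$ is $\binom{d-1}{s-1}-\binom{d-2}{s-2}=\binom{d-2}{s-1}$, not your $\binom{d-2}{s-2}$. The correct formula has $\binom{d-k-j}{s-k}$ in place of $\binom{d-k-j}{s-k-j}$.

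\textbf{The uniform claim is exactly where the work is, and the paper organizes it differently to avoid it.} You acknowledge that verifying $(\dim V_j)|\lambda_j|=O(s/(d\cdot k!))$ uniformly in $j$ is ``the main obstacle,'' and indeed it is: you would have to control cancellations in the Eberlein/Hahn polynomials for every $j$. The paper sidesteps this entirely. Rather than diagonalizing $M_\Phi-M_\Psi$ in the $V_j$ basis, it works in the $\calD_t$ basis: first rescale $\tilde{\Psi}$ by $(d+k-1)^{\underline{k}}/d^{\underline{k}}$ so that the $\calD_k$ coefficient cancels (this costs only $O(k^2/d)$ in trace norm), leaving $\tilde{\cD}=\sum_{t=0}^{k-1}\alpha_t\calD_t$ with $\alpha_t\approx s^{k-t}/d^{2k-t}$. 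Then apply the triangle inequality $\|\tilde{\cD}\|_1\le\sum_{t<k}\alpha_t\|\calD_t\|_1$. Each $\|\calD_t\|_1$ is bounded once and for all (via the known $\calD_t$ spectra---so the $V_j$ eigenvalues do appear, but only for the individual $\calD_t$, where no cancellation is needed). The resulting sum telescopes to $(1+2s/d)^k-1=O(sk/d)$. In short: your proposal tries to bound the eigenvalues of the \emph{sum} $\sum_t\alpha_t\calD_t$ on each $V_j$, which requires tracking sign cancellations; the paper bounds each summand's trace norm separately and adds, which does not.
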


\begin{proof}
Let 
\[
\cD=\tilde{\Phi}-\frac{(d+k-1)^{\underline{k}}}{d^{\underline{k}}}\tilde{\Psi}.
\]
This factor is chosen so that $\cD(\vec{i},\vec{j})=0$ for any $\vec{i}$
and $\vec{j}$ such that $\vec{j}=\pi(\vec{i})$ for some permutation
$\pi.$ By triangle inequality,
\[
\|\tilde{\Phi}-\tilde{\Psi}\|_{1} \le 
    \left\|
    \cD
    \right\|_1
    + \left\|
            \tilde{\Psi}-\frac{(d+k-1)^{\underline{k}}}{d^{\underline{k}}}\tilde{\Psi}
    \right\|_1,
\]
where the second term is bounded by $O(k^2/d)$.

We turn to $\cD$. Let $\vec{j}\sim\vec{i}$ to denote that $\vec{j}$ is a permutation
of $\vec{i}$. Note that for any $\vec{i}\sim\vec{j}$, $\cD(\vec{i},\cdot)=\cD(\vec{j},\cdot)$,
and similarly $\cD(\cdot,\vec{i})=\cD(\cdot,\vec{j}).$ Therefore
$\cD=\tilde{\cD}\otimes J$
where $J\in\C^{k!\times k!}$ is the all 1 matrix and $\tilde{\cD}\in\C^{\binom{[d]}{k}\times\binom{[d]}{k}},$
s.t. for any $A,B\in\binom{[d]}{k},$ 
\begin{align*}
\tilde{\cD}(A,B)=\cD(\vec{i},\vec{j}), &  & \vec{i},\vec{j}\text{ contain }A\text{ and }B,\text{ respectively.}
\end{align*}
Next, decompose $\tilde{\cD}:=\sum_{t=0}^{k-1}\alpha_{t}\cD_{t},$ where in view of (\ref{eq:entry-value}),
\begin{align*}
 & \alpha_{t}=\frac{(s-k) \cdots (s-2k+t+1)}{d \cdots (d-2k+t+1)},\\
 & \cD_{t}(A,B)=\begin{cases}
1 & |A\cap B|=t,\\
0 & \text{otherwise.}
\end{cases}
\end{align*}
$\cD_{t}$ is the adjacency matrix for the well-studied generalized
Johnson graphs~\cite{delsarte1973algebraic}. In particular,
we will need the following fact (explained in Appendix).
\begin{fact}\label{fact:johnson-trace}
For any $0\le t\le k-1$, and for $k=O(\sqrt{d})$
\[
\|\cD_{t}\|_{1}\lesssim \binom{d-k}{k-t}\binom{d}{t}2^{k-t}.
\]
\end{fact}
Assisted by the above fact, we can bound $\|\cD\|_{1}$ for $sk=O(d)$ as below, 
\begin{align*}
    \|\cD\|_{1}=k! & \|\tilde{\cD}\|_{1}\le k!\sum_{t=0}^{k-1}\alpha_{t}\|\cD_{t}\|_{1}\lesssim k!\sum_{t=0}^{k-1}\frac{s^{k-t}}{d^{2k-t}}\cdot\frac{d^{k}}{t!(k-t)!}\cdot2^{k-t}\\
     & =\sum_{t=0}^{k-1}\left(\frac{2s}{d}\right)^{k-t}\binom{k}{t}= \left(1+\frac{2s}{d}\right)^{k}-1\lesssim O\left(\frac{2sk}{d}\right). \qedhere
\end{align*}
\end{proof}
Theorem~\ref{thm:subset-state-PRS} follows by triangle inequality on Propositions ~\ref{prop:haar-pi-haar}-\ref{prop:haar-vs-subset-appr}.

\subsection{Indistinguishability of Support Size in the Dense Regime}\label{subsec:dense_indist}
We have discussed that subset state with small support of fixed size are information-theoretically indistinguishable from Haar random states. This fact rules out property testing distinguishing general states with support size $s_0 = \omega(\poly(n))$ and $s_1 = 2^n / \omega(\poly(n)).$ 
Then one may hope that property testing for large support size (constant density) may be possible.\footnote{In the sparse regime for $s=O(\poly(n))$, one can learn the state.} In this section, we adapt a similar proof to show that this is also impossible.

In particular, we consider the following two ensembles for parameters $s,t,p\in(0,1)$: 
\begin{align*}
\cE_{1} & =\left\{ \phi_{S}=\frac{1}{\sqrt{|S|}}\sum_{i\in S}\ket i\,:\,|S|\approx p d\right\} ,\\
\cE_{0} & =\left\{ \phi_{S,T}=\frac{1}{\sqrt{2|S|}}\sum_{i\in S}\ket i-\frac{1}{\sqrt{2|T|}}\sum_{j\in T}\ket j\,:\,|S|\approx s d,|T|\approx t d,S\cap T=\varnothing\right\} .
\end{align*}
To be precise, the underlying distributions of the two ensembles are: 
\begin{enumerate}
    \item For $\cE_1$, sample state $\phi_{S}$ by letting $i\in S$ w.p. $p$ for all $i\in [d]$ independently at random;
    \item For $\cE_0$, sample state $\phi_{S,T}$ by the following process: For each $i\in [d]$ independently, sample a uniformly random $r\in [0,1],$ then let $i\in S$ if $r<s$; let $i\in T$ if $s\le r < s+t $.
\end{enumerate} 
Choose $p, s, t \in (0,1)$ to be some constant such that 
\begin{equation}\label{eq:ell-1-cond}
    \sqrt{\frac{s}{2}} - \sqrt{\frac{t}{2}} = \sqrt{p} .
\end{equation}
The choice is made so that states from the two ensembles have about the same overlap with $\ket\mu=\sum_{i\in [d]} \ket i / \sqrt{d}.$ Otherwise, comparing with $\ket\mu$ will be a valid attack distinguishing the two ensembles. On the other hand, this overlap condition is the only thing matters: In $\cE_0$, there are positive part and negative part, it is totally fine to have both parts positive, the analysis works equally well. Such examples explain in general why it is hard to test density in the dense regime.

For concreteness, one can set $s=8p,$ and $t=2p$ in (\ref{eq:ell-1-cond}). The support size of a random state from $\cE_1$ will be  $pd \pm \epsilon d$ almost surely for arbitrarily small constant $\epsilon>0$; while the support size of a random state from $\cE_0$ will be $(s+t)d \pm \epsilon d$ almost surely. Note that $s+t = 10 p$, i.e., states in $\cE_0$ has 10 times larger support size than $\cE_1$. A slight abuse of notation, we also use $\cE_0, \cE_1$ to denote the mixed state for states of the average state $\cE_0, \cE_1$, respectively. Our goal is to show that
\[
    \|\cE_0 - \cE_1 \| = \negl(n).
\]

\paragraph{Approximant of $\cE_1$.}
We consider approximant of the average of state from $\cE_1$, 
\begin{center}
\begin{tikzcd}
    \cE_1 \arrow[r, "\Pi"] & \cE_1' \arrow[r,"\text{flatten}"] &[1em] \cE_1''.
\end{tikzcd}
\end{center}
Recall $\Pi$ be the projection onto the subspace  $\Span\{\ket{\vec i}: \vec i\in A([d],k)\}$, then $\cE_1' = \Pi\cE_1\Pi$. In the approximant $\cE_1''$, we pretend that the ``amplitudes'' do not depend on $|S|$, in other words we ``flatten'' the distribution that we sample the state. In particular,
\begin{align*}
    \cE_{1} & = \Exp_{S}\left[|S|^{-k}\left(\sum_{i\in S}\ket i\right)^{\otimes k}\left(\sum_{i\in S}\bra i\right)^{\otimes k}\right];\\
    \cE'_{1} & = \Exp_S \left[|S|^{-k}\sum_{\vec i, \vec j \in A(S,k)} \ket{\vec i}\bra{\vec j}\right];\\
    \cE''_{1} & = \Exp_S \left[(pd)^{-k}
        \sum_{\vec i, \vec j \in A(S,k)}  \ket{\vec i}\bra{\vec j}
    \right].
\end{align*}
We compare $\cE_1$ and $\cE_1'$ as follows
\begin{align*}
    \|\cE_1 - \cE_1'\| &
        \le \Exp_S |S|^{-k} \left\|\sum_{\vec i, \vec j \in S^k} \ket{\vec i}\bra{\vec j}
            - \sum_{\vec i, \vec j \in A(S,k)} \ket{\vec i}\bra{\vec j}\right\| 
            \\
    & \le O\left(\Exp_S \sqrt{\frac{k^2}{|S|}}\right) = O\left(\frac{k}{\sqrt d}\right).
\end{align*}
Next compare $\cE'_1$ and $\cE''_1$, we claim
\begin{align}   \label{eq:cE-one-error-1}
    \|\cE_1'' - \cE_1' \| \le O\left(\frac{k}{d^{2/5}}\right).
\end{align}
Consider the interval $L=pd \pm d^{3/5}$. By Chernoff Bound, the probability that $|S|\not\in L$ is $\exp(-\Omega(d^{1/5})).$ Then (\ref{eq:cE-one-error-1}) is a direct conclusion from the following two bounds.

1. For $S\in L$, 
\begin{align*}
    \left\|
        (|S|^{-k} - (pd)^{-k})\sum_{\vec i, \vec j \in S^k} \ket{\vec i}\bra{\vec j}
    \right\|
    & = |S|^{\underline k}\cdot \frac{|S|^k - (pd)^k}{ (|S|pd)^k} 
    \\
    & \le \left(1+\frac{k^2}{|S|}\right) \left(\left( 1+\frac{1}{pd^{2/5}}\right) ^k-1\right)
    \\
    & \le O\left(\frac{k}{d^{2/5}}\right).
\end{align*}

2. For $S \not\in L$,
    \begin{align*}
            \sum_{S: |S|\not\in L} \Pr[S] \left\|(pd)^{-k}\sum_{\vec i, \vec j \in S^k} \ket{\vec i}\bra{\vec j}\right\|
            ,\;  \sum_{S: |S|\not\in L} \Pr[S] \left\||S|^{-k}\sum_{\vec i, \vec j \in S^k} \ket{\vec i}\bra{\vec j}\right\| 
            \le \exp(-\Omega(d^{1/5})).
    \end{align*}
If follows that 
\begin{align*}
    \|\cE_1 - \cE_1''\| \le O\left(\frac{k}{d^{2/5}}\right).
\end{align*}

For any $\vec i, \vec j \in A([d], k)$, compute the entry of $\cE''(\vec i, \vec j)$ explicitly. Note the entry depends only on $\ell = |\vec i \cup \vec j|$,
\begin{align}
    \bra{\vec i} \cE'_1 \ket{\vec j} = p^{\ell-k}\cdot d^{-k}. \label{eq:cE-one}
\end{align}

\paragraph{Approximant of $\cE_0$.}
We consider approximant of the average of state from $\cE_0$ completely analogously (and a lot more tedious) to $\cE_1$, 
\begin{center}
\begin{tikzcd}
    \cE_0 \arrow[r, "\Pi"] & \cE_0' \arrow[r,"\text{flatten}"] &[1em] \cE_0''.
\end{tikzcd}
\end{center}
In particular,
\begin{align*}
    \cE_{0} & = \Exp_{S, T}\left[\left(\frac{1}{\sqrt{2|S|}}\sum_{i\in S}\ket i-\frac{1}{\sqrt{2|T|}}\sum_{j\in T}\ket j\right)^{\otimes k} \left(\frac{1}{\sqrt{2|S|}}\sum_{i\in S}\bra i-\frac{1}{\sqrt{2|T|}}\sum_{j\in T}\bra j\right)^{\otimes k} \right];\\
    \cE'_{0} & = \Pi \cE_0 \Pi;\\
    \cE''_{0} & = \Pi\left(\Exp_{S, T}\left[\left(\frac{1}{\sqrt{2sd}}\sum_{i\in S}\ket i-\frac{1}{\sqrt{2td}}\sum_{j\in T}\ket j\right)^{\otimes k} \left(\frac{1}{\sqrt{2sd}}\sum_{i\in S}\bra i-\frac{1}{\sqrt{2td}}\sum_{j\in T}\bra j\right)^{\otimes k} \right]
    \right) \Pi.
\end{align*}

The analysis would also be completely analogous to that of $\cE_1$. We omit the calculations here,
\[
    \|\cE_0 - \cE_0''\| \le O\left(\frac{k}{d^{2/5}}\right).
\]

Now we compute the entry for $\cE_0''$ explicitly. Fix any $\vec{i},\vec{j}\in A([d],k),$ let $\ell=\ell(\vec{i},\vec{j})$
be the total number of distinct elements in vectors $\vec{i}$ union
$\vec{j}$. Let $a:=2(\ell-k),b:=2k-\ell$. So $a$ is the number
of elements that appeared only in $\vec{i}$ or $\vec{j}$, while
$b$ is the number of elements that appeared in both $\vec{i}$ and
$\vec{j}$. Then
\begin{align}
    \bra{\vec i}\cE_0'' \ket{\vec j}
    & = \sum_{\ell_{1}=0}^{a}\sum_{\ell_{2}=0}^{b}\binom{a}{\ell_{1}}\binom{b}{\ell_{2}}\left(\frac{1}{\sqrt{2sd}}\right)^{\ell_{1}+2\ell_{2}}s^{\ell_{1}+\ell_{2}}\left(\frac{-1}{\sqrt{2td}}\right)^{a-\ell_{1}+2(b-\ell_{2})}t^{a-\ell_{1}+b-\ell_{2}}
    \nonumber\\
    & = \sum_{\ell_{1}=0}^{a}\sum_{\ell_{2}=0}^{b}\binom{a}{\ell_{1}}\binom{b}{\ell_{2}}
    \left(\frac{s}{\sqrt{2sd}}\right)^{\ell_{1}}
    \left(\frac{1}{\sqrt{2d}}\right)^{2\ell_{2}}
    \left(\frac{-t}{\sqrt{2td}}\right)^{a-\ell_{1}}
    \left(\frac{1}{\sqrt{2d}}\right)^{2(b-\ell_{2})}
    \nonumber\\
    & = \sum_{\ell_{1}=0}^{a}
    \binom{a}{\ell_{1}}
    \left(\frac{s}{\sqrt{2sd}}\right)^{\ell_{1}}
    \left(\frac{-t}{\sqrt{2td}}\right)^{a-\ell_{1}}
    \sum_{\ell_{2}=0}^{b}
    \binom{b}{\ell_{2}}
    \left(\frac{1}{{2d}}\right)^{b}
    \nonumber\\
    & =\left(\frac{s}{\sqrt{2sd}}-\frac{t}{\sqrt{2td}}\right)^{a}\left(\frac{1}{d}\right)^{b}
    \nonumber\\
    & =\left(\sqrt{\frac{p}{d}}\right)^{a}\left(\frac{1}{d}\right)^{b}
    \nonumber\\
    &=\frac{p^{\ell-k}}{d^{k}}. \label{eq:cE-zero}
\end{align}

\paragraph{Indistinguishability of the Two Ensembles}
Note, $\cE''_0 = \cE''_1 $ by (\ref{eq:cE-one}) and (\ref{eq:cE-zero}). By triangle inequality,
\[
    \|\cE_0 - \cE_1 \| \le O\left(\frac{k}{d^{2/5}}\right) = \negl(n).
\]

\subsection{Quantum and Classical Property Testing}
\label{subsec:quantum-encoding-of-distr}
Property testing for classical distribution can be viewed as a degenerated version of property testing for quantum states. To make this point precise, we encode classical distributions as states and show that lower bounds on quantum property testing imply lower bounds for property
testing of classical  distributions. 

Let $\calA \subseteq \Delta_N$ be a collection of probability distributions in $\R^N$. We say that $\ket\psi$ has classical shadow in $\calA$ if $\ket{\psi} = \sum_{i=1}^N \alpha_i \ket{i}$ satisfying $(\abs{\alpha_1}^2,\ldots,\abs{\alpha_N}^2) \in \calA$. This definition generalizes to mixed states, i.e., $\psi$ has classical shadow in $\calA$ if $\psi$ can be expressed as some mixture of pure states where each pure state has a classical shadow in $\calA$. Now a further generalization to reflect more than one ``samples'' from classical distribution,
a general state $\rho$ has a classical $k$-shadow in $\calA$
if for some distribution $\lambda$ on the $\ell_2$-unit sphere of $\C^N$ and $k \in \mathbb{N}$, the state $\rho$ can be expressed as
\begin{align*}
  \rho = \E_{\ket{\psi} \sim \lambda} \left(\ket{\psi} \bra{\psi}\right)^{\otimes k},
\end{align*}
with every $\ket{\psi}\sim\lambda$ has a shadow in $\calA$.
So quantum states with $k$-shadow of $\calA$ corresponds to the natural quantum counterparts for a mixture $\cD$ of distributions in $\calA$ from where $k$ samples will be drawn. Precisely, let $\Lambda$ be the channel that measures each copy $\ket{\psi}\bra{\psi}$ in the standard computational basis. Then the effect of $\Lambda$ is to make $k$ samples from a random distribution $\nu\in\cD$. 

\begin{claim}\label{claim:shadow_proj}
  If $\rho$ has a classical $k$-shadow in $\calA \subseteq \Delta_N$, then
  \begin{align}
    \Lambda(\rho) = \E_{\nu \sim \calD} \left(\sum_{i=1}^N \nu_i \ket{i} \bra{i}\right)^{\otimes k}\mcom
  \end{align}
  where $\calD$ is some distribution on $\calA$.
\end{claim}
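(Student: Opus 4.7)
The plan is to unpack the definitions and use linearity of the measurement channel, reducing the claim to a single-copy computation. First I would start from the definition of having a classical $k$-shadow: by hypothesis, $\rho = \E_{\ket{\psi}\sim\lambda}(\ket{\psi}\bra{\psi})^{\otimes k}$ for some distribution $\lambda$ on unit vectors of $\C^N$, where each such $\ket{\psi} = \sum_i \alpha_i(\psi) \ket{i}$ has its amplitude-square vector $\nu(\psi) := (|\alpha_1(\psi)|^2,\dots,|\alpha_N(\psi)|^2)$ lying in $\calA$. Since $\Lambda$ is a quantum channel (hence linear and bounded), it commutes with the expectation, so
\begin{align*}
\Lambda(\rho) \;=\; \E_{\ket{\psi}\sim\lambda}\,\Lambda\!\left((\ket{\psi}\bra{\psi})^{\otimes k}\right).
\end{align*}

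Next I would handle the per-sample computation. The channel $\Lambda$ applies the computational-basis dephasing independently to each of the $k$ tensor factors, so it factorizes as $\Lambda = \Lambda_1^{\otimes k}$, where $\Lambda_1$ dephases a single copy. A direct calculation for a pure state $\ket{\psi}=\sum_i \alpha_i\ket{i}$ gives $\Lambda_1(\ket{\psi}\bra{\psi}) = \sum_i |\alpha_i|^2 \ket{i}\bra{i} = \sum_i \nu_i(\psi)\ket{i}\bra{i}$. Taking the tensor product across the $k$ copies yields
\begin{align*}
\Lambda\!\left((\ket{\psi}\bra{\psi})^{\otimes k}\right) \;=\; \left(\sum_{i=1}^N \nu_i(\psi)\,\ket{i}\bra{i}\right)^{\!\otimes k}.
\end{align*}

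Finally I would define $\calD$ as the pushforward of $\lambda$ under the map $\ket{\psi}\mapsto \nu(\psi) \in \calA$; this is a valid probability distribution supported on $\calA$ by the hypothesis on $\lambda$. Substituting this pushforward into the expectation above gives the desired identity $\Lambda(\rho) = \E_{\nu\sim\calD}\big(\sum_i \nu_i\ket{i}\bra{i}\big)^{\otimes k}$. The argument is essentially a bookkeeping exercise, so I do not anticipate any real obstacle; the only subtlety worth being careful about is that the representation of $\rho$ as a mixture of pure tensor powers need not be unique, so $\calD$ depends on the chosen $\lambda$, but any valid $\lambda$ witnessing the $k$-shadow property produces a valid $\calD$, which is all the statement requires.
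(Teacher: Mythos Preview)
Your proposal is correct. The paper actually states this claim without proof, treating it as immediate from the definitions of the classical $k$-shadow and the channel $\Lambda$; your unpacking via linearity of $\Lambda$, the factorization $\Lambda=\Lambda_1^{\otimes k}$, and the pushforward of $\lambda$ to obtain $\calD$ is exactly the routine verification the paper leaves implicit.
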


Using~\cref{thm:subset-state-PRS} and~\cref{claim:shadow_proj}, we deduce the following lower bound.

\begin{corollary}[Failure of Standard Classical Testing]
  Even given $\lceil 2^{n/16} \rceil$ samples of a flat distribution, no tester can distinguish between support size $\lceil 2^{n/8} \rceil$ from $\lceil 2^{n/4} \rceil$
  with probability better than $O(2^{-n/16})$.
\end{corollary}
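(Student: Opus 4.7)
The plan is to lift the quantum indistinguishability result (the preceding Failure of Standard Testing theorem) to the classical setting by passing through the computational-basis measurement channel, using the classical shadow framework of \cref{claim:shadow_proj}. Concretely, I would set $d = 2^n$, $k = \lceil 2^{n/16}\rceil$, $s = \lceil 2^{n/8}\rceil$, $s' = \lceil 2^{n/4}\rceil$, and consider the two natural ensembles of subset states $\calA = \{\phi_S : |S| = s\}$ and $\calB = \{\phi_S : |S| = s'\}$ with uniform measure over subsets. Let $\rho_\calA = \mathbb{E}_{|S|=s}\,\phi_S^{\otimes k}$ and $\rho_\calB = \mathbb{E}_{|S|=s'}\,\phi_S^{\otimes k}$ be their $k$-fold mixtures.

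Next, I would observe that each $\rho_{\calA}, \rho_{\calB}$ is precisely a state with classical $k$-shadow in the corresponding set of flat distributions on $[d]$: the distribution $\calD_{\calA}$ (resp.\ $\calD_{\calB}$) appearing in \cref{claim:shadow_proj} is the uniform distribution on flat distributions of support size $s$ (resp.\ $s'$). Applying the measurement-in-standard-basis channel $\Lambda$, we obtain
\[
\Lambda(\rho_\calA) \;=\; \mathbb{E}_{|S|=s}\!\left(\tfrac{1}{s}\sum_{i\in S}\ket{i}\!\bra{i}\right)^{\!\otimes k}, \qquad \Lambda(\rho_\calB) \;=\; \mathbb{E}_{|S|=s'}\!\left(\tfrac{1}{s'}\sum_{i\in S}\ket{i}\!\bra{i}\right)^{\!\otimes k},
\]
which are exactly the mixtures of the $k$-fold product classical distributions arising when one draws $k$ i.i.d.\ samples from a uniformly random flat distribution of support size $s$ or $s'$.

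Now, by the quantum Failure of Standard Testing theorem (whose proof uses \cref{thm:subset-state-PRS} and the triangle inequality through the Haar-random mixture), we have $\|\rho_\calA - \rho_\calB\|_1 \leq O(2^{-n/16})$. Since quantum channels are contractive in trace distance, this gives $\|\Lambda(\rho_\calA) - \Lambda(\rho_\calB)\|_1 \leq O(2^{-n/16})$, which in the classical world is a total variation bound of the same order on the two $k$-sample distributions. Finally, I would invoke the classical analogue of \cref{lem:ens_indist_point_indist} (i.e., the contrapositive of linearity of acceptance probability): any tester achieving acceptance probability $\geq a$ on every flat distribution of support $s$ and $\leq b$ on every flat distribution of support $s'$ would achieve the same gap on the mixtures, forcing $a - b \leq O(2^{-n/16})$, which yields the corollary.

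The only mild subtlety — and the step I would be most careful about — is confirming that the contraction step truly delivers the correct classical quantity: the trace distance between $\Lambda(\rho_\calA)$ and $\Lambda(\rho_\calB)$ equals the total variation distance between the induced distributions on $[d]^k$ (since both are diagonal in the computational basis), and these distributions are precisely the $k$-sample distributions a classical tester sees. Given this, no additional technical work is needed beyond the already-established quantum bound.
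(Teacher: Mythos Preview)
Your proposal is correct and essentially identical to the paper's proof: form the two subset-state ensembles, invoke \cref{thm:subset-state-PRS} (via the triangle inequality through Haar) to bound $\|\rho_\calA - \rho_\calB\|_1$, and then apply the contractivity of the computational-basis measurement channel $\Lambda$ to obtain the classical total-variation bound. The paper's version is somewhat terser, but your additional remarks about the diagonal/TV correspondence and the classical analogue of \cref{lem:ens_indist_point_indist} are valid elaborations of the same argument.
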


\begin{proof}
  Consider two ensemble of subset states $\rho_{\calA}$ and $\rho_{\calB}$ with subset sizes $\lceil 2^{n/8} \rceil$
  and $\lceil 2^{n/4} \rceil$, respectively.
  From~\cref{thm:subset-state-PRS}, we deduce that $\norm{\rho_{\calA} - \rho_{\calB}}_1 \le O(2^{-n/16})$.
  Note that $\Lambda$ does not increase the trace distance, so $\norm{\Lambda(\rho_{\calA}) - \Lambda(\rho_{\calB})}_1 \le O(2^{-n/16})$ concluding the proof. 
\end{proof}

\section{Testing with Classical Certificates: Another Fiasco}\label{sec:classical-lower-bound}

In this section, we discuss property testing for classical distribution in the presence of classical certificates. We deliberately choose the word certificates to contrast the notion of proofs in the definition of $\propMA$. 

Our investigation in this section focuses on the concrete problem: establishing lower bounds for certifying the support size, the classical counterpart to the subset state problem.
We consider the problem $\textup{GapSupp}_{s,\ell}$, that is to distinguish the following two ensembles:

\begin{definition}[The Gap Support Size Problem]
The $\textup{GapSupp}_{s,\ell} = (\calP_{\textup{yes}}, \calP_{\textup{no}})$, where
    \begin{description}        
        \item (YES) $\cE_1=\{\text{uniform distribution on support }S : S\subseteq[N], |S|=s\}$
        \item (NO) $\cE_0=\{\text{uniform distribution on support } S: S\subseteq[N], |S|=\ell\}$.
    \end{description}
\end{definition}

Since $\textup{GapSupp}_{s,\ell}$ can be thought of as a special case of quantum state property due to~\cref{subsec:quantum-encoding-of-distr}, we think $N$ as some exponentially large quantity, and let $n=\log N$.

\subsection{Lower Bounds on Classical Testing with Proofs via HDX Fast Mixing}
Our main result in the section is the following.
\begin{theorem}[Classical Indistinguishability for Subset Distribution with Certificates]\label{thm:subset-distribution-proof}
    For some parameter $s=\omega(\poly(n))$, 
    given any proof of length $p$ and allowing $t$ samples, then the verifier can distinguish $\textup{GapSupp}_{s,2s}$ with an advantage at most
    \[
        O\left(\sqrt{\frac{tp}{s}}+\frac{t^2}{s}\right).
    \]
\end{theorem}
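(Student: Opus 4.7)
My plan is to couple the YES and NO cases so that any verifier's advantage reduces to a fast-mixing question on the complete simplicial complex $X = \bigsqcup_{i=0}^{2s} X(i)$ with $X(i) = \binom{[N]}{i}$, in the spirit of Section~\ref{sec:strategy}. The first reduction replaces the $t$ i.i.d.\ samples by a uniformly random $t$-subset of the hidden support: for any $S$ with $|S|\in\{s,2s\}$, a birthday calculation shows that the multiset of $t$ draws from $\mathrm{Unif}(S)$ is $O(t^2/s)$-close in total variation to $\mathrm{Unif}\binom{S}{t}$. This absorbs the additive $t^2/s$ term and lets me assume the verifier receives a uniform $T\in\binom{S}{t}$ together with a proof $\Pi$ of length $p$.

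Next, I fix an optimal YES-prover $P_{\textup{yes}}:\nu\mapsto\Pi$ and construct the following mimicking NO-prover: given $\nu_{S_0}$ with $|S_0|=2s$, sample $S_1$ uniformly in $\binom{S_0}{s}$ and output $P_{\textup{yes}}(\nu_{S_1})$. Transitivity of uniformity forces $S_1$ to be marginally uniform on $X(s)$, so the marginal law of $\Pi$ in the NO case coincides with that in the YES case. Writing $D_{\textup{yes}}, D_{\textup{no}}$ for the joint laws of $(T,\Pi)$ under the two scenarios, the verifier's advantage against this specific NO prover is at most $\|D_{\textup{yes}}-D_{\textup{no}}\|_{TV}$, which, because the $\Pi$-marginals agree, equals $\mathbb{E}_\pi\|\mathrm{Law}(T|\pi)^{\textup{yes}}-\mathrm{Law}(T|\pi)^{\textup{no}}\|_{TV}$. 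A triangle inequality through the uniform distribution $U_{X(t)}$ reduces the task to bounding each conditional distance to $U_{X(t)}$ separately.

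To do so I combine two ingredients. First, by the chain rule for relative entropy, $\mathbb{E}_\pi\,\mathrm{KL}(\mu_\pi\,\|\,U_{X(\cdot)}) = I(\mathcal{S};\Pi) \le H(\Pi) \le p$, where $\mu_\pi$ denotes the posterior of the hidden set given $\Pi=\pi$ ($\mathcal{S}$ being the hidden $s$-set in the YES case and the hidden $2s$-set in the NO case). Second, I invoke the entropy contraction of the down walk on the complete complex---a consequence of entropic independence~\cite{AJKPV22}, equivalently of the sharp modified log-Sobolev inequality for the Bernoulli--Laplace chain---which says that for every distribution $\mu$ on $X(i)$,
\[
\mathrm{KL}\!\left(\mu D_{i\to t} \,\big\|\, U_{X(t)}\right) \;\le\; \tfrac{t}{i}\,\mathrm{KL}\!\left(\mu \,\big\|\, U_{X(i)}\right).
\]
In the YES case the conditional law of $T$ is $\mu_\pi^{\textup{yes}} D_{s\to t}$, so Pinsker together with Jensen's inequality yields $\mathbb{E}_\pi\|\mathrm{Law}(T|\pi)^{\textup{yes}}-U_{X(t)}\|_{TV} \le \sqrt{tp/(2s)}$. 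In the NO case the mimicking construction makes the conditional law of $T$ equal to $\mu_\pi^{\textup{yes}} U_{s\to 2s} D_{2s\to t}$; data processing through the up walk plus entropy contraction at rate $t/(2s)$ from $X(2s)$ to $X(t)$ delivers the same order bound. Summing with the birthday error gives the claimed $O(\sqrt{tp/s}+t^2/s)$.

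The main obstacle is the entropy-contraction step: plain data processing would only give $\sqrt{p}$ in place of $\sqrt{tp/s}$, losing the crucial factor $\sqrt{t/s}\ll 1$ that keeps the bound tight against the ``prover just sends extra samples'' strategy. Fortunately, the complete complex is the prototypical setting where entropic independence holds with the optimal constant, so the sharp rate $t/i$ per down-step is available off-the-shelf.
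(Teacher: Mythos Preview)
Your proof is correct and follows the paper's strategy: reduce i.i.d.\ sampling to uniform $t$-subsets via the birthday bound, then combine Pinsker with the entropy contraction of the down walk (\cref{lem:div-contraction}, from~\cite{AJKPV22}) and the mutual-information bound $I(\mathcal{S};\Pi)\le H(\Pi)\le p$. The only difference is that the paper's NO-adversary samples $\Pi'$ from the YES proof marginal \emph{independently} of the hidden $2s$-set, so that $T'$ is exactly uniform on $X(t)$ conditionally on $\Pi'$; this eliminates your triangle inequality through $U_{X(t)}$ and the extra up-walk data-processing step, but the argument is otherwise identical.
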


\begin{remark}[Optimality]\label{re:tightness-classical}
    The above lower bound is tight in general. To see this, note that one obvious strategy that the honest prover can do is to send set $T$ consisting of $p$ elements from the support of size $s$ at the cost about $p\log (N/p)$ communication complexity. In the yes case, the probability of seeing a collision with $T$ sampling $s/p$ elements is $(1-p/s)^{s/p}$; while in the no case, the probability seeing a collision with $T$ in the samples is $(1-p/2s)^{s/p}$. The two probabilities can differ by $\Omega(1)$.
\end{remark} 

Our proof relies on the connection to the fast mixing of high-dimension expander, which is just a simplicial complex, i.e., a downward closed set system. A random Down walk from a vertex $v$ in a high-dimensional expander representing some set $S$ roughly corresponds to make a small number of samples of the flat distribution on $S$. Fast mixing on a high-dimension expander roughly says that the Down walk from some vertex $v$ at some high level mixes very fast, in another words, if the number of samples is small, it looks like random samples from a uniform distribution. To obtain the tight bound, it is crucial analyze precisely how fast the Down walk mixes.

\begin{proof}
Suppose we are assisted with a proof of length $p$. For any string $\Pi\in\{0,1\}^p$, let $\cF_\Pi$ denote the set of $S\in \cE_1$ such that the faithful prover will provide the proof $\Pi$. Now there will be two situations, one with a faithful prover, one with the adversarial prover:
\begin{description}
    \item (Yes) Consider the YES distribution indicated by its support $S$ chosen randomly from $\cE_1$. Let $\Pi$ be the faithful proof associated with $S$. The verifier will sample $t$ elements from the distribution. Overall, the verifier observes  $X_1, X_2, \ldots, X_t$ together with the proof $\Pi$.
    \item (NO) Suppose a uniformly random distribution indicated by its support $S'$ is chosen from $\cE_0$. The adversary will send a proof $\Pi'$ to the verifier with probability $|\cF_{\Pi'}| / \binom{N}{s}$, independent to $S'$.  The verifier samples $t$ elements from the uniform distribution on $S'$ together with the adversary proof $\Pi'$. So the verifier sees $Y_1, Y_2, \ldots, Y_t$ together with an adversarial proof $\Pi'$. 
\end{description}

Let $\nu_1$ and $\nu_0$ denote the distribution on samples together with the proofs that the verifier sees in the YES and NO case, respectively.

Now, note that for $t\ll s$, the probability that $X_1,X_2, \ldots, X_t$ consist some collision is $O(t^2/s)$. Therefore in YES case, we can alternatively think of the distribution of $X_1, X_2, \ldots, X_t, \Pi$, as follows

\vspace{1.5mm}
\noindent\fbox{\begin{minipage}[t]{1\columnwidth - 2\fboxsep - 2\fboxrule}%
\textbf{\uline{Modified Faithful Process}} (To generate $\tilde\nu_1$ on $X_1,X_2,\ldots,X_t,\Pi$): \vspace{0.5mm}
\begin{enumerate}
    \item Sample $\Pi$ with probability $|\cF_\Pi|/\binom{N}{s}$;
    \item Sample $S\in\cF_\Pi$ at random;
    \item Sample a subset $T\subseteq S$ of size $t$ uniformly at random;
    \item Let $X_1, X_2, \ldots, X_t$ be some uniformly random permutation $\tau$ on $T$.
\end{enumerate}
\end{minipage}}\vspace{1.5mm}
\begin{center}
\begin{tikzcd}
    \nu \arrow[r, "\text{collisionless}"] &[3em] \tilde \nu .
\end{tikzcd}
\end{center}

The new distribution, denoted $\tilde\nu_1$, differs from the old $\nu_1$ in statistical distance by $O(t^2/s)$. Analogously, consider the distribution on $\tilde\nu_0$ which differs from $\nu_0$ in statistical distance at most $O(t^2/s)$ described below,

\vspace{1.5mm}
\noindent\fbox{\begin{minipage}[t]{1\columnwidth - 2\fboxsep - 2\fboxrule}%
\textbf{\uline{Modified Adversarial Process }} (To generate $\tilde\nu_0$ on $Y_1,Y_2,\ldots,Y_t,\Pi'$): \vspace{0.5mm}
\begin{enumerate}
    \item Sample $\Pi'$ based on $|\cF_{\Pi'}|/\binom{N}{s}$;
    \item Sample $R\in \cE_0$ uniformly at random; 
    \item Sample a subset $S'\in R$ uniformly at random of size $s$;
    \item Sample a subset $T'\subseteq S'$ of size $t$ uniformly at random;
    \item Let $Y_1, Y_2, \ldots, Y_t$ be some uniformly random permutation $\tau'$ on $T$.
\end{enumerate}
\end{minipage}}\vspace{1.5mm}

In the above description for $\tilde\nu_0$, step (iii) is redundant as sampling $t$-subset $T'$ from an $s$-subset $S'$ that itself is a random subset of $R$ is the same as sampling a $t$-subset $T'$ from $R$ directly. Furthermore, the overall distribution of $T'$ is uniform over $t$-subset of $[N]$ (and $S'$ will be a uniform $s$-subset of $[N]$). This redundancy is introduced for the purpose of analysis.

Therefore, $\tilde\nu_1$ corresponds to essentially what the verifier reads in the YES case, and $\tilde\nu_0$  corresponds to what the verifier reads in the NO case. 
Based on the discussion so far, to prove our claimed bound in the theorem 
\[
    \|\nu_0 - \nu_1 \| \le \sqrt{\frac{tp}{2s}}+O(t^2/s) ,
\]
it suffices to show
\begin{equation}\label{eq:classical-indist}
    \|\tilde\nu_0 - \tilde\nu_1\|_1 \le \sqrt{\frac{tp}{2s}} . 
\end{equation}
Now we justify the inequality~(\ref{eq:classical-indist}).
\begin{align}
    2\|\tilde\nu_0 - \tilde\nu_1 \|^2 
        &~\stackrel{(1)}{\le}~ \KL{\nu_0}{ \nu_1} 
        =\KL{\Pi T \tau}{\Pi' T' \tau'} 
        \nonumber \\
        & ~\stackrel{(2)}{=}~ \KL{\Pi \tau}{\Pi' \tau'} + \Exp_{\pi, \sigma}\KLfrac{T \mid \Pi=\pi, \tau=\sigma}{T' \mid \Pi'=\pi, \tau'=\sigma}
        \nonumber \\
        & ~\stackrel{(3)}{=}~ \Exp_\pi \KLfrac{T\mid\Pi=\pi }{T'\mid \Pi'=\pi}
        \nonumber \\
        & ~\stackrel{(4)}{=}~ \Exp_{\pi} \KLfrac{T \mid \Pi = \pi}{T'}
        \nonumber \\
        &~\stackrel{(5)}{\le}~ \Exp_{\pi} \frac{t}{s}\KLfrac{S\mid \pi}{S'}
        \nonumber \\
        & ~\stackrel{(6)}{\le}~  \frac{tp}{s},
\end{align}
where (1) uses Pinsker's inequality, and note there is a natural bijection between $\Pi, T, \tau$ and $\Pi, X_1, X_2, \ldots, X_t$ (analogously for $\Pi', T', \tau'$ and $\Pi', Y_1, Y_2, \ldots, Y_t$); (2) is by Chain rule for KL-divergence; (3) holds because $\Pi\tau$ and $\Pi'\tau'$ have the same distribution and the random permutation $\tau$ ($\tau'$) is independent from $T, \Pi$ ($T', \Pi'$); 
(4) holds because in the adversary case the proof is independent with $T$; (5) invokes a divergence contraction result~\cref{lem:div-contraction} that we explain later; and (6) holds because $S'$ is uniform over $\binom{[N]}{s}$ as $S$, and by definitions of mutual information and KL-divergence,
\begin{align*}
    \Exp_\pi\KL{(S\mid\pi)}{S} = I(S;\Pi)\le H(\Pi) \le p.&\qedhere
\end{align*}
\end{proof}

The missing technical component for the above proof is the following divergence contraction lemma, which is studied in the theory of higher-dimensional expanders. In particular, it is an application of the more general theorems proved in~\cite{CGM19, AJKPV22}. 
\begin{lemma}[Divergence contraction]\label{lem:div-contraction}
    Let $\mu_0$ be the uniform distribution over $\binom{[N]}{s}$, and $\mu_1$ be some distribution over $\binom{[N]}{s}$. Consider the following random process for $i\in \{0,1\}$:
    \begin{enumerate}
        \item Sample $S$ from $\mu_i$,
        \item Sample subset $T$ of size $t$ from $S$ uniformly at random.
    \end{enumerate}
    The above random process introduces a distribution $\lambda_i$. Then
    \begin{equation}\label{eq:div-contraction}
       \KL{\lambda_1}{\lambda_0} \le  \frac{t}{s}\cdot\KL{\mu_1}{\mu_0}.
    \end{equation}
\end{lemma}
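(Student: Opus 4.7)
My plan is to realize the map $\mu_i \mapsto \lambda_i$ as an $(s-t)$-fold composition of single-step down operators on the complete simplicial complex on $[N]$, prove the lemma via a single-step $(i-1)/i$ entropic contraction, and telescope. Concretely, sampling $T \subseteq S$ uniformly of size $t$ given $|S|=s$ is the same as iteratively removing a uniformly random element, $s-t$ times in succession. Thus if $D_{i\to i-1}$ denotes the Markov kernel on subsets that takes an $i$-set to an $(i-1)$-set by deleting a uniformly random element, then
\[
   \lambda_i \;=\; \mu_i \, D_{s\to s-1}\, D_{s-1 \to s-2} \cdots D_{t+1\to t}.
\]
Moreover $\mu_0 D_{i\to i-1}$ at every level remains the uniform distribution on the corresponding slice, so the ``reference'' at each intermediate level is again uniform.

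The core step is the following single-level entropic contraction: for every distribution $\nu$ on $\binom{[N]}{i}$ with $\nu_0$ uniform on $\binom{[N]}{i}$,
\[
   \KL{\nu \, D_{i\to i-1}}{\nu_0 \, D_{i\to i-1}} \;\le\; \frac{i-1}{i}\,\KL{\nu}{\nu_0}.
\]
This is the statement that the complete simplicial complex enjoys entropic independence with constant $1/i$ at level $i$; it is the prototypical case of the entropy-decay theorems established in \cite{CGM19,AJKPV22}, which I would invoke directly. The sharp constant $(i-1)/i$ can also be seen as the local spectral/entropic contraction of the Bernoulli--Laplace down step, and is equivalent (via the chain rule $\KL{\nu}{\nu_0} = \KL{\nu D_{i\to i-1}}{\nu_0 D_{i\to i-1}} + \mathbb{E}_{T \sim \nu D_{i\to i-1}}\,\KLfrac{S\mid T,\nu}{S\mid T,\nu_0}$) to a lower bound of the form $\mathbb{E}_T \KL{\cdot\mid T, \nu}{\cdot \mid T, \nu_0} \ge (1/i)\KL{\nu}{\nu_0}$, which is exactly the content of the local entropic independence inequality for the complete complex.

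Given the single-step contraction, iterating through levels $s, s-1, \dots, t+1$ and using that the uniform distribution is preserved at every level yields
\[
  \KL{\lambda_1}{\lambda_0} \;\le\; \prod_{i=t+1}^{s} \frac{i-1}{i} \cdot \KL{\mu_1}{\mu_0} \;=\; \frac{t}{s}\,\KL{\mu_1}{\mu_0},
\]
since the product telescopes. The main obstacle is the single-step contraction with the tight constant $(i-1)/i$: it is not implied by a mere data-processing argument (which only gives a factor of $1$) and genuinely uses the high-dimensional expansion of the complete complex. Everything else---the decomposition into single steps, the invariance of the uniform measure under each $D_{i\to i-1}$, and the telescoping---is bookkeeping, so invoking the local entropic contraction result of \cite{CGM19,AJKPV22} is the decisive move.
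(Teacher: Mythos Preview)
Your proposal is correct. For the sharp constant $t/s$, both you and the paper simply cite the entropic-contraction results of \cite{CGM19,AJKPV22}; the paper points specifically to \cite[Theorem~5]{AJKPV22} for the tight bound, which is exactly the telescoping of single-step $(i-1)/i$ contractions you outline.

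The paper additionally supplies a self-contained argument, but for the slightly weaker constant $t/(s-t+1)$ (which suffices for their application since $t=O(\sqrt s)$), and it proceeds differently from yours. Rather than composing down operators $D_{i\to i-1}$ and telescoping, the paper randomly orders the elements of $S\sim\mu_1$ (resp.\ $S'\sim\mu_0$) to get sequences $X_1,\dots,X_s$ (resp.\ $Y_1,\dots,Y_s$), applies the chain rule coordinate-by-coordinate, and bounds each term via the inequality
\[
\KLfrac{X_i\mid X_{<i}=x_{<i}}{Y_i\mid Y_{<i}=x_{<i}}\;\le\;\frac{1}{s-i+1}\,\KLfrac{X_i\cdots X_s\mid X_{<i}=x_{<i}}{Y_i\cdots Y_s\mid Y_{<i}=x_{<i}},
\]
which they prove in an appendix using MacLaurin's inequality on elementary symmetric polynomials together with a convex-duality characterization of the KL-minimizer subject to marginal constraints. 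Summing over $i\le t$ then gives $t/(s-t+1)$. Your iterated down-walk telescoping buys the sharp constant directly but relies on the cited literature; the paper's coordinate-wise argument is more elementary but pays the small loss in the constant.
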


For completeness, we provide a self-contained proof using a language consistent with our discussion so far, where (\ref{eq:div-contraction}) is replaced with a slightly weaker bound:
\begin{equation*}
    \KL{\lambda_1}{\lambda_0} \le  \frac{t}{s-t+1}\cdot\KL{\mu_1}{\mu_0}.
\end{equation*}
Note that for our application, $t \le O(\sqrt s)$, thus, $t/s$ and $t/(s-t+1)$ are essentially the same. For the tighter bound, see~\cite[Thoerem 5]{AJKPV22}.

\begin{proof}
Consider the random variables $X_1, X_2, \ldots, X_s$ which are obtained by drawing a random subset $S$ from $\mu_1$, and then  permute the elements by a random permutation $\tau$. Similarly, the random variables $Y_1, Y_2, \ldots, Y_s$ will be obtained by first drawing a random subset $S'$ from $\mu_0$, then randomly order the elements in the subset by $\tau'$. Note that
\begin{align}
    \KL{X_1X_2\ldots &X_s}{Y_1Y_2\ldots Y_s}
    = \KL{\tau S}{\tau' S'}
    \nonumber \\
    &=  \KL{\tau}{\tau'} + \Exp_\sigma \KLfrac{S\mid\tau=\sigma}{S'\mid\tau'=\sigma} = \KL{S}{S'} = \KL{\mu_1}{\mu_0},
    \nonumber
\end{align}
where the second step follows the chain rule, and third step holds as the permutations $\tau,\tau'$ are independent of $S, S'$. 
Analogously,
\begin{align}
    \KL{X_1X_2\ldots &X_t}{Y_1Y_2\ldots Y_t}
    = \KL{\lambda_1}{\lambda_0}.
    \nonumber
\end{align}
By chain rule, for any $\ell \le s$,
\begin{align}
    \KL{X_1X_2\ldots X_\ell}{Y_1Y_2\ldots Y_\ell}
    = \sum_{i=1}^\ell\Exp_{x\in A([N],s)} \KLfrac{X_i\mid X_{<i}=x_{<i}}{Y_i\mid Y_{<i}=x_{<i}}.
    \nonumber
\end{align}
Now we need the following claim.
\begin{claim}[cf. Theorem~4~\cite{AJKPV22}]\label{claim:chain-bound}
For any  $x\in A([N],s)$ and $1\le i\le s$,
    \begin{equation}
        \KLfrac{X_i\mid X_{<i}=x_{<i}}{Y_i\mid Y_{<i}=x_{<i}} \le \frac{1}{s-i+1}\cdot \KLfrac{X_i X_{i+1} \ldots X_s \mid X_{<i}=x_{<i} }{Y_i Y_{i+1} \ldots Y_s \mid Y_{<i}=x_{<i}}.
    \end{equation}
\end{claim}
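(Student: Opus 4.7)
The plan is to reduce Claim 5.3 to the standard KL-contraction of the down random walk on the complete simplicial complex, and then iterate the one-step contraction to obtain the factor $1/(s-i+1)$. First I would use exchangeability to re-parameterize. Write $n' := s-i+1$ and work inside the ground set $[N] \setminus x_{<i}$. Conditioning on $X_{<i} = x_{<i}$, the tuple $(X_i, \ldots, X_s)$ is a uniformly random ordering of an $n'$-element set $T$ drawn from the conditional set-distribution $\tilde\mu_1$ obtained from $\mu_1$ by restricting to $S \supseteq x_{<i}$ and deleting $x_{<i}$. Since $\mu_0$ is uniform on $\binom{[N]}{s}$, the analogous $\tilde\mu_0$ is exactly $u_{n'}$, the uniform distribution on $\binom{[N]\setminus x_{<i}}{n'}$. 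A uniform ordering contributes $0$ KL on either side, so
\[
\KLfrac{X_i \ldots X_s \mid X_{<i}=x_{<i}}{Y_i \ldots Y_s \mid Y_{<i}=x_{<i}} = \KL{\tilde\mu_1}{u_{n'}}.
\]
The marginal of $X_i$ given $X_{<i}=x_{<i}$ is obtained from $\tilde\mu_1$ by sampling $T$ and outputting a uniform element of $T$, which is exactly $\tilde\mu_1 \, D^{(n'-1)}$ for the composition $D^{(n'-1)} := D_{n' \to n'-1} \circ \cdots \circ D_{2 \to 1}$ of one-step down walks (``remove a uniformly random element''). Since the down walk preserves uniformity ($u_j D_{j\to j-1} = u_{j-1}$),
\[
\KLfrac{X_i \mid X_{<i}=x_{<i}}{Y_i \mid Y_{<i}=x_{<i}} = \KL{\tilde\mu_1 D^{(n'-1)}}{u_1}.
\]

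Next I would invoke the standard entropic local-to-global inequality for the complete simplicial complex: for any distribution $\nu$ on $\binom{[M]}{j}$ with $j \ge 2$,
\[
\KL{\nu D_{j \to j-1}}{u_{j-1}} \le \frac{j-1}{j}\, \KL{\nu}{u_j}.
\]
Iterating this bound from $j = n'$ down to $j = 2$, the contraction factors telescope to $\prod_{j=2}^{n'} \tfrac{j-1}{j} = \tfrac{1}{n'}$, giving
\[
\KL{\tilde\mu_1 D^{(n'-1)}}{u_1} \le \frac{1}{n'}\, \KL{\tilde\mu_1}{u_{n'}}.
\]
Combined with the two displayed identities above, this is exactly the desired inequality.

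The main technical step is the one-step KL contraction used above. It is a known specialization to the complete complex of the entropic contraction theorems of Cryan--Guo--Mousa and~\cite{AJKPV22}, so one could cite~\cite[Theorem~4]{AJKPV22} directly. For a self-contained proof I would couple $(T,T')$ with $T \sim \nu$ and $T' := T \setminus \{a\}$ for $a$ uniform in $T$, apply the KL chain rule in both orderings to derive
\[
\KL{\nu}{u_j} = \KL{\nu D_{j \to j-1}}{u_{j-1}} + \Exp_{T' \sim \nu D}\bigl[\KLfrac{T \mid T'}{T \mid T'}\bigr],
\]
and then lower-bound the ``up-walk'' term $\Exp_{T'}[\KLfrac{T \mid T'}{T \mid T'}]$ by $\tfrac{1}{j} \KL{\nu}{u_j}$. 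This inequality is the nontrivial input and is where the complete-complex structure is used: every link of a $(j-1)$-face is itself a complete complex on $M-j+1$ elements, so the single-element up-walk accounts for exactly a $1/j$-fraction of the total divergence. This is the heart of the argument; everything else is bookkeeping.
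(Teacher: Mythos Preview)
Your reduction is correct and the telescoping argument works: both sides of the claim are set-level quantities after stripping the uniform random ordering, and iterating the one-step entropy contraction $\KL{\nu D_{j\to j-1}}{u_{j-1}}\le \tfrac{j-1}{j}\KL{\nu}{u_j}$ yields the factor $1/(s-i+1)$. Citing \cite[Theorem~4]{AJKPV22} for the one-step contraction is legitimate, since the uniform measure on $\binom{[M]}{j}$ has a strongly log-concave generating polynomial (the elementary symmetric polynomial).

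This is, however, a genuinely different route from the paper's. The paper does \emph{not} iterate one-step contractions; it proves the inequality in a single shot. After the same reduction to set-level distributions $\mu_1',\mu_0'$ (with $\mu_0'$ uniform), it invokes a convex-duality characterization of $\inf_{\nu:\,\nu D_{n'\to 1}=q}\KL{\nu}{\mu_0'}$ in terms of the generating polynomial of $\mu_0'$ (a lemma from \cite{singh2014entropy}), and then applies MacLaurin's inequality directly to the elementary symmetric polynomial to extract the factor $s-i+1$. So the paper trades the general HDX machinery for two classical ingredients (KL duality and MacLaurin), giving a fully self-contained argument. Your approach is more modular and makes the connection to entropic local-to-global explicit, but the ``self-contained'' part of your sketch is where the real content hides: the lower bound $\Exp_{T'}[\KL{T\mid T'}{T\mid T'}]\ge \tfrac{1}{j}\KL{\nu}{u_j}$ (note the typo---the two arguments should be the $\nu$-conditional and the $u_j$-conditional, not the same object twice) is exactly log-concavity of $u_j$, which is again MacLaurin. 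So at bottom both proofs rest on the same inequality; the paper's route just gets there with fewer moving parts.
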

The proof of the claim is deferred to the appendix. With the claim, we can finish the proof.
\begin{align}
    \KL{\lambda_1}{\lambda_0}
        & = \sum_{i=1}^t \Exp_{x\in A([N],s)} \KLfrac{X_i\mid X_{<i}=x_{<i}}{Y_i\mid Y_{<i}=x_{<i}}.
        \nonumber\\
        & \le \sum_{i=1}^t \frac{1}{s-i+1}\cdot \Exp_{x\in A([N],s)}\KLfrac{X_i X_{i+1} \ldots X_s \mid X_{<i}=x_{<i} }{Y_i Y_{i+1} \ldots Y_s \mid Y_{<i}=x_{<i}}
        \nonumber \\
        & \le \sum_{i=1}^t \frac{1}{s-i+1} \cdot \KL{X_1 X_2\ldots X_s}{Y_1 Y_2 \ldots Y_s}
        \nonumber \\
        & \le \frac{t}{s-t+1} \cdot \KL{\mu_1}{\mu_0}.\nonumber\qedhere
\end{align}
\end{proof}

In~\cref{thm:subset-distribution-proof}, the lower bound is stated for $\textup{GapSupp}_{s,2s}$, the YES case corresponds to the distribution of the smaller support size, and the NO case corresponds to the distribution of the larger support size. The support size of the NO case is somewhat arbitrary, and one can consider $\textup{GapSupp}_{s,\ell}$ for  $\ell > 2s$, or simply $\textup{GapSupp}_{s,N}$ where the NO case is simply the uniform distribution over $[N]$ (no ensemble at all). This a-priory makes the distinguishing task easier, but the same analysis holds and results in the same asymptotic bound, i.e., allowing $t$ samples it's impossible to distinguish YES case from uniform distribution with an advantage better than $O(\sqrt{tp/s} + t^2/s)$.

One can also consider $\textup{GapSupp}_{2s,s}$, i.e., the NO case having the smaller support size. This case is also captured by the same analysis. However, when the NO case has a smaller support size, it admits a much stronger lower bound that trivializes the problem. We discuss this case in~\cref{subsec:one-sideness-GapSupp}.

\subsection{A Generalization: Testing with Structured Classical Certificates}\label{subsec:structured-classical-certificate}
Note that in our proof to~\cref{thm:subset-distribution-proof}, regardless of whether the proof $\Pi$ associated with each subset $S$ is fixed or randomized, the analysis is exactly the same as long as (i) in the modified faithful process, the joint distribution of the proof and the distribution indicated by $S$ are set correctly; (ii) in the modified adversarial process, the marginal distribution of $\Pi'$ is set correctly. Therefore, our analysis is robust. We discuss the implication of the robustness formally.

Stated in the most general way, our lower bound technique works for certificates coming from any promised (intended to help testability) convex subset  $\calC$ of
the probability simplex $\Delta_\calS$ in $\R^{\calS}$, where $\calS$ is an arbitrary finite set representing all the possible certificates. 
In view of the proof to \cref{thm:subset-distribution-proof}, $\Pi$ can be any random variable depending only on the distribution $\mu_1\in \mathcal E_1$ to test. 
In other words, an element from $\Delta_\calS$ may represent some certificate on a YES input $\mu_1$. A certificate can be the proof from the prover in the case of $\propMA$ model stated in~\cref{thm:subset-distribution-proof}, and can be the entire communication transcript in the case of the (public coin) $\propAM$ model.

%

Starting from the trivial example, the delta-distributions on $\calS=\{0,1\}^p$, i.e., distributions supported on a single element in $\calS$. Such distributions correspond to the case where for a fixed input, there is a fixed  proof of length $m$. Therefore, we can let $\Delta_p^{\MA}$ denote the set of delta-distributions on $\calS$,
\[
    \Delta_p^\MA = \left\{\text{singleton distribution}\in \Delta_\calS : \calS = \{0,1\}^{p}\right\}.
\]
%
Allowing the convex hull of $\Delta^\MA$, we obtain all distributions $\Delta_\calS$---indeed a trivial example. This model captures $\MA$ proofs: Because for any $\MA$ protocol, in the yes case there is  a delta-distribution corresponding to the honest $\MA$ proof that will be accepted with high probability. In the no case, no proof will be accepted with high probability. Therefore for any mixed strategy from $\conv{\Delta^\MA}$, the verifier will reject with high probability. Consequently, we can give an alternative definition of the $\MA$ type property testing model.

\begin{definition}[Classical Property Testing with $\MA$ Type Certificates]\label{def:alt-propMA}
  Let $k=k(d),p=p(d) : \mathbb{N} \to \mathbb{N}$, $1 \ge a > b \ge 0$. A property $\calP = \sqcup \calP_d$ belongs to $\propMA_{a,b}(k,p)$ 
  with respect to certificates $\calC = \conv{\Delta_p^\MA}$ if there exists a verifier $V$ such that
  for every $\mu \in \Delta_d$,
  \begin{enumerate}
  \item if $\nu \in \calP_d$, then there exist $\mu \in \calC$ such that
    \begin{align*}
      \Pr_{x_1,\ldots,x_k \sim \mu^{\otimes k}, y \sim \mu}\left[V(x_1,\ldots,x_k,y) \textup{ accepts} \right] \ge a\mcom
    \end{align*}   
    \item if $\nu$ is $\epsilon$-far from $\calP_d$ (in statistical distance), then for every certificate $\mu \in \calC$ such that
    \begin{align*}
      \Pr_{x_1,\ldots,x_k \sim \mu^{\otimes k}, y \sim \mu}\left[V(x_1,\ldots,x_k,y) \textup{ accepts} \right] \le b\mper
    \end{align*}
  \end{enumerate}
\end{definition}

Replacing $\conv{\Delta^{\MA}}$ in~\cref{def:alt-propMA} by any other convex set $\calC\subseteq \Delta_p$, one obtains $\calC$ type certificates. Then~\cref{thm:subset-distribution-proof} can be stated in a more generalized way.

\begin{theorem}[Indistinguishability for Subset Distribution with Structured Certificates]\label{thm:subset-distribution-structured-certificates}
    For some parameter $s=\omega(\poly(n))$, 
    given any certificates of type $\calC \subseteq \Delta_p$ and allowing $t$ samples,  $\textup{GapSupp}_{s,2s}$ with an advantage at most
    \[
        O\left(\sqrt{\frac{tp}{s}}+\frac{t^2}{s}\right).
    \]
\end{theorem}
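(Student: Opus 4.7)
The plan is to replay the proof of Theorem~\ref{thm:subset-distribution-proof} with only cosmetic changes, isolating the two properties of the certificate that the original argument actually uses: (i) the transcript $\Pi$ takes values in a space of size at most $2^p$, so $H(\Pi) \le p$; and (ii) in the NO case the adversary can produce $\Pi'$ whose marginal distribution matches that of $\Pi$ in the YES case while remaining independent of the underlying support. Both properties remain available for any convex certificate class $\calC \subseteq \Delta_p$, and convexity is used exactly once, in realizing (ii).

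First, I would set up the YES and NO distributions in the structured-certificate setting. For $\nu_1$: draw $S \sim \cE_1$, let $\mu^{(S)} \in \calC$ be the honest prover's (possibly input-dependent) certificate distribution for input $S$, sample $\Pi \sim \mu^{(S)}$, and draw $t$ uniform samples from $S$. For $\nu_0$: the adversary commits to a single $\bar\mu \in \calC$ independently of $S' \sim \cE_0$, samples $\Pi' \sim \bar\mu$, and draws $t$ uniform samples from $S'$. The critical step is the adversary's choice $\bar\mu := \E_{S \sim \cE_1}[\mu^{(S)}]$: since each $\mu^{(S)} \in \calC$ and $\calC$ is convex, $\bar\mu \in \calC$ is a legal adversarial strategy, and by construction its distribution equals the YES marginal of $\Pi$.

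Next, I would reuse the collision-avoidance transition from $\nu_i$ to $\tilde{\nu}_i$ at a cost of $O(t^2/s)$ in total variation, and then execute the same chain of inequalities as in Theorem~\ref{thm:subset-distribution-proof}: Pinsker's inequality, chain rule for KL-divergence, the independence of $\Pi'$ and $T'$ arising from the adversary's marginal-matching strategy, and finally the divergence contraction Lemma~\ref{lem:div-contraction} applied to the down walk on the complete simplicial complex from level $s$ to level $t$. This sequence bounds $\KL{\tilde{\nu}_1}{\tilde{\nu}_0}$ by $(t/s)\cdot I(S;\Pi) \le (t/s)\cdot H(\Pi) \le tp/s$, giving the claimed $O(\sqrt{tp/s} + t^2/s)$ advantage after taking square roots via Pinsker.

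I do not anticipate any real obstacle beyond verifying the convexity step, which is immediate. The same argument then automatically subsumes the multi-prover, multi-round public-coin $\AM$ model, since the set of transcript distributions $\mu_\Pi$ that a verifier can induce through such protocols with total communication bounded by $p$ is a convex subset of the simplex over transcripts of length $p$; thus this structured family fits directly into the hypothesis $\calC \subseteq \Delta_p$ of the theorem, and the lower bound matches the tightness observation of Remark~\ref{re:tightness-classical}.
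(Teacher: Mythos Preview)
Your proposal is correct and matches the paper's approach. The paper does not give a standalone proof of this theorem; it simply observes that the argument for Theorem~\ref{thm:subset-distribution-proof} is robust to randomized certificates because (i) the joint distribution of $(S,\Pi)$ in the YES case can be set as you describe, and (ii) the adversary in the NO case only needs to reproduce the marginal of $\Pi$, which your convex combination $\bar\mu = \E_{S}[\mu^{(S)}]$ accomplishes and lies in $\calC$ by convexity. Your write-up is in fact more explicit than the paper on exactly where convexity enters.
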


We now instantiate $\calC$ to certificates arising from a valid $\AM$ protocol---the communication transcripts. To illustrate, suppose $\cZ = \set{0,1}^p$ where $p=r n +r$, and consider $r$-round $\AM$ protocols where
the verifier sends $n$ uniformly random bits to the prover and receives $1$ answer bit at each round. A distribution $\mu$ on $\cZ$ naturally defines random variables
$ R_1  A_1\ldots  R_r  A_r$, where each $ R_i$ takes values in $\set{0,1}^n$ and $ A_i \in \set{0,1}$,
such that $\Pr[( R_1=r_1,  A_1=a_1, \ldots,  R_r = r_r,  A_r = a_r)] = \mu(r_1,a_1,\ldots,r_r,a_r)$. The collection
of distributions encoding valid $\AM$ protocols of this form is given by
\begin{align*}
\Delta_p^\AM = \left\{\mu \in \Delta_\cZ ~:~  
            \begin{aligned}
             \forall i,r_1,\ldots,r_{i}, \, 
             (R_i \mid & R_1=r_1\ldots R_{i-1}=r_{i-1} \textup{ is uniform})\\
            &\text{and } (A_i\mid R_1=r_1\ldots R_i=r_i \textup{ is fixed})
            \end{aligned}
            \right\}\mper
\end{align*}
Analogous to the discussion in the last paragraph, taking $\conv{\Delta_p^\AM}$ captures the power of $\AM$ provers.

For another more structured example, suppose $\calS$
is equal to the Cartesian product $\mathcal{Z} \times \cdots \times \mathcal{Z}$ with $m$ copies, we can take
$\calC = \conv{ \{\mu^{\otimes k} \mid \mu \in \Delta_{\mathcal{Z}}\} }$. In words, $\calC$ is the convex hull of of i.i.d. distributions.
Using this notation, we can for instance take $ \conv{\set{ \mu^{\otimes m} : \mu \in \Delta^\AM}}$,
or $\conv{\set{ \mu_1^{\otimes m_1} \otimes \cdots \otimes  \mu_\ell^{\otimes m_\ell} : \mu_1,\ldots, \mu_\ell \in \Delta^\AM}}$. This can be thought of as capturing $\AM$ with multiple independent provers.
Analogously, $\propAM(m)$ for multiple provers.

It then follows that
\begin{corollary}\label{cor:AM-k-lower-bound}
    For $\propAM$ tester with even unbounded independent provers and unbounded rounds, to solve $\textup{GapSupp}_{s,\ell}$ with a constant advantage, where $\ell >s$, the sample complexity $t$ and the proof complexity $p$, must satisfy
        \[t (p + t) = \Omega(s).\]
\end{corollary}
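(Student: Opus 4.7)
The plan is to deduce $\cref{cor:AM-k-lower-bound}$ as a direct application of the structured-certificate lower bound $\cref{thm:subset-distribution-structured-certificates}$ by viewing an arbitrary public-coin $\AM$ protocol (with any number of independent provers and any number of rounds, but total communication $p$) as a convex certificate family $\calC$ inside a simplex $\Delta_\calS$ with $\abs{\calS}\le 2^p$. Concretely, let $\Pi$ denote the full transcript (Arthur's random coins concatenated with each prover's 1-bit responses). For any pure strategy of the provers and any fixed input distribution, the conditional law of $\Pi$ satisfies the $\AM$-consistency constraints generalizing $\Delta_p^{\AM}$ from Section 5.2: Arthur's coins are uniform conditional on prior history, and each prover's response is a deterministic function of that prover's own view so far. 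Mixed strategies correspond to convex combinations, giving a convex $\calC$; the multi-prover case is handled by tensorizing independence across provers before taking the convex hull.

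Next, I would check that the YES/NO framework of $\cref{thm:subset-distribution-structured-certificates}$ is faithful to $\AM$ property testing: in the YES case the honest provers, knowing the true flat distribution $\mu$, produce a certificate distribution in $\calC$ correlated with $\mu$; in the NO case the provers produce a certificate whose marginal lies in $\calC$ but which is independent of the particular NO input (since the provers only know the NO ensemble, not the specific realization). This matches the modified faithful and adversarial processes of $\cref{thm:subset-distribution-proof}$ line for line. The critical ingredient of that proof, $\Exp_\pi \KLfrac{S\mid \Pi=\pi}{S'} = I(S;\Pi)\le H(\Pi)\le p$, is insensitive to the internal structure of $\Pi$ and depends only on the bound $H(\Pi)\le p$, which in turn follows from the chain rule for entropy together with the fact that each round of the protocol contributes at most its own bit count to the total entropy.

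Applying $\cref{thm:subset-distribution-structured-certificates}$ to this certificate family yields distinguishing advantage $O\left(\sqrt{tp/s} + t^2/s\right)$ for $\textup{GapSupp}_{s,2s}$, and the remark following $\cref{thm:subset-distribution-proof}$ extends the same asymptotic bound to $\textup{GapSupp}_{s,\ell}$ for any $\ell>s$. Requiring this advantage to be $\Omega(1)$ forces $\sqrt{tp/s}=\Omega(1)$ or $t^2/s=\Omega(1)$, which combine to $t(p+t)=\Omega(s)$. The main subtlety, and the step I would write out most carefully, is the first one: verifying that the convex hull of multi-prover, multi-round $\AM$ transcript distributions really does sit inside a $\Delta_\calS$ of the right size with an entropy bound $H(\Pi)\le p$, so that the argument of $\cref{thm:subset-distribution-proof}$ applies verbatim with $\Pi$ replaced by the $\AM$ transcript. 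This is essentially definitional but requires care in setting up the probability space so that adversarial prover independence from the input is genuinely preserved.
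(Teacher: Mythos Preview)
Your proposal is correct and takes essentially the same approach as the paper: the corollary is presented there as an immediate consequence of \cref{thm:subset-distribution-structured-certificates}, after the preceding discussion explains exactly how multi-prover multi-round public-coin $\AM$ transcripts instantiate the convex certificate family $\calC\subseteq\Delta_\calS$, and the remark after \cref{thm:subset-distribution-proof} handles the extension from $\textup{GapSupp}_{s,2s}$ to general $\textup{GapSupp}_{s,\ell}$ with $\ell>s$. Your write-up makes explicit the entropy bound $H(\Pi)\le p$ and the final algebra $\sqrt{tp/s}+t^2/s=\Omega(1)\Rightarrow t(p+t)=\Omega(s)$, which the paper leaves implicit.
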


\subsection{An $\IP$ Protocol}
\label{subsec:private-coin-protocol}
In the previous subsection, we proved a tight lower bound for $\textup{GapSupp}$ with classical certificates. The lower bounds holds even when allowing public-coin $\AM$ type certificates. In this section, we point out that the public-coin restriction is important as there is a very efficient private-coin $\AM$ tester, i.e., two-turn $\IP$ tester. This tester is adapted from~\cite{HR22label-inv}.

\begin{theorem}[cf. \cite{HR22label-inv}]\label{thm:IP-upper-bound}
    There is a private-coin  $\AM$ tester, e.g. Algorithm~\ref{proc:gapsupp-tester}, for $\textup{GapSupp}_{\frac{N}{3},\frac{2N}{3}}$ using $O(1)$ samples and $O(1)$ communication.
\end{theorem}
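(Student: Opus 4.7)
The plan is to design a two-turn private-coin protocol in which Arthur hides one sample inside a uniformly random decoy and asks Merlin to identify which coordinate is the sample. The intuition is that the sample always lies in the support $S$, whereas a uniform random element of $[N]$ lies in $S$ only with probability $|S|/N$, so a smaller support affords Merlin a stronger signal than a larger one.

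Concretely, Arthur draws one sample $x \sim \nu$, an independent uniform $y \in [N]$, and a private uniform bit $b \in \{0,1\}$. Arthur sends the ordered pair $(x,y)$ if $b=0$ and $(y,x)$ otherwise. Merlin, who is computationally unbounded and hence knows $S$ exactly, responds with one bit $b'$, and Arthur accepts iff $b=b'$.

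For the analysis, I will reduce Merlin's optimal strategy to a simple exchangeability argument. If exactly one of the two received coordinates lies in $S$, Merlin identifies the sample with certainty; this event has probability $1-|S|/N$, since $x \in S$ always while $y \in S$ with probability $|S|/N$. If both coordinates lie in $S$ (probability $|S|/N$), then conditional on this event both $x$ and $y$ are independent uniform samples from $S$, so the ordered pair is exchangeable and every Merlin strategy succeeds with probability exactly $1/2$. Combining the two cases,
\begin{align*}
  \Pr[\text{Arthur accepts}] ~=~ \left(1 - \frac{|S|}{N}\right) + \frac{1}{2}\cdot\frac{|S|}{N} ~=~ 1 - \frac{|S|}{2N}\mper
\end{align*}
Plugging in $|S| = N/3$ gives completeness $5/6$, while $|S| = 2N/3$ gives soundness $2/3$, yielding a constant completeness-soundness gap of $1/6$. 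Standard $O(1)$-fold parallel repetition combined with a threshold acceptance rule (accept iff at least a $3/4$ fraction of trials accept) drives these probabilities to the desired $(2/3,1/3)$ gap via a Chernoff bound, at the cost of only constant blow-up in samples and communication.

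The main subtlety is verifying the exchangeability step for arbitrary (possibly randomized) Merlin strategies, but this reduces to noting that Merlin's posterior on $b$, given the transcript and $S$, equals $\tfrac12$ whenever both coordinates lie in $S$, since the private coin $b$ is independent of $(x,y,S)$ and the unordered multiset $\{x,y\}$ is the only informative function of the transcript on that event. It is worth emphasizing that private coins are essential: were $b$ revealed as part of a public transcript, Merlin would trivially win in both cases, which is consistent with the tight public-coin $\AM$ lower bound established in \cref{cor:AM-k-lower-bound}.
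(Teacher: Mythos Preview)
Your proof is correct and rests on the same core idea as the paper---Arthur hides genuine samples among independent uniform ``decoys'' and exploits that decoys land in the support with probability $|S|/N$---but your instantiation is different and arguably cleaner. The paper's protocol batches $k$ samples together with $k$ uniform decoys, sends the merged multiset to Merlin, asks Merlin to label the elements that lie in $\supp\mu$, and accepts only if the label set both contains all genuine samples and has size at most $1.5k$; soundness then comes from a counting argument showing that in the large-support case too many decoys hit $\supp\mu$ for Merlin to simultaneously satisfy both constraints. You instead run the minimal $k=1$ variant (one sample, one decoy, one hidden bit) and extract the gap directly via the exchangeability observation that, conditioned on both coordinates landing in $S$, the ordered pair is distributionally symmetric so Merlin's posterior on $b$ is exactly $\tfrac12$; amplification is then external via $O(1)$-fold repetition with a threshold. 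Your route gives a tighter and more transparent soundness calculation (an exact closed form $1-|S|/(2N)$ rather than Chernoff over $k$ decoys), at the modest cost of having to separately justify that independent repetition with a threshold works---which it does here, since the hidden bits $b_i$ are fresh private coins independent of everything Merlin sees on the ``both in $S$'' instances.
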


\noindent\fbox{
\begin{myalg}[A Private-coin $\IP$ Tester for $\textup{GapSupp}_{N/3, 2N/3}$]\label{proc:gapsupp-tester}\ignorespacesafterend
\textbf{Input:} Unknown distribution $\mu$ 
\vspace{3mm}

\textbf{Arthur:} 
\begin{enumerate}
    \item Make $k$ samples from $\mu$ for some large enough constant $k$, denote the set of elements sampled from $\mu$ by $S$;
    \item Make $k$ samples uniformly at random from $[N]$, denote the set $R$;
    \item Send the set $M:=S\cup R$ to Merlin in a random order.
\end{enumerate} 

\vspace{3mm}
\textbf{Merlin:} Merlin return a subset $M'\in M$ such that $M' = M\cap \supp \mu$

\vspace{3mm}
\emph{Accept} if $|M'| \le 1.5 k$ and $S\subseteq M'$. 

\end{myalg}}

\begin{proof}
For large enough $N$, almost surely, the set $M$ contains $2k$ elements. If the prover is honest, $M' = M\cap \supp \mu$. If $\supp\mu = N/3$, by chernoff bound, $|M'| \le |S| + |R|/2$ with probability at least $1-\exp(-\Omega(k))$. This establishes the completeness case.

In the soundness case, $\supp \mu \ge 2N/3$. Then with probability $1-\exp(-\Omega(k))$, $|M\cap \supp\mu|\ge |S|+7|R|/12 = 1.5k + k/12$. To fool Arthur, Merlin needs to send $|M'|\le 1.5k$, thus with probability $1-\exp(-\Omega(k))$,
\[
   \left| (M\cap \supp \mu) \setminus M' \right| \ge \frac{k}{12}.
\]
However, the verifier has no information about which element in $M\cap \supp \mu$ is in $S$, that is $M'$ is determined by $M\cap \supp \mu$ and independent of $S$. Note that $S$ is just a uniformly random subset of $M\cap \supp \mu$ of size $k$. Thus, with probability at most $\exp(-\Omega(k))$, $S\subseteq M'$. It concludes that in the soundness case, Arthur accepts with probability at most $\exp(-\Omega(k)).$
\end{proof}

Note that simply treating the quantum object subset state as a distribution, the above protocol implies an $\IP$ protocol for testing the subset state with small support size.

\subsection{One-sidedness of the $\textup{GapSupp}$}
\label{subsec:one-sideness-GapSupp}
So far, we considered the upper and lower bounds for $\textup{GappSupp}_{s,\ell}$,  where $s < \ell$. In words, we wanted to test that the flat distribution has a small support. We now justify this choice. In particular, suppose we want to test whether a given distribution has large support, e.g., consider $\text{GapSupp}_{\ell, s}$ for $\ell > s$, then the proof does not improve the testability at all no matter how long it is.

\begin{theorem} \label{theorem:yes_case_large}
    Consider two ensembles of distributions for some parameters $\ell \gg s$.
    Given $t=o(\sqrt{s})$ samples from the distribution, there is no tester that can solve $\textup{GapSupp}_{\ell, s}$ with proof of arbitrary length.
\end{theorem}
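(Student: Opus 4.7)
The plan is to apply the ensemble-indistinguishability meta-technique of \cref{lem:ens_indist_point_indist} (in its classical form) combined with a simulation argument showing that a cheating NO-prover can always mimic any honest YES-prover. I would take the YES ensemble $\mathcal E_{\text{YES}}$ to be flat distributions on uniformly random $S' \subseteq [N]$ of size $\ell$, and the NO ensemble $\mathcal E_{\text{NO}}$ to be flat distributions on uniformly random $S \subseteq [N]$ of size $s$. Fix an arbitrary verifier $V$ and any (possibly interactive and randomized) honest prover strategy $\pi$ for the YES case.

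The cheating NO-prover is defined as follows: on support $S$ of size $s$ it samples a uniform $(\ell - s)$-subset $S'' \subseteq [N]\setminus S$, sets $S' := S \sqcup S''$, and runs $\pi(S')$ verbatim in its interaction with $V$. A direct symmetry check shows that the joint distribution of $(S, S')$ produced this way equals the distribution obtained by first drawing $S' \subseteq [N]$ uniform of size $\ell$ and then $S \subseteq S'$ uniform of size $s$; this observation is what lets us align the two worlds.

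Next, I would couple the YES and NO views on the shared randomness $(S', R)$ where $R$ is the prover's internal randomness. Under this coupling the transcript is identically $\pi(S', R)$ in both worlds, so the two views of $V$ differ only in the $t$ samples: YES samples are i.i.d.\ from $\mathrm{Unif}(S')$ while NO samples are i.i.d.\ from $\mathrm{Unif}(S)$. Since $t = o(\sqrt{s})$, each sample block is pairwise distinct except on an event of probability $O(t^2/s) = o(1)$ by the birthday bound. Conditioned on distinctness and on $(S', R)$, the YES samples form a uniform ordered $t$-tuple of distinct elements of $S'$, while the NO samples form a uniform ordered $t$-tuple of distinct elements of a uniform $s$-subset $S \subseteq S'$. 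The standard symmetry identity --- a uniform $t$-subset of a uniform $s$-subset of $S'$ is marginally a uniform $t$-subset of $S'$ --- makes these two conditional distributions coincide, so the total variation between the full YES and NO views is at most the collision probability $O(t^2/s) = o(1)$.

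Consequently, for every verifier $V$ and every honest YES-prover there is a cheating NO-prover whose induced view is $o(1)$-close in total variation to the YES view, so $V$'s acceptance probability can differ by at most $o(1)$ between the two ensembles and $\mathrm{GapSupp}_{\ell, s}$ is untestable. The only non-routine step is the symmetry identity for iterated uniform subset sampling (a one-line counting check via $\binom{\ell-t}{s-t}/\binom{\ell}{s} \cdot 1/\binom{s}{t} = 1/\binom{\ell}{t}$); everything else is a clean coupling plus a birthday-bound calculation. The conceptual contrast with \cref{thm:subset-distribution-proof} is precisely that extending a small support to a large one is trivial for the prover, while shrinking a large support to a small one is not, which is why the hardness here is independent of proof length whereas the hardness in \cref{thm:subset-distribution-proof} scales with $\sqrt{tp/s}$.
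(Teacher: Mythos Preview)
Your proposal is correct and follows essentially the same approach as the paper: couple a random NO support $S$ of size $s$ with a random YES support $S'$ of size $\ell$ via $S \subseteq S'$, have the adversarial prover reuse the honest proof for $S'$, and observe that $t = o(\sqrt s)$ samples from $\mathrm{Unif}(S)$ versus $\mathrm{Unif}(S')$ are $o(1)$-close by the birthday bound plus the iterated-subset symmetry identity. Your write-up is in fact more explicit than the paper's brief sketch—in particular, you state the adversary's strategy in the natural direction (given the NO instance $S$, extend to $S'$ and simulate $\pi(S')$), whereas the paper phrases the same coupling from the proof-first side, but the content is identical.
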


\begin{proof}
    Given any proof $\pi$, take a  random distribution $\mu_L \in \cF_\pi$ where $L$ is the support of $\mu_L$. Let $\nu_0$ be the distribution on $t$ samples that a tester sees when first sample $\mu_L \in \cF_\pi$, then sample $t$ elements from $\mu_L$.

    Consider an adversarial strategy. Given proof $\pi$, the adversary chooses $\mu_L\in \cF_\pi$; then he chooses $S\subseteq L$ with $|S|=s$. Provide the distribution $\mu_S\in \cE_1$ to the tester. Let $\nu_1$ be the distribution of what tester sees when sample $t$ elements from such experiment.

    It's easy to see that as long as $t\ll \sqrt{s}$, $\nu_0$ is statistically close to $\nu_1$.
\end{proof}

Note that this argument can be adapted even if we consider interactive proof rather than the $\propMA$ model. Suppose there is some interactive proof type tester $\Pi$ that accepts the uniform distribution with high probability. 
Now for any distribution $\mu_S$, where $S$ is of small support. As long as the sample complexity $s \ll \sqrt{N}$ is in the non-collision regime. There is a trivial adversary strategy:
Inductively, say at round $t$, $\tau_t$ is the transcript communicated between the verifier and the prover so far. Maybe the verifier will make a few additional sample $S_t\sim\mu$ and roll some dice $R_t$, and send message $m_t = m_t(\tau_t, R_1, R_2,\ldots, R_t,  S_1, S_2, \ldots, S_t)$. 
The adversary prover receiving $m_t$ would simply respond pretending the underlying distribution is uniform.

\begin{corollary}
    To test uniformity, the sample complexity is $\Omega(\sqrt N)$ even in the $\textup{PropIP}$ model.
\end{corollary}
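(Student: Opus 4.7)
The plan is to adapt the adversarial-simulation idea from \cref{theorem:yes_case_large} to an interactive verifier. Fix any $\propIP$ protocol $\Pi$ that tests uniformity with a constant gap using $s$ samples, and pick an integer $k$ with $s^2 \ll k \le N/2$; such a $k$ exists whenever $s = o(\sqrt{N})$. I take the YES distribution to be uniform on $[N]$ and the NO-family to be $\{\mu_S : S \subseteq [N],\, |S|=k\}$, where $\mu_S$ is uniform on $S$; each such $\mu_S$ is $1/2$-far from uniform in statistical distance, so $\Pi$ must distinguish them with constant advantage.

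First I would define the adversarial prover. A crucial feature of the $\propIP$ model is that the prover never observes the verifier's samples, only the verifier's outgoing messages and its own private coins. Hence the honest prover $P_{\text{yes}}$ designed to convince the verifier in the YES case is already a well-defined (randomized) function of the message transcript, independent of the underlying distribution. The adversary simply plays $P^* := P_{\text{yes}}$, and this strategy is valid for every $\mu_S$ in the NO-family.

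Next I would compare the verifier's view $(X_1,\ldots,X_s,\text{transcript})$ in two scenarios: (i) $\mu$ uniform on $[N]$ with honest prover $P_{\text{yes}}$; and (ii) $S$ drawn uniformly from $\binom{[N]}{k}$, samples i.i.d.\ from $\mu_S$, and prover $P^*$. In both scenarios the transcript is a randomized function of the samples $X_1,\ldots,X_s$ and the (independent) private coins of the two parties, so by the data-processing inequality the statistical distance between the two views is bounded by that between the two induced sample distributions, namely $\text{Unif}([N])^{\otimes s}$ versus $\E_{|S|=k}\, \mu_S^{\otimes s}$. A direct calculation, splitting into distinct and non-distinct $s$-tuples exactly as in \cref{theorem:yes_case_large}, bounds this statistical distance by $O(s^2/k) = o(1)$.

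Finally I would conclude by averaging: the NO-acceptance probability under $P^*$, averaged over $S$, is within $o(1)$ of the YES-acceptance probability, so some fixed $S$ witnesses a $1/2$-far distribution $\mu_S$ on which the adversary causes acceptance with probability nearly equal to the YES acceptance probability. This contradicts the assumed constant distinguishing gap, yielding $s = \Omega(\sqrt N)$. The main subtlety---and essentially the only place something could go wrong---is ensuring that the prover truly has no side information about the samples; this asymmetry in the $\propIP$ model is what lets us substitute the uniform-case honest prover unchanged, even against a verifier whose outgoing messages adapt to the samples and whose private coins the prover cannot see.
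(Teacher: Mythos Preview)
Your proposal is correct and follows essentially the same approach as the paper: have the adversarial prover simply run the honest strategy $P_{\text{yes}}$ for the uniform case, and observe that once both strategies are fixed the verifier's entire view is a deterministic function of the samples and the parties' private coins, so data processing reduces the question to the collision-free regime $s\ll\sqrt{k}$. The paper's argument (the paragraph immediately preceding the corollary) is exactly this, stated informally as ``the adversary prover receiving $m_t$ would simply respond pretending the underlying distribution is uniform''; your write-up is just a cleaner formalization of the same idea, including the averaging over $S$ to extract a single bad NO-instance.
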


\section{Testing with a General Quantum Proof: Yet Another Fiasco}\label{sec:prop_propqma_collapse}

In~\cref{sec:classical-lower-bound}, we discussed at length that for the $\textup{GapSupp}$ problem, assisted with a standard $\MA$ type proof, or even some very general structured certificates which include $\AM$ type proof, would not improve the testability unless the certificates are exponentially long. 

One might ask how general such phenomenon is, if it holds for any property, and for any certificates even structured ones. Our answer is two-fold: There are good news and bad news. For $\MA$ type proof, in fact, for $\QMA$ type proof, the proof would not improve the testability significantly for any property of interest. We prove that quantum proof does not increase testability, and it subsumes the classical proof.
This fact is adapted from the de-Merlinization ideas of Aaronson~\cite{A06} and the follow-up work of Harrow~\etal~\cite{HLM17}. 

Let us review some well-known facts on quantum information.
\begin{fact}[Gentle Measurement Lemma~\cite{winter1999coding}]
\label{fact:gentle-measurement}
    Let $\rho$ be a quantum state and $0\preceq \Lambda \preceq I$. Then
    \begin{equation}
        \left\|\rho - \frac{\sqrt \Lambda \rho \sqrt \Lambda }{\trace\Lambda \rho}\right\|_{\trace}^2 \le \sqrt{\trace(I - \Lambda)\rho}.
    \end{equation}
\end{fact}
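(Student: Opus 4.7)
The plan is to prove the stronger statement $\|\rho - \sigma\|_{\trace} \le \sqrt{\Tr((I-\Lambda)\rho)}$, where $\sigma := \sqrt{\Lambda}\rho\sqrt{\Lambda}/\Tr(\Lambda\rho)$; the displayed inequality then follows since $\Tr((I-\Lambda)\rho) \le 1$ implies $\Tr((I-\Lambda)\rho) \le \sqrt{\Tr((I-\Lambda)\rho)}$. The strategy is the standard purification argument, lifting both $\rho$ and $\sigma$ to pure states on a larger system, where the trace distance admits a clean closed form in terms of fidelity.

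First I would fix a purification $\ket{\Psi}_{AB} \in \cH_A \otimes \cH_B$ of $\rho$, so that $\Tr_B(\ket{\Psi}\bra{\Psi}) = \rho$. Define the (possibly subnormalized) vector $(\sqrt{\Lambda}_A \otimes I_B)\ket{\Psi}$, which has squared norm $\bra{\Psi}(\Lambda_A \otimes I_B)\ket{\Psi} = \Tr(\Lambda\rho)$, and let $\ket{\Psi'} := (\sqrt{\Lambda}_A\otimes I_B)\ket{\Psi}/\sqrt{\Tr(\Lambda\rho)}$. Then by construction $\Tr_B(\ket{\Psi'}\bra{\Psi'}) = \sigma$.

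Next, I would invoke two standard facts. Trace distance is monotone under partial trace, so $\|\rho-\sigma\|_{\trace} \le \|\ket{\Psi}\bra{\Psi} - \ket{\Psi'}\bra{\Psi'}\|_{\trace}$. For pure states, the trace distance equals $\sqrt{1-|\braket{\Psi}{\Psi'}|^2}$ (recalling that in this paper $\|\cdot\|_{\trace} = \tfrac{1}{2}\|\cdot\|_1$). Compute the overlap:
\[
\braket{\Psi}{\Psi'} \;=\; \frac{\bra{\Psi}(\sqrt{\Lambda}_A\otimes I_B)\ket{\Psi}}{\sqrt{\Tr(\Lambda\rho)}} \;=\; \frac{\Tr(\sqrt{\Lambda}\,\rho)}{\sqrt{\Tr(\Lambda\rho)}}.
\]
Since $0 \preceq \Lambda \preceq I$, scalar monotonicity of $\sqrt{\cdot}$ on $[0,1]$ gives $\sqrt{\Lambda} \succeq \Lambda$, hence $\Tr(\sqrt{\Lambda}\rho) \ge \Tr(\Lambda\rho)$, and therefore $|\braket{\Psi}{\Psi'}|^2 \ge \Tr(\Lambda\rho)$.

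Putting these together yields
\[
\|\rho-\sigma\|_{\trace} \;\le\; \sqrt{1-|\braket{\Psi}{\Psi'}|^2} \;\le\; \sqrt{1-\Tr(\Lambda\rho)} \;=\; \sqrt{\Tr((I-\Lambda)\rho)},
\]
from which the displayed inequality is immediate. There is no real obstacle here; the only subtlety is using the operator inequality $\sqrt{\Lambda} \succeq \Lambda$ in place of the cruder $\Tr(\sqrt{\Lambda}\rho) \ge \Tr(\Lambda\rho)$ derived directly by spectral calculus, and being careful about the factor-of-two convention between $\|\cdot\|_1$ and $\|\cdot\|_{\trace}$ used throughout the paper.
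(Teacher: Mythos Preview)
Your proof is correct. The paper does not actually supply a proof of this fact; it is stated with a citation to Winter~\cite{winter1999coding} and used as a black box. Your argument via purification and monotonicity of trace distance under partial trace is the standard one, and in fact you establish the sharper bound $\|\rho-\sigma\|_{\trace}\le\sqrt{\Tr((I-\Lambda)\rho)}$ (no square on the left, no square root on the right), from which the paper's stated form follows trivially.
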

\begin{fact}[Quantum Union Bound~\cite{A06}]
\label{fact:quantum-union}
    Suppose $\cM_1 =\{\Lambda_1, I-\Lambda_1\}, \cM_2=\{\Lambda_2, I-\Lambda_2\}, \ldots, \cM_n=\{\Lambda_n, I-\Lambda_n\}$ are some 2-outcome POVM measurements, where $\Lambda_i$ corresponds to accept. Let $\rho$ be some (mixed or pure) quantum state, such that
    \begin{equation*}
        \trace \Lambda_i \rho \ge 1-\epsilon,\qquad i\in \{1,2,\ldots, n\}.
    \end{equation*}
    Apply $\cM_1, \cM_2,\ldots, \cM_n$ to $\rho$ sequentially, then
    \begin{equation*}
        \Pr[\exists i,\; \cM_i \text{ rejects}] \le n\sqrt\epsilon.
    \end{equation*}
\end{fact}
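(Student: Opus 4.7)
The plan is to prove Fact~\ref{fact:quantum-union} by induction on $n$, using Fact~\ref{fact:gentle-measurement} (Gentle Measurement) as the central tool to control how much each successful measurement perturbs the underlying state. The base case $n=1$ is immediate: by hypothesis $\trace \Lambda_1 \rho \ge 1-\epsilon$, so $\Pr[\cM_1 \text{ rejects}] \le \epsilon \le \sqrt{\epsilon}$.

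For the inductive step, I would set up notation as follows. Let $\rho_0 := \rho$ and, for $i \ge 1$, define $\rho_i := \sqrt{\Lambda_i}\,\rho_{i-1}\,\sqrt{\Lambda_i}/\trace(\Lambda_i \rho_{i-1})$ as the (normalized) post-measurement state once $\cM_1,\dots,\cM_i$ have all accepted. Denote the conditional rejection probability at step $i$ by $x_i := \trace((I-\Lambda_i)\rho_{i-1})$. A straightforward union bound over the time steps gives
\[
\Pr[\exists\, i : \cM_i \text{ rejects}] \;\le\; \sum_{i=1}^{n} x_i.
\]
Applying Fact~\ref{fact:gentle-measurement} at each step and using the triangle inequality yields $\|\rho_{i-1} - \rho\|_{\trace} \le \sum_{j<i} x_j^{1/4}$, and then the operator inequality $|\trace((I-\Lambda_i)(\rho_{i-1}-\rho))| \le \|\rho_{i-1}-\rho\|_{\trace}$ combined with the hypothesis $\trace((I-\Lambda_i)\rho) \le \epsilon$ delivers the recursive bound
\[
x_i \;\le\; \epsilon \;+\; \sum_{j<i} x_j^{1/4}.
\]
An induction on $i$ then shows each $x_i$ stays at the target scale, and summing over $i \in [n]$ gives the desired $n\sqrt{\epsilon}$ bound.

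The main obstacle lies in closing this induction cleanly with the precise constants. The form of gentle measurement in Fact~\ref{fact:gentle-measurement} as stated yields a trace-distance perturbation of $\epsilon^{1/4}$ per step, which naively compounds into a worse-than-$n\sqrt{\epsilon}$ rate; recovering the sharper $n\sqrt{\epsilon}$ scaling either requires invoking a tighter version of gentle measurement (with the $\sqrt{\epsilon}$-form directly) or a more careful bookkeeping where we track cumulative perturbations rather than bounding each step independently. A cleaner alternative I would try in parallel is a hybrid argument: compare the actual sequential process to an idealized process in which each $\cM_i$ is applied to a fresh copy of $\rho$. In the idealized setting, the expected number of rejections is at most $n\epsilon$ by a classical union bound, and bounding the trace distance between the two scenarios step-by-step via Fact~\ref{fact:gentle-measurement} transfers this into a bound on the actual process. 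Either route reduces the proof to the by-now-standard Aaronson-type analysis.
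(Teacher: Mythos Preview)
The paper does not prove Fact~\ref{fact:quantum-union}; it is stated as a known result with citation to~\cite{A06}. So there is no ``paper's own proof'' to compare against. Your outline---iterate gentle measurement to control the drift of the post-acceptance state, then sum the per-step rejection probabilities---is exactly the standard argument from~\cite{A06}, and your identification of the obstacle is accurate.

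On that obstacle: you are right that Fact~\ref{fact:gentle-measurement} \emph{as literally written in the paper} yields only a $\epsilon^{1/4}$ perturbation, which is too weak to close the induction at the $n\sqrt{\epsilon}$ rate. This appears to be a typo in the paper's statement of gentle measurement: the standard Winter bound is $\|\rho-\tilde\rho\|_{\trace}\le\sqrt{\trace(I-\Lambda)\rho}$ (no square on the left, no square root on the right beyond that), and indeed the paper itself uses exactly this sharper form in the proof of Theorem~\ref{thm:demerlin-special} (the step ``$\|\rho-\tilde\rho\|_{\trace}^2\le\trace\Pi\rho$'' there would be false under the weaker version). With the corrected gentle measurement bound, your induction closes as you sketched; the ``tighter version of gentle measurement'' you propose to invoke is the intended one, not an optional upgrade. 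Your hybrid-argument alternative is also a valid route and is essentially a repackaging of the same bookkeeping.
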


\paragraph{Gap Amplification for $\propBQP,\propQMA$.}
We collect some facts on gap amplification for $\propBQP$ and $\propQMA$.
\begin{theorem}[Gap amplification]
\label{thm:prop-gap-amp}
For any $0<\alpha<\beta<1$, and $\epsilon\ge 0$ such that $1-2\epsilon > \beta-\alpha$, let $\gamma= \beta-\alpha$. Then,
    \begin{align}
        \propBQP_{\alpha \mcom \beta}[k,w] &\subseteq \propBQP_{1-\epsilon\mcom \epsilon}\left[ O\left(\frac{k}{\gamma^2}\log \frac{1}{\epsilon}\right)\mcom O\left(\frac{w}{\gamma^2}\log\frac{1}{\epsilon} \right) \right]. 
        \label{eq:amp-naive-bqp}
        \\
        \propQMA_{\alpha\mcom \beta}[k,w] &\subseteq \propQMA_{1-\epsilon\mcom \epsilon}\left[ O\left(\frac{k}{\gamma^2}\log \frac{1}{\epsilon}\right)\mcom O\left(\frac{w}{\gamma^2}\log\frac{1}{\epsilon} \right) \right]. 
        \label{eq:amp-naive}
    \end{align}
For any $0<\epsilon<1/2$,
    \begin{align}
        \propQMA_{1-\epsilon \mcom \epsilon}[k,w] &\subseteq \propQMA_{1-t\sqrt{\epsilon} \mcom \epsilon^t}[kt, w].
        \label{eq:amp-quantum-union}
    \end{align}
\end{theorem}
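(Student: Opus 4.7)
I would prove all three inclusions by a ``repeat-and-combine'' construction, with the three parts differing in what is repeated and how the outcomes are combined.

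For \eqref{eq:amp-naive-bqp} and \eqref{eq:amp-naive}, I set $T = \Theta(\gamma^{-2}\log(1/\epsilon))$ and build an amplified verifier $V'$ that runs the original verifier $V$ for $T$ \emph{sequential} iterations, drawing $k$ fresh copies of the input state at each iteration (and, for $\propQMA$, requesting a fresh length-$w$ proof each iteration); $V'$ accepts iff strictly more than a $(\alpha+\beta)/2$ fraction of the iterations accept. In the $\propBQP$ setting the round indicators are i.i.d.\ Bernoulli random variables bounded above by $\beta$ (resp.\ below by $\alpha$) in the no (resp.\ yes) case, so a standard Chernoff bound gives completeness $\geq 1-\epsilon$ and soundness $\leq \epsilon$. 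In the $\propQMA$ setting completeness is immediate (the honest prover supplies $T$ independent copies of the good proof); for soundness I exploit sequentiality: conditioned on the outcomes of rounds $1,\ldots,i-1$, round $i$ consumes a \emph{brand-new} proof and fresh input copies, so its conditional acceptance probability is at most $\beta$ by the original soundness guarantee, regardless of prior history. Thus the round indicators are stochastically dominated by i.i.d.\ Bernoulli$(\beta)$ variables and Hoeffding again yields exponential decay.

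For \eqref{eq:amp-quantum-union}, I consider the verifier $V^{(t)}$ that runs $V$ sequentially $t$ times, re-using the \emph{same} length-$w$ proof register throughout but drawing $k$ fresh copies of the input state at each round, and accepts iff every round accepts. For soundness on a no-instance, let $\sigma_i$ denote the proof register state just before round $i$ on the ``all-accept so far'' branch; by the $\epsilon$-soundness of $V$ applied to fresh input copies and $\sigma_i$, round $i$ accepts with probability at most $\epsilon$ regardless of $\sigma_i$, and multiplying the $t$ conditional probabilities gives $\Pr[\text{all accept}]\leq \epsilon^t$. For completeness I start from the honest proof $\ket{\phi}$: each round's accept outcome has probability $\geq 1-\epsilon$, so the Gentle Measurement Lemma (\cref{fact:gentle-measurement}) bounds the trace-distance disturbance of the proof register per successful round by $O(\epsilon^{1/2})$, and the Quantum Union Bound (\cref{fact:quantum-union}) applied to the sequence of $t$ two-outcome ``accept'' measurements yields that all rounds accept with probability at least $1-t\sqrt{\epsilon}$.

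The main subtlety is QMA soundness in parts (\ref{eq:amp-naive-bqp})--(\ref{eq:amp-naive}): a priori, a cheating prover could entangle the $T$ proofs so as to break the independence needed for Chernoff. The sequential formulation with freshly-supplied proofs at each round sidesteps this issue cleanly, since the prover cannot entangle a future proof with the outcomes of earlier measurements before those measurements occur; this is the only point in the argument that requires care beyond routine concentration.
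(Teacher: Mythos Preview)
Your proposal is correct and follows essentially the same route as the paper: standard majority-vote repetition with fresh copies (and fresh proofs for $\propQMA$) plus Chernoff for \eqref{eq:amp-naive-bqp}--\eqref{eq:amp-naive}, and sequential re-use of the single proof with fresh input copies, ``accept only if all accept,'' soundness $\epsilon^t$ by conditioning, and completeness $1-t\sqrt{\epsilon}$ via the quantum union bound for \eqref{eq:amp-quantum-union}. The only cosmetic difference is that the paper's proof is terser (it just says ``standard amplification'' for the first two parts), whereas you spell out the stochastic-domination argument for $\propQMA$ soundness; note, though, that your final sentence slightly mis-explains why entanglement doesn't help---the prover \emph{can} entangle the $T$ proof registers up front, the point (which your earlier sentence states correctly) is simply that conditioned on past outcomes the $i$th proof register is still \emph{some} state, so the original soundness bound applies.
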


\begin{proof}
    (\ref{eq:amp-naive-bqp}) is the standard amplification by running the tester multiple of times and apply chernoff bound; and (\ref{eq:amp-naive}) is the standard amplification for $\QMA$ model by asking for more copies of the state and longer proof using the fact that in the soundness case, no proof can fool the verifier. 
    
    (\ref{eq:amp-quantum-union}) is the amplification that use fresh copies of quantum states but reuse the quantum proof. In particular, suppose $T$ is a tester for $\propQMA_{1-\epsilon, \epsilon}[k,w]$, then the new tester $T'$ is as follows 
    \begin{enumerate}[label=(\roman*)]
        \item Run $T$ for $t$ times, each time with a fresh $k$ copies of the target state and reuse the quantum witness.
        \item Accept only if all the $t$ run of $T$ accepts.
    \end{enumerate}    
    Clearly the new tester $T'$ uses $kt$ copies of the state and asks for a witness of length $w$. Suppose the given state $\ket\psi\not\in\calP$, then soundness of $T'$ becomes $\epsilon^t$ by the soundness of $T$. As for the completeness, it follows  ~\cref{fact:quantum-union}, the quantum union bound, that a single $T$ rejects with probability at most $\epsilon$, thus the $t$ sequential run has at least one reject with probability at most $t\sqrt{\epsilon}$. 
\end{proof}

\paragraph{De-Merlinization for $\propQMA$ in the Small-soundness Regime.}
Next we show that suppose the soundness is exponentially small in $\propQMA$, then the quantum witness can be removed.
\begin{theorem}[Partial de-Merlinization]
\label{thm:demerlin-special}
For any $0<\epsilon<1$, and $0 < \delta < 1-\epsilon$,
    \begin{align}
        \propQMA_{1-\epsilon \mcom \delta}[k,w]
        &~\subseteq~ \propBQP_{\frac{(1-\epsilon)^2}{4}\mcom \frac{2\delta \cdot 2^w}{1-\epsilon}}[k].\label{eq:amp-quantum-or}
    \end{align}
\end{theorem}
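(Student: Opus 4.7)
The plan is to construct a $\propBQP$ tester by replacing the $\propQMA$ witness with the maximally mixed state and then applying a quantum-OR-style coherent amplification, in the spirit of Aaronson's de-Merlinization and Harrow--Lin--Montanaro's quantum OR lemma, to boost the resulting small bias into the claimed gap.

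First, I would reformulate the $\propQMA$ hypothesis operationally. Let $U$ be the verifier's unitary and $\Pi_{\text{acc}}$ the acceptance projector, and define the positive semidefinite witness operator
\[ M_\psi := (I_W \otimes \bra{0}_A) U^\dagger \Pi_{\text{acc}} U (I_W \otimes \ket{0}_A) \]
on the witness register $W$, so that $\bra{\phi} M_\psi \ket{\phi}$ equals the $\propQMA$ acceptance probability on input $\ket{\psi}^{\otimes k}$ with witness $\ket{\phi}$. The hypothesis then reads $\lambda_{\max}(M_\psi) \ge 1-\epsilon$ in the YES case and $\lambda_{\max}(M_\psi) \le \delta$ in the NO case. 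Purifying the maximally mixed state $\tau = I/2^w$ on $W$ to the maximally entangled state $\ket{\Phi_+}_{WE}$ on $W$ together with a fresh auxiliary $E$, and then running the verifier on $\ket{\psi}^{\otimes k} \otimes \ket{\Phi_+}_{WE}$, yields a raw $\propBQP$ tester with acceptance probability $p = \Tr(M_\psi)/2^w$, satisfying $p \ge (1-\epsilon)/2^w$ in the YES case (using $\Tr(M_\psi) \ge \lambda_{\max}(M_\psi)$) and $p \le \delta$ in the NO case (using $\Tr(M_\psi) \le 2^w \lambda_{\max}(M_\psi)$).

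Next, I would amplify this raw bias using $T = \Theta(\sqrt{2^w/(1-\epsilon)})$ iterations of Grover-type amplitude amplification, performed coherently with the $k$ copies of $\ket{\psi}$ held in their register throughout and accessed only through $U$ and $U^\dagger$. Standard amp amp multiplies the acceptance amplitude $\sqrt{p}$ by roughly $2T+1$, so the final acceptance probability would be $\sin^2((2T+1)\arcsin\sqrt{p}) \gtrsim (1-\epsilon)^2/4$ in YES and at most $(2T+1)^2 \delta = O(\delta \cdot 2^w/(1-\epsilon))$ in NO, matching the claim up to constants absorbed into the $2/(1-\epsilon)$ prefactor of the soundness bound.

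The main technical obstacle is that textbook amp amp requires reflecting about the full initial state, which here contains $\ket{\psi}^{\otimes k}$ that we cannot prepare from scratch. I would circumvent this via an oblivious variant of amp amp that reflects only about the preparable registers being in their initial configuration (implementable via the known preparation unitary for $\ket{\Phi_+}_{WE}$ together with a standard reflection about $\ket{0}$), and then check that this enlarged reflection still effects the correct two-dimensional rotation in the accept/reject subspace of the starting state, so the usual amp amp analysis carries over. A secondary subtlety is potential over-rotation when the YES acceptance probability $p$ exceeds its worst case $(1-\epsilon)/2^w$; this is handled in the standard way by randomizing $T$ uniformly over $\{0, 1, \ldots, T_{\max}\}$ and invoking the averaging identity $\frac{1}{T_{\max}+1}\sum_{t=0}^{T_{\max}} \sin^2((2t+1)\theta) = \frac{1}{2} - \frac{\sin(4(T_{\max}+1)\theta)}{4(T_{\max}+1)\sin(2\theta)}$, which forces the expected YES acceptance to be at least $\frac{1}{2} - O(1/T_{\max}) \ge (1-\epsilon)^2/4$ irrespective of $p$, while preserving the NO bound in the linear regime $T_{\max}\sqrt{\delta} \ll 1$ guaranteed by the hypothesis $\delta < (1-\epsilon)/2^w$.
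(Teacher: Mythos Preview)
Your amplitude-amplification route has a genuine gap at exactly the point you flag as ``the main technical obstacle.'' The partial reflection you propose---reflecting only about the preparable ancilla registers $\ket{\Phi_+}_{WE}\ket{0}_A$ while leaving the unknown input register $R$ alone---does \emph{not} effect a two-dimensional rotation in the accept/reject plane of the starting state. Oblivious amplitude amplification requires the success angle $\theta=\arcsin\sqrt{p}$ to be independent of the unknown input, whereas here $p=\Tr(M_\psi)/2^w$ depends on $\psi$ by construction. With $\psi$-dependent $\theta$, the partial reflection has eigenvalue $+1$ on the entire subspace $\mathcal H_R\otimes\ket{\Phi_+,0}$, not just on $\ket{\mathrm{init}}$, so after one iteration the state picks up components outside $\mathrm{span}\{\ket G,\ket B\}$ and the Grover analysis breaks down. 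Your sentence ``check that this enlarged reflection still effects the correct two-dimensional rotation'' is therefore asserting precisely the false step. Randomizing $T$ addresses over-rotation within a valid two-dimensional dynamics but does nothing to repair the escape to higher dimensions.

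The paper avoids this entirely by dispensing with iteration. It takes the ``maximally mixed witness'' operator $\tilde\Lambda=\tfrac1d\sum_i\bra{e_i}_P\Lambda\ket{e_i}_P$ acting on the input register alone, and defines the new tester to be the spectral projector $\Pi$ onto eigenvectors of $\tilde\Lambda$ with eigenvalue exceeding the threshold $(1-\epsilon)/(2d)$. Soundness is then the one-line operator inequality $\Pi\preceq\tfrac{2d}{1-\epsilon}\tilde\Lambda$, giving $\Tr(\Pi\rho)\le\tfrac{2d}{1-\epsilon}\Tr(\tilde\Lambda\rho)\le\tfrac{2d\delta}{1-\epsilon}$. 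Completeness comes from the gentle measurement lemma: if $\Tr(\Pi\rho)$ were small, then $\tilde\rho=\Pi^\perp\rho\Pi^\perp/\Tr(\Pi^\perp\rho)$ would be close to $\rho$ in trace distance, yet $\Tr(\tilde\Lambda\tilde\rho)\le(1-\epsilon)/(2d)$ forces $\Tr(\Lambda(\tilde\rho\otimes\pi))\le(1-\epsilon)/2$ for every witness $\pi$, contradicting the $(1-\epsilon)$ acceptance of $\rho\otimes\pi$. This is a single measurement on the $k$ copies, with no reuse or reflection about the unknown $\ket\psi$ needed; since the model is information-theoretic, the verifier may construct $\Pi$ from $\Lambda$ without efficiency concerns.
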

\begin{proof}
    Suppose we have a measurement $\cM = \{\Lambda, I-\Lambda\}$ corresponding to a tester for $\propQMA_{1-\epsilon, \delta}[k,w]$. Then we can write down the implications of what it means, 
    \begin{description}
        \item[(Yes)] If $\ket\phi\in \calP$, then there is some $\ket\pi\in \C^{2^w}$,
        \begin{align}
            \trace \left(\Lambda (\phi^{\otimes k} \otimes \pi)\right) \ge 1-\epsilon.\label{eq:demerlin-witness}
        \end{align}
        \item[(No)] If $\ket\phi\not\in \calP$, then for any $\ket\pi\in \C^{2^w}$,
        \begin{align}
            \trace \left(\Lambda (\phi^{\otimes k} \otimes \pi)\right) \le \delta. \label{eq:demerlin-qma-soundess}
        \end{align}
    \end{description}
    Call the register indicating the proof $\pi$ as $P$, let $e_1,e_2,\ldots, e_{2^w}$ be the computational basis for $P$, take 
    \[
     d:= 2^w, \quad 
     \rho := \phi^{\otimes k},\quad 
     \Lambda_i := \bra{e_i}_P \Lambda \ket{e_i}_P, \quad 
     \tilde\Lambda = \Exp_i \Lambda_i.\] 
    Let $\Pi$ be the projector onto 
    \[
        \cH:=\Span\left\{\ket\sigma: \tilde\Lambda \sigma > \frac{1-\epsilon}{2d}  \right\},
    \]
    and $\Pi^\perp$ denote the projector onto the subspace orthogonal to $\cH$. Immediately,
    \begin{equation}\label{eq:demerlin-Pi-bound}
        \frac{2d}{1-\epsilon}\tilde \Lambda \succeq \Pi.
    \end{equation}
    We show that $\{\Pi, I-\Pi\}$ is the ideal measurement for $\propBQP$.
    
    In the (Yes) case, we have the witness $\ket\pi$  satisfying (\ref{eq:demerlin-witness}). Let 
    \[
        \tilde\rho:=\frac{\Pi^\perp \rho \Pi^\perp}{\trace \Pi^\perp \rho}.
    \]
    Then,
    \begin{align*}
        \frac{1-\epsilon}{2d} 
        &\ge \trace \tilde\Lambda \tilde\rho 
        & (\text{By definition of }\Pi^\perp)
        \\
        & =\frac{\sum_i \trace \Lambda (\tilde\rho\otimes |e_i\rangle\langle e_i |)}{d} 
        & (\text{By definition of }\tilde\Lambda)
        \\
        &= \frac{\trace \Lambda (\tilde\rho\otimes I)}{d} 
        \\
        &\ge \frac{\trace\Lambda (\tilde\rho \otimes \pi)} {d} & (I\succeq\pi)
        \\
        &\ge \frac{\trace \Lambda (\rho\otimes \pi) - \|\rho\otimes \pi - \tilde\rho\otimes \pi\|_{\trace}}{d}. & (\textup{By~\cref{fact:trace_norm_acc}})
    \end{align*}
    Combining the above calculation with (\ref{eq:demerlin-witness}), we obtain
    \begin{align*}
        \left(\frac{ 1-\epsilon}{2} \right)^2
        & \le  \|\rho\otimes \pi - \tilde\rho \otimes \pi \|_{\trace}^2  =  \|\rho - \tilde\rho \|_{\trace}^2 
        \le \trace \Pi \rho,
    \end{align*}
    where the last step follows the gentle measurement lemma~\cref{fact:gentle-measurement}.

    We now turn to the (No) case.
    \begin{align*}
        \trace\Pi\rho 
        &\le \frac{2d}{1-\epsilon} \trace \tilde\Lambda \rho
        & (\text{By }(\ref{eq:demerlin-Pi-bound}))
        \\
        &=\frac{2d}{1-\epsilon} \Exp_i \trace \Lambda(\rho \otimes \ket{e_i}\bra{e_i})
        \\
        &\le \frac{2d\delta}{1-\epsilon}.
        &(\textup{By}~(\ref{eq:demerlin-qma-soundess}))
    \end{align*}
    Since the verifier can construct $\Pi$ based on $\cM$. That finishes our proof.
\end{proof}

\paragraph{A Full De-Merlinization for $\propQMA$.}
Putting the previous results together, one can deMerlinize any $\propQMA$ tester.

\begin{corollary}[Full de-Merlinization]
\label{cor:demerlin-full}
For any $0<\alpha<\beta<1$, and $\epsilon\ge 0$ such that $1-2\epsilon > \beta-\alpha$, let $\gamma= \beta-\alpha$. Then    \begin{align}
        \propQMA_{\alpha \mcom \beta}[k,p] 
        ~\subseteq~ \propBQP_{1-\epsilon \mcom \epsilon}\left[ 
            kp\log p \cdot  O\left(
               \frac{1}{(\beta-\alpha)^2}\log\frac{1}{\epsilon} 
            \right)
        \right].
    \end{align} 
\end{corollary}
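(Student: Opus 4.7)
The plan is to chain the four preparatory ingredients in a specific order: first boost the $\propQMA$ gap via~\eqref{eq:amp-naive}, then exponentially shrink the $\propQMA$ soundness via the witness-reusing amplification~\eqref{eq:amp-quantum-union}, then strip the witness using partial de-Merlinization~\eqref{eq:amp-quantum-or}, and finally boost the resulting constant-gap $\propBQP$ tester using~\eqref{eq:amp-naive-bqp} to reach the desired completeness/soundness $(1-\epsilon,\epsilon)$. The delicate step is to pick the intermediate error parameter so that the $2^{w}$-blow-up in soundness coming from de-Merlinization is absorbed.

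More concretely, I would first apply~\eqref{eq:amp-naive} to $\propQMA_{\alpha,\beta}[k,p]$ with some error $\epsilon_1$ (to be chosen), landing in $\propQMA_{1-\epsilon_1,\epsilon_1}[k_1,w_1]$ with $k_1=O(k\log(1/\epsilon_1)/\gamma^{2})$ and $w_1=O(p\log(1/\epsilon_1)/\gamma^{2})$. Second, I would apply~\eqref{eq:amp-quantum-union} with $t$ sequential executions (using $t\cdot k_1$ fresh state copies but reusing the single witness), obtaining $\propQMA_{1-t\sqrt{\epsilon_1},\,\epsilon_1^t}[tk_1,w_1]$. The parameters $(\epsilon_1,t)$ are then chosen so that $t\sqrt{\epsilon_1}\le 1/2$ (keeping completeness at least $1/2$) and simultaneously $\epsilon_1^{t}\cdot 2^{w_1}\le 1/C$ for a sufficiently large constant $C$. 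Setting $\epsilon_1=1/(4t^2)$ satisfies the first constraint with equality, and then the second constraint reduces to $t\log(4t^{2})\gtrsim w_1$, giving $t=\Theta(w_1/\log w_1)$. Substituting $w_1=O(p\log(1/\epsilon_1)/\gamma^{2})$ yields the consistent solution $t=\Theta(p/\gamma^{2})$ and $\log(1/\epsilon_1)=\Theta(\log(p/\gamma))$.

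Third, apply partial de-Merlinization~\eqref{eq:amp-quantum-or} with $\epsilon''=1/2$ to $\propQMA_{1/2,\,\epsilon_1^t}[tk_1,w_1]$: by the choice of parameters, the output tester lies in $\propBQP_{1/16,\,1/50}[tk_1]$, a constant-gap $\propBQP$ tester. Finally, apply $\propBQP$ gap amplification~\eqref{eq:amp-naive-bqp} to boost the constant gap to $(1-\epsilon,\epsilon)$, costing an extra $O(\log(1/\epsilon))$ factor in copies. Tracing the total copy count through all four steps gives
\[
    k_1\cdot t\cdot O(\log(1/\epsilon))\;=\;O\!\left(\frac{k\log(p/\gamma)}{\gamma^{2}}\right)\cdot O\!\left(\frac{p}{\gamma^{2}}\right)\cdot O(\log(1/\epsilon))\;=\;O\!\left(\tfrac{kp\,\log p\,\log(1/\epsilon)}{\gamma^{O(1)}}\right),
\]
which matches the claimed bound up to the polynomial dependence on $1/\gamma$ (the dominant $kp\log p \cdot \log(1/\epsilon)$ factors are reproduced exactly).

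The main obstacle is the coupled choice of $\epsilon_1$ and $t$ in the second step: the witness-reusing amplification~\eqref{eq:amp-quantum-union} only shrinks soundness polynomially in $1/\epsilon_1$ per iteration while degrading completeness by $\sqrt{\epsilon_1}$ per iteration, and the witness length $w_1$ itself grows with $\log(1/\epsilon_1)$ from step~1. One must therefore solve an implicit equation of the form $t\log t\asymp w_1(\log(1/\epsilon_1))$ to make the $\propQMA$ soundness drop below $2^{-w_1}$ before the completeness crosses $1/2$; without this careful balance, either the de-Merlinization step fails (because the $2^{w}$-factor dominates the soundness), or the completeness is too low for the final $\propBQP$ amplification to succeed.
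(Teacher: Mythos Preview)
Your four-step chain---naive $\propQMA$ amplification~\eqref{eq:amp-naive}, witness-reusing amplification~\eqref{eq:amp-quantum-union}, partial de-Merlinization~\eqref{eq:amp-quantum-or}, then $\propBQP$ amplification~\eqref{eq:amp-naive-bqp}---is exactly the route the paper takes, and your identification of the key obstacle (balancing $t$ and $\epsilon_1$ so that $\epsilon_1^{t}$ beats $2^{w_1}$ before completeness drops below $1/2$) is precisely the delicate point. The only cosmetic difference is that the paper simply declares $\delta=\Theta(p^{-2})$ and $t=\Theta(p/\gamma^{2})$ rather than solving your implicit equation; both analyses arrive at the same asymptotics, including the $1/\gamma^{4}$ dependence you honestly flag (the paper's stated $1/\gamma^{2}$ appears to be loose relative to its own proof).
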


\begin{proof}
    The theorem now follows (\ref{eq:amp-naive})-(\ref{eq:amp-quantum-or}) by setting the parameters properly. In particular, choose some large enough constant $C$, set
    \begin{equation*}
        \gamma=\beta - \alpha,
        \quad
        \delta = C^{-2} p^{-2},
        \quad 
        t =  \frac{Cp}{2\gamma^2}, 
        \quad
        W = O\left(\frac{p}{\gamma^2}\log\frac{1}{\delta}\right),
    \end{equation*}
    Then
    \begin{align*}
        \propQMA&_{\alpha \mcom \beta}[k,p] 
        \\
        &\subseteq~ \propQMA_{1-\delta \mcom \delta}\left[ O\left(\frac{k}{\gamma^2}\log \frac{1}{\delta}\right),~ W \right]
        & (\textup{By (\ref{eq:amp-naive}) in~\cref{thm:prop-gap-amp}})
        \\
        &\subseteq~ \propQMA_{\frac{1}{2} \mcom \delta^{t}}\left[ O\left(\frac{kt}{\gamma^2}\log \frac{1}{\delta}\right),~ W \right]
        & (\textup{By (\ref{eq:amp-quantum-union}) in~\cref{thm:prop-gap-amp}})
        \\
        &\subseteq~ \propBQP_{\frac{1}{16} \mcom 4\delta^t 2^W}\left[ O\left(\frac{kt}{\gamma^2}\log \frac{1}{\delta}\right)\right].
        & (\textup{By~\cref{thm:demerlin-special}})
    \end{align*}
    For large enough $C$, we have $4\delta^t 2^W < 1/32$.
    Now theorem follows as
    \begin{align*}
        \propQMA_{\alpha \mcom \beta}[k,p] 
        ~&\subseteq~
        \propBQP_{\frac{1}{16} \mcom \frac{1}{32}}\left[ O\left(\frac{kt}{\gamma^2}\log \frac{1}{\delta}\right)\right]
        \\
        &\subseteq~
        \propBQP_{1-\epsilon,\epsilon}\left[
            O\left(\frac{kt}{\gamma^2}\log\frac{1}{\delta}\log\frac{1}{\epsilon}\right)
        \right],
    \end{align*}
    where the last inclusion is due to (\ref{eq:amp-naive-bqp}).
\end{proof}

\section{Testing with Structured Quantum Certificates: A Triumph}
\label{sec:flat-cert}
In the previous section, we announced the bad news that general $\QMA$ proofs would not improve testability for any properties significantly. 
We now turn to the good news: With naturally structured quantum certificates, one can increase the testability for testing subset state of different sizes dramatically.
This is in sharp contrast to the classical problem $\textup{GapSupp}$, as in~\cref{subsec:structured-classical-certificate}, we proved that structured certificates in an abstract manner as convex subset of $\Delta_\calS$ gives almost no gain in terms of testability when the certificates is short.

We focus on one natural restriction: Restricting the certificates to be subset states. Namely, both honest and adversary prover can only send subset state. We define the corresponding property testing model $\propQMA_{\textup{subset}}(m)_{a,b}[k, pm]$ analogously to~\cref{def:qma-prop}.

\begin{definition}[Quantum Property Testing with Certificates]\label{def:qma-flat-prop}
  For $d\in \mathbb N$, let $k=k(d),p=p(d) : \mathbb{N} \to \mathbb{N}$, $1 \ge a > b \ge 0$. A property $\calP = \sqcup \calP_d$ belongs to $\propQMA_{\textup{subset}}(m)_{a,b}[k,mp]$
  if there exists a verifier $V$ such that for every $\ket{\psi} \in \mathbb{C}^d$,
  \begin{enumerate}
    \item if $\ket{\psi} \in \calP_d$, then there exist $m$ subset states $\ket{\phi_1},\ldots,\ket{\phi_m} \in \mathbb{C}^{2^p}$ such that $V(\ket{\psi}^{\otimes k}\otimes \ket{\phi_1} \otimes \cdots \otimes \ket{\phi_m})$ accepts with probability at least $a$, and
    \item if $\ket{\psi}$ is $\epsilon$-far from $\calP_d$ (in trace distance), then then for any subset states $\ket{\phi_1},\ldots,\ket{\phi_m} \in \mathbb{C}^{2^p}$, $V(\ket{\psi}^{\otimes k}\otimes \ket{\phi_1} \otimes \cdots \otimes \ket{\phi_m})$
          accepts with probability at most $b$.
  \end{enumerate}
\end{definition}

This section aims to prove that (adversarial) flat quantum certificates allow us to
obtain a multiplicative approximation to the support size of a flat state. More precisely, we
show the following.

\begin{theorem}[Effective Quantum Certified Testing]\label{thm:eff_q_cert_testing}
  With just polynomially many (\ie $\poly(n)$) copies and certificates of flat amplitudes, a polynomial time quantum tester can, 
  \begin{enumerate}
      \item  either certify the support size  of an arbitrary subset state (the target state) is $s$ within a $(1\pm \epsilon)$ multiplicative factor for any constant
      $\epsilon > 0$,\footnote{Or even any $\epsilon=\Omega(1/\poly(n))$.}
    \item   or detect that the certificates are malicious.
  \end{enumerate}
\end{theorem}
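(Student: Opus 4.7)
The plan is to have the verifier check a prover-supplied nested chain of subset states, test the successive halving ratios with swap tests, and telescope the product to read off the support size.

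\textbf{Protocol.} The prover sends polynomially many copies of subset states $\ket{\phi_{A_0}}, \ket{\phi_{A_1}}, \ldots, \ket{\phi_{A_\ell}}$ together with ``complement'' states $\ket{\phi_{B_i}}$ for $i \in [\ell]$, where $\ell = n - \lfloor \log |S|\rfloor$. Honestly, $A_0 = \{0,1\}^n$, $A_i \subsetneq A_{i-1}$ with $|A_i|=|A_{i-1}|/2$, $B_i = A_{i-1}\setminus A_i$, and $A_\ell = S$. The state $\ket{\phi_{A_0}}$ is the uniform superposition (so the verifier can prepare it itself and swap-test the prover's $\ket{\phi_{A_0}}$ against it). For each link $i \in [\ell]$ the verifier runs three batches of swap tests to estimate
\[
p_i = \abs{\braket{\phi_{A_{i-1}}}{\phi_{A_i}}}^2, \quad q_i = \abs{\braket{\phi_{A_{i-1}}}{\phi_{B_i}}}^2, \quad r_i = \abs{\braket{\phi_{A_i}}{\phi_{B_i}}}^2,
\]
and accepts the link iff $\abs{p_i-\tfrac12},\abs{q_i-\tfrac12},r_i \le \delta$. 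A final swap test between $\ket{\phi_{A_\ell}}$ and $\ket\psi$ checks that $A_\ell$ matches the target support. The output estimate is $\widehat{|S|} = 2^n \prod_{i=1}^\ell p_i$.

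\textbf{Soundness via a rigidity argument.} Because the certificates are subset states by hypothesis, the overlaps take the closed form $\abs{\braket{\phi_A}{\phi_B}}^2 = \abs{A\cap B}^2/(|A||B|)$. I would first prove the exact case: starting from $|A_0|=2^n$, the identities $|A_{i-1}\cap A_i|^2 = \tfrac12|A_{i-1}||A_i|$ and $|A_{i-1}\cap B_i|^2 = \tfrac12|A_{i-1}||B_i|$ combined with $A_i\cap B_i = \emptyset$ and the pigeonhole bound $|A_{i-1}\cap A_i| + |A_{i-1}\cap B_i| \le |A_{i-1}|$ force $|A_i|=|B_i|=|A_{i-1}|/2$ and $A_i,B_i\subseteq A_{i-1}$, which inductively propagates the size information down the chain. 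The approximate version is a quantitative stability of this pigeonhole: if all three identities hold up to additive $\delta$ and $|A_{i-1}|$ is known up to factor $(1\pm\eta_{i-1})$, then $|A_i| = \tfrac12 |A_{i-1}| (1 \pm O(\delta+\eta_{i-1}))$, so that the errors aggregate additively. Telescoping gives
\[
\widehat{|S|} \;=\; 2^n \cdot \prod_{i=1}^\ell p_i \;=\; |S|\cdot (1 \pm O(\ell\delta)).
\]

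\textbf{Parameters.} With $\ell \le n$, setting $\delta = \epsilon/(Cn)$ yields the $(1\pm\epsilon)$ guarantee. Each overlap is estimated to additive accuracy $\delta$ with failure probability $1/\poly(n)$ by $O(n^2/\epsilon^2)\cdot \log n$ swap tests, for a total copy complexity $\poly(n/\epsilon)$; a union bound across the $O(n)$ links and all overlap estimates controls the global error. The case where $|S|$ is not a power of two is handled by letting the prover declare the final ratio $|S|/|A_{\ell-1}|$ and testing $p_\ell$ against that declared value rather than $\tfrac12$.

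\textbf{Main obstacle.} The technical heart is the stability step: turning the three approximate scalar identities at each level into approximate set-theoretic statements (near-containment $A_i\subseteq A_{i-1}$, near-disjointness $A_i\cap B_i=\emptyset$, and a tight size estimate $|A_i|\approx |A_{i-1}|/2$), and then ensuring that the slack from level $i$ does not amplify when fed into the analysis at level $i+1$. The analysis boils down to a robust version of the pigeonhole ``two almost-square-rooted sizes summing to a square root of the parent must each be half the parent,'' which I expect to go through with a direct Cauchy--Schwarz/AM--GM argument; the care required is purely bookkeeping of constants to keep the total multiplicative error at $O(\ell\delta)$.
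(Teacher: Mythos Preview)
Your high-level architecture matches the paper's exactly: nested chain $A_0 \supset A_1 \supset \cdots \supset A_\ell$, complement states $B_i$, three swap-test overlaps per link, a ``support halving'' rigidity lemma, and telescoping. Your exact-case pigeonhole argument is correct and is essentially the content of the paper's Support Halving Lemma (Lemma~\ref{lem:key_subset_mass}); the paper's approximate version yields $|A_i|/|A_{i-1}| = \tfrac12 \pm O(\delta^{1/4})$ rather than $\pm O(\delta)$, so your parameter $\delta = \epsilon/(Cn)$ would need to be tightened to roughly $(\epsilon/n)^4$, but that is bookkeeping.

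There is, however, a genuine gap. You implicitly assume that the many ``copies'' of $\ket{\phi_{A_i}}$ sent by the prover are in fact copies of one and the same subset state. The model only promises that \emph{each individual certificate} is a subset state; a malicious prover is free to populate the ``$A_i$-copies'' slot with different subset states $\ket{\phi_{A_i^{(1)}}}, \ket{\phi_{A_i^{(2)}}}, \ldots$. In that case your swap-test batch does not estimate $|\langle\phi_{A_{i-1}}|\phi_{A_i}\rangle|^2$ for any fixed pair of sets, but rather an average $\tfrac1m\sum_j |\langle\phi_{A_{i-1}^{(\sigma(j))}}|\phi_{A_i^{(j)}}\rangle|^2$, and your rigidity argument (which is about three fixed sets $A_{i-1},A_i,B_i$) no longer applies. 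Worse, the ``$A_i$'' states consumed in the link-$i$ test are physically different from the ``$A_i$'' states consumed as the parent in the link-$(i{+}1)$ test, so there is nothing tying the chain together. The paper addresses exactly this by first running a \emph{symmetry test} on each collection (pairwise swap tests within the collection) to enforce that it is a $\delta$-tilted state, i.e., that most states in the collection are $\sqrt{\delta}$-close to a common representative; only then does the halving analysis go through. This consistency check, together with the accompanying $\delta$-tilted-state calculus (Propositions~\ref{prop:tensor-tilted-states} and~\ref{prop:tilted-state-approx}), is the piece your proposal is missing.
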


Note that it suffices to consider the case where there is only one copy of the target state. We show that if the target state has the correct support size $s$, which is part of the proof, the certificates are accepted with a probability exponentially close to 1; while if the prover tries to fool the verifier that the target state has support size $s$ which is $\epsilon$-far from the correct one, then the verifier will reject with a probability constant away from 1.
\cref{thm:eff_q_cert_testing} implies that with polynomial size certificates (at least restricted), the ensembles studied in~\cref{sec:quantum-indist} are distinguishable, using even just a single copy of the state.

For simplicity, assume that $2^n/s$ is a power of 2.
 Our testing strategy is simple. However, a lot of care needs to be taken to handle the adversarial situation. Therefore, we start by presenting the overall idea.

\subsection{Proof Outline}
Let $\rho=\phi_T$ be the target subset state with support $T$, for which we want to certify its support size. The prover will send classical $\ell$ supposedly equal to $n -\log s$. 
Furthermore, the prover will be asked to provide for $i=1,2,\ldots, \ell$ states $\phi_i$ supposedly equal to some subset state $\phi_{S_i}$, such that 
    \begin{align}\label{eq:cascading}
        & T = S_\ell \subseteq S_{\ell-1} \subseteq \cdots \subseteq S_1 \subseteq S_0= [2^n],
        \nonumber\\
        & |S_i| = 2 |S_{i+1}|, & i = 0, 1, 2, \ldots, \ell-1.
        \nonumber
    \end{align}
The task reduces to testing whether indeed the support of the given states halves each time.
This motivates a key technical lemma establishing that the support size of two subset
states, $\ket{\phi_H}$ and $\ket{\phi_S}$, satisfies $S =  |H| / 2$ . The intent is that $S \subset H$ and $\abs{S}/\abs{H} \approx 1/2$. 

\begin{lemma}[Support Halving Lemma (informal)]
  Suppose $\ket{\phi_H}, \ket{\phi_S}, \ket{\phi_{S'}}$ are subset states satisfying
  \begin{enumerate}
       \item $\abs{\braket{\phi_S}{\phi_{S'}}}^2 = \frac{1}{\poly(n)}$,
       \item $\left\lvert \abs{\braket{\phi_H}{\phi_S}}^2 -\frac{1}{2}\right\rvert = \frac{1}{\poly(n)}$, 
       \item $\left\lvert \abs{\braket{\phi_H}{\phi_{S'}}}^2-\frac{1}{2}\right\rvert = \frac{1}{\poly(n)}$.
  \end{enumerate}
  Then, we have
  \begin{align*}
    \frac{\abs{S}}{\abs{H}} = \frac{1}{2} \pm \frac{1}{\poly(n)}\mper
  \end{align*}
\end{lemma}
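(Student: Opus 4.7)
The plan is to reduce everything to elementary set arithmetic via the identity
\begin{equation*}
\abs{\braket{\phi_A}{\phi_B}}^2 ~=~ \frac{\abs{A \cap B}^2}{\abs{A}\,\abs{B}}
\end{equation*}
that holds for any two subset states, and then combine the three hypotheses through inclusion--exclusion inside $H$. Writing $\delta = 1/\poly(n)$, the three hypotheses translate into
\begin{equation*}
\abs{S \cap S'}^2 \le \delta\, \abs{S}\abs{S'}, \qquad \abs{H\cap S}^2 = (\tfrac12 \pm \delta)\abs{H}\abs{S}, \qquad \abs{H \cap S'}^2 = (\tfrac12 \pm \delta)\abs{H}\abs{S'}.
\end{equation*}
Set $r = \abs{S}/\abs{H}$ and $r' = \abs{S'}/\abs{H}$; the goal is to show $r = 1/2 \pm O(\sqrt{\delta})$.

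First I handle the easy direction: using the trivial bound $\abs{H \cap S} \le \abs{S}$ in the second hypothesis gives $\abs{S}^2 \ge (\tfrac12 - \delta)\abs{H}\abs{S}$, i.e.\ $r \ge 1/2 - \delta$, and symmetrically $r' \ge 1/2 - \delta$. The complementary bound $\abs{H \cap S} \le \abs{H}$ applied to the same hypothesis yields $r, r' \le 2(1+O(\delta))$, so all ratios remain $O(1)$; this a priori bound will be needed to keep error terms under control.

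For the hard direction I use the key observation that both $H \cap S$ and $H \cap S'$ are subsets of $H$, while their pairwise intersection $H \cap S \cap S'$ is contained in $S \cap S'$, which is small. Applying inclusion--exclusion to $(H \cap S) \cup (H \cap S') \subseteq H$ and inserting the hypotheses yields
\begin{equation*}
\sqrt{(\tfrac12 - \delta)\abs{H}\abs{S}} + \sqrt{(\tfrac12 - \delta)\abs{H}\abs{S'}} ~\le~ \abs{H} + \sqrt{\delta\,\abs{S}\abs{S'}}\mper
\end{equation*}
Dividing by $\abs{H}$ and using the a priori bound $rr' = O(1)$, so that $\sqrt{\delta\, rr'} = O(\sqrt{\delta})$, this reduces to $\sqrt{r/2} + \sqrt{r'/2} \le 1 + O(\sqrt{\delta})$. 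Combined with the matching lower bounds $\sqrt{r/2},\sqrt{r'/2} \ge 1/2 - O(\delta)$ proved above, subtraction gives $\sqrt{r/2} \le 1/2 + O(\sqrt{\delta})$ and hence $r \le 1/2 + O(\sqrt{\delta})$, completing the proof.

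I do not expect a serious obstacle: the computation is elementary once the correct geometric picture ($H \cap S$ and $H \cap S'$ nearly filling disjoint halves of $H$) is in hand. The only mild subtlety is tracking how the $1/\poly(n)$ errors propagate through the square roots appearing in inclusion--exclusion, and in particular ensuring that the a priori $O(1)$ bounds on $r, r'$ are derived first so that the error term $\sqrt{\delta\, rr'}$ stays of order $\sqrt{\delta}$. The resulting error $O(\sqrt{\delta})$ is still of the form $1/\poly(n)$ (for a slightly smaller polynomial), matching the statement.
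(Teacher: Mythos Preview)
Your argument is correct and in fact cleaner than the paper's. The paper's proof (the formal Lemma with parameter $\mu$) goes through auxiliary ratios $\gamma = \abs{S\cap H}/\abs{S}$, $\gamma' = \abs{S'\cap H}/\abs{S'}$ together with the fractions $\alpha = \abs{(S\setminus S')\cap H}/\abs{H}$ and $\beta = \abs{(S'\setminus S)\cap H}/\abs{H}$: it first shows $\alpha \approx \mu$, $\beta \approx 1-\mu$, then squeezes $\gamma \approx 1$ (so $S$ is almost contained in $H$), and only then reads off $\abs{S}/\abs{H} \approx \mu$. You bypass all of this by applying inclusion--exclusion directly to $(H\cap S)\cup(H\cap S')\subseteq H$ and working with the square roots of the hypotheses; the a~priori bound $r,r'=O(1)$ is exactly what is needed to keep the $\sqrt{\delta\, rr'}$ term small. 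Your route yields an $O(\sqrt{\delta})$ error, which matches (for constant $\mu$) what the paper obtains; the paper's stated $O(\delta^{1/4})$ is an artifact of allowing $\mu$ as small as $\Theta(\delta^{1/4})$. One small thing the paper's detour buys that yours does not is the intermediate conclusion $\abs{S\cap H}\ge (1-O(\sqrt{\delta}))\abs{S}$, i.e.\ that $S$ is nearly a subset of $H$; you never need this, but it is conceptually reassuring for the cascading application.
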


It is evident how the above lemma will be used
to design an algorithm to approximate the support size of a subset state $\ket{\phi_T}$. We will apply the support halving lemma iteratively. Start with $i=0$, $S_0 = \set{0,1}^n$; for any $i\ge 0$, $S_{i+1}$ can be any subset satisfying
$T \subset S_{i+1} \subset S_i$ and 
\begin{equation}\label{eq:halving}
    \frac{\abs{S_{i+1}}}{\abs{S_{i}}} = \frac{1}{2}\pm\frac{1}{\poly(n)},
\end{equation}  
which will be guaranteed using a test that implements \cref{lem:key_subset_mass}.
Then, we obtain the telescoping multiplication
\begin{align*}
  \abs{S_\ell} \approx \abs{S_0} \frac{\abs{S_1}}{\abs{S_0}} \frac{\abs{S_2}}{\abs{S_1}} \cdots \frac{\abs{S_{\ell}}}{\abs{S_{\ell-1}}} = \frac{1}{2^\ell} \left(1 \pm \frac{1}{\poly(n)}\right)^\ell \abs{S_0}\mper
\end{align*}
As long as the $1/\poly(n)$ is small enough, $(1+1/\poly(n)) ^\ell = 1\pm\epsilon$. Therefore $T = S_\ell$ has size about $2^{n-\ell}$.

To obtain good estimates on 
\[
    \langle \phi_{S_i}, \phi_{S_{i+1}}\rangle, 
    \quad
    \langle \phi_{S_i}, \phi_{ S'_{i+1}} \rangle,
    \quad
    \langle \phi_{S_{i+1}}, \phi_{ S'_{i+1}} \rangle
\]
for $i=0,1,2,\ldots, \ell-1$, we will ask to prover to provide many copies of each of these subset states. In particular, the prover should supply collections of copies of states
\begin{align*}
    \Phi_i, \Psi_i,\qquad i=1,2,\ldots, \ell,
\end{align*}
where $\Phi_i$ corresponds to $m$ copies of states $\phi_{S_i}$, and $\Psi_i$ corresponds to $m$ copies of states $\phi_{ S'_i}$. Supposedly, $S'_i = S_{i-1} \setminus S_i $.
Then Chernoff bound tells us that with probability at most $\exp(-\Omega( m / \poly(n)))$, the estimate differs from the actual correlation by at most $1/\poly(n)$. Hence choose $m = \poly(n)$ would suffice.

\subsection{$\delta$-tilted States and Symmetry Test}
The first challenge to face is that once we are dealing with proofs that are supposed to supply copies of the same states, we need to deal with the adversarial situation where the proofs are not the same states. Here we bring the $\delta$-\emph{tilted states} and \emph{symmetry test} from~\cite[Section 4]{JW23}.

\begin{definition}[$\delta$-tilted states]
A collection of states $|\psi_1\rangle, |\psi_2\rangle, \ldots, |\psi_k \rangle$ defined on a same space is an $\delta$-tilted state
if there is a subset $R\subseteq[k]$ such that $|R|\ge (1-\delta)k$ and for any $i,j \in R$,
\[
\TD(|\psi_i\rangle, |\psi_j\rangle) \le \sqrt\delta.
\]
Furthermore,  we call $ \ket{\psi_i}$ a \emph{representative state} for any $i\in R$, and the subset $\{ |\psi_i\rangle: i\in R\}$ the
\emph{representative set}.
\end{definition}

The symmetry test are used to test if the collection of states $\Phi$ provided by the prover are essentially the same, i.e., being $\delta$-tilted, or not.

\noindent\fbox{
\begin{myalg}[Symmetry Test]\label{proc:sym-test}\ignorespacesafterend
\textbf{Input:} $ \Phi = \{ \phi_{1},\phi_{2},\ldots,\phi_{m} \}$, a collection of pure states for some even number $m$.
\begin{enumerate}
\item Sample a random matching $\pi$ within $1, 2, \ldots, m$.
\item SwapTest on the pairs based on the matching $\pi$.
\end{enumerate}
\emph{Accept} if all SwapTests accept. 

\end{myalg}}

\begin{theorem}[Symmetry Test~\cite{JW23}]\label{thm:symmetry-test}
Suppose $\Psi$ is not an $\delta$-tilted state.
Then the symmetry test passes with probability at most $\exp(-\Theta(\delta^{2}m)).$
On the contrary, for $0$-tilted state $\Psi$, the symmetry test accepts with probability $1$.
\end{theorem}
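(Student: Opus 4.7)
The plan splits into completeness and soundness. Completeness is immediate: when $\Psi$ is $0$-tilted all $m$ states coincide exactly, so each SwapTest on two identical pure states accepts with probability $(1+|\langle\psi|\psi\rangle|^2)/2 = 1$, and thus so does the overall test.

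For soundness, once the random matching $\pi$ is drawn, the $m/2$ SwapTests act on pairwise-disjoint registers and are therefore conditionally independent Bernoulli trials, so
\[
\Pr[\text{accept}\mid\pi] ~=~ \prod_{(i,j)\in\pi}\frac{1+|\langle\psi_i|\psi_j\rangle|^2}{2}\mper
\]
I will call a pair $(i,j)$ \emph{bad} if $\TD(|\psi_i\rangle,|\psi_j\rangle) > \sqrt{\delta}/2$; for pure states this forces $|\langle\psi_i|\psi_j\rangle|^2 \le 1-\delta/4$, so each bad pair contributes a per-test acceptance probability of at most $1-\delta/8$. The goal is to show that, with overwhelming probability over $\pi$, the matching contains $\Omega(\delta m)$ bad pairs; conditional independence of the individual SwapTests then gives $\Pr[\text{accept}]\le (1-\delta/8)^{\Omega(\delta m)} = \exp(-\Omega(\delta^2 m))$.

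The structural ingredient is a greedy/peeling argument. If any single state $\psi_i$ had $\ge (1-\delta)m$ other states at trace distance $\le \sqrt{\delta}/2$, then by the triangle inequality this entire neighborhood would be pairwise $\sqrt{\delta}$-close, exhibiting a $\delta$-tilted representative set and contradicting the hypothesis. Hence every state has at least $\delta m$ bad neighbors, giving $\Omega(\delta m^2)$ bad edges globally. The expected number of bad pairs hit by a uniformly random perfect matching is therefore $\Omega(\delta m^2)\cdot (m-1)^{-1} = \Omega(\delta m)$, and a McDiarmid-type concentration for random matchings (the bad-pair count is $O(1)$-Lipschitz under swapping two edges of $\pi$) lifts this expectation to a high-probability statement, absorbing the rare ``low bad-count matching'' event into the final $\exp(-\Omega(\delta^2 m))$ bound.

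The main obstacle will be threshold calibration: the definition is stated at scale $\sqrt{\delta}$, but the triangle inequality only closes the peeling step at the halved scale $\sqrt{\delta}/2$, so each bad pair contributes a factor of $1-\delta/8$ rather than $1-\delta/2$ and bad-edge count lower bounds degrade by a constant. These losses are absorbed into the constants inside the $\Theta(\cdot)$. A secondary point of care is ensuring that the Azuma/McDiarmid step actually applies to random perfect matchings (rather than just random permutations); this can be done either by the standard exchangeable-pair martingale construction or, more crudely, by a second-moment bound on the number of bad pairs hit by $\pi$, either of which suffices to preserve the claimed $\exp(-\Theta(\delta^2 m))$ rate.
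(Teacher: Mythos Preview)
The paper does not prove this theorem; it is quoted verbatim from~\cite{JW23} with no argument supplied, so there is no in-paper proof to compare against. Your proposal is therefore being judged on its own merits.

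Your argument is correct. The completeness direction is trivial, and for soundness the three ingredients you identify are exactly what is needed:
\begin{itemize}
\item The structural step is right: if some $\psi_i$ has fewer than $\delta m$ indices $j$ with $\TD(\psi_i,\psi_j)>\sqrt{\delta}/2$, then $\psi_i$ together with its $>(1-\delta)m-1$ close neighbors is a set of size $\ge(1-\delta)m$ that is pairwise $\sqrt{\delta}$-close by the triangle inequality, contradicting the non-$\delta$-tilted hypothesis. Halving the threshold to $\sqrt{\delta}/2$ so the triangle inequality closes is the right move, and the resulting $1-\delta/8$ per-pair acceptance bound is exactly what you get from $\TD=\sqrt{1-|\langle\cdot|\cdot\rangle|^2}$.
\item The counting step is right: at least $\delta m^2/2$ bad unordered pairs, and each is hit by a uniform perfect matching with probability $1/(m-1)$, giving expectation $\ge \delta m/2$.
\item The concentration step is the one place to be careful, and you flag it yourself. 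The cleanest route is to generate the matching from a uniform permutation $\sigma$ of $[m]$ via pairs $(\sigma(2i-1),\sigma(2i))$; the bad-pair count is then $2$-Lipschitz under transpositions, and the bounded-differences inequality for random permutations (Maurey/Talagrand) gives $\Pr[\text{bad count}<\delta m/4]\le\exp(-\Omega(\delta^2 m))$. Combined with $(1-\delta/8)^{\delta m/4}=\exp(-\Omega(\delta^2 m))$ on the complementary event, this yields the stated bound.
\end{itemize}
The only cosmetic point: you describe this as a ``proof proposal'' rather than a proof, but all the pieces are actually present and correct; the constant losses from the $\sqrt{\delta}/2$ calibration and the permutation-martingale step are absorbed into the $\Theta(\cdot)$ exactly as you say.
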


For a collection of states $\Phi$, one can view it as a mixed state, or simply a set. We will take both perspectives. Next, we collect some facts about $\delta$-tilted collection $\Phi$. The first one states that we can combine two $\delta$-tilted states into one in the natural way.

\begin{proposition}[Tensorization of tilted states~\cite{JW23}]\label{prop:tensor-tilted-states}
If $\Psi$ is an $\delta$-tilted state and $\Phi$ is a $\gamma$-tilted state, and $|\Psi|=|\Phi|=m$. Then $\Psi\otimes\Phi$ is an $(\delta+\gamma)$-tilted state, where
\begin{equation*}
    \Psi\otimes\Phi = \{\psi_i \otimes \phi_i: 
        \Psi = \{\psi_1, \psi_2, \ldots, \psi_m\},
        \Phi = \{\phi_1, \phi_2, \ldots, \phi_m\}
    \}.
\end{equation*}
\end{proposition}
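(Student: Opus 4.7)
The plan is to identify a representative set for $\Psi \otimes \Phi$ by intersecting the representative sets of $\Psi$ and $\Phi$, and then to bound the pairwise trace distance on this intersection using the multiplicativity of fidelity under tensor products. The key observation is that naively applying the triangle inequality would only yield $\sqrt{\delta} + \sqrt{\gamma}$, which is strictly worse than the desired $\sqrt{\delta + \gamma}$; working in fidelity rather than trace distance resolves this.

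First, let $R_{\Psi} \subseteq [m]$ with $|R_{\Psi}| \ge (1-\delta)m$ be a representative set for $\Psi$, and likewise $R_{\Phi} \subseteq [m]$ with $|R_{\Phi}| \ge (1-\gamma)m$ for $\Phi$. Set $R := R_{\Psi} \cap R_{\Phi}$. By inclusion–exclusion, $|R| \ge m - \delta m - \gamma m = (1-\delta-\gamma)m$, giving a candidate representative set of the required size for $\Psi \otimes \Phi$.

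Next, for any $i,j \in R$, since $\ket{\psi_i}, \ket{\psi_j}, \ket{\phi_i}, \ket{\phi_j}$ are pure, trace distance and fidelity satisfy $\TD(\psi, \psi')^2 = 1 - |\langle \psi | \psi'\rangle|^2$. Thus from $\TD(\psi_i, \psi_j) \le \sqrt{\delta}$ we get $|\langle \psi_i | \psi_j\rangle|^2 \ge 1 - \delta$, and similarly $|\langle \phi_i | \phi_j\rangle|^2 \ge 1 - \gamma$. Since fidelity is multiplicative on tensor products of pure states,
\[
|\langle \psi_i \otimes \phi_i \,|\, \psi_j \otimes \phi_j\rangle|^2 = |\langle \psi_i | \psi_j\rangle|^2 \cdot |\langle \phi_i | \phi_j\rangle|^2 \ge (1-\delta)(1-\gamma) \ge 1 - (\delta + \gamma).
\]
Therefore $\TD(\psi_i \otimes \phi_i, \psi_j \otimes \phi_j) \le \sqrt{\delta + \gamma}$, which verifies the $(\delta+\gamma)$-tilted property on $R$ and completes the proof.

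The main obstacle, if it deserves the name, is simply recognizing that the triangle inequality is too lossy here; once one passes to the fidelity formulation, the statement falls out immediately from the multiplicativity $F(\psi \otimes \phi, \psi' \otimes \phi') = F(\psi,\psi')\,F(\phi,\phi')$ for pure states and the elementary inequality $(1-\delta)(1-\gamma) \ge 1-\delta-\gamma$. No subtler ingredient is required.
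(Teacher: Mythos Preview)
Your proof is correct. The paper itself does not supply a proof of this proposition; it is stated as a fact imported from~\cite{JW23}. Your argument---intersecting the two representative sets and passing to fidelity so that multiplicativity yields $(1-\delta)(1-\gamma)\ge 1-\delta-\gamma$---is the natural one and matches what one would expect the cited source to contain.
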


The second fact states that the bahavior of $\delta$-tilted states is like that of any representative state. This fact has two folds: If $\Psi$ is viewed as mixed states, $\Psi$ is close to its representative state in trace distance; if $\Psi$ is viewed as sets, then concentration holds.
\begin{proposition}[\cite{JW23}]\label{prop:tilted-state-approx}
For any quantum algorithm $\cA$, let $\cA(\ket{\psi})$ denote the probability that $\cA$ accepts $\ket{\psi}$. Let $\Psi$ be an $\delta$-tilted state, and $\ket{\psi}$ any representative state of $\Psi$. Then
\begin{equation}\label{eq:approx-tilted-state}
    |\cA(\ket{\psi}) - \cA(\Psi)| \le 3\sqrt \delta. 
\end{equation}
Furthermore, when apply $\cA$ to $\Psi$, let $\alpha$ be the fraction of accepted executions of $\cA$. Then
\begin{equation}\label{eq:concentration-tilted-state}
    \Pr[ |\alpha - \cA(\Psi)| \ge \sqrt \delta] \le \exp(-\delta|\Psi|/2),
\end{equation}
and therefore,
\begin{equation}\label{eq:approx-reprensetative-state}
    \Pr[ |\alpha - \cA(\ket{\psi})| \ge 4\sqrt \delta] \le \exp(-\delta|\Psi|/2).
\end{equation}
\end{proposition}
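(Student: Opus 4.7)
The plan is to dispatch the three bounds in sequence, each as a short consequence of tools already set up earlier. Throughout I view the collection $\Psi = \{\psi_1,\ldots,\psi_m\}$ (with $m = |\Psi|$) as the uniform mixture $\frac{1}{m}\sum_i \psi_i$, so that by linearity $\cA(\Psi) = \frac{1}{m}\sum_{i=1}^m \cA(\ket{\psi_i})$. Let $R \subseteq [m]$ be the representative subset, with $|R| \ge (1-\delta)m$ and $\TD(\psi_i,\psi_j) \le \sqrt{\delta}$ for all $i,j \in R$, and fix a representative $\ket{\psi} = \ket{\psi_j}$ for some $j \in R$.

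First, for \eqref{eq:approx-tilted-state} I would invoke~\cref{fact:trace_norm_acc}: for each $i \in R$ the tilt condition gives $|\cA(\ket\psi) - \cA(\ket{\psi_i})| \le \TD(\psi,\psi_i) \le \sqrt{\delta}$, while for each of the at most $\delta m$ indices $i \notin R$ I fall back on the trivial bound $1$. Averaging,
\[
|\cA(\ket\psi) - \cA(\Psi)| \le \frac{1}{m}\Bigl(|R|\cdot\sqrt{\delta} + (m-|R|)\cdot 1\Bigr) \le \sqrt{\delta} + \delta \le 2\sqrt{\delta} \le 3\sqrt{\delta}
\]
for $\delta \in [0,1]$, which is \eqref{eq:approx-tilted-state}.

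Second, for the concentration bound \eqref{eq:concentration-tilted-state}, I would note that independently running $\cA$ on each $\psi_i$ (with fresh randomness) yields mutually independent Bernoulli indicators $X_i \in \{0,1\}$ with means $p_i = \cA(\ket{\psi_i})$, so that $\alpha = \frac{1}{m}\sum_i X_i$ satisfies $\E[\alpha] = \cA(\Psi)$. Applying Hoeffding's inequality gives $\Pr[|\alpha - \cA(\Psi)| \ge \sqrt{\delta}] \le 2\exp(-2m\delta)$, which comfortably implies the stated $\exp(-\delta m / 2)$ bound in the regimes of interest (and in particular whenever $m\delta$ is not vanishingly small). Finally, \eqref{eq:approx-reprensetative-state} follows by the triangle inequality: outside the failure event of \eqref{eq:concentration-tilted-state}, the first two parts combine to give
\[
|\alpha - \cA(\ket\psi)| \le |\alpha - \cA(\Psi)| + |\cA(\Psi) - \cA(\ket\psi)| \le \sqrt{\delta} + 3\sqrt{\delta} = 4\sqrt{\delta}.
\]

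There is no real obstacle here; the proof is essentially bookkeeping around the tilt definition together with~\cref{fact:trace_norm_acc} and a standard Hoeffding bound. The only point requiring minor care is the mutual independence of the $X_i$'s, which is justified by $\cA$ using fresh randomness on each physically distinct copy $\psi_i$ in the collection; everything else is a straightforward split into representative versus non-representative indices.
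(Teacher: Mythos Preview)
The paper does not prove this proposition; it is quoted from \cite{JW23} and used as a black box. Your argument is the natural one and is correct: split into representative versus non-representative indices plus \cref{fact:trace_norm_acc} for \eqref{eq:approx-tilted-state}, Hoeffding for \eqref{eq:concentration-tilted-state}, and the triangle inequality for \eqref{eq:approx-reprensetative-state}. The only wrinkle is the constant absorption $2\exp(-2\delta m) \le \exp(-\delta m/2)$, which fails for very small $\delta m$; but since the right-hand side is then essentially $1$ and the probability is trivially at most $1$, this is harmless in any application (and in this paper $\delta m = \Theta(n^8)$), so the slack is purely cosmetic.
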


\subsection{Subset Test}
We now proceed to prove the support halving lemma, and present the subset test which will help us test~(\ref{eq:halving}).
\begin{lemma}[Support Halving Lemma]\label{lem:key_subset_mass}
  Let $\mu \in (0,1)$ be a constant and $\delta \in (0, C \mu^4)$, where $C > 0$ is universal constant.
  Suppose $\ket{\phi_H}, \ket{\phi_S}, \ket{\phi_{S'}}$ are subset states satisfying
  \begin{enumerate}
       \item $\abs{\braket{\phi_S}{\phi_{S'}}}^2 \le \delta$,
       \item $\left\lvert \abs{\braket{\phi_H}{\phi_S}}^2 -\mu\right\rvert \le \delta$, 
       \item $\left\lvert \abs{\braket{\phi_H}{\phi_{S'}}}^2-(1-\mu)\right\rvert \le \delta$.
  \end{enumerate}
  Then, we have
  \begin{align*}
    \frac{\abs{S}}{\abs{H}} = (\mu \pm O(\delta^{1/4}))\mper
  \end{align*}
\end{lemma}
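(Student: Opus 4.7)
The plan is to convert the inner-product hypotheses into purely combinatorial statements about set sizes and then exploit the fact that $H\cap S$ and $H\cap S'$ sit inside $H$ while $S\cap S'$ is forced to be small. Set $h=|H|$, $s=|S|$, $s'=|S'|$, $a=|H\cap S|$, $b=|H\cap S'|$, $c=|S\cap S'|$, and recall that for subset states $|\braket{\phi_A}{\phi_B}|^2=|A\cap B|^2/(|A||B|)$. The three hypotheses then read
\begin{equation*}
a^2=(\mu\pm\delta)\,hs,\qquad b^2=(1-\mu\pm\delta)\,hs',\qquad c^2\le\delta\,ss'.
\end{equation*}
A first easy step: since $a\le s$, squaring gives $(\mu-\delta)hs\le s^2$, i.e.\ $s/h\ge\mu-\delta$; symmetrically $s'/h\ge 1-\mu-\delta$. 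This already pins down $s/h$ from below, so the whole problem reduces to bounding $s/h$ from above.

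For the upper bound, the main device will be the elementary containment
\begin{equation*}
|H\cap S|+|H\cap S'|-|H\cap S\cap S'|\;=\;|H\cap(S\cup S')|\;\le\;|H|,
\end{equation*}
combined with $|H\cap S\cap S'|\le|S\cap S'|=c$. Dividing through by $h$ and substituting the square-root estimates of $a,b,c$ gives the master inequality
\begin{equation*}
\sqrt{(\mu-\delta)\,\tfrac{s}{h}}\;+\;\sqrt{(1-\mu-\delta)\,\tfrac{s'}{h}}\;\le\;1+\sqrt{\delta\cdot\tfrac{s}{h}\cdot\tfrac{s'}{h}}.
\end{equation*}
Writing $t=s/h$ and $t'=s'/h$, note that the hypotheses together with $a\le h$ and $b\le h$ also yield the rough a priori bounds $t\le 1/(\mu-\delta)$ and $t'\le 1/(1-\mu-\delta)$, which are $O(1/\mu)$ under the assumption $\delta<C\mu^4$. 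Consequently the right-hand side is at most $1+O(\sqrt{\delta}/\mu)$.

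The final step is to show that the left-hand side overshoots $1$ whenever $t$ is too large. Writing $t=\mu+\eta$ for some $\eta\ge 0$ (the case $\eta<0$ is already covered by the lower bound), one has
\begin{equation*}
\sqrt{(\mu-\delta)(\mu+\eta)}\;\ge\;\mu\sqrt{1+\tfrac{\eta-\delta}{\mu}}\;\ge\;\mu+\tfrac{\eta-\delta}{3}
\end{equation*}
for $\eta$ at most a small multiple of $\mu$, together with $\sqrt{(1-\mu-\delta)t'}\ge 1-\mu-\delta$ from the lower bound on $t'$. Combining these with the master inequality yields
\begin{equation*}
1+\tfrac{\eta}{3}-O(\delta)\;\le\;1+O(\sqrt{\delta}/\mu),
\end{equation*}
which forces $\eta=O(\sqrt{\delta}/\mu)=O(\delta^{1/4})$ under the standing assumption $\delta<C\mu^4$. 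Coupled with the matching lower bound on $s/h$, this gives $s/h=\mu\pm O(\delta^{1/4})$ as required. I expect the main obstacle to be purely bookkeeping: carefully tracking which linearizations of square roots are valid in the regime $\delta\ll\mu^4$ and choosing the universal constant $C$ so that the $\sqrt{1+x}\ge 1+x/3$ step is justified; no deeper idea is needed beyond the inclusion-exclusion identity above.
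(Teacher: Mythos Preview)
Your approach is correct and in some ways cleaner than the paper's. Both proofs rest on the same combinatorial core, namely that $|H\cap S|+|H\cap S'|$ cannot exceed $|H|$ by more than $|S\cap S'|$, together with the lower bounds $s/h\ge\mu-\delta$ and $s'/h\ge 1-\mu-\delta$. The difference is in how the upper bound on $s/h$ is extracted. The paper introduces auxiliary quantities $\gamma=|S\cap H|/|S|$, $\gamma'=|S'\cap H|/|S'|$, $\alpha=|(S\setminus S')\cap H|/|H|$, $\beta=|(S'\setminus S)\cap H|/|H|$, establishes $\alpha\approx\mu$ and $\beta\approx 1-\mu$ via $\alpha+\beta\le 1$, and then uses the identity $\gamma\alpha+\gamma'\beta=1\pm O(\sqrt{\delta})$ to force $\gamma\ge 1-O(\delta^{1/4})$; only then does it return to $|S|/|H|$. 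Your master inequality bypasses this $\gamma$-detour and bounds $t=s/h$ directly. What the paper's route buys is the extra structural conclusion that $S$ is almost entirely contained in $H$ (i.e., $\gamma\approx 1$), which is conceptually useful even if not needed for the lemma as stated.

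One genuine bookkeeping point you should not sweep under the rug: your linearization $\sqrt{1+x}\ge 1+x/3$ is only valid for $x\in[0,3]$, which translates to $\eta\le 3\mu$, whereas your a priori bound only gives $\eta\le 1/(\mu-\delta)-\mu\approx 1/\mu$. The cleanest fix is to avoid the linearization altogether: from the master inequality and $\sqrt{(1-\mu-\delta)t'}\ge 1-\mu-\delta$ you get $\sqrt{(\mu-\delta)t}\le \mu+\delta+O(\sqrt{\delta}/\mu)$ directly, and squaring this yields $t\le\mu+O(\sqrt{\delta}/\mu)=\mu+O(\delta^{1/4})$ without any case split. This is the same endpoint, just reached more robustly.
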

\begin{proof}
  Using the first assumption, we have
  \begin{align}\label{eq:supp:intersec_ub}
    \delta  \ge \abs{\braket{\phi_S}{\phi_{S'}}}^2 = \frac{\abs{S \cap S'}^2}{\abs{S}\abs{S'}} \mper
  \end{align}
  The second assumption states that
  \begin{align*}
    \delta \ge \left\lvert \abs{\braket{\phi_H}{\phi_S}}^2 -\mu\right\rvert =  \left\lvert \frac{\abs{S \cap H}^2}{\abs{S}\abs{H}} - \mu\right\rvert\mcom
  \end{align*}
  in particular, it implies
  \begin{align}\label{eq:supp:ratio_s_lb}
   \mu - \delta \le \abs{\braket{\phi_H}{\phi_S}}^2  =  \frac{\abs{S \cap H}^2}{\abs{S}\abs{H}}  \le \min\left\{ \frac{\abs{H}}{\abs{S}}, \frac{\abs{S}}{\abs{H}}, \frac{|S\cap H|}{|H|}
   \right\}\mcom
  \end{align}
  since $\abs{S \cap H} \le \min\left\{ \abs{S}, \abs{H} \right\}$.
  Similarly, from the third assumption
  \begin{align*}
    \delta \ge \left\lvert \abs{\braket{\phi_H}{\phi_{S'}}}^2 -\mu\right\rvert =  \left\lvert \frac{\abs{S' \cap H}^2}{\abs{S'}\abs{H}} - (1-\mu) \right\rvert
  \end{align*}
  we obtain
  \begin{align}\label{eq:supp:ratio_sprime_lb}
   (1-\mu) - \delta \le \abs{\braket{\phi_H}{\phi_{S'}}}^2 \le \min\left\{\frac{\abs{S'}}{\abs{H}}, \frac{\abs{H}}{\abs{S'}},
   \frac{|S'\cap H|}{|H|}
   \right\}\mper
  \end{align}  
  Using bounds from~\cref{eq:supp:ratio_s_lb} and \cref{eq:supp:ratio_sprime_lb} in~\cref{eq:supp:intersec_ub}, we get 
  \begin{align} \label{eq:S-cap-S-prime}
    \frac{\delta}{(\mu -\delta) ((1-\mu) -\delta)}  \ge \frac{\abs{S \cap S'}^2}{\abs{H}^2} \ge \frac{\abs{S \cap S' \cap H}^2}{\abs{H}^2} \mper
  \end{align}    

  Let $\gamma = \abs{S \cap H}/\abs{S}$ and $\gamma' = \abs{S' \cap H}/\abs{S'}$. 
  From the second and third assumptions, we have
  \begin{align}
    \nonumber
    1 \pm 2\delta &= \frac{\abs{S \cap H}^2}{\abs{S}\abs{H}} + \frac{\abs{S' \cap H}^2}{\abs{S'}\abs{H}} \\
                  \nonumber
                  &= \gamma \frac{\abs{S \cap H}}{\abs{H}} + \gamma' \frac{\abs{S' \cap H}}{\abs{H}}\\
                  \nonumber
                  &= \gamma \frac{\abs{(S \setminus S') \cap H}}{\abs{H}} + \gamma' \frac{\abs{(S'\setminus S) \cap H}}{\abs{H}} + (\gamma +\gamma') \frac{\abs{S \cap S' \cap H}}{\abs{H}} \\
                  \nonumber
                  &= \gamma \alpha  + \gamma' \beta + (\gamma +\gamma') \frac{\abs{S \cap S' \cap H}}{\abs{H}}\\
                  &= \gamma \alpha  + \gamma' \beta + O(\sqrt{\delta})\mcom \label{eq:gamma_alpha_beta}
  \end{align}
  where the $O(\sqrt{\delta})$ bound follows from~\cref{eq:S-cap-S-prime}, and we set
  $\alpha = \abs{(S\setminus S') \cap H}/\abs{H}$ and $\beta = \abs{(S'\setminus S) \cap H}/\abs{H}$.
  In view of (\ref{eq:S-cap-S-prime}) and (\ref{eq:supp:ratio_s_lb}), we have 
  \[\alpha \ge \mu - O(\sqrt{\delta})\] 
  and similarly, using (\ref{eq:S-cap-S-prime}) and (\ref{eq:supp:ratio_sprime_lb}), 
  \[\beta \ge (1-\mu) - O(\sqrt{\delta}).\]
  Since $1 \ge \alpha + \beta$ and $\alpha,\beta \ge 0$, we decude that $\alpha = \mu \pm O(\sqrt{\delta})$ and $\beta = (1-\mu) \pm O(\sqrt{\delta})$.
  Using (\ref{eq:gamma_alpha_beta}) for the first equality, and since $\mu = \Omega(\delta^{1/4})$, and $\gamma, \gamma' \le 1$,
  \begin{align*}
    1 \pm O(\sqrt{\delta}) = \gamma \alpha  + \gamma' \beta = \gamma \mu  + \gamma' (1-\mu) \pm O(\sqrt{\delta}) \le 1 - \mu (1-\gamma) \pm O(\sqrt{\delta})\mcom
  \end{align*}
  we conclude that $\gamma \ge 1 - O(\delta^{1/4})$.
  From this, we get $\abs{S \cap H} = (1-O(\delta^{1/4})) \abs{S}$.
  Using the second assumption, we get
  \begin{align*}
    \delta \ge \left\lvert \frac{\abs{S \cap H}^2}{\abs{S}\abs{H}} - \mu\right\rvert = \left\lvert (1-O(\delta^{1/4}))^2 \frac{\abs{S}}{\abs{H}} - \mu\right\rvert
    = \left\lvert (1-O(\delta^{1/4})) \frac{\abs{S}}{\abs{H}} - \mu\right\rvert
    \mcom
  \end{align*}
  or
  \begin{align*}
    O(\delta) \ge \left\lvert \frac{\abs{S}}{\abs{H}} - (1\pm O(\delta^{1/4})) \mu\right\rvert\mcom
  \end{align*}
  as desired.
\end{proof}

Below is a formal description of the subset test that incorporates our setting and implements~\cref{lem:key_subset_mass} as a test for~(\ref{eq:halving}). The parameter $\gamma$ will be specified later.

\noindent\fbox{
\begin{myalg}[SubSet Test]\label{proc:subset-test}\ignorespacesafterend
\textbf{Input: } Collections of states $\Phi_1, \Phi_2, \Psi_2$, and a target density some constant $\mu\in(0,1)$. Supposedly, the three collections corresponds to some subset state $\phi_H, \phi_S, \phi_{S'}$, respectively. 

Take the following steps:
\begin{enumerate}
    \item 
    Partition the each of collections into two parts of equal size:
    \begin{align*}
        \Phi_1 = \Phi'_1 \sqcup \Phi''_1,\quad
        \Phi_2 = \Phi'_2 \sqcup \Phi''_2,\quad
        \Psi_2 = \Psi_2' \sqcup \Psi''_2.
    \end{align*}    

    \item Estimate $|\langle\phi_H \mid \phi_S\rangle|^2$: Applying $m/2$ swap tests on $\{\Phi_1'\}\otimes\{\Phi_2'\}$. Let the fraction of accepted pairs be $\alpha$.
       
    \item Estimate $|\langle\phi_H \mid \phi_{S'}\rangle|^2$: Applying $m/2$ swap tests on $\{\Phi_1''\}\otimes\{\Psi_2'\}$. Let the fraction of accepted pairs be $\beta$.
       
    \item Estimate $|\langle\phi_S \mid \phi_{S'}\rangle|^2$: Applying $m/2$ swap tests on $\{\Phi_2''\}\otimes\{\Psi_2''\}$. Let the fraction of accepted pairs be $\zeta$. 
\end{enumerate}
\emph{Accept} if all the inequalities hold:
$    |(2\alpha -1) - \mu| \le \gamma; 
    |(2\beta-1) - (1-\mu) | \le \gamma; 
    |2\zeta-1| \le \gamma.
$
\end{myalg}}

\subsection{Subset Support Certification Algorithm and Analysis}

Now we present a formal description of the algorithm that used to certify support size of a given target state.
Set the parameters:
\begin{description}
    \item $\epsilon$, the estimate error tolerance parameter in \cref{thm:eff_q_cert_testing}, $\le 1/2$.
    \item $\delta$, the symmetry test error tolerance parameter as used in~\cref{lem:key_subset_mass},  $=\epsilon^{16} /(320^2 n^8)$.
    \item $\gamma$, the subset test error tolerance parameter, = $\epsilon^8 / (80 n^4)$.
    \item $m$, the size of each collections $\Phi_i, \Psi_i$, $=O(n^{16} / \epsilon^{32}).$
\end{description}
\noindent\fbox{
\begin{myalg}[Subset State Support Test]\label{alg:support-test}\ignorespacesafterend
\textbf{Input}:$\rho, \Phi_0,  \Phi_1, \Psi_1, \Phi_2, \Psi_2 \ldots \Phi_\ell, \Psi_\ell$

Apply one of the following tests:
    \begin{enumerate}
        \item Symmetry Test on $(\Phi_i, \Psi_i)$,  for all $i$;
        \item (Even) Subset Test on $(\Phi_{2i}, \Phi_{2i+1}, \Psi_{2i+1})$, for all $0 < i<\ell/2$;
        \item (Odd) Subset Test on $(\Phi_{2i+1}, \Phi_{2i+2}, \Psi_{2i+2})$, for all $0\le i<\ell/2$;
        \item Swap Test on $\rho$ and a random state $\phi\in\Phi_\ell$.
    \end{enumerate}
\emph{Accept} if the chosen test accepts.
\end{myalg}}

\begin{proof}[Proof of~\cref{thm:eff_q_cert_testing}]
    
Now we show the above Subset State Support Test satisfies~\cref{thm:eff_q_cert_testing}. The completeness is straightforward. The (i) symmetry test and (iv) swap test will pass with probability 1. In (ii) and (iii), each Subset Test will passes will probability $1-\exp(-\Omega(\gamma^2 m ))$. Overall, the test pass with probability $1-\ell \exp(-\Omega(\gamma^2 m )) = 1 - \exp(-\Omega(n^8))$, certifying the subset state has size $N/2^\ell$.

In the adversarial case, we consider the possible ways that the adversary may cheat and show eventually the testing will catch these cheats.

\underline{Case 1: Attack caught by (i) Symmetry Test}. If any collection $\Phi_i, \Psi_i$ of the states is not $\delta$-tilted, then by~\cref{thm:symmetry-test}, with probability at most $\exp(-\Omega(\delta^2 m))=\exp(-\Omega(1))$, it passes the symmetry test. From now on, assume that all the collections are $\delta$-tilted.

\underline{Case 2: Attack caught by (iv) Swap Test}. Consider $\Phi_\ell$, take any representative state $\phi\in\Phi_\ell$. Suppose that $\phi$ corresponds to some subset state $\phi_S$ with $|S| \not \in (1\pm \epsilon^2) |T|$, then
    \begin{equation*}
        |\langle \phi_S\mid \rho\rangle|^2 \le 1-\epsilon^2 ~\implies~
        \Pr[\text{Swap Test accepts } (\phi_S, \rho)] \le 1-\frac{\epsilon^2}{2}.
    \end{equation*}
Suppose $\phi$ is a representative state. Then by definition with probability at least $1-\delta$, a random state $\phi'$ in $\Phi_\ell$ satisfies $\TD(\phi',\phi_S)\le\sqrt\delta$, or in other words $|\langle \phi_S, \phi'\rangle|^2 \ge 1-\delta.$
Therefore, $\phi'$ corresponds to a subset state of size $\not\in (1\pm\epsilon^2)(1\pm\delta)|T|$.
Hence the probability that any other representative state passes the swap test is at most $\frac{1+(1-\epsilon^2)(1-\delta)}{2}$.
We can conclude, with probability at least 
\[
    (1-\delta)\cdot\left(1 - \frac{1+(1-\epsilon^2)(1-\delta)}{2} 
    \right) \ge \frac{\epsilon^2}{2}- \delta,
\]
the swap test rejects. Consequently, from now on we further assume that all the representative states in $\Phi_\ell$ corresponds to a subset state of support size $s\in (1\pm \epsilon^2)|T|$, as otherwise the accepting probability will be a constant away from 1.

\underline{Case 3: Attack caught by (ii)-(iii) Subset Test}. Pick some arbitrary representative state $\phi_i$  of $\Phi_i$ for $i=0,1,\ldots,\ell$, let $s_i$ be the support size for each $\phi_i$, then
\begin{align} \label{eq:telescoping-prod}
    s_0\cdot \frac{s_1}{s_0}\cdot \frac{s_2}{s_1}\cdots\frac{s_\ell}{s_{\ell-1}} = s_\ell \in (1\pm \epsilon^2)|T|.
\end{align} 
To cheat, the adversary can tell a wrong estimate of $|T|$, meaning $2^{-\ell} |N| \not \in (1\pm \epsilon)|T|$.
\begin{claim}\label{claim:bad-index}
    If the adversary tells a wrong estimate of $|T|$. Then for some $i\ge 1$, either,
\begin{align*}
    \frac{s_i}{s_{i-1}} &\ge \frac{1}{2}+\frac{\ln (1+\epsilon^2)}{\ell},
\end{align*}
or, 
\begin{align*}
    \frac{s_i}{s_{i-1}} \le \frac{1}{2}-\frac{\epsilon^2}{2\ell}.
\end{align*}
\end{claim}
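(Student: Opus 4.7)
The plan is to prove the contrapositive: assume every ratio $r_i := s_i/s_{i-1}$ (for $i=1,\ldots,\ell$) lies in the open interval $\bigl(\tfrac{1}{2} - \tfrac{\epsilon^2}{2\ell},\ \tfrac{1}{2} + \tfrac{\ln(1+\epsilon^2)}{\ell}\bigr)$, and derive that $N/2^\ell \in (1\pm\epsilon)|T|$, contradicting the hypothesis that the adversary's estimate of $|T|$ is wrong. The key structural identity is the telescoping product
\[
\prod_{i=1}^{\ell} r_i \;=\; \frac{s_\ell}{s_0} \;=\; \frac{s_\ell}{N},
\]
where $s_0 = N$ because $\Phi_0$ corresponds to the full space $[2^n]$.

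Next, I would sandwich this product using standard analytic inequalities. The upper bound uses $(1 + x/\ell)^{\ell} \le e^{x}$ with $x = 2\ln(1+\epsilon^2)$, giving
\[
\prod_i r_i \;\le\; 2^{-\ell}\bigl(1 + 2\ln(1+\epsilon^2)/\ell\bigr)^{\ell} \;\le\; 2^{-\ell}(1+\epsilon^2)^2.
\]
The lower bound uses Bernoulli's inequality, giving
\[
\prod_i r_i \;\ge\; 2^{-\ell}\bigl(1 - \epsilon^2/\ell\bigr)^{\ell} \;\ge\; 2^{-\ell}(1-\epsilon^2).
\]
Hence $s_\ell \in 2^{-\ell} N \cdot [\,1-\epsilon^2,\ (1+\epsilon^2)^2\,]$.

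Combining this with the Case~2 conclusion $s_\ell \in (1\pm\epsilon^2)|T|$, I would eliminate $s_\ell$ to obtain
\[
\frac{1-\epsilon^2}{1+\epsilon^2}\cdot\frac{N}{2^\ell} \;\le\; |T| \;\le\; \frac{(1+\epsilon^2)^2}{1-\epsilon^2}\cdot\frac{N}{2^\ell},
\]
and an elementary check shows that for $\epsilon$ at most a small absolute constant (certainly $\epsilon\le 1/3$), both of the extreme fractions lie within $1\pm\epsilon$ of $1$. This yields $N/2^\ell \in (1\pm\epsilon)|T|$, completing the contradiction.

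The only genuine point of care is the asymmetry between the two thresholds $\ln(1+\epsilon^2)/\ell$ (upper) and $\epsilon^2/(2\ell)$ (lower): this asymmetry is chosen precisely so that the telescoping exponentiations land at $(1+\epsilon^2)^2$ and $(1-\epsilon^2)$ respectively, matching the multiplicative window $(1\pm\epsilon^2)$ provided by the swap test of Case~2. Beyond this bookkeeping I do not anticipate any substantive obstacle, since the argument is just a direct telescoping computation using the product structure $s_\ell/N = \prod r_i$.
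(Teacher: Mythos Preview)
Your proposal is correct and follows essentially the same approach as the paper: argue by contradiction, telescope $\prod_i s_i/s_{i-1} = s_\ell/N$, bound the product above via $(1+x/\ell)^\ell\le e^x$ and below via Bernoulli, and combine with the Case~2 constraint $s_\ell\in(1\pm\epsilon^2)|T|$ to force $N/2^\ell$ close to $|T|$. The paper organizes this as two separate cases ($2^{-\ell}N>(1+\epsilon)|T|$ versus $2^{-\ell}N<(1-\epsilon)|T|$) rather than your unified two-sided sandwich, but the computations are the same; one small note is that your displayed inequality bounds $|T|$ in terms of $N/2^\ell$, so to conclude $N/2^\ell\in(1\pm\epsilon)|T|$ you should invert first and check $\frac{1+\epsilon^2}{1-\epsilon^2}\le 1+\epsilon$ and $\frac{1-\epsilon^2}{(1+\epsilon^2)^2}\ge 1-\epsilon$, both of which indeed hold for $\epsilon\le 1/3$.
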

\begin{proof}
For the purpose of contradiction, suppose the claim is false. That is for all $i$, the fraction between $s_i$ and $s_{i-1}$ is very close to $1/2$.
Consider two possible situation, 
first, if $2^{-\ell} N > (1+\epsilon)|T|$, then
    \begin{align*}
    s_\ell 
    &\ge \left(
        \frac{1}{2}-\frac{\epsilon^2}{2\ell}
    \right)^\ell s_0 
    \ge (1-\epsilon^2)(1+\epsilon)2^{-\ell}s_0
    > (1-\epsilon^2)(1+\epsilon) |T|
    > (1+\epsilon^2) |T|.
\end{align*}
This contradicts (\ref{eq:telescoping-prod}).
Second, if $2^{-\ell} N < (1-\epsilon)|T|$, then
\begin{align*}
    s_\ell 
    &\le \left(
        \frac{1}{2}+\frac{\ln (1+\epsilon^2)}{2\ell}
        \right)^\ell s_0 
        \le 2^{-\ell}s_0\exp(\ln(1+\epsilon^2)) = (1+\epsilon^2)2^{-\ell}s_0< (1-\epsilon^2)|T|,
\end{align*}
again, contradicting (\ref{eq:telescoping-prod}).
\end{proof}
W.l.o.g., say $i=1$ is an index satisfying the above claim. We show that the (odd) subset test rejects with high probability. 
In the subset test, each collection is partitioned into two parts of equal size,
\begin{align*}
    \Phi_1 = \Phi'_1 \sqcup \Phi''_1,\quad
    \Phi_2 = \Phi'_2 \sqcup \Phi''_2,\quad
    \Psi_2 = \Psi_2' \sqcup \Psi''_2.
\end{align*}
By definition, each part will be a $2\delta$-tilted state. Furthermore, by~\cref{prop:tensor-tilted-states} they form three collections of $4\delta$-tilted states,
\begin{align*}
    \Gamma_0:=\Phi'_1\otimes \Phi_2',\quad
    \Gamma_1:=\Phi''_1\otimes \Psi_2',\quad
    \Gamma_2:=\Phi_2''\otimes\Psi_2''.
\end{align*}
In subset test, $\Gamma_0, \Gamma_1, \Gamma_2$ will be fed to swap test and estimate $s_2/s_1$. By~\cref{claim:bad-index},
\[
    \frac{s_2}{s_1} \not\in \frac{1}{2} \pm \frac{\epsilon^2}{2\ell}.
\]
Let $\kappa= \Theta(\epsilon^2/(2\ell))$, then one of the following must be true by~\cref{lem:key_subset_mass},
\begin{enumerate}
    \item $|\langle \phi_2, \psi_2 \rangle|^2 \ge \kappa^4, $
    \item $||\langle \phi_1, \phi_2 \rangle|^2 - \frac{1}{2}|\ge \kappa^4, $
    \item $||\langle \phi_1, \psi_2 \rangle|^2 - \frac{1}{2}|\ge \kappa^4. $
\end{enumerate}
Without loss of generality say (i) hold. Then
\begin{equation}\label{eq:repr-swap-bound}
    \left|\Pr[\text{SwapTest}(\phi_1,\phi_2)\text{ accept}] - \frac{1}{2} \right| \ge \kappa^4/2.
\end{equation}
By~\cref{prop:tilted-state-approx}, the estimate $\zeta$ from $\Gamma_0$ will be
\begin{align}
    &\Pr_\zeta \left[|\zeta- \Pr[\text{SwapTest}(\phi_1,\phi_2)\text{ accept} |]  \ge 4\sqrt{4\delta}\right] = \exp(-\Omega(\delta m))
    \nonumber \\
    &\qquad\stackrel{(\ref{eq:repr-swap-bound})}{\Longrightarrow}\quad
    \Pr_\zeta \left[|2\zeta- 1 | \le \kappa^4   - 16\sqrt{\delta}\right] = \exp(-\Omega(\delta m)).\label{eq:zeta-bound}
\end{align}
Choose suitable parameters that satisfy,
\[
    \gamma \le     
    \kappa^4-16\sqrt{\delta}, ~\kappa^4\ge 17\sqrt{\delta}.
\]
Hence, the probability that the subset test accepts is $\exp(-\Omega(\delta m))=\exp(-\Omega(n^8)).$ 
\end{proof}

\subsection{Discussion}
Let us review some of our indistinguishability results in~\cref{sec:quantum-indist} and~\cref{sec:classical-lower-bound}, we want to emphasize some remarkable perspectives of our upper bound result. 

First in~\cref{sec:quantum-indist}, we pointed out that there is no tester with any advantage for testing productness with a single copy of a given state $\ket\psi$. 
What if proofs are allowed? It is immediate that the honest prover can give another copy of $\ket\psi$ as the proof, then by product test~\cite{HM13}, the verifier can distinguish product states and those far from being product. 
The unsatisfying feature is that the role of the proof is very limited, it serves as just another copy of the state. 
Quantitatively, if we count the total number of resources used in the algorithm, i.e., the proof complexity plus the given state $\ket\psi$, then proofs  gain us nothing! 
Because with two copies of the state, one can carry out product test anyways. 
Therefore the more interesting question would be to demonstrate properties for which without proof the copy complexity is super-polynomial but with polynomial-size proofs the copy complexity becomes polynomial. This is what our example illustrates.

Second, think about the lower bound that we presented in~\cref{sec:classical-lower-bound} for distinguishing the flat distribution of support size $s$ and $2s$. That would be the classical counterpart of the quantum property for which we demonstrate the power of proofs. However, in this classical setting, the power of proofs is completely gone! To see this note that $\Theta(\sqrt{s})$ samples are necessary and sufficient for distinguishing flat distribution of support size either $s$ or $2s$. With certificates, on the other hand, our lower bound in~\cref{thm:subset-distribution-proof} shows that only when the certificate length is $\Omega(\sqrt{s})$, the certificate can reduce the sample complexity. However, the total resources needed, i.e., certificate length plus sample complexity is $\Omega(\sqrt{s})$, match that without certificates. In fact, recall that in~\cref{re:tightness-classical}, the optimal proof strategy is to send some extra samples.

Finally, for the ``productness'' example in~\cref{sec:quantum-indist}, the best prover strategy would make proof an additional copy, not really providing any extra power; for the subset state, with flat certificates we can estimate the support size using exponentially smaller amount of resources. What about the other ensembles with different support size. It is not hard to see that a proof helps using analogous strategy: Ask the prover to provide certificates that will be subset state of size $\approx p d$, which as we have seen is testable with with flat certificate. 
In fact, in the dense regime, the flat certificate can be further relaxed to nonnegative amplitudes certificates~\cite{JW23} using their sparsity test.

\subsection{Lower Bounds for Quantum-to-Quantum State Transformation}

We now discuss how the study of quantum property testing protocols even under very strong assumptions on
the structure of the proofs can have interesting consequences for quantum-to-quantum state transformation. To this end,
we will suppose that we have obtained the following results for some quantum property $\calP$.
\begin{description}
  \item[(i.)] \textbf{Testability under Certificates.} We managed to design a tester with quantum proofs for property $\calP$ using ``few'' copies of the input state, but assuming
  that the honest proofs satisfy the strong promise of being a chosen function of the state being tested.
  \item[(ii.)] \textbf{$\propBQP$ Hardness.} We also managed to show that property $\calP$ requires ``many'' copies to be tested (using only copies of the
  state to be tested).
\end{description}
Now we can consider a quantum-to-quantum transformation that takes a certain number of copies of a state and produces a state that
is (close to) the chosen function (mentioned above) of the input states. Note that combining this hypothetical transformation with the tester for $\calP$
with the promised structured proofs (from the first item above) yields a tester (using only copies the input state) for property $\calP$. We illustrate this
scenario in~\cref{fig:quantum-to-quantum-lb}. By appealing to the second item above, we would deduce that ``many'' copies are needed to implement this
quantum-to-quantum transformation. Curiously, these considerations also illustrate that the study of quantum-to-classical results can have implications for
quantum-to-quantum results.

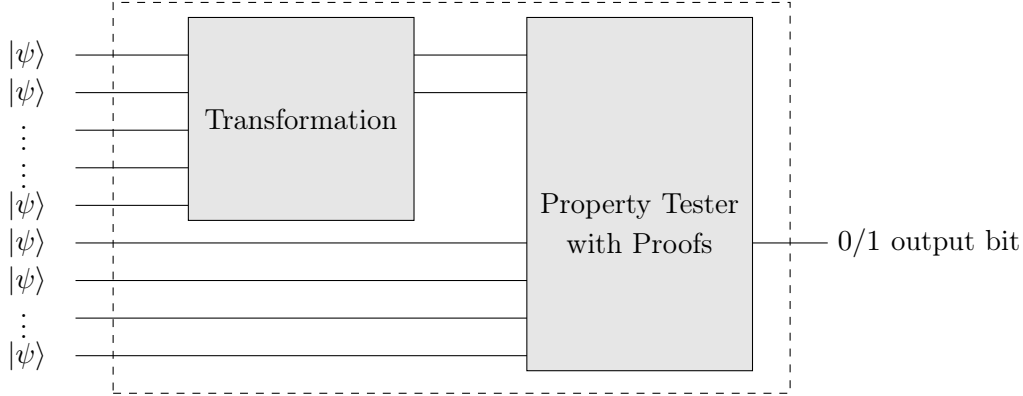
\begin{figure}[h!]
  \centering
\begin{tikzpicture}

    \draw[draw, dashed] (0.5,0.7) rectangle (9.5,-4.5);
    
    \node[left] at (-0.25,0) {$\ket{\psi}$};
    \node[left] at (-0.25,-0.5) {$\ket{\psi}$};
    \node[left] at (-0.5,-1) {$\vdots$};
    \node[left] at (-0.5,-1.5) {$\vdots$};
    \node[left] at (-0.25,-2.0) {$\ket{\psi}$};
    \node[left] at (-0.25,-2.5) {$\ket{\psi}$};
    \node[left] at (-0.25,-3.0) {$\ket{\psi}$};
    \node[left] at (-0.5,-3.5) {$\vdots$};
    \node[left] at (-0.25,-4.0) {$\ket{\psi}$};

    \draw (0,0) -- (9,0);
    \draw (0,-0.5) -- (9,-0.5);
    \draw (0,-1) -- (1.5,-1);
    \draw (0,-1.5) -- (1.5,-1.5); 

    \draw (0,-2.0) -- (1.5,-2.0);
    \draw (0,-2.5) -- (9,-2.5);
    \draw (0,-3.0) -- (9,-3.0);

    \draw (0,-3.5) -- (9,-3.5);
    \draw (0,-4.0) -- (9,-4.0);

    \draw[draw, fill=gray!20] (1.5,0.5) rectangle (4.5,-2.2);
    \node at (3.0, -0.85) {Transformation};

    \draw[draw, fill=gray!20] (6,0.5) rectangle (9,-4.2);
    \node at (7.5, -2) {Property Tester};
    \node at (7.5, -2.5) {with Proofs};

    \draw (9,-2.5) -- (10,-2.5);
    \node[right] at (10,-2.5) {$0/1$ output bit};

\end{tikzpicture}
\caption{A pictorial representation of combining a quantum-to-quantum state transformation that generates suitable proofs with a property tester promised to receive
  copies of the input quantum state, as well as, these suitable (structured) proofs. This combination yields a property tester (depicted as the dashed enclosing box)
  using only copies of the input state and no proofs.}
\label{fig:quantum-to-quantum-lb}
\end{figure}

For instance, using the above template, we can deduce the following quantum-to-quantum transformation lower bounds. The first about transforming
the amplitudes of a quantum state into their absolute values.

\begin{corollary}[Hardness of Absolute Amplitudes Transformation]
\label{theo:hardness_of_abs_value}
  Any transformation that takes $k$ copies of an arbitrary $n$-qubit quantum state $\ket{\psi} = \sum_{x \in \set{0,1}^n} \alpha_x \ket{x}$ and produces
  a single $n$-qubit output state at least $0.001$ close to $\ket{|\psi|}:=\sum_{x \in \set{0,1}^n} \abs{\alpha_x} \ket{x}$ requires $k = 2^{\Omega(n)}$.
\end{corollary}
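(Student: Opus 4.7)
My plan is to instantiate the quantum-to-quantum template of \cref{fig:quantum-to-quantum-lb} by exhibiting two input ensembles that are information-theoretically indistinguishable with $2^{\Omega(n)}$ copies but whose $\ket{|\psi|}$ images coincide, so that a SWAP test between the output of $T$ and a fresh input copy becomes a gap-$\Omega(1)$ distinguisher using only $k+1$ input copies.

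Fix $d = 2^n$, $s = 2^{n/2}$, and let
\begin{align*}
\cE_A &= \left\{\ket{\phi_S} = \tfrac{1}{\sqrt{s}}\sum_{i \in S}\ket{i} \,:\, S \in \tbinom{[d]}{s}\right\}, \\
\cE_B &= \left\{\ket{\psi_S^{\theta}} = \tfrac{1}{\sqrt{s}}\sum_{i \in S}e^{i\theta_i}\ket{i} \,:\, S \in \tbinom{[d]}{s},\ \theta_i \in [0,2\pi)\right\},
\end{align*}
with $S$ drawn uniformly and each $\theta_i$ i.i.d.\ uniform. The crucial observation is that $\ket{|\psi|} = \ket{\phi_S}$ for every $\ket{\psi}$ in either ensemble, so the intended output of $T$ does not depend on which ensemble the input was drawn from. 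For the indistinguishability step, \cref{thm:subset-state-PRS} gives that the $k'$-copy mixture of $\cE_A$ is within trace distance $O(k'^2/d + k'/\sqrt{s} + sk'/d) = 2^{-\Omega(n)}$ of the $k'$-copy Haar mixture whenever $k' \le 2^{n/10}$. The same bound holds for $\cE_B$: re-running the Johnson-scheme computation in the proof of \cref{thm:subset-state-PRS} with an additional phase average over $\theta$ annihilates every term whose bra and ket index multisets differ, leaving a further restriction of the operator $\tilde{\Phi}$ that is at least as close to $\tilde{\Psi}$ as $\tilde{\Phi}$ itself. Triangle inequality and \cref{lem:ens_indist_point_indist} then imply no measurement can distinguish $\cE_A$ from $\cE_B$ with $k' \le 2^{n/10}$ copies beyond advantage $2^{-\Omega(n)}$.

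For the contradiction step, suppose $T$ consumes $k$ copies and outputs $\sigma$ with $\|\sigma - \ket{|\psi|}\bra{|\psi|}\|_1 \le 0.002$. The distinguisher takes $k+1$ input copies, applies $T$ to the first $k$ to obtain $\sigma$, then performs the SWAP test between $\sigma$ and the remaining copy of $\ket{\psi}$, accepting iff the SWAP test accepts. By \cref{fact:trace_norm_acc} the acceptance probability is within $0.001$ of $\tfrac{1}{2}(1 + |\langle \phi_S \mid \psi \rangle|^2)$. In the $\cE_A$ case the overlap equals $1$, giving acceptance at least $1 - 0.001$. In the $\cE_B$ case $|\langle \phi_S \mid \psi_S^\theta \rangle|^2 = s^{-2}\bigl|\sum_{i \in S}e^{i\theta_i}\bigr|^2$, and a Hoeffding bound applied to the real and imaginary parts of $\sum_{i \in S}e^{i\theta_i}$ (each an i.i.d.\ sum of bounded mean-zero terms) yields $\bigl|\sum_{i \in S}e^{i\theta_i}\bigr|^2 = O(s\log s)$ with probability $1 - 2^{-\Omega(s)}$, so acceptance is at most $\tfrac{1}{2} + o(1)$. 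The resulting $\Omega(1)$ distinguishing advantage exceeds the $2^{-\Omega(n)}$ bound from the indistinguishability step applied to $k+1$ copies, forcing $k + 1 \ge 2^{n/10}$ and hence $k = 2^{\Omega(n)}$.

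The main obstacle I foresee is producing a fully self-contained derivation of the random-phase analog of \cref{thm:subset-state-PRS} for $\cE_B$; the remaining components --- the distinguisher construction, the SWAP-test analysis, the overlap concentration, and the ensemble-to-pointwise lifting --- are routine. If the phase-averaged Johnson-scheme bound turns out to be awkward to reproduce inline, one can instead invoke the information-theoretic guarantees of known random-phase PRS constructions over subset supports as a black box.
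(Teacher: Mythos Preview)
Your argument is correct and instantiates the same quantum-to-quantum template as the paper, but with a different pair of ensembles. The paper invokes the dense-regime result of \cref{subsec:dense_indist}, pitting subset states $\phi_S$ of density $\approx p$ against the two-mode states $\phi_{S,T}$ (positive part on $S$, negative on $T$) constructed there; for those, the indistinguishability is already proved verbatim and the swap-test overlap $\langle \phi_{S,T}\mid |\phi_{S,T}|\rangle$ is exactly $0$, so the distinguisher gap is a clean $1/2$ with no concentration step. Your route instead places both ensembles in the sparse regime and reduces each to Haar via \cref{thm:subset-state-PRS}, which has the appeal of reusing that machinery directly and of making the ``same $|\psi|$'' property manifest. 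Your self-identified obstacle is actually easier than the subset-state case: the average over $\theta$ kills every $|\vec{i}\rangle\langle\vec{j}|$ whose index multisets differ, so after projecting to distinct-index tuples the $\cE_B$ mixture is already a scalar multiple of $\tilde{\Psi}$ with no Johnson off-diagonal $\cD_t$ contributions whatsoever --- only the normalization $s^{-k}\binom{d-k}{s-k}/\binom{d}{s}$ versus $\binom{d+k-1}{k}^{-1}/k!$ needs matching, an $O(k^2/s + k^2/d)$ check.
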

For the absolute function, we consider distinguishing the subset state and the random binary phase state. The $\propBQP$ hardness follows~\cref{subsec:dense_indist} where we showed that to distinguish the subset state and the ``two-mode'' state requires $2^{\Omega(n)}$ copies.
The testability under certificates is simple. For any subset state $\ket\psi$, take $\ket\phi$ such that
$
    \TD(\ket\phi, \ket{|\psi|})\le\epsilon,
$
then swap test accepts $\ket\psi$ and $\ket\phi$ with probability at least $1-\epsilon^2/2.$ However for a random two-mode state $\psi$, swap test accepts $\ket\psi, \ket{\abs\psi}$ w.p. $1/2+o(1)$ almost surely. So for any $\ket\phi$ that is $\epsilon$ close to $\ket{\abs\psi}$, $\TD(\psi,\abs\psi)\le 1/2+\epsilon+o(1)$. Meaning that if we can construct the absolute-value state using a small number of copies, we can distinguish subset state and random binary phase state with a small number of copies.

The second transformation lower bound is for mapping the amplitudes to their complex conjugates follows~\cref{thm:subset-state-PRS}.

\begin{corollary}[Hardness of Amplitude Conjugation Transformation (Informal)]\label{theo:hardness_of_conj}
  Any transformation that takes $k$ copies of an arbitrary $n$-qubit quantum state $\ket{\psi} = \sum_{x \in \set{0,1}^n} \alpha_x \ket{x}$ and produces
  a single $n$-qubit output state at least $0.001$ close to $\sum_{x \in \set{0,1}^n} \alpha_x^* \ket{x}$ requires $k = 2^{\Omega(n)}$.
\end{corollary}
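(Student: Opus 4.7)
The plan is to instantiate the template developed earlier in the paper (illustrated in \cref{fig:quantum-to-quantum-lb}): I will find a property $\calP$ that becomes trivially testable if one is given access to the complex-conjugate state, yet is $\propBQP$-hard without such access. The key observation is that subset states $\ket{\phi_S} = \tfrac{1}{\sqrt{|S|}} \sum_{i \in S} \ket{i}$ have real nonnegative amplitudes and thus satisfy $\ket{\phi_S^*} = \ket{\phi_S}$, whereas a Haar random state $\ket\psi \in \C^d$ satisfies $|\braket{\psi}{\psi^*}|^2 = O(1/d)$ with overwhelming probability. Hence conjugation can distinguish ``self-conjugate'' states from generic ones, and \cref{thm:subset-state-PRS} tells us this distinguishing task is information-theoretically hard.

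More concretely, suppose a transformation $\cT$ exists that, given $k$ copies of an arbitrary $n$-qubit state $\ket\psi$, outputs a state $\sigma$ with $\TD(\sigma, \psi^*) \le 0.001$. I will define the following tester on an input $\ket\psi \in \C^d$ with $d = 2^n$: consume $k+1$ copies of $\ket\psi$; apply $\cT$ to the first $k$ copies to obtain $\sigma$; then run the SWAP test between $\sigma$ and the remaining copy $\ket\psi$; accept iff the SWAP test accepts. Picking the subset size $s = 2^{n/2}$ (so that all three error terms in \cref{thm:subset-state-PRS} can be made small simultaneously), I will analyze acceptance on the two ensembles:
\begin{itemize}
\item On a uniformly random subset state $\ket{\phi_S}$ with $|S| = s$: since $\ket{\phi_S^*} = \ket{\phi_S}$, the output $\sigma$ is $0.001$-close to $\ket{\phi_S}$, so the SWAP test accepts with probability at least $1 - O(0.001)$.
\item On a Haar random state $\ket\psi$: $\sigma$ is $0.001$-close to $\ket{\psi^*}$, so SWAP acceptance is $\tfrac{1}{2}(1 + |\braket{\sigma}{\psi}|^2) \le \tfrac{1}{2} + O(0.001) + O(1/\sqrt{d})$, using the standard estimate that $\mathbb{E}_\mu |\braket{\psi}{\psi^*}|^2 = O(1/d)$ together with concentration.
\end{itemize}
Thus the tester achieves a constant distinguishing advantage $\Omega(1)$ between the ensemble of subset states of size $s$ and Haar random states, while consuming only $k+1$ copies.

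On the other hand, \cref{lem:ens_indist_point_indist} combined with \cref{thm:subset-state-PRS} shows that for $k+1$ copies, the trace distance between the averaged ensembles is at most $O\!\left((k+1)^2/d + (k+1)/\sqrt{s} + s(k+1)/d\right)$, which is $o(1)$ whenever $k+1 = 2^{o(n)}$. Any tester therefore has distinguishing advantage $o(1)$ in this regime, contradicting the constant gap established above. Consequently $k = 2^{\Omega(n)}$, proving the corollary. The only mildly delicate step is the Haar estimate $\E_\mu |\braket{\psi}{\psi^*}|^2 = O(1/d)$ with concentration, but this is a routine second-moment calculation (or alternatively one can invoke the fact that $\int \psi^* \otimes \psi \, d\mu(\psi)$ has bounded norm on the antisymmetric sector), and does not present a real obstacle.
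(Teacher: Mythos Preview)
Your proposal is correct and follows exactly the approach the paper indicates: the paper states only that the result ``follows \cref{thm:subset-state-PRS}'' via the template of \cref{fig:quantum-to-quantum-lb}, and you have correctly filled in the details by exploiting that subset states are self-conjugate while Haar random states satisfy $\E_\mu|\braket{\psi}{\psi^*}|^2 = O(1/d)$, so a SWAP test between $\ket\psi$ and the purported conjugate distinguishes the two ensembles.
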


\section{Property Testing Complexity Classes and Hierarchy}
\label{sec:hierarchy}

Finally, we take the opportunity to define some obvious property testing complexity classes regarding the information theoretic sample/copy complexity. Consider some property $\calP=\sqcup \calP_N$, where $\calP_N$ can be a subset of $\fP(\mathbb{C}^N)$ or $\Delta_N$. In the following, we only give a subset of the complexity classes for properties of quantum states, and the complexity classes regarding properties of classical distributions can be defined totally analogous. 
\begin{definition}[Property Testing Complexity Class] Let $n=\log(N)$, fix some  constant $c = 2/3, s= 1/3$,
    \begin{align*}
        \propBQP &:= \propBQP_{c,s}\left[\poly(n)\right] ;\\
        \propEXP &:= \propBQP_{c,s}\left[2^{\poly(n)}\right];\\
        \textup{PropMA} &:= \textup{PropMA}_{c,s}\left[\poly(n),\poly(n) \right];\\
        \propMAexp &:= \propMA_{c,s}\left[\poly(n),2^{\poly(n)}\right];\\
        \propQMA(k) &:= \propQMA(k)_{c,s}[\poly(n), \poly(n)];\\
        \propQMA &:= \propQMA(1);\\
        \propQMAexp &:= \propQMA(1)_{c,s}\left[\poly(n),2^{\poly(n)}\right];\\
        \propAM(m, r) &:= \propAM(m, r)_{c,s}[\poly(n), \poly(n)];\\
        \propIP &:= \propIP(1,\poly(n))_{c,s}[\poly(n),\poly(n)];\\
        \propMIP &:= \propIP(\poly(n),\poly(n))_{c,s}[\poly(n),\poly(n)];\\
        &\quad \vdots
    \end{align*}
\end{definition}

For notations,  we do not make any distinction between testing classical distribution or quantum states. It is normally very clear from the context if the problem of interest is to test classical distribution or to test quantum states.
This notation could be less standard for the context of sublinear algorithms, as for us $\propBPP$ is polynomial with respect to $n=\log(N)$, where $N$ is the size of the discrete probability space in the case of property testing for classical distributions. Our choice is more natural in the context of this paper, as testing probability distribution can be viewed as a degenerated version of testing quantum states.

Some of the notations $\propBPP,\propQMA,$ etc. are used for both the property testing models as well as the property testing complexity classes. This is a somewhat common abuse of notation. To give an alert for the unfamiliar readers, consider the following two statement, for some property $\calP$,
\begin{enumerate}
    \item $\calP\in \propQMA$,
    \item $\propQMA(\calP)=\exp(\Omega(n))$.
\end{enumerate}
In the first case, $\propQMA$ is a complexity class. $\calP\in \propQMA$ is an upper bound result, i.e., $\calP$ can be tested using $\poly(n)$ copies assisted with a $\QMA$ type prover. On the other hand, in the second case, $\propQMA$ is the property testing model, and $\propQMA(\calP)=\exp(\Omega(n))$ is a lower bound result, meaning that in the $\propQMA$ model, one needs $\exp(\Omega(n))$ copies to test $\calP.$

Now we collect some obvious relationship regarding these complexity classes either follows easily from the definition or from the results proved in the previous sections. Let $\textup{ALL}$ denote the set of all properties.
\begin{proposition}\label{prop:trivial-upper-bound}
For both classical distribution and quantum state properties
\begin{equation}
    \propEXP = \mathrm{ALL}.
\end{equation}
\end{proposition}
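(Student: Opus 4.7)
The plan is to show that exponential copy (or sample) complexity suffices to perform full tomography, after which any property can be decided information-theoretically without regard to computational cost (since $\propEXP$ only restricts sample/copy complexity, not the verifier's runtime).

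First I would handle the classical case. Given any property $\calP = \sqcup \calP_N$ and a target distribution $\nu \in \Delta_N$ with $N = 2^n$, draw $k = 2^{O(n)}$ i.i.d.\ samples from $\nu$ and form the empirical distribution $\hat\nu$. By standard concentration (e.g.\ Hoeffding applied coordinatewise plus a union bound over the $N$ outcomes, or more sharply DKW), for $k = \Theta(N \log(N)/\epsilon^2) = 2^{O(n)}$ copies we get $\|\hat\nu - \nu\|_{\textup{TV}} \le \epsilon/3$ with probability at least $2/3$. The unbounded verifier then deterministically computes $\dist(\hat\nu, \calP_N) = \inf_{\mu \in \calP_N} \|\hat\nu - \mu\|_{\textup{TV}}$ and accepts if this quantity is at most $\epsilon/2$. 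If $\nu \in \calP_N$, then $\dist(\hat\nu, \calP_N) \le \epsilon/3$, so the tester accepts with probability at least $2/3$; if $\nu$ is $\epsilon$-far from $\calP_N$, then by the triangle inequality $\dist(\hat\nu, \calP_N) \ge 2\epsilon/3$ whenever the estimation succeeds, so the tester rejects with probability at least $2/3$.

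For the quantum case I would run the same template but with pure-state tomography in place of empirical estimation. Given $\ket{\psi} \in \C^d$ with $d = 2^n$, there exist tomography procedures (e.g.\ Haah--Harrow--Ji--Liu--Wu, or even naive subroutines based on measuring in informationally complete bases) that output a classical description of a state $\ket{\hat\psi}$ with $\TD(\hat\psi, \psi) \le \epsilon/3$ using $k = \poly(d, 1/\epsilon) = 2^{O(n)}$ copies and success probability at least $2/3$. The verifier then (with unrestricted computation) evaluates $\dist(\hat\psi, \calP_d) = \inf_{\ket{\phi} \in \calP_d} \TD(\hat\psi, \phi)$ and accepts iff this is at most $\epsilon/2$. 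Correctness follows by the same triangle-inequality argument as in the classical case, using~\cref{fact:trace_norm_acc} implicitly.

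There is no real technical obstacle here: the key conceptual point is that $\propEXP$ places no restriction on the verifier's computational power, so once the unknown object is learned to within $\epsilon/3$ accuracy, membership in $\calP$ versus $\epsilon$-farness can be decided exactly. The only mild subtlety is choosing the sample/copy parameters so that the learning step uses at most $2^{\poly(n)}$ copies, which is immediate since both classical empirical estimation and pure-state tomography have sample complexity polynomial in the dimension $d = 2^n$. Thus every property lies in $\propEXP$, establishing $\propEXP = \textup{ALL}$.
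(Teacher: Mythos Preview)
Your proposal is correct and follows essentially the same approach as the paper: learn the unknown state or distribution to within $\epsilon/3$ accuracy using tomography (which requires only $\poly(d,1/\epsilon)=2^{O(n)}$ copies/samples), then use the unbounded verifier to decide membership via the triangle inequality. The paper's proof is a one-line citation to tomography results, and you have simply spelled out the details.
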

\begin{proof}
     The statement holds for both quantum and classical properties, because exponentially many samples/copies are sufficient for learning to quantum states and classical distribution~\cite{OW16tomo,haah2016tomo,delaVegaK07}.
\end{proof}
This proposition justifies our definition of the property testing class $\propMAexp$ and $\propQMAexp$, where the number of samples is polynomial instead of exponential. It turns out that $\propQMAexp$ is also a trivial upper bounds for any quantum state properties.
\begin{proposition}\label{prop:quantum-trivial-upper-bound}
    For quantum state properties,
    \begin{equation}
        \propQMAexp = \mathrm{ALL}.
    \end{equation}
\end{proposition}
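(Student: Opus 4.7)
The plan is to use the exponentially long quantum proof to carry a full classical description of the unknown input state $\ket{\psi}$, which the verifier will certify against its polynomial number of input copies using a simple projection test and then process with unbounded classical computation to decide membership in $\calP$. Fix a property $\calP = \sqcup_d \calP_d$ with $\calP_d \subseteq \fP(\C^d)$ and set $d = 2^n$. The honest proof will be a computational-basis encoding of a classical description of $\ket{\psi}$, written out to $\poly(n)$ bits of precision per amplitude across all $2^n$ amplitudes; this occupies $\poly(n)\cdot 2^n = 2^{\poly(n)}$ qubits and so fits in the proof register.

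The verifier will (i) measure the proof register in the computational basis to obtain a classical description $\phi$ of some $\ket{\phi}\in\fP(\C^d)$; (ii) apply the projective measurement $\{\ket{\phi}\bra{\phi},\,I-\ket{\phi}\bra{\phi}\}$ to each of its $k=\Theta(1/\epsilon^2)$ input copies and reject if any projection onto $\ket{\phi}\bra{\phi}$ fails; (iii) use unbounded classical computation to decide whether $\ket{\phi}$ lies within trace distance $\epsilon/10$ of $\calP_d$, accepting iff so. Completeness will be immediate: the honest $\ket{\phi}$ is $2^{-\poly(n)}$-close to $\ket{\psi}\in\calP_d$, so step (iii) accepts and every projection in step (ii) succeeds except with negligible probability.

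For soundness I will exploit the fact that the first step is a dephasing channel on the proof register that commutes with everything that follows, so the acceptance probability depends only on the diagonal of the proof's density matrix in the computational basis. By convexity of acceptance over this distribution of outcomes, the worst-case cheating proof may be taken to be a fixed computational-basis state, i.e.\ a single fixed classical description $\phi$. For any NO instance (with $\ket{\psi}$ at trace distance $\ge \epsilon$ from $\calP_d$), step (iii) accepts only if $\ket{\phi}$ is $\epsilon/10$-close to some $\ket{\phi'}\in\calP_d$; the triangle inequality then forces the trace distance between $\ket{\phi}$ and $\ket{\psi}$ to be at least $9\epsilon/10$, hence $|\langle\phi|\psi\rangle|^2 \le 1 - (9\epsilon/10)^2$. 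The probability that all $k$ projections succeed is therefore at most $\bigl(1-(9\epsilon/10)^2\bigr)^k \le 1/3$ for $k = \Omega(1/\epsilon^2)$.

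The only real technicality will be the discretization-plus-robustness interplay --- choosing the precision of the classical encoding ($2^{-\poly(n)}$) much smaller than the slack in step (iii) ($\epsilon/10$), so that the rounded honest $\ket{\phi}$ is safely inside the accepting region while any cheating $\ket{\phi}$ that fools step (iii) is still far enough from $\ket{\psi}$ to be caught by step (ii). The arithmetic above handles both directions. Since the construction applies to an arbitrary quantum state property $\calP$ and achieves completeness $\ge 2/3$ and soundness $\le 1/3$, we obtain $\mathrm{ALL} \subseteq \propQMAexp$; the reverse inclusion is trivial, completing the proof of the proposition.
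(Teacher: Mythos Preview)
Your proof is correct and follows essentially the same approach as the paper: the prover sends a classical description of the state (fitting in the exponential proof register), the verifier checks this description against its polynomially many input copies, and then decides membership in $\calP$ using unbounded computation. The only cosmetic differences are that you use a direct projection test $\{\ket{\phi}\bra{\phi},\,I-\ket{\phi}\bra{\phi}\}$ where the paper uses a swap test between prepared copies of $\ket{\phi}$ and the input, and you are somewhat more explicit about the dephasing/convexity reduction to a fixed classical proof in the soundness case; neither changes the substance of the argument.
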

\begin{proof}
    The statement holds because the the prover can send a classical description of the state $\ket\psi$ to test. Then the verifier can prepare a state $\ket\phi$ based on the classical description and use swap test to check if the classical description is the correct. 
    In particular, set $\delta=1/\exp(\poly(d)).$ Estimate the overlap $|\langle \phi \mid \psi \rangle|^2$. 
    
    For $\ket\psi\in\calP$, the honest prover sends a correct description of $\ket\psi$ up to the precision $\delta=1/\exp(\poly(d))$, thus $|\langle \psi\mid\phi\rangle|^2 \ge 1-\delta$. Therefore, $\ket\phi$ is $\delta$ close to $\calP$. Given $k=\polylog(d)$ many copy of $\ket\psi$, the probability that all $k$ swap test passes is
    \[
        (1-O(\delta))^k = 1-1/\exp(\poly(d)).
    \]

    On the other hand, for $\ket\psi$ $\epsilon$-far from $\calP$, if the verifier lies by giving some $\ket\phi$ that is $\delta$ close to $\calP$, then $|\langle\psi\mid\phi\rangle|^2\le 1-\epsilon+o(\epsilon)$. Therefore all the swap test passes with probability at most
    \[
        (1-\epsilon+o(\epsilon))^k= \exp(-\epsilon k),
    \]
    which is tiny for any constant $\epsilon$.
\end{proof}

One may wonder if an analogous statement is true for classical distribution properties. As we see in~\cref{thm:subset-distribution-proof}, it is not: There are untestable classical properties regardless of the proof length (which certainly can be a classical description of some quantum state). Therefore, this is another example of how quantum coherence enlarges the testability.
\begin{fact}
    For  property testing of classical distribution, 
    \[
        \propMAexp\not= \textup{ALL}.
    \]
\end{fact}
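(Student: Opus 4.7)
The plan is to exhibit an explicit classical distribution property outside $\propMAexp$, via a direct application of Theorem 5.9 (the one-sidedness of $\textup{GapSupp}$). Concretely, for each $N = 2^n$ I would define $\calP_N := \{u_N\}$, the singleton property containing only the uniform distribution $u_N$ on $[N]$. Any flat distribution $\mu_S$ supported on a subset $S \subseteq [N]$ with $|S| = N/2$ is at statistical distance exactly $1/2$ from $u_N$, so such distributions are valid NO instances for testing $\calP$ with proximity parameter $\epsilon = 1/2$.

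Next, I would instantiate Theorem 5.9 with $\ell = N$ and $s = N/2$: it guarantees that any tester with sample complexity $t = o(\sqrt{s}) = o(2^{(n-1)/2})$ fails to distinguish $u_N$ from the family of flat distributions of support size $N/2$, even when assisted with a proof of arbitrary length. In particular, this rules out any tester that uses $\poly(n)$ samples and a $2^{\poly(n)}$-bit proof, which is precisely the resource budget of $\propMAexp$. Hence $\calP \notin \propMAexp$, establishing the fact.

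There is no substantial obstacle here: the adversarial strategy underlying Theorem 5.9 (sample a putative proof according to its honest marginal, then answer the verifier's queries with samples from the NO distribution) depends only on the marginal distribution of the proof, not on its length, so it carries over verbatim to proofs of exponential size. The one subtle point worth highlighting is that the claim asks for a genuine property rather than a promise problem, which is why I take the singleton $\calP_N = \{u_N\}$: this choice automatically places all half-support flat distributions into the $\epsilon$-far NO region, so Theorem 5.9 applies without any promise-to-property translation overhead.
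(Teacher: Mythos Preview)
Your approach is correct and is essentially what the paper intends. The paper's one-line justification points to the lower bounds of the classical-certificates section and says ``there are untestable classical properties regardless of the proof length.'' You make this precise by invoking the one-sidedness theorem (Theorem~\ref{theorem:yes_case_large}), which is exactly the proof-length-independent statement needed. In fact, the paper's literal citation is to Theorem~\ref{thm:subset-distribution-proof}, whose bound $O(\sqrt{tp/s}+t^2/s)$ \emph{does} depend on the proof length $p$; with $p=2^{\poly(n)}$ and $s\le 2^n$ that bound is vacuous, so your choice of the one-sidedness theorem is not just acceptable but the right call, and it matches the paper's verbal description ``regardless of the proof length.''

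Two minor points. First, Theorem~\ref{theorem:yes_case_large} is stated for parameters ``$\ell\gg s$,'' and you take $\ell=N$, $s=N/2$; the proof of that theorem only uses $\ell\ge s$ (to embed a size-$s$ subset inside a size-$\ell$ support), so your instantiation is fine, but if you want to match the stated hypothesis literally you could take, e.g., $s=\sqrt{N}$, which makes the NO instances even farther from $u_N$. Second, the complexity-class definitions in Section~\ref{sec:hierarchy} leave the proximity parameter $\epsilon$ implicit; your choice $\epsilon=1/2$ with $|S|=N/2$ gives NO instances at distance exactly $1/2$, so depending on whether ``$\epsilon$-far'' is read as $\ge\epsilon$ or $>\epsilon$ you may prefer $\epsilon$ slightly below $1/2$ (or again a smaller $s$). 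Neither point affects the substance of your argument.
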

An interesting question we left for future investigation is whether there are other interesting upper bound for an arbitrary quantum properties. In view of~\cref{prop:quantum-trivial-upper-bound} and the classical result $\classfont{MIP}=\NEXP$, a concrete question is the following
\begin{problem}
    Is it true that
    \[
        \propMIP = \textup{ALL}?
    \]
\end{problem}

In terms of the proof system, the seminar work of Goldwasser and Sipser proved a surprising result $\IP=\AM(1, \poly(n))$~\cite{GS86}, i.e., public-coin interactive proof system is as powerful as the private-coin proof system. However, in view of~\cref{thm:IP-upper-bound} and~\cref{cor:AM-k-lower-bound}, private-coin is significantly more powerful in property testing for both classical distribution and quantum states, 
\begin{equation}\label{eq:prop-public-vs-private-coin}
        \propAM(1,\poly(n))\subsetneq \propIP.
\end{equation}
Finally, \cref{cor:demerlin-full} implies the following collapses
\begin{fact} For property testing of quantum states and distributions, $\MA$ and $\QMA$ type proof does not increase testability,
   \begin{align}
        \propQMA = \propBQP.\label{eq:propqma-propbqp}\\
        \propMA = \propBPP.
        \label{eq:proma-propbpp}
\end{align} 
\end{fact}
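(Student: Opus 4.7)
The plan is to reduce both equalities to the de-Merlinization machinery that has already been set up. The inclusions $\propBQP \subseteq \propQMA$ and $\propBPP \subseteq \propMA$ are immediate from the definitions: a standard tester is a proof-assisted tester that ignores the proof. So only the forward directions require content.

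For the quantum equality $\propQMA \subseteq \propBQP$, I would simply invoke~\cref{cor:demerlin-full} with the canonical parameters $\alpha = 2/3$, $\beta = 1/3$ (hence gap $\gamma = 1/3$), constant $\epsilon$, and polynomial $k=p=\poly(n)$. The corollary then gives $\propQMA_{2/3,1/3}[\poly(n),\poly(n)] \subseteq \propBQP_{1-\epsilon,\epsilon}[\poly(n)]$, since $kp\log p \cdot O(1/\gamma^2 \log(1/\epsilon))$ is still polynomial. Renormalizing the completeness/soundness thresholds to the canonical $(2/3, 1/3)$ by another round of amplification yields~\cref{eq:propqma-propbqp}.

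For the classical equality $\propMA \subseteq \propBPP$, a direct union-bound argument works, because a classical proof is just a bit string of polynomial length. First I would amplify the soundness of the given $\propMA_{2/3,1/3}[\poly(n),\poly(n)]$ tester down to $2^{-2p}$ via $O(p^2)$ independent sample-level repetitions and Chernoff bound, which preserves polynomial sample complexity and simultaneously pushes completeness above $1 - 2^{-2p}$. Then I would construct a $\propBPP$ tester $V'$ that takes the same amplified samples, iterates over all $2^p$ possible proofs $\pi \in \{0,1\}^p$, and accepts iff at least one $\pi$ makes the amplified $\propMA$ verifier accept. In the YES case the honest proof still succeeds with probability $\ge 1 - 2^{-2p}$; in the NO case, a union bound over the $2^p$ proofs gives total acceptance probability at most $2^p \cdot 2^{-2p} = 2^{-p}$, which is negligible.

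The only real subtlety, rather than a genuine obstacle, is that the brute-force in the classical case costs $2^p$ in running time. This is fine because the property-testing complexity classes defined in this paper explicitly constrain only sample/copy complexity, not running time (as noted in the preliminaries, ``the running time can be unbounded''). Thus both equalities reduce to routine consequences of de-Merlinization: the quantum case inherits its nontrivial content from~\cref{cor:demerlin-full} (gentle measurement, quantum union bound, and gap amplification), and the classical case is handled by amplification plus a plain union bound over all polynomial-length witness strings.
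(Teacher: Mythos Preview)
Your proposal is correct and, for the quantum equality~\eqref{eq:propqma-propbqp}, matches the paper exactly: the paper simply states that \cref{cor:demerlin-full} implies both collapses, and you instantiate that corollary with the canonical parameters.

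For the classical equality~\eqref{eq:proma-propbpp} there is a small presentational difference worth noting. The paper derives it from the same \cref{cor:demerlin-full} (the quantum de-Merlinization machinery subsumes the classical case), whereas you give the direct elementary argument: reuse the same classical proof across $O(p)$ repetitions to drive soundness below $2^{-2p}$, then brute-force over all $2^p$ witnesses with a union bound. Your route is more self-contained for the classical setting and avoids the gentle-measurement and quantum-union-bound apparatus; the paper's route has the virtue of a single unified statement covering both cases. Two minor remarks: $O(p)$ repetitions already suffice (your $O(p^2)$ is a harmless overcount), and you should make explicit that the repetitions reuse the \emph{same} proof string---this is precisely what fails quantumly and is why the classical case is so much simpler.
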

In the context of property testing for distributions, (\ref{eq:prop-public-vs-private-coin}) and (\ref{eq:proma-propbpp})
are proved by Chiesa and Gur~\cite{CG18} using different arguments. In fact, the key ingredient to~\cref{cor:demerlin-full} is a witness preserving gap amplification happening in~\cref{thm:demerlin-special}, and $\propAM(\poly(n),\poly(n))$ admits such gap amplification since the interaction between the verifier and the provers is independent with the samples from the classical distribution or measurements on the copies of quantum state, therefore, we actually have a very strong collapse in terms of complexity classes
\begin{align}
    \propAM(\poly(n),\poly(n)) = \propBPP.\label{eq:AM-k-collapse-classical}\\
    \propAM(\poly(n),\poly(n)) = \propBQP.\label{eq:AM-k-collapse-quantum}
\end{align}

Interestingly, the question whether multiple unentangled provers help property testing for quantum states remains. Can we de-Merlinize $\propQMA(2)$ as well?
\begin{problem}
    Is it true that
    \[
    \propQMA = \propQMA(2)?
    \]
\end{problem}
We remark that a negative answer rules out input dimension efficient disentanglers, and it could potentially lead to a full resolution of the
disentangler conjecture (depending on the strength of the parameters it achieves).\footnote{A formal statement of the disentangler conjecture can be found in~\cite{ABDFS08}. Roughly speaking, it says that a quantum channel that maps quantum states to approximately separable states that is approximately surjective must have its input dimension exponential on the output dimension.} This happens because if such a
dimension efficient disentangler existed, then we would have the collapse $\propQMA = \propQMA(2)$, since we would
be able to simulate $\propQMA(2)$ in $\propQMA$ by ``breaking'' the entanglement of the proof.
The advantage of the property testing model compared to the other information theoretical model (e.g. black box model),
is we actually know $\propQMA=\propBQP$. Showing lower bounds for $\propBQP$ could potentially be a much easier task.
A positive answer is also extremely interesting. Note that naively, the de-Merlinization strategy does not work for $\propQMA(2)$.
So a positive answer can provide deeper quantum information insight on separable states, which may lead to progress in the problem regarding the power
of $\QMA(2)$ itself.

\subsection{Information Theoretic versus Computation Constrained Models}
The study of property testing can be broadly divided into two main categories: information theoretic
and computation constrained testing. In the former category, no computation assumption is made about the tester, one can think the tester is computationally unbounded. In the latter category, we impose
that a tester has to be generated uniformly by a Turing machine, and it has to obey the computation resource
constraints of the corresponding complexity class (\eg in this model a tester for $\propQMA$ is required to be a $\BQP$ verifier).
In particular, these models can capture the following behavior regarding property testing and computation complexity.

\begin{enumerate}
  \item Information theoretic: captures the inherent limitations imposed by quantum/classical information theory regardless of
                               any computation limitations on a tester.
  \item computation constrained with 
        \begin{itemize}
            \item[$\bullet$] quantum input states: captures decision problems with quantum inputs under resource constraints.
            \item[$\bullet$] classical input states: captures standard complexity classes.
        \end{itemize}
\end{enumerate}

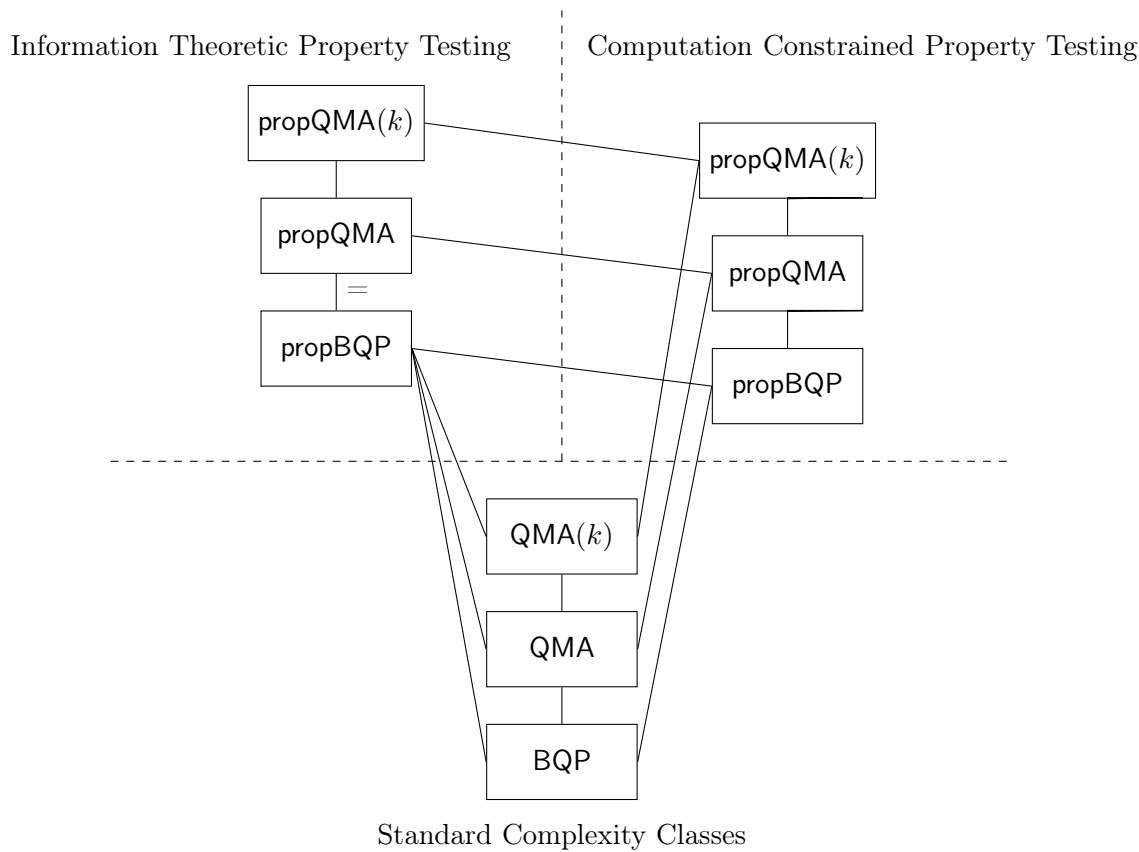
\begin{figure}[h!]
\begin{tikzpicture}

\node[draw, rectangle, minimum width=2cm, minimum height=1cm] (Prop) at (-3,0) {$\propBQP$};
\node[draw, rectangle, minimum width=2cm, minimum height=1cm] (propQMA) at (-3,1.5) {$\propQMA$};
\node[draw, rectangle, minimum width=2cm, minimum height=1cm] (propQMAk) at (-3,3) {$\propQMA(k)$};

\draw[-] (Prop.north) -- node[right] {=} (propQMA.south);  
\draw[-] (propQMA.north) -- (propQMAk.south);

\node[draw, rectangle, minimum width=2cm, minimum height=1cm] (Prop2) at (3,-0.5) {$\propBQP$};
\node[draw, rectangle, minimum width=2cm, minimum height=1cm] (propQMA2) at (3,1) {$\propQMA$};
\node[draw, rectangle, minimum width=2cm, minimum height=1cm] (propQMAk2) at (3,2.5) {$\propQMA(k)$};

\draw[-] (Prop2.north) -- +(0,0.5) -- +(1,0.5) -- (propQMA2.south);
\draw[-] (propQMA2.north) -- +(0,0.5) -- +(1,0.5) -- (propQMAk2.south);

\draw[-] (Prop.east) -- (Prop2.west);       
\draw[-] (propQMA.east) -- (propQMA2.west); 
\draw[-] (propQMAk.east) -- (propQMAk2.west); 

\node[draw, rectangle, minimum width=2cm, minimum height=1cm] (BQP) at (0,-2.5) {$\QMA(k)$};
\node[draw, rectangle, minimum width=2cm, minimum height=1cm] (QMA) at (0,-4) {$\QMA$};
\node[draw, rectangle, minimum width=2cm, minimum height=1cm] (QMAk) at (0,-5.5) {$\BQP$};

\draw[-] (BQP.south) -- (QMA.north);
\draw[-] (QMA.south) -- (QMAk.north);

\draw[-] (BQP.east) -- (propQMAk2.west);

\draw[-] (QMA.east) -- (propQMA2.west);

\draw[-] (QMAk.east) -- (Prop2.west);

\draw[-] (BQP.west) -- (Prop.east);

\draw[-] (QMA.west) -- (Prop.east);

\draw[-] (QMAk.west) -- (Prop.east);

\draw[dashed] (0,-1.5) -- (0,4.5);  
\draw[dashed] (-6,-1.5) -- (6,-1.5);  

\node at (0,-6.5) {Standard Complexity Classes};

\node at (-4,4) {Information Theoretic Property Testing};
\node at (4,4) {Computation Constrained Property Testing};

\end{tikzpicture}
\caption{We depict the relationships among property testing both in the information theoretic models (on the upper left),
         the computation constrained models (on the upper right), and the standard complexity classes (on the bottom) for the case
         of BQP, QMA, and QMA(k). Line segments from bottom to top indicate containments (\ie the model on top can test at least all the
         properties its connecting bottom model can).}\label{fig:hiearchy}
\end{figure}

\begin{remark}
  Any language or promise problem in a complexity (or computability) class with classical inputs
  gives rise to two disjoint collection of bit strings $L_{\textup{yes}}$ and $L_{\textup{no}}$ consisting in yes and no instances,
  respectively. We remark that the information theoretic version of the class $\propBPP$ and $\propBQP$ trivially capture them.
\end{remark}

\begin{remark}
  By considering classical input states in the complexity constrained models of property testing, we can ask whether a classical bit
  string (given as input state to be tested) is a yes or no instance of a language or promise problem. Therefore, these models capture
  standard complexity classes. Under the assumption $\BQP \ne \QMA$, we have $\propBQP \ne \propQMA$ for their
  computation constrained models.
\end{remark}

\begin{remark}
  In contrast, for the information theoretic models, we have the collapse $\propBQP = \propQMA$. This means that a general
  quantum proof cannot substantially improve information theoretic testability of quantum properties (they can at best reduce polynomially
  the number of copies of the input state, or improve the efficiency of the tester).
\end{remark}

We summarize the above remarks in~\cref{fig:hiearchy}.


\subsection{Summary of Our Information Theoretic in terms of Property Testing Classes}

In~\cref{fig:lower_bounds}, we provide a visual summary of some of our information theoretic results for support
size~\cref{theo:failure_coherence,theo:failure_am_testing,theo:subset_advantage} in terms of the property testing classes.

\usetikzlibrary{positioning}
\usetikzlibrary{calc}

\begin{figure}[h]
  \centering
\begin{tikzpicture}[node distance=1cm and 1.5cm, every node/.style={draw, minimum width=3cm, minimum height=1cm, align=center}]

    \node[draw=none] (coherentInputLabel) at (-2,0) {Coherent States};
    \node[below=of coherentInputLabel, yshift=-1.5cm] (PropQMAk) {$\propQMA_{\textup{subset}}(\poly(n))$};
    \node[below=of PropQMAk, dashed] (PropQMA) {$\propQMA$};
    \node[below=of PropQMA, dashed] (PropBQP) {$\propBQP$};

    \node[draw=none] (classicalDistributionLabel) at (4,0) {Classical Distribution};
    \node[below=of classicalDistributionLabel, yshift=-1.5cm, dashed] (PropMAk) {$\propMA_{\textup{flat}}\left(2^{\Omega(n)}\right)$};
    \node[below=of PropMAk, dashed] (PropMA) {$\propMA$};
    \node[below=of PropMA, dashed] (PropBPP) {$\propBPP$};

    \node[above=of PropMAk, dashed] (PropAMk) {$\propAM\left(2^{\Omega(n)}\right)$};

    \draw[-] ($(PropAMk.north)!0.5!(coherentInputLabel.south)+(0,1.5)$) -- ($(PropBQP.south)!0.5!(PropBPP.south)-(0,1.5)$);

    \draw[->] ($(PropBPP.south east)+(2,-0.5)$) -- ($(PropAMk.north east)+(2,0.5)$) node[midway, draw=none, right] {{\footnotesize Model Strength}};
    
\end{tikzpicture}

\caption{A pictorial representation of the limitations in distinguishing support size of flat coherent quantum states (depicted on the left column) and
  flat classical distributions (on the right column). Dashed boxes indicate that the model fails in this task whereas a solid box indicates that
  the model succeeds.}\label{fig:lower_bounds}
\end{figure}
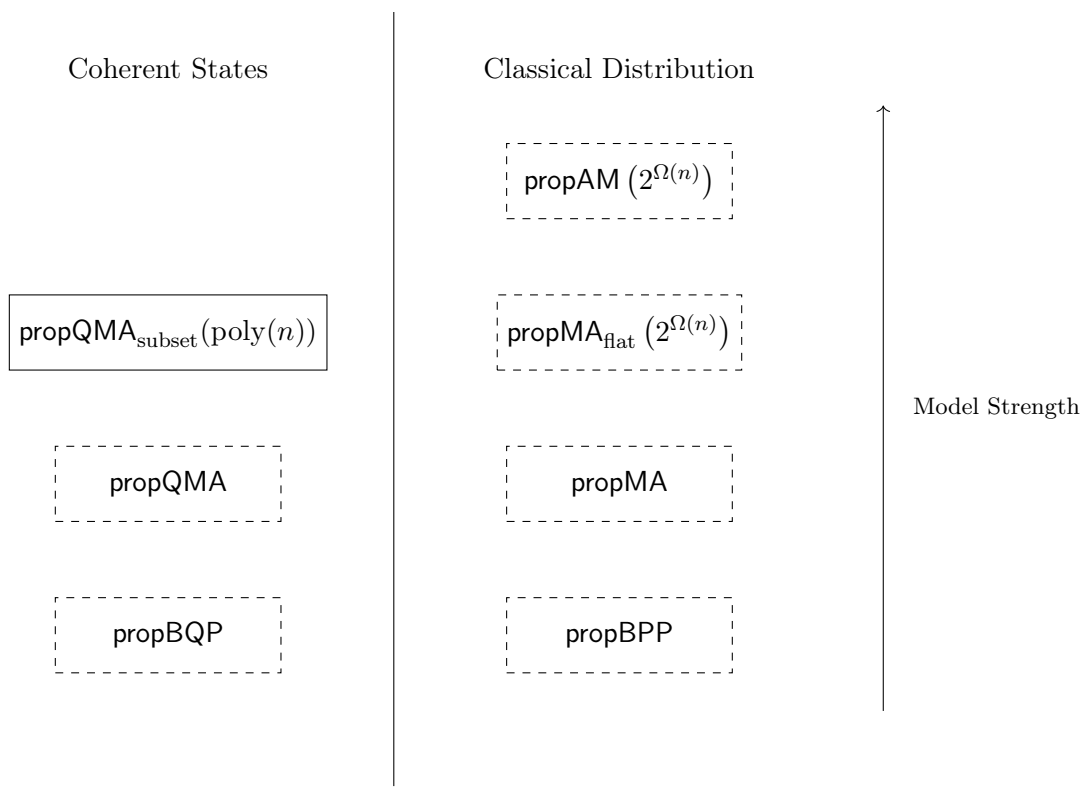

\section*{Acknowlegdments}

The authors are very thankful for the amazing support provided by the Simons Institute.
In particular, we would like to highlight the importance to us of the programs and clusters:
``HDX and Codes'', ``Analysis and TCS: New Frontiers'', ``Error-Correcting Codes: Theory and Practice'',
and ``Quantum Algorithms, Complexity, and Fault Tolerance''.
FGJ thanks Venkat Guruswami for kindly hosting him in his fantastic research group.
FGJ thanks Umesh Vazirani for helping with the bureaucracy to bring him to the Simons
Institute (an example of the great generosity Sandy Irani described during the UmeshFest).

\newpage
\bibliographystyle{alpha}
\bibliography{macros,references}

\appendix
\section{Divergence Contraction}

In this section, we finish the proof of~\cref{lem:div-contraction}. 
It remains to establish~\cref{claim:chain-bound}.
Before we prove~\cref{claim:chain-bound}, we need to introduce some definitions and recall several facts about elementary symmetric polynomials and KL-divergence.
\paragraph{Auxiliary Definitions and Facts.}
Recall that the \emph{downwalk operator} from $\binom{[N]}{s}$ to $\binom{[N]}{t}$ is defined
as follows
\begin{align*}
  D_{s \to t}(S,T) \coloneqq \begin{cases}
      \frac{1}{\binom{s}{t}}, & T\subseteq S;\\
      0, & \text{otherwise},
  \end{cases}
\end{align*}
for every  $S \in \binom{[N]}{s}$ to $T \in \binom{[N]}{t}$. Thus, viewing the distributions $\lambda_i, \mu_i$ in~\cref{lem:div-contraction} as row vectors, then $\lambda_i = \mu_i D_{s\to t} $
\begin{definition}[Generating polynomial]
    Given a distribution $\mu$ on $\binom{[N]}{s}$, its generating polynomial $P_\mu\in\R[X_1,X_2,\ldots, X_N]$ is 
    \[
        P_\mu(X):= \sum_{S\subseteq[N]: |S|=s} \mu(S) \prod_{i\in S}X_i.
    \]
\end{definition}

The well-known MacLaurin's inequality on elementary symmetric polynomials~\cite{maclaurin1730iv} reads: For nonnegative $X_1,X_2,\ldots, X_N$, and integers $s \ge t>0$,
\begin{equation}\label{eq:log-concave}
    \left(\Exp_{S\in \binom{[N]}{s}} \prod_{i\in S}X_i\right)^{1/s} \le \left(\Exp_{T\in \binom{[N]}{t}}\prod_{i\in T}X_i\right)^{1/t}.
\end{equation}
An immediate corollary is that for the uniform distribution $\mu$, its generating polynomial $P_\mu$ is log-concave on nonnegative inputs.

The next lemma about KL-divergence minimization follows from duality theory of convex optimization.
\begin{lemma}[See Appendix B of~\cite{singh2014entropy}]\label{lem:sing14}
    Given a distribution $\mu$ on $\binom{[N]}{s}$, and a distribution $q:[N]\to \R$, then
    \begin{equation}
        \inf_{\nu:\binom{[N]}{s}\to \R} \{\KL{\nu}{\mu}: q=\nu D_{s\to 1}\} = -\log\left(
            \inf_{x_1,x_2,\ldots, x_N >0} \frac{P_\mu(x)}{(x_1^{q(1)}x_2^{q(2)}\cdots x_N^{q(N)})^s}
        \right).
    \end{equation}
\end{lemma}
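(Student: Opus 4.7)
I would set this up as a convex optimization and invoke Lagrangian duality. The primal minimizes the strictly convex functional $\KL{\nu}{\mu}$ over probability measures $\nu$ on $\binom{[N]}{s}$, subject to the linear marginal constraints $\sum_{S\ni i}\nu(S) = s\,q(i)$ for each $i\in[N]$ (recall $(\nu D_{s\to 1})(i) = \frac{1}{s}\sum_{S\ni i}\nu(S)$). Since the constraints are affine and the objective is strictly convex on the simplex restricted to $\supp(\mu)$, strong duality holds, and the KKT conditions will describe the optimum in closed form.

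The cleanest execution is via exponential tilts. For $x \in \RR_{>0}^N$, define
\[
    \nu^x(S) \;:=\; \frac{\mu(S)\prod_{i\in S} x_i}{P_\mu(x)},
\]
which is manifestly a probability distribution on $\binom{[N]}{s}$. For any feasible $\nu$, Gibbs' inequality $\KL{\nu}{\nu^x} \ge 0$ rearranges---using only the prescribed marginals---to
\[
    \KL{\nu}{\mu} \;\ge\; s\sum_{i} q(i)\log x_i \;-\; \log P_\mu(x) \;=\; -\log\frac{P_\mu(x)}{\prod_i x_i^{s\,q(i)}}.
\]
Taking the supremum over $x > 0$ (equivalently, the infimum inside the logarithm) gives the ``$\ge$'' direction of the identity. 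This is the easy half.

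For the matching upper bound, I would pick $x^\star$ attaining $\inf_x P_\mu(x)/\prod_i x_i^{s\,q(i)}$ and verify that $\nu^{x^\star}$ is itself feasible. The stationarity condition at $x^\star$ reads $x_j^\star\,\partial_{x_j} P_\mu(x^\star)/P_\mu(x^\star) = s\,q(j)$, and its left-hand side is exactly the $j$-th marginal of $\nu^{x^\star}$; hence $\nu^{x^\star} D_{s\to 1} = q$. A direct substitution then yields $\KL{\nu^{x^\star}}{\mu} = -\log\bigl(P_\mu(x^\star)/\prod_i (x_i^\star)^{s\,q(i)}\bigr)$, closing the gap.

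The main technical obstacle is justifying the existence of a minimizer $x^\star$ in the open positive orthant, i.e.\ the coercivity of $F(x) := \log P_\mu(x) - s\sum_i q(i)\log x_i$ on $\RR_{>0}^N$. Generically this holds when $q$ lies in the relative interior of the marginal polytope of distributions supported on $\supp(\mu)$, since any direction $x_j \to 0$ or $x_j \to \infty$ with $q(j) > 0$ drives $F(x)\to +\infty$ (using nonnegativity of coefficients of $P_\mu$ and the standard log-barrier behavior). Degenerate boundary cases---some $q(i)=0$, or $q$ not realizable as a marginal---are handled by a limiting argument with the convention $0\log 0 = 0$, or by interpreting both sides as $+\infty$ when infeasible.
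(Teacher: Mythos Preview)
Your argument is correct and is the standard Lagrangian duality/Gibbs variational proof of this identity. Note, however, that the paper does not supply its own proof of this lemma: it is quoted verbatim from Appendix~B of \cite{singh2014entropy} and used as a black box in the proof of \cref{claim:chain-bound}. Your exponential-tilt derivation is precisely the approach taken in that reference, so there is nothing substantively different to compare.
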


\paragraph{Proof of~\cref{claim:chain-bound}.} Now we are ready to prove~\cref{claim:chain-bound}. Without loss of generality, say $x_i=i$. Let $\mu_1', \mu_0'$ be the induced distribution of $\mu, \mu_1$ on $\binom{\{i,i+1,\ldots,N\}}{s-i+1}$ conditioning on $[i]\subseteq S, S'$. Then $\mu_0'$ is uniform on $\binom{\{i,i+1,\ldots, N\}}{s-i+1}$. 
Let $q:= \mu_1' D_{s-i+1\to 1}$, then
\begin{align}
    &\KLfrac{X_i X_{i+1} \ldots X_s \mid X_{<i}=x_{<i} }{Y_i Y_{i+1} \ldots Y_s \mid Y_{<i}=x_{<i}} 
        \nonumber \\
        &\qquad\qquad= \KL{\mu_1'}{\mu_0'}
        \nonumber \\
        &\qquad\qquad \ge \inf_{\mu_1''} \{\KL{\mu_1''}{\mu_0'}: \mu_1'' D_{s-i+1\to 1} = q\}
        \nonumber   \\
        &\qquad\qquad = -\log\left(
            \inf_{z_i,z_{i+1},\ldots, z_N >0} \frac{P_{\mu_0'}(z)}{(z_i^{q(i)}z_{i+1}^{q(i+1)}\cdots z_N^{q(N)})^{s-i+1}}
        \right)
        \nonumber   \\
        &\qquad\qquad \ge -\log\left(
            \inf_{z_i,z_{i+1},\ldots, z_N >0} \left(\frac{\Exp_{j\in\{i,i+1,\ldots,N\}} z_j}{z_i^{q(i)}z_{i+1}^{q(i+1)}\cdots z_N^{q(N)})}\right)^{s-i+1}
        \right).
        \nonumber
\end{align}
where the second step is due to~\cref{lem:sing14}; the third step is due to MacLaurin's inequality.
Set $z_i = (N-s+1)q_i$, then
\begin{align}
    \KL{\mu'_1}{\mu'_0} &\ge -(s-i+1)\sum_{j=i}^N q(j)\log\frac{q_j}{1/(N-s+1)} 
    \nonumber\\
    & = (s-i+1)\KL{\mu_1' D_{s-i+1\to 1}}{\mu_0' D_{s-i+1\to 1}}
    \nonumber\\
    & = (s-i+1)  \KLfrac{X_i\mid X_{<i}=x_{<i}}{Y_i\mid Y_{<i}=x_{<i}}.
    \nonumber \qedhere
\end{align}

\section{Spectra of $\cD_{t}$ from Johnson Scheme}

The spectra of $\cD_{t}$ is known \cite{delsarte1973algebraic}. In particular,
fix any $0\le t\le k-1,$ there are $k+1$ distinct eigenvalues $\lambda_{0},\lambda_{1},\ldots,\lambda_{k},$
such that
\begin{align*}
 & \lambda_{0}=\binom{k}{t}\binom{d-k}{k-t},\\
 & \lambda_j = \sum_{\ell=\max\{0,j-t\}}^{\min\{j, k-t\}}(-1)^\ell\binom{j}{\ell}\binom{k-j}{k-t-\ell}\binom{d-k-j}{k-t-\ell},& j = 1,2,\ldots, k,
\end{align*}
with multiplicity
\begin{align*}
& m_{0}=1,\\
 & m_{j}=\binom{d}{j}-\binom{d}{j-1}, & j=1,2,\ldots,k.
\end{align*}
For us, we simplify $\lambda_j$ for $k=O(\sqrt d)$,
\begin{align*}
 & |\lambda_{j}|\lesssim\frac{\binom{k-j}{t-j}}{\binom{k}{t}}\lambda_{0}, & j=1,2,\ldots,t,\\
 & |\lambda_{j}|\lesssim\frac{\binom{j}{t}(k-t)!}{\binom{k}{t}(k-j)!}\cdot\frac{1}{d^{j-t}}\lambda_{0}, & j=t+1,\ldots,k.
\end{align*}
We bound $\|\cD_{t}\|_{1}$ as follows 
\begin{align*}
\|\cD_{t}\|_{1} & =\lambda_{0}+\sum_{j=1}^{k}m_{j}|\lambda_{j}|\\
 & \lesssim\lambda_{0}\left(1+\sum_{j=1}^{t}\frac{d^{\underline{j}}}{j!}\frac{\binom{k-j}{t-j}}{\binom{k}{t}}+\sum_{j=t+1}^{k}\frac{d^{\underline{t}}}{j!}\frac{\binom{j}{t}(k-t)^{\underline{j-t}}}{\binom{k}{t}}\right)\\
 & \lesssim\lambda_{0}\left(1+\frac{d^{\text{\ensuremath{\underline{t}}}}}{k^{\underline{t}}}+\sum_{j=t+1}^{k}\frac{d^{\underline{t}}}{k^{\underline{t}}}\binom{k-t}{j-t}\right)\\
 & \le\lambda_{0}\left(1+\frac{d^{\underline{t}}}{k^{\underline{t}}}2^{k-t}\right)\\
 & =\binom{k}{t}\binom{d-k}{k-t}
    \left(1+ \frac{d^{\underline{t}}}{k^{\underline{t}}}2^{k-t} \right)
    \\
 &\lesssim \binom{k}{t}\binom{d-k}{k-t} \frac{d^{\underline{t}}}{k^{\underline{t}}}2^{k-t}
 =\binom{d}{t}\binom{d-k}{k-t}2^{k-t}.
\end{align*}
This proves Fact~\ref{fact:johnson-trace}.

\end{document}